\newenvironment{claimproof}{\begin{proof}}{\end{proof}}
\crefname{claim}{Claim}{Claims}
\Crefname{claim}{Claim}{Claims}
\crefname{algorithm}{Algorithm}{Algorithms}
\Crefname{algorithm}{Algorithm}{Algorithms}
\crefname{ALC@unique}{Line}{Lines}
\Crefname{ALC@unique}{Line}{Lines}
\newlist{properties}{enumerate}{1}
\setlist[properties,1]{label={\textbf{P\arabic*}}, align=left, left=0pt, itemindent=*}
\crefname{propertiesi}{property}{properties}
\newcommand{\cF}{\mathcal{F}}
\newcommand{\floor}[1]{ {\left\lfloor #1 \right\rfloor}}
\newcommand{\ceil}[1]{ {\left\lceil #1 \right\rceil}}
\newcommand{\abs}[1]{{\left| #1\right|}}
\DeclareMathOperator{\sign}{sign}
\newcommand{\eps}{\varepsilon}
\newcommand{\A}{{\mathcal{A}}}
\newcommand{\prbinom}{\textnormal{\texttt{binom}}}
\newcommand{\runtime}[1][\alpha, \beta, c, q]{\textnormal{\texttt{convert}}\left(#1\right)}
\newcommand{\core}[1]{\textnormal{\texttt{Core}}(#1)}
\newcommand{\sat}{\textnormal{\texttt{SOL}}}
\newcommand{\randext}{\textnormal{\texttt{RandAndExtend}}}
\newcommand{\iterrandext}{\textnormal{\texttt{SamplingWithABlackBox}}}
\newcommand{\D}[2]{\mathcal{D}\left(#1\, \middle\|\,#2 \right)}
\newcommand{\OPT}{\textnormal{\texttt{OPT}}}
\newcommand{\Oh}{\mathcal{O}}
\newcommand{\goodd}{\textnormal{\texttt{interval}}(\alpha, \beta)}
\newcommand{\params}{\textnormal{\texttt{parameters}}}
\newcommand{\procext}[4]{\left( #1, #2, #3, #4 \right)\textnormal{-\texttt{procedure}}}
\newcommand{\proc}[2]{\left( #1, #2 \right)\textnormal{-\texttt{procedure}}}
\newcommand{\procalg}[2]{\mathcal{P}_{#1,#2}}
\algnewcommand\algorithmicconf{\textbf{Configuration:}}
\algnewcommand\Configuration{\item[\algorithmicconf]}
\algnewcommand\algorithmicinput{\textbf{Input}:}
\algnewcommand\Input{\item[\algorithmicinput]}
\algnewcommand\algorithmicoutput{\textbf{Output}}
\algnewcommand\Output{\item[\algorithmic\poutput]}
\algnewcommand\algorithmicmyreturn{\textbf{Return}}
\algnewcommand\RETURN{\item[\algorithmicmyreturn]}
\newtheorem{theorem}{Theorem}[section]
\newtheorem{lemma}[theorem]{Lemma}
\newtheorem{definition}[theorem]{Definition}
\newtheorem{claim}[theorem]{Claim}
\newtheorem{corollary}[theorem]{Corollary}
\newtheorem{observation}[theorem]{Observation}
\newcommand{\nmgpivd}[1][\mathcal{G},\Pi]{$(#1)$\textnormal{\texttt{-Vertex Deletion}}}
\newcommand{\gpivd}[1][\mathcal{G},\Pi]{(#1)\textnormal{\texttt{-Vertex Deletion}}}
\newcommand{\sgpivd}[1][\mathcal{G},\Pi]{(#1)\textnormal{\texttt{-Del}}}
\newcommand{\lpvc}[1][\ell]{\ensuremath{#1}\textnormal{-path Vertex Cover}}
\newcommand{\vcmaxdegthree}{\textsc{Vertex Cover on graphs with maximal degree $3$}\xspace}
\newcommand{\one}{\mathbbm{1}}
\newcommand{\E}{{\mathbb{E}}}
\newcommand{\cG}{\mathcal{G}}
\newcommand{\FVS}{\textnormal{\texttt{FVS}}\xspace}
\newcommand{\POVD}{\textnormal{\texttt{POVD}}\xspace}
\newcommand{\fvsPi}{\Pi^{\texttt{FVS}}}
\newcommand{\povdPi}{\Pi^{\texttt{POVD}}}
\newcommand{\wvc}{\textsc{Weighted Vertex Cover}\xspace}
\newcommand{\vc}{\textsc{Vertex Cover}\xspace}
\newcommand{\pathvc}[1]{\ensuremath{#1}\textsc{-Path Vertex Cover}\xspace}
\newcommand{\hs}[1]{\ensuremath{#1}\textsc{-Hitting Set}\xspace}
\newcommand{\fvs}{\textsc{Feedback Vertex Set}\xspace}
\newcommand{\dfvst}{\textsc{Directed Feedback Vertex Set on Tournaments}\xspace}
\newcommand{\povd}{\textsc{Pathwidth One Vertex Deletion}\xspace}
\newcommand{\spovd}{\textsc{POVD}\xspace}
\newcommand{\sdeltal}{\delta^{*}_{\textnormal{left}}}
\newcommand{\sdeltar}{\delta^{*}_{\textnormal{right}}}
\newcommand{\defproblem}[4]{
  \vspace{1mm}
  \noindent\fbox{
  \begin{minipage}{0.96\textwidth}
  \begin{tabular*}{\textwidth}{@{\extracolsep{\fill}}lr} #1 \\ \end{tabular*}
  {\bf{Input:}} #2  \\
  {\bf{Question:}} #3\\
  {\bf{$\sgpivd$ Equivalence:}} #4
  \end{minipage}
  }
  \vspace{1mm}
}
\title{Sampling with a Black Box: Faster Parameterized Approximation Algorithms for Vertex Deletion Problems}
\author[1]{Bar\i\c{s} Can Esmer\thanks{The author is part of Saarbrücken Graduate School of Computer Science, Germany.}}
\author[2]{Ariel Kulik\thanks{This project has received funding from the European Union’s Horizon 2020 research and innovation programme under grant agreement No 852780-ERC (SUBMODULAR).}}
\affil[1]{CISPA Helmholtz Center for Information Security, Saarbr\"ucken, Germany. \texttt{baris-can.esmer@cispa.de}}
\affil[2]{Computer Science Department, Technion, Haifa, Israel. \texttt{kulik@cs.technion.ac.il}}
\date{}
\begin{document}

\maketitle

\begin{abstract}
In this paper we introduce {\em Sampling with a Black Box}, a generic technique for the design of parameterized approximation algorithms for vertex deletion problems (e.g., \vc, \fvs, etc.). The technique relies on two components:
\begin{itemize}
	\item A {\em Sampling Step}. A polynomial time randomized algorithm which given a graph $G$ returns a random vertex $v$ such that the optimum of $G\setminus \{v\}$ is smaller by $1$ than the optimum of $G$ with some prescribed probability $q$. We show such algorithms exists for multiple vertex deletion problems.
	 \item A Black Box algorithm which is either an exact parameterized  algorithm or a polynomial time approximation algorithm. 
\end{itemize}
Our technique combines these two components together. The sampling step is applied iteratively to remove vertices from the input graph, and then the solution is extended using the black box algorithm.  The process is repeated sufficiently many times so that the target approximation ratio is attained with a constant probability. The main novelty of our work lies in the analysis of the framework and the optimization of the parameters it uses.    
 
We use the technique to derive parameterized approximation algorithm for several vertex deletion problems, including \fvs, \hs{d} and \pathvc{\ell}. In particular, for every approximation ratio $1<\beta<2$, we attain a parameterized $\beta$-approximation for \fvs which is faster than the parameterized $\beta$-approximation of [Jana,   Lokshtanov,  Mandal, Rai and Saurabh,  MFCS 23']. Furthermore, our algorithms are always faster than the algorithms attained using Fidelity Preserving Transformations [Fellows, Kulik, Rosamond, and Shachnai, JCSS 18'].

\end{abstract}
\thispagestyle{empty}

\section{Introduction}
\label{sec:introduction}

A vast body of research has been dedicated to  basic vertex deletion problems such as \vc, \hs{3} and \fvs. In these problems, the objective is to delete a minimum cardinality set of vertices from the input (hyper-)graph so that the remaining (hyper-)graph  satisfies a specific property (edge-free, cycle-free, etc.). As many of these problems are NP-hard,  multiple algorithmic results  focus on either polynomial time approximations or exact parameterized algorithms.  In between these two classes of algorithms lies the class of {\em parameterized approximation algorithms}. These algorithms aim to provide approximation ratios which cannot be attained in polynomial time. They operate within a parameterized running time, which is faster than the exact, parameterized state-of-the-art.

In this paper we explore how existing exact parameterized algorithms and polynomial time approximation algorithms can be used together with {\em sampling steps} to derive efficient parameterized approximation algorithms. Informally, a sampling  step with success probability $q\in (0,1)$ is a polynomial time algorithm which, given an input graph $G=(V,E)$, returns a random vertex $v\in V$. The vertex $v$ should satisfy, with probability $q$ or more, that removing $v$ from $G$ reduces its optimum (i.e., the number of vertices one needs to remove from the graph for it to satisfy the property) by $1$. As we show in this paper, such algorithms can be easily obtained for various vertex deletion problems.

Our technique, {\em Sampling with a Black Box}, applies the sampling step $t$ times and subsequently uses the existing parameterized/approximation algorithms to complete the solution. The whole process is executed sufficiently many times so that a $\beta$-approximate solution is found with a constant probability.  The main novelty of our work lies in the analysis of the framework, involving tail bounds for binomial distribution and optimization of the parameters used by the technique.

Sampling with a Black-Box is applicable to a wide collection of vertex deletion problems. We show sampling steps exist for every vertex deletion problems, for which the property can be described by a finite set of forbidden vertex induced  hypergraphs, such as $\hs{d}$, $\pathvc{\ell}$ and \dfvst. We further provide sampling steps for \fvs (\FVS) and  \povd, where the set of forbidden hypergraphs is infinite.
Moreover, even though our setting doesn't explicitly allow it, the results developed in this paper also apply to some problems in the directed graph setting. In particular, we show that there exists a sampling step for \dfvst. Thus, the technique is applicable for each of these problems.  

We compare Sampling with a Black Box to existing benchmarks. 
\begin{itemize}
	\item
	In \cite{janaParameterizedApproximationScheme2023}, Jana, Lokshtanov, Mandal,  Rai, and Saurabh developed a parameterized $\beta$-approximation for \fvs (\FVS) for every $1<\beta<2$.  The objective in \FVS\, is to remove a minimum number of vertices from an undirected graph, so the remaining graph does not contain cycles. 
	Their approach relies on the same ingredients as ours: utilize the state of art parameterized and approximation algorithms in conjunction with a variant of the well known  randomized branching rule of Becker et al. \cite{beckerRandomizedAlgorithmsLoop2000}.
	
	Similar to \cite{janaParameterizedApproximationScheme2023}, we utilize the randomized branching rule of  \cite{beckerRandomizedAlgorithmsLoop2000} to derive a sampling step with success probability $\frac{1}{4}$. We use Sampling with a Blackbox together with this sampling step to attain a parameterized $\beta$-approximation for \fvs, for every $1<\beta<2$.
	 Though we use the same core principles, we attain a faster running time for {\em every} approximation ratio between $1$ and $2$ -- see comparison in \Cref{fig:fvs}. 
	
	The improved running time stems from our tighter analysis and careful selection of parameters, which also provides flexibility in the design of the sampling steps. For example, to attain  a parameterized $1.1$-approximation for \FVS, the authors of \cite{janaParameterizedApproximationScheme2023} use ideas from \cite{beckerRandomizedAlgorithmsLoop2000} to derive a simple algorithm which randomly picks an {\em edge} in the graph $G$ such that at least one of its endpoints is in a minimum solution with probability $\frac{1}{2}$. 
	If the selected edge satisfies this property we refer to it as {\em correct}. 
	In their scheme, every time an edge is picked, both its endpoints are added to the solution, reducing the optimum by one while increasing the solution size by two, assuming the picked edge is correct.
	
	The analysis in \cite{janaParameterizedApproximationScheme2023} only considers the case in which  the picked edges  are {\em always} correct. This allows for a simple analysis and renders the parameter selection straightforward. Keeping our focus on a $1.1$-approximation,  the algorithm has to pick $0.1\cdot k$ edges before it invokes the state-of-art parameterized algorithm with the parameter $k'=0.9\cdot k$. The success probability of the above -- the probability that all picked edges are correct and therefore a $1.1$-approximate solution is attained - is $0.5^{0.1\cdot k}$, and the running time is  $c^{0.9\cdot k}\cdot n^{\Oh(n)}$,  where $c=2.7$ is the running time of the best known exact parameterized  algorithm for \FVS~\cite {liDetectingFeedbackVertex2022}. This leads to a running time of $2^{0.1\cdot k}\cdot c^{0.9\cdot k}\approx 2.62^k$.
	
	A major limiting factor in the analysis of \cite{janaParameterizedApproximationScheme2023} is the focus on the event in which {\em all} picked edges are correct. 
	For example, to attain  a parameterized $1.1$-approximation one can consider picking $0.09k$ edges (and add both endpoints to the solution), and then invoke the exact parameterized algorithm with $k'=0.92k$. 
	Now, this procedure finds a $1.1$-approximate solution if $0.08k$ (or more) of the picked edges are correct. 
	A careful analysis shows that the probability of such event is $\approx 0.708^{0.09\cdot k}\approx 0.969^k$. 
	By repeating this procedure $\frac{1}{0.969^k}$ times, a $1.1$-approximate solution is found with a constant probability.  The overall running time is $\left(\frac{c}{0.969}\right)^k\cdot n^{\Oh(1)} \approx 2.57^k\cdot n^{\Oh(1)}$, which is already an improvement over \cite{janaParameterizedApproximationScheme2023}.
	
	The above example illustrates the power of a more flexible analysis and a careful selection of parameters such as the number of sampled edges. Furthermore, our approach allows for more powerful sampling steps- instead  of picking both endpoints of the selected edges, we can select one at random. 
	Intuitively, the benefit of sampling one vertex at a time is that it increases the {\em variance} of the number vertices selected from the optimum, therefore raising the probability of the rare event the analysis is focused on. 
	This change leads to a further improvement to the running time.  Overall, we attained a parameterized $1.1$-approximation for \FVS\ in time $2.483^k\cdot n^{\Oh(1)}$.

	\begin{figure}[t!]
		\centering
		\input{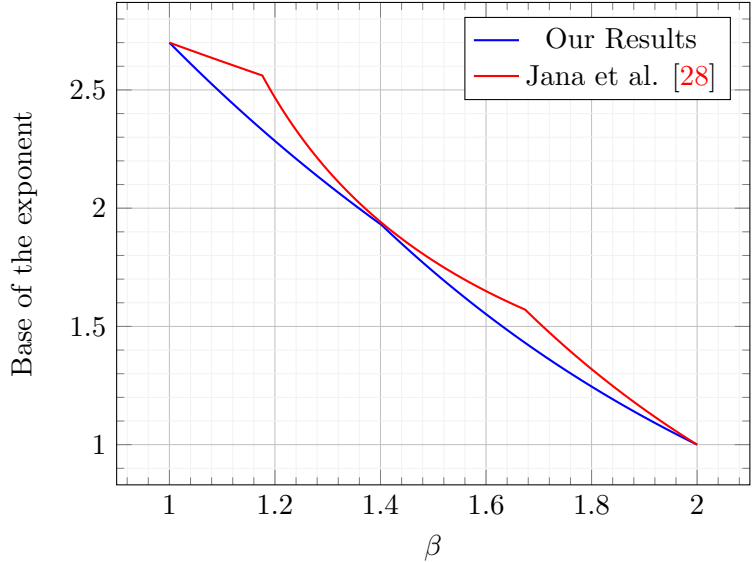}
		\caption{Comparison of the running times for $\fvs$. The $x$-axis corresponds to the approximation ratio, while the $y$-axis corresponds to the base of the exponent in the running time. A dot at $(\beta,c)$ means that there is a parameterized $\beta$-approximation for \FVS\ in time $c^k\cdot n^{\Oh(1)}$.} 
		\label{fig:fvs}
	\end{figure}
	
	\item In \cite{Fellows2018}, Fellows, Kulik, Rosamond and Shachnai provided a generic technique, called Fidelity Preserving Transformations, which can be applied for every vertex deletion problem in which the property can be described by a finite set of forbidden  vertex induced subgraphs. Assuming the maximum number of vertices in a forbidden induced subgraph is $\eta$, and the problem has  an exact  parameterized $c^k\cdot n^{\Oh(1)}$ algorithm, the technique  of \cite{Fellows2018} yields a $\beta$-approximation in time $c^{\frac{\eta-\beta}{\eta-1}\cdot k }\cdot n^{\Oh(1)}$.\footnote{The result in \cite{Fellows2018} was stated for \vc and \hs{3}, but can be easily generalized to the stated result.} 
	We prove that the running time we obtain for every problem in the class is always faster than the running time attained by \cite{Fellows2018}, for every $1<\beta<\eta$. 
	 \Cref{fig:3pvc} provides a comparison between the running times we attain and those of \cite{Fellows2018} for \pathvc{3}, in which the objective is to remove vertices from a graph so the remaining graph does not have a path of $3$ vertices. 
	
	\begin{figure}[t]
		\centering
		\input{figures/3pathvc.tex}		
		\caption{Comparison of the running times for $\pathvc{3}$. The $x$-axis corresponds to the approximation ratio, while the $y$-axis corresponds to the base of the exponent in the running time. A dot at $(\beta,c)$ means that there is a parameterized $\beta$-approximation for \FVS\ in time $c^k\cdot n^{\Oh(1)}$.}
		\label{fig:3pvc}
	\end{figure}
\end{itemize}

\subsection{Related Work}

The field of parameterized approximation aims to derive approximation algorithms which run in parameterized running time. In the classic setting, the considered problem is one which is not expected to have an exact parameterized algorithm, and further has a hardness of approximation lower bound which indicates that a polynomial time $\alpha$-approximation is unlikely to exist, for some $\alpha>1$. In such a setting, the objective is to derive a $\beta$-approximation algorithm that runs in parameterized running time for $\beta<\alpha$, or to show that such an algorithm cannot exist (subject to common complexity assumptions). In recent years, there has been a surge of breakthrough results in the field, providing both algorithms (e.g., \cite{La14,GKW19,DFK+18,Manu19,LSS20, BLM20, ABB+23,DKS23,ILSS23}) and hardness of approximation results  (e.g., \cite{GLRSW24,LRSW23, Wlod20,CCKLMNT20,BBE+21,Manu20,KK22,LRSW23b,CFLL23}).

Within the broad field of parameterized approximations, a subset of  works consider problems which are in FPT. For problems in FPT, there is always a parameterized $\beta$-approximation algorithm for all $\beta\geq 1$, as the exact algorithm is also an  approximation algorithm. Therefore, the main focus of these works is on the {\em trade-off} between approximation and running time. 

A prominent approach to attain such parameterized-approximation algorithms relies on using existing exact parameterized algorithms as black-box.
In \cite{BEP11} the authors showed  that an exact parameterized $c^k\cdot n^{\Oh(1)}$ algorithm for \vc implies a parameterized $\beta$-approximation in time $c^{(2-\beta )k}$. The same running time has been attained for \vc by Fellows et al. \cite{Fellows2018} through a more generic framework which can be applied to additional problems.  

Other works focused on directly designing parameterized approximation algorithms. 
In \cite{BF13,brankovicParameterizedApproximationAlgorithms2012} Brankovic and Fernau provided  parameterized approximation algorithms for \vc and \hs{3}.  The designed algorithms are branching algorithms which involve branching rules  aimed for approximation and interleave approximative reduction rules.  The works provide, among others, a parameterized  $1.5$-approximation for $\vc$ in time $1.0883^k\cdot n^{\Oh(1)}$ and a parameterized $2$-approximation for \hs{3} in time $1.29^k\cdot n^{\Oh(1)}$. 

In \cite{KulikS2020}, Kulik and Shachnai  showed that the use of randomized branching can lead to significantly faster parameterized-approximation algorithms.
 In particular,  they attained a parameterized  $1.5$-approximation for \vc in time $1.017^k \cdot n^{\Oh(1)}$ and a parameterized  $2$-approximation for \hs{3} in time $1.0659^k\cdot n^{\Oh(1)}$. 
 The idea in randomized branching is that the algorithm picks one of the branching options at random. Subsequently, a good approximation is attained as long as the randomly picked  options are  not too far from the correct options.  The branching rules used by~\cite{KulikS2020} are fairly involved in comparison to the sampling steps used by this paper and their analysis required a non-trivial mathematical machinery.

While \cite{KulikS2020} showed that randomized branching is a powerful technique for the design of parameterized-approximation algorithms, it has only been applied to a limited set of problems.
 In \cite{KulikS2020}, the authors provided applications of the technique for \vc and \hs{3}. Furthermore, for approximation ratios close to $1$, the randomized-branching algorithms are inferior to the exact algorithms in terms of running times (or to the combination of those with \cite{Fellows2018}). A central goal of this paper is to overcome some of these difficulties: Sampling with a Black Box harnesses the power of randomized branching, it is applicable for a wide set of problems, and always improves upon the running times attained using the best known exact algorithms in conjunction with \cite{Fellows2018}.

As already mentioned, in \cite{janaParameterizedApproximationScheme2023}, the authors developed parameterized approximation algorithms for \FVS\ using the combination of randomized branching and existing algorithms as black-box. Their work indeed served as a motivation for this paper. We point out that the approach in \cite{janaParameterizedApproximationScheme2023} is restricted for \FVS\,and the resulting running times are inferior to ours. 

The concept of randomly sampling a partial solution, which is subsequently extended using a parameterized algorithm, is central to the {\em monotone local search} technique  of Fomin, Gaspers, Lokshtanov and Saurabh \cite{FGLS19}. Later variants of this technique are designed to obtain exponential time approximation algorithms \cite{EKMNS22,EKMNS23,EKMNS24}. They also use the same basic argument, which states that sampling a partial solution not too far from the optimum suffices to attain an approximate solution with high probability.

\paragraph{Organization} Section~\Cref{sec:definitions} gives several standard definitions used throughout the paper. In \Cref{sec:our_results} we formally state the results of the paper. \Cref{sec:applications} lists applications of the technique for several problems. Additional applications are  given in \Cref{sec:addtl_applications}. \Cref{sec:algorithms} provide our main  algorithm together with its proof of correctness. 
In \Cref{sec:sdelta_expl_formula} we provide a simpler formula for the running time Sampling with a Black Box for several cases. 
In \Cref{sec:sampling_step} we derive sampling steps for several problems.  \Cref{sec:fidelity_proof} provides the proof that the running time of Sampling with a Black Box is faster than Fidelity Preserving Transformations. Finally, we  discuss our results in \Cref{sec:discussion}.
For the sake of presentation, we include the proofs of some claims in the appendix.

\section{Preliminaries}
\label{sec:definitions}

In this section, we present several standard  concepts and notations used throughout this paper.
 
\paragraph{Graph Notations.} Given an hypergraph $G=(V,E)$ we use $V(G)=V$ and $E(G)=E$ to denote the sets of vertices and hyperedges of the graph (respectively). 

\paragraph{Vertex Induced Subhypergraphs and Vertex Deletion.} Given a a hypergraph $G=(V,E)$ and a subset of vertices $U\subseteq V$ , the {\em vertex induced subhypergraph} of $G$ and $U$ is the hypergraph $G[U] = (U,E')$ where $	E' \coloneqq \{e \in E \mid e \subseteq U\}$.
 We also define the {\em vertex deletion} of $U$ from $G$ by $G\setminus U = G[V\setminus U]$. 
  For a single vertex $u\in V(G)$ we use the shorthand $G\setminus v = G\setminus \{v\}$. 

\paragraph{Hypergraph Properties.} A {\em hypergraph property} is a set $\Pi$ of hypergraphs such that for every $G\in \Pi$ and $G'$ isomorphic to $G$ it holds that $G'\in\Pi$  as well. A hypergraph property $\Pi$ is called {\em hereditary} if for every $G\in \Pi$ and $U\subseteq V(G)$  it holds that $G[U]\in \Pi$ as well.  Furthermore, we say that $\Pi$ is polynomial-time decidable if given a hypergraph $G$ we can decide in polynomial-time whether $G \in \Pi$ or not.
We assume that all hypergraph properties discussed in this paper are hereditary and  polynomial-time decidable; that is by saying the $\Pi$ is a hypergraph property we also mean that it is hereditary and polynomial-time decidable.

\paragraph{A closed set of hypergraph.} We say that a set $\cG$ of hypergraph is {\em closed} if every vertex induced subhypergraph of a graph in $\cG$ is also in $\cG$. That is, for every $G\in \cG$ and $S\subseteq V(G)$ it also holds that $G[S] \in \cG$.

\paragraph{Kullback-Leibler Divergence} Given two number $a,b\in [0,1]$, the {\em Kullback-Leibler divergence} of $a$ and $b$ is 
$$
\D{a}{b} = a\cdot \ln\left(\frac{a}{b}\right)+(1-a)\cdot \ln\left(\frac{1-a}{1-b}\right).
$$
We follow that standard convention that $0\cdot \ln 0 =0\cdot \ln \frac{0}{x}  =0$, which implies
\begin{equation}
	\D{1}{b}=  1\cdot \ln\left( \frac{1}{b}\right) + (1-1) \cdot \ln \left(\frac{1-1}{1-b}\right) = -\ln(b).\label{eq:KL_eq_1}
\end{equation}

\section{Our Results}
\label{sec:our_results}

In this section we formally state our results, 
starting with some formal definitions. The definitions of vertex deletion problems and parameterized approximation algorithms are given in \Cref{sec:prob_def}. \Cref{sec:sampling_step_subsec} provides the definition of sampling steps. Then, we state our main result in \Cref{sec:blackboxsampling}. The comparison of our results  to those of \cite{Fellows2018} is finally given in \Cref{sec:comparison}. 
\subsection{Vertex Deletion Problems}
\label{sec:prob_def}
The focus of this paper is the class of $\gpivd$ problems. 
For any hypergraph property $\Pi$ and a closed set of hypergraphs $\mathcal{G}$, the input for $\gpivd$ ($\sgpivd$ for short) is a  hypergraph $G \in \mathcal{G}$. A {\em solution} is a subset of vertices $S\subseteq G(V)$  such that $G\setminus S\in \Pi$. 
We use $\sat_{\Pi}(G) = \{ S\subseteq G(V)\,|\,G\setminus S\in \Pi\}$ to denote the set of all solutions. The objective is to find a smallest cardinality solution $S \in \sat_\Pi(G)$.
In the decision version of the problem, we are given a hypergraph $G \in \mathcal{G}$, an integer $k \geq 0$ and we are asked whether there exists a set $S \in \sat_\Pi(G)$ such that $\abs{S} \leq k$.

The family of $\gpivd$ problems include many well known problems. Some notable examples are:
\begin{itemize}
	\item \vc. In this case $\mathcal{G}$ corresponds to graphs, i.e. hypergraphs with edge cardinality exactly 2. The hypergraph property $\Pi$ consists of all edgeless graphs.
	\item \vc on graphs of degree at most 3. Similar to the case above, in this case $\mathcal{G}$ corresponds to graphs of degree at most 3. $\Pi$ again consists of all edgeless graphs.
	\item \hs{d}. Similar to Vertex Cover, in this case $\mathcal{G}$ corresponds to hypergraphs with edge cardinality exactly $d$. $\Pi$ also consists of all edgeless hypergraphs.
	\item \fvs. In this case $\mathcal{G}$ is the set of all graphs and $\Pi$ is the set of graphs that have no cycles.
\end{itemize}

We use $\OPT_{\cG,\Pi}(G)$ to denote the size of an optimal solution for $G\in \cG$ with respect to the $\gpivd$ problem. That is, $\OPT_{\cG,\Pi}(G) = \min\Big\{ \abs{S}\,\Big|\,S \in \sat_{\Pi}(G) \Big\}$.  If $\cG$ and $\Pi$ are known by context we use $\OPT$ instead of $\OPT_{\cG,\Pi}$. 

Our goal is to develop {\em parameterized approximation} algorithms for $\gpivd$ problems, where the parameter is the solution size.  Such algorithms return a solution of size $\alpha \cdot k$ or less (with probability at least $\frac{1}{2}$) if the optimum of the instance is at most $k$
\begin{definition}[Parameterized Approximation]
	\label{definition:alpha_approx}
	Let $\Pi$ be a hypergraph property and $\mathcal{G}$ be a closed set of hypergraphs. 
	An algorithm $\mathcal{A}$ is a {\em randomized parameterized $\alpha$-approximation algorithm} for $\gpivd$  if it takes a graph $G \in \mathcal{G}$ and an integer $k \geq 0$ as input, and returns  a solution $S\in \sat_{\Pi}(G)$ which satisfies the following.
	\begin{itemize}
		\item If $\OPT_{\cG,\Pi}(G) \leq k$,  then $\Pr\left( \abs{S}\leq \alpha \cdot k \right) \geq \frac{1}{2}$.
	\end{itemize}
	Moreover, the running time of $\mathcal{A}$ is $f(k) \cdot n^{\Oh(1)}$ for some function $f$.
\end{definition}

We note that the above definition of parameterized approximation algorithms captures the classic definition of exact parameterized algorithms ($\alpha=1$) and polynomial time approximation algorithms ($f(k) = O(1)$) as special cases. 
Sampling with a Black-Box converts a randomized parameterized  $\alpha$-approximation algorithm   $\mathcal{A}$ which runs in time $c^k\cdot n^{\Oh(1)}$ to a randomized parameterized  $\beta$-approximation algorithm $\mathcal{B}$  which runs in time $d^k\cdot n^{\Oh(1)}$. 
The conversion relies on a simple  problem dependent {\em sampling step}.
In all our application the algorithm $\mathcal{A}$ is either the state-of-art exact parameterized algorithm  (i.e., $\alpha=1$), or the state-of-art polynomial time algorithm (i.e., $c=1$) for the problem. 

\subsection{Sampling Steps}\label{sec:sampling_step_subsec}
We exploit inherent properties of a specific $\gpivd$  problem to come up with
basic sampling strategies.
From a high-level perspective, a sampling step is a polynomial-time algorithm that takes as input a hypergraph $G \in \mathcal{G} \setminus \Pi$ and returns a vertex $v \in V(G)$,
such that the size of the optimal solution of $G\setminus v$ decreases by $1$, with a prescribed probability.

\begin{definition}[Sampling Step]\label{definition:sampling_step}
	
	Let $\cG$ be a closed set of hypergraph, $\Pi$ be a hypergraph property and $q\in (0,1)$. A {\em sampling step with success probability $q$} is a  polynomial time randomized algorithm $\mathcal{R}$
	that takes as input a hypergraph $G \in \mathcal{G} \setminus \Pi$ and  returns a vertex $v \in V(G)$ such that 
		 $$\Pr\Bigl( \OPT_\Pi(G \setminus v) \leq \OPT_\Pi(G) - 1 \Bigr) \geq q.$$
\end{definition}

For example, consider
$\vc$,  which is a $\gpivd$ problem where $\mathcal{G}$ is the set of all
graphs and $\Pi$ consists of all edgeless graphs. The following is a very
simple sampling step for $\vc$ with success probability $\frac{1}{2}$:
pick an arbitrary edge and return each of its endpoints with probability
$\frac{1}{2}$. This algorithm clearly runs in polynomial time. Moreover, for
each $\vc$ $S$ of $G$ and for each edge $e$, at least one of the endpoints of  $e$  belongs to $S$. Therefore, its a sampling step with success probability $q=\frac{1}{2}$ for $\vc$. We provide sampling steps for the general case where $\Pi$ is defined by a finite set of forbidden subgraphs, for $\fvs$ and for $\povd$.

On their own, sampling steps can be easily used to derive parameterized approximation algorithms. To obtain a $\beta$-approximation, one may use the sampling step $\beta\cdot k$ times, where each execution returns a vertex $v$ which is removed from the graph and added to the solution $S$. After $\beta\cdot k$ steps, with some probability $P$, the set $S$ of returned vertices is indeed a solution. Thus, by repeating this multiple sampling step for  $\frac{1}{P}$ times, one gets a $\beta$-approximate solution with probability $\frac{1}{2}$. 
By carefully tracing the probability in the above argument, we show that $P\approx \left(\exp\left(\beta \cdot \D{\frac{1}{\beta}}{q}\right) \right)^{-k}$, as demonstrated by the following lemma.\footnote{Recall $\D{\cdot}{\cdot}$ stands for  the Kullback-Leibler divergence which has been defined in \Cref{sec:definitions}.} 

\begin{lemma}
\label{lem:sampling_to_approximation}
Let  $\mathcal{G}$ be a closed  set of hypergraphs  and $\Pi$ be an hypergraph property such that there is a sampling step with success probability $q$ for $\sgpivd$. Then for every $1\leq \beta \leq \frac{1}{q}$ there is a randomized parameterized $\beta$-approximation for $\sgpivd$ which runs in time $d^{k}\cdot n^{\Oh(1)}$ where  $d=\exp\left(\beta\cdot \D{\frac{1}{\beta}}{q}\right) $.
\end{lemma}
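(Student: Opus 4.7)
The plan is to apply the sampling step $t := \lceil \beta k \rceil$ times to construct a candidate set, and repeat this procedure $N$ times independently to boost the success probability. Given input $(G,k)$, set $G_0 := G$ and, for $i = 1, \ldots, t$, apply the sampling step to $G_{i-1}$ to obtain a vertex $v_i$ and set $G_i := G_{i-1} \setminus v_i$ (stopping early and returning the current partial set if $G_{i-1} \in \Pi$). Let $S := \{v_1, \ldots, v_t\}$; if $G_t \in \Pi$ then $S$ is a solution of size at most $\beta k + 1$. We run this attempt $N$ times; if every attempt fails we return the trivial solution $V(G)$.

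Call the $i$-th inner step a \emph{success} if $\OPT(G_i) \leq \OPT(G_{i-1}) - 1$. By \Cref{definition:sampling_step}, conditional on any history $G_0, \ldots, G_{i-1}$, the $i$-th step is a success with probability at least $q$, so the total number of successes in the $t$ steps stochastically dominates $X \sim \mathrm{Bin}(t, q)$. Crucially, if at least $k$ of the $t$ steps succeed then, starting from $\OPT(G) \leq k$, we reach $\OPT(G_t) \leq 0$, so $G_t \in \Pi$ and $S$ is a valid solution of size at most $\beta k + 1$. Hence a single attempt succeeds with probability at least $P := \Pr(X \geq k)$.

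The remaining task is to lower bound $P$. Since $\beta \leq 1/q$, we have $k/t \geq q$, so $k$ lies at or above the mean $tq$ of $X$, putting us in the regime of an upper-tail large deviation. A standard lower bound on the binomial tail, obtained by keeping just the single term $\binom{t}{k}q^k(1-q)^{t-k}$ and applying Stirling's approximation, gives $P \geq \mathrm{poly}(k)^{-1} \cdot \exp\bigl(-t \cdot \D{k/t}{q}\bigr)$. Since $t/k \to \beta$ as $k \to \infty$ and $(n,a) \mapsto n \cdot \D{a}{q}$ is continuous, the exponent satisfies $t \cdot \D{k/t}{q} \leq \beta k \cdot \D{1/\beta}{q} + \Oh(\log k)$, yielding $P \geq \mathrm{poly}(k)^{-1} \cdot d^{-k}$. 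Choosing $N = \Theta(d^k \cdot \mathrm{poly}(k))$ makes the probability that all $N$ attempts fail at most $(1-P)^N \leq 1/2$. Since the sampling step, the membership test $G_i \in \Pi$, and each attempt run in polynomial time, the total running time is $d^k \cdot n^{\Oh(1)}$ as claimed.

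The main technical obstacle is producing the binomial lower bound with the correct constant in the exponent, while cleanly handling the rounding from $\beta k$ to $t = \lceil \beta k \rceil$ and the two boundary cases: for $\beta = 1$ we need all $k$ trials to succeed, giving $P \geq q^k$ which matches $d = 1/q$ via the convention $\D{1}{q} = -\ln q$ from \Cref{sec:definitions}; for $\beta = 1/q$ the threshold $k$ equals the mean $tq$, so $P = \Omega(1/\sqrt{k})$ and $d = \exp((1/q)\D{q}{q}) = 1$, consistent with $d^k = 1$. In both extremes the general estimate degenerates to the correct bound after absorbing the prefactor into the polynomial term.
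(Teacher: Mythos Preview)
Your proof is correct and follows essentially the same approach as the paper: iterate the sampling step roughly $\beta k$ times, lower-bound the probability that at least $k$ of those steps reduce the optimum via a binomial tail estimate of the form $\mathrm{poly}(k)^{-1}\exp(-t\cdot \D{k/t}{q})$, and repeat to boost the success probability to a constant. The paper packages this same argument through an intermediate abstraction (the $(\delta,p)$-\texttt{procedure} of \Cref{definition:proc}), proving that iterated sampling realises such a procedure with $p=\phi(\beta,q)$ (\Cref{lemma:multi_step_proc}) and then invoking \Cref{observation:proc_to_approx}; your write-up is simply the unpacked version of that chain.

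One small point: with $t=\lceil \beta k\rceil$ the returned set can have size $\lceil \beta k\rceil$, which may exceed $\beta k$ by one and thus technically violate the bound $|S|\le \beta k$ in \Cref{definition:alpha_approx}. Taking $t=\lfloor \beta k\rfloor$ instead (and noting $\lfloor \beta k\rfloor\ge k$ since $\beta\ge 1$) fixes this at no cost to the analysis.
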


The proof of \cref{lem:sampling_to_approximation} can be found in \cref{sec:algorithms}.
Observe that  $\exp\left(\frac{1}{q}\cdot \D{\frac{1}{\left(\frac{1}{q}\right)}}{q}\right)= 1$ for every $q\in (0,1)$. Thus, \Cref{lem:sampling_to_approximation} provides a polynomial time $\frac{1}{q}$-approximation algorithm for $\sgpivd$. This justifies the restriction of the lemma to $\beta\leq \frac{1}{q}$.

While  \Cref{lem:sampling_to_approximation} provides a parameterized $\beta$-approximation algorithm for essentially any $\beta$, the resulting algorithms are often clearly far from optimal. For example, for $\fvs$ ($\FVS$) we provide a sampling step with success probability $\frac{1}{4}$.
Thus, by \Cref{lem:sampling_to_approximation} we get a  parameterized $1.1$-approximation for $\FVS$ which runs in time $\approx 2.944^k \cdot n^{\Oh(1)}$. However, $\FVS$ has an {\em exact} parameterized algorithm which runs in time $2.7^k\cdot n^{\Oh(1)}$. That is, the running time of the approximation algorithm is slower than that of the exact algorithm. Our goal is to combine the sampling step with the exact algorithm to attain improved running time in such cases.

\subsection{Sampling with a Black-Box}
\label{sec:blackboxsampling}

Our main theorem states the following: a sampling step with success probability $q$, together with a parameterized $\alpha$-approximation algorithm $\mathcal{A}$ which runs in time $c^k\cdot n^{\Oh(1)}$, can be used to obtain a $\beta$-approximation algorithm $\mathcal{B}$. 
By \Cref{lem:sampling_to_approximation}, the sampling step can be used to obtain a randomized parameterized $\alpha$-approximation which runs in time $\left( \exp\left(\alpha\cdot \D{\frac{1}{\alpha}}{q}\right)\right)^k\cdot n^{\Oh(1)}$. Our assumption is that $\mathcal{A}$ is at least as fast as the algorithm provided by \Cref{lem:sampling_to_approximation}, hence we only consider the case in which~$c\leq  \exp\left(\alpha\cdot \D{\frac{1}{\alpha}}{q}\right)$. 

We use the following functions to express the running time of $\mathcal{B}$.  Define two function $\sdeltal(\alpha,c,q)$ and  $\sdeltar(\alpha,c,q)$ as the unique numbers $\sdeltal(\alpha,c,q)\in \left(1,\alpha \right]$ and $\sdeltar(\alpha,c,q)\in \left[\alpha,\infty \right)$ which satisfy
	\begin{equation}\label{eq:deltast_def}
	\D{\frac{1}{\alpha}}{\frac{1}{\sdeltal(\alpha,c,q)}} = \D{\frac{1}{\alpha}}{\frac{1}{\sdeltar(\alpha,c,q)}}= \D{\frac{1}{\alpha}}{q} - \frac{\ln(c)}{\alpha}.
\end{equation}
	We write \(\sdeltal = \sdeltal(\alpha,c,q)\) and \(\sdeltar = \sdeltar(\alpha,c,q)\) if the values of $\alpha, c$ and $q$ are clear from the context.
The following lemma provides conditions which guarantee that $\sdeltal$ and $\sdeltar$ are well defined in certain domains.
\begin{restatable}{lemma}{sdeltawelldefined}\label{lemma:sdelta_well_defined}
For every $c\geq 1$, $0<q<1$ and $\alpha\geq 1$ such that $c\leq \exp\left(\alpha\cdot \D{\frac{1}{\alpha}}{q}\right)$, the value of $\sdeltar(\alpha,c,q)$ is well defined. Furthermore, if $\alpha>1$, then $\sdeltal(\alpha,c,q)$ is also well defined.	
\end{restatable}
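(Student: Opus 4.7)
The plan is to reduce the lemma to elementary monotonicity analysis of the one-variable function $f(\delta) \defeq \D{1/\alpha}{1/\delta}$ on the domain $\delta \in (1,\infty)$, where the defining equations for $\sdeltal$ and $\sdeltar$ become $f(\delta) = R$ with $R \defeq \D{1/\alpha}{q} - \frac{\ln c}{\alpha}$. I would like to show that $f$ is strictly decreasing on $(1,\alpha]$, strictly increasing on $[\alpha,\infty)$, tends to $+\infty$ at both endpoints, and satisfies $f(\alpha)=0$; combined with $R \geq 0$, this yields unique preimages in each of the two intervals.

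First I would compute $f'(\delta)$ by expanding $\D{1/\alpha}{1/\delta} = \tfrac{1}{\alpha}\ln\tfrac{\delta}{\alpha} + \tfrac{\alpha-1}{\alpha}\ln\tfrac{(\alpha-1)\delta}{\alpha(\delta-1)}$ and differentiating. A direct manipulation yields the clean form
\[
f'(\delta) \;=\; \frac{\delta - \alpha}{\alpha\,\delta\,(\delta-1)},
\]
which is negative on $(1,\alpha)$, zero at $\delta=\alpha$, and positive on $(\alpha,\infty)$. Since $f(\alpha)=0$ and the $\ln\tfrac{1}{\delta-1}$ term in $f$ diverges as $\delta \to 1^{+}$, while $f(\delta) \to +\infty$ as $\delta \to \infty$ (dominated by the $\tfrac{1}{\alpha}\ln\delta$ term), $f$ maps $(1,\alpha]$ bijectively onto $[0,\infty)$ and $[\alpha,\infty)$ bijectively onto $[0,\infty)$.

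Next I would verify that the right-hand side $R$ lies in this common range. The hypothesis $c \leq \exp(\alpha \cdot \D{1/\alpha}{q})$ is equivalent to $\tfrac{\ln c}{\alpha} \leq \D{1/\alpha}{q}$, i.e.\ $R \geq 0$, so the intermediate value theorem together with the strict monotonicity established above gives a unique $\sdeltar \in [\alpha,\infty)$ with $f(\sdeltar) = R$, and, whenever the interval $(1,\alpha]$ is non-degenerate (equivalently $\alpha > 1$), a unique $\sdeltal \in (1,\alpha]$ with $f(\sdeltal) = R$.

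The only mildly delicate point I anticipate is carrying the $\alpha=1$ boundary case cleanly: then the $\tfrac{\alpha-1}{\alpha}$ term of $\D{1/\alpha}{1/\delta}$ vanishes under the convention $0\cdot \ln 0 = 0$ (\cref{eq:KL_eq_1}), so $f(\delta) = \ln\delta$ on $[1,\infty)$, which still satisfies the monotonicity and limit claims needed for $\sdeltar$, while $\sdeltal$ is (correctly) not asserted to exist. Everything else is a routine verification of signs and limits.
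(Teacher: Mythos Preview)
Your proposal is correct and follows essentially the same approach as the paper: both reduce to monotonicity of $\delta \mapsto \D{1/\alpha}{1/\delta}$ on $(1,\alpha]$ and $[\alpha,\infty)$, combined with the limits at the endpoints and $R\ge 0$, then invoke the intermediate value theorem. The only cosmetic difference is that the paper obtains the monotonicity by citing convexity of $x \mapsto \D{1/\alpha}{x}$ with its minimum at $x=1/\alpha$ and composing with $x=1/\delta$, whereas you compute $f'(\delta)=\frac{\delta-\alpha}{\alpha\delta(\delta-1)}$ directly; both routes are equally valid and yield the same conclusions.
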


The proof of \cref{lemma:sdelta_well_defined} can be found in \Cref{sec:sdelta_expl_formula}.
We note that $\D{\frac{1}{\alpha}}{\frac{1}{x}}$ is monotone in  the domains  $x\in (1,\alpha]$ and $x\in [\alpha,\infty)$. Hence the values of  $\sdeltal(\alpha,c,q)$  and $\sdeltar(\alpha,c,q)$ can be easily evaluated to arbitrary precision using  a simple binary search. 

We use $\sdeltal$ and $\sdeltar$ to express the running time of $\mathcal{B}$.
For every $\beta,\alpha \geq 1$, $0< q \leq 1$ and $c\geq 1$ such that $c\leq \exp\left(\alpha \cdot \D{\frac{1}{\alpha}}{q}\right)$, we define
	\begin{equation}
		\label{eq:runtime}
		\runtime \coloneqq \begin{cases}
			c \cdot \exp\biggl(\frac{ \sdeltar \cdot \D{\frac{1}{\sdeltar}}{q}- \ln(c) }{\sdeltar - \alpha} \cdot \left( \beta - \alpha \right) \biggr) &\text{if } \sdeltar(\alpha, c, q) > \beta \geq\alpha\\ 
			c \cdot \exp\biggl(\frac{\sdeltal \cdot \D{\frac{1}{\sdeltal}}{q}- \ln(c) }{\sdeltal - \alpha} \cdot \left( \beta - \alpha \right) \biggr) &\text{if } \sdeltal(\alpha, c, q) < \beta \leq \alpha\\
			\exp\left(\beta\cdot \D{\frac{1}{\beta}}{q}\right) &\textnormal{ otherwise}
		\end{cases}
	\end{equation}
	As we can compute $\sdeltal$ and $\sdeltar$, it follows we can also compute $\runtime{}$ to arbitrary precision.
	
Recall that \Cref{lem:sampling_to_approximation} provides a polynomial time $\frac{1}{q}$-approximation algorithm for $\sgpivd$. Consequently, we restrict our consideration to $\alpha$ and $\beta$ values that are less than or equal to $\frac{1}{q}$. Our main technical result is the following.
\begin{restatable}[Sampling with a Black-Box]{theorem}{samplingblackbox}\label{theorem:summary_sampling_step}
	Let $\cG$ be a closed set of hypergraphs and $\Pi$ be a hypergraph property.
	Assume the following:
	\begin{itemize}
		\item There is a sampling step with success probability $q\in(0,1)$ for $\sgpivd$.
		\item There is a randomized parameterized $\alpha$-approximation algorithm for $\sgpivd$ which runs in time $c^{k}\cdot n^{\Oh(1)}$ for some $c \geq 1$, $1 \leq \alpha \leq \frac{1}{q}$ and $c\leq \exp\left(\alpha \cdot \D{\frac{1}{\alpha}}{q} \right)$. 
	\end{itemize}
Then, for every $1\leq \beta\leq \frac{1}{q}$, there is a randomized parameterized $\beta$-approximation for $\sgpivd$ which runs in time $\left( \runtime\right)^{k} \cdot n^{\Oh(1)}$.
\end{restatable}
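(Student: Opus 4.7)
I would use the natural two-phase procedure hinted at in the introduction. Fix parameters $t = \tau k$ and $k' = \kappa k$ with $\tau, \kappa \geq 0$ and $\tau + \alpha \kappa = \beta$. The inner subroutine applies the sampling step $t$ times sequentially (removing each returned vertex from the current graph) to obtain a set $T$, then invokes $\mathcal{A}$ on $G \setminus T$ with budget $k'$ to obtain $S_0$, and outputs $T \cup S_0$. The outer algorithm repeats the subroutine $\lceil 1/P \rceil$ times, where $P$ is the per-trial success probability derived below; standard boosting then yields constant overall success probability.

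For the analysis, let $X$ count the sampling steps that strictly decreased the current optimum. Conditioning on the history, the sampling guarantee gives $\Pr[\text{step } i \text{ is a success} \mid \text{history}] \geq q$, so $X$ stochastically dominates $\mathrm{Bin}(t,q)$. When $X \geq k - k'$, we have $\OPT(G \setminus T) \leq k'$, so $\mathcal{A}$ returns a solution of size at most $\alpha k'$ with probability $\geq 1/2$, yielding $|T \cup S_0| \leq t + \alpha k' = \beta k$. Applying the standard binomial lower tail bound $\Pr[\mathrm{Bin}(t,q) \geq pt] \geq (t+1)^{-1}\exp(-t\,\D{p}{q})$ with $p = (1-\kappa)/\tau \in (q,1]$ gives
\[
P \;\geq\; \mathrm{poly}(k)^{-1} \cdot \exp\bigl(-k \cdot \tau \cdot \D{(1-\kappa)/\tau}{q}\bigr),
\]
so the total running time has exponent $k \cdot \bigl(\kappa \ln c + \tau \cdot \D{(1-\kappa)/\tau}{q}\bigr)$.

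The remaining step is to minimize this exponent over $\kappa$. I would reparameterize via $\delta = \tau/(1-\kappa)$, so that $1/\delta$ is the targeted per-step success rate. Solving the budget constraint gives $\kappa = (\delta-\beta)/(\delta-\alpha)$ and $\tau = \delta(\beta-\alpha)/(\delta-\alpha)$, and after a short algebraic manipulation the exponent becomes
\[
\ln c + (\beta-\alpha) \cdot \frac{\delta \cdot \D{1/\delta}{q} - \ln c}{\delta - \alpha}.
\]
Differentiating in $\delta$ using the identity $\tfrac{d}{d\delta}\bigl[\delta \cdot \D{1/\delta}{q}\bigr] = \ln\!\tfrac{1-1/\delta}{1-q}$ shows the stationary points satisfy $\D{1/\alpha}{1/\delta} = \D{1/\alpha}{q} - \ln(c)/\alpha$, i.e., $\delta \in \{\sdeltal, \sdeltar\}$. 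Tracking the admissibility constraint $\kappa \in [0, \min(1, \beta/\alpha)]$ shows that $\delta = \sdeltar$ is admissible precisely when $\alpha \leq \beta \leq \sdeltar$, and $\delta = \sdeltal$ when $\sdeltal \leq \beta \leq \alpha$, giving the first two branches of \eqref{eq:runtime}. For $\beta$ outside $[\sdeltal, \sdeltar]$ both interior stationary points leave the admissible region, the minimum migrates to the boundary $\kappa = 0$, and one recovers the pure-sampling bound $\exp(\beta \cdot \D{1/\beta}{q})$ of \Cref{lem:sampling_to_approximation}, i.e., the third branch.

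The main technical obstacle I foresee is convexity/unimodality: one must verify that the stationary point is a global minimum on its admissible sub-interval, which reduces to a sign check on the second derivative exploiting the strict convexity of $x \mapsto \D{1/\alpha}{1/x}$ on each side of $x = \alpha$. The assumption $c \leq \exp(\alpha \cdot \D{1/\alpha}{q})$ enters precisely here: via \Cref{lemma:sdelta_well_defined} it guarantees that $\sdeltar$ (and $\sdeltal$ when $\alpha > 1$) exist so that the three-case formula is meaningful. The boundary case $\beta = \alpha$ is handled by taking $\tau = 0$, $\kappa = 1$, which just runs $\mathcal{A}$ directly and matches the formula's value $c^k$.
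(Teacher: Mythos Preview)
Your proposal is correct and follows essentially the same route as the paper: the paper also runs the sampling step a tunable number of times, extends with the black box, bounds the success probability via stochastic domination by a binomial and the Kullback--Leibler tail estimate, and then optimizes the resulting exponent over a single parameter $\delta$ (your $\tau/(1-\kappa)$), obtaining exactly your expression $\ln c + (\beta-\alpha)\cdot\frac{\delta\,\D{1/\delta}{q}-\ln c}{\delta-\alpha}$ and the stationarity condition $\D{1/\alpha}{1/\delta}=\D{1/\alpha}{q}-\ln(c)/\alpha$. Two small remarks: the paper packages the sampling phase as an abstract $(\delta,p)$-procedure, which makes the argument more modular but is otherwise identical; and for the unimodality step the paper uses \emph{monotonicity} of $\delta\mapsto\D{1/\alpha}{1/\delta}$ on each side of $\alpha$ (giving a single sign change of the first derivative) rather than convexity, so you may want to adjust that part of your sketch.
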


The proof of \Cref{theorem:summary_sampling_step} is given in \Cref{sec:algorithms}. 
For example, by \Cref{theorem:summary_sampling_step}, we can use the sampling step with success probability $\frac{1}{4}$ for $\FVS$, together with the exact parameterized algorithm for the problem which runs in time $2.7^k\cdot n^{\Oh(1)}$ \cite{liDetectingFeedbackVertex2022}, to get a randomized parameterized $1.1$-approximation algorithm with running time $2.49^k \cdot n^{\Oh(1)}$. The algorithm achieves a better running time than that of the exact parameterized algorithm, as well as the running time which can be attain solely by the sampling step, i.e. $2.944^{k} \cdot n^{\Oh(1)}$ (\Cref{lem:sampling_to_approximation}). We note that this running time is also superior to the running time of $2.62^{k}\cdot n^{\Oh(1)}$ for the same approximation ratio given in \cite{janaParameterizedApproximationScheme2023}. 

The running time of the algorithm generated by \Cref{theorem:summary_sampling_step}  (i.e., the value of $\runtime$) can always be computed efficiently, though this computation requires a binary search for the evaluation of $\sdeltar$ and $\sdeltal$. For the special cases of $\alpha =1$ and well as ($\alpha=2$ and $c=1$) we  provide a closed form expression for $\runtime$. 
\begin{restatable}[simple formula for $\alpha=1$]{theorem}{simpslaphaone}
\label{theorem:summary_sampling_step_alpha_1}
For every   $0 < q < 1$, $1 \leq \beta \leq \frac{1}{q}$ and $c \geq 1$ such that $c \leq \exp\left(1\cdot \D{\frac{1}{1}}{q}\right)  = \frac{1}{q}$ it holds that
\begin{equation*}
	\runtime[1,\beta,c,q] = \begin{cases}
		c \cdot \left( \frac{1 - c \cdot q}{1 - q} \right)^{\beta - 1} &\text{if } 1 \leq \beta < \frac{1}{q \cdot c}\\ 
		\exp\left( \beta \cdot \D{\frac{1}{\beta}}{q} \right) &\text{if } \frac{1}{q \cdot c} \leq \beta \leq \frac{1}{q} 
	\end{cases}
\end{equation*}
\end{restatable}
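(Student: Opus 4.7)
The plan is to derive the stated closed form directly by instantiating \eqref{eq:runtime} with $\alpha=1$ and explicitly computing $\sdeltar(1,c,q)$. Since \Cref{lemma:sdelta_well_defined} only guarantees $\sdeltal$ when $\alpha>1$, and the interval $x\in(1,\alpha]$ collapses to the empty interval for $\alpha=1$, the middle branch of \eqref{eq:runtime} never applies. Hence the formula for $\runtime[1,\beta,c,q]$ reduces to the top branch when $\beta<\sdeltar(1,c,q)$ and to the ``otherwise'' branch when $\beta\geq \sdeltar(1,c,q)$. The hypothesis $c\leq 1/q = \exp(\D{1}{q})$ is exactly the condition needed by \Cref{lemma:sdelta_well_defined} for $\sdeltar$ to be well defined.

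First I would solve for $\sdeltar(1,c,q)$ using the defining equation \eqref{eq:deltast_def}. With $\alpha=1$ it reads $\D{1}{1/\sdeltar}=\D{1}{q}-\ln(c)$. Applying the identity \eqref{eq:KL_eq_1}, namely $\D{1}{b}=-\ln(b)$, on both sides yields $\ln(\sdeltar)=-\ln(q)-\ln(c)$, so
\[
	\sdeltar(1,c,q) \;=\; \frac{1}{q\,c}.
\]
This immediately identifies the threshold between the two cases in the claimed formula: the condition $\sdeltar>\beta\geq 1$ becomes $1\leq\beta<\frac{1}{qc}$, while the ``otherwise'' branch corresponds to $\frac{1}{qc}\leq\beta\leq \frac{1}{q}$. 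The latter branch of \eqref{eq:runtime} is literally $\exp(\beta\cdot \D{\frac{1}{\beta}}{q})$, which matches the theorem statement without further work.

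For the remaining case $1\leq\beta<\frac{1}{qc}$, the task is to simplify the exponent
\[
	\frac{\sdeltar\cdot \D{1/\sdeltar}{q}-\ln(c)}{\sdeltar-1}\cdot(\beta-1)
\]
with $\sdeltar=1/(qc)$. Expanding $\D{qc}{q}=qc\cdot\ln(c)+(1-qc)\cdot\ln\!\left(\frac{1-qc}{1-q}\right)$ and multiplying by $\sdeltar=1/(qc)$ gives
\[
	\sdeltar\cdot \D{1/\sdeltar}{q} \;=\; \ln(c)+\frac{1-qc}{qc}\cdot\ln\!\left(\frac{1-qc}{1-q}\right),
\]
so the numerator of the coefficient becomes $\frac{1-qc}{qc}\cdot\ln\!\left(\frac{1-qc}{1-q}\right)$. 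Since the denominator is $\sdeltar-1=\frac{1-qc}{qc}$, the two factors cancel and the coefficient collapses to $\ln\!\left(\frac{1-qc}{1-q}\right)$. Substituting back into \eqref{eq:runtime} yields
\[
	\runtime[1,\beta,c,q] \;=\; c\cdot\left(\frac{1-qc}{1-q}\right)^{\beta-1},
\]
which is the top branch of the claim. The step I expect to write most carefully is the algebraic cancellation in the coefficient; it is not deep but it is the only nontrivial computation in the argument, and it is what makes the closed form work out to a single clean exponential in $\beta-1$.
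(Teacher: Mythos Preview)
Your proposal is correct and follows essentially the same approach as the paper: compute $\sdeltar(1,c,q)=\frac{1}{qc}$ from \eqref{eq:deltast_def} via the identity $\D{1}{b}=-\ln(b)$, then plug into \eqref{eq:runtime} and simplify the KL term so that the $(1-qc)/(qc)$ factor cancels against $\sdeltar-1$. The paper separates out the $\sdeltar$ computation as a standalone lemma but the algebra is identical; your explicit remark that the $\sdeltal$ branch is vacuous when $\alpha=1$ is a small clarification the paper leaves implicit.
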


\begin{restatable}[simple formula for $\alpha =2$ and $c=1$]{theorem}{simplealphatwo}
	\label{theorem:summary_sampling_step_alpha_2}
	Let $0 < q \leq \frac{1}{2}$ and  $1 \leq \beta \leq 2$.
	Then it holds that 
	\begin{equation*}
		\runtime[2,\beta,1,q] = \begin{cases}
			\exp\left( \beta \cdot \D{\frac{1}{\beta}}{q} \right) &\text{if } 1 \leq \beta \leq \frac{1}{1 - q}\\
			\left( \frac{q}{1 - q} \right)^{\beta - 2}  &\text{if } \frac{1}{1 - q} < \beta \leq 2\\
		\end{cases}
	\end{equation*}
\end{restatable}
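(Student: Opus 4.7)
The plan is to instantiate the general definition of $\runtime[\alpha,\beta,c,q]$ at $\alpha = 2$ and $c=1$, which only requires identifying the two auxiliary quantities $\sdeltal(2,1,q)$ and $\sdeltar(2,1,q)$ in closed form and then simplifying the middle branch of \eqref{eq:runtime}. Since $c=1$ gives $\ln(c)=0$, the defining equation \eqref{eq:deltast_def} reduces to $\D{1/2}{1/x} = \D{1/2}{q}$. Expanding both sides and clearing logarithms turns this into the quadratic
\begin{equation*}
q(1-q)\,x^2 - x + 1 = 0,
\end{equation*}
whose discriminant $1-4q(1-q) = (1-2q)^2$ is a perfect square. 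The two roots are $x = \frac{1}{q}$ and $x = \frac{1}{1-q}$; under the assumption $0<q\le \tfrac{1}{2}$, the first lies in $[2,\infty)$ and the second in $(1,2]$, so by the domain requirements of $\sdeltar,\sdeltal$ we identify $\sdeltar(2,1,q) = \tfrac{1}{q}$ and $\sdeltal(2,1,q) = \tfrac{1}{1-q}$.

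Next I would handle the two cases of the theorem. When $1 \le \beta \le \tfrac{1}{1-q} = \sdeltal$, we have $\beta \le \sdeltal \le \alpha = 2$ and $\beta<\sdeltar$, so neither of the first two branches of \eqref{eq:runtime} applies; the "otherwise" branch gives exactly $\runtime[2,\beta,1,q] = \exp\bigl(\beta\cdot \D{1/\beta}{q}\bigr)$, matching the first case of the theorem. When $\tfrac{1}{1-q} < \beta \le 2$, i.e.\ $\sdeltal < \beta \le \alpha$, we fall into the middle branch of \eqref{eq:runtime}, so
\begin{equation*}
\runtime[2,\beta,1,q] \;=\; \exp\!\Biggl(\frac{\sdeltal\cdot \D{1/\sdeltal}{q}}{\sdeltal - 2}\cdot (\beta - 2)\Biggr).
\end{equation*}

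The remaining step is a direct symbolic simplification of the coefficient in the exponent with $s \defeq \sdeltal = \tfrac{1}{1-q}$, using $\tfrac{1}{s} = 1-q$ and $1-\tfrac{1}{s} = q$. The numerator expands as
\begin{equation*}
s\cdot \D{1/s}{q} \;=\; (1-q)\ln\!\tfrac{1-q}{q} + q\ln\!\tfrac{q}{1-q} \;=\; (1-2q)\ln\!\tfrac{1-q}{q},
\end{equation*}
and times $s=\tfrac{1}{1-q}$ this becomes $\tfrac{1-2q}{1-q}\ln\!\tfrac{1-q}{q}$; meanwhile $s-2 = \tfrac{2q-1}{1-q}$, so the ratio collapses to $-\ln\!\tfrac{1-q}{q} = \ln\!\tfrac{q}{1-q}$. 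Substituting back gives $\runtime[2,\beta,1,q] = \bigl(\tfrac{q}{1-q}\bigr)^{\beta-2}$, as claimed.

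There is no real obstacle here — the argument is purely computational once the closed-form roots of the defining quadratic are available. The only point requiring care is verifying that $q\le \tfrac{1}{2}$ places the two roots in the correct intervals required by Lemma~\ref{lemma:sdelta_well_defined} (so that $\sdeltal$ and $\sdeltar$ are indeed the named quantities), and checking that the boundary case $\beta = \tfrac{1}{1-q}$ gives the same value under both branches, which follows since at $\beta = \sdeltal$ the definition of $\sdeltal$ forces $\exp\bigl(\beta \cdot \D{1/\beta}{q}\bigr) = \exp\bigl(\sdeltal\cdot \D{1/\sdeltal}{q}\bigr)$ to agree with the closed-form expression on the middle branch.
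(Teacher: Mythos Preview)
Your proposal is correct and follows essentially the same structure as the paper: identify $\sdeltal(2,1,q)$ in closed form, then plug into \eqref{eq:runtime} and simplify the $\sdeltal$-branch. The one genuine difference is how you find $\sdeltal$: the paper observes (Lemma~\ref{lemma:sampling_2_approx_alpha_2}) that $\D{\tfrac12}{x}$ is symmetric about $x=\tfrac12$, so $\D{\tfrac12}{1-q}=\D{\tfrac12}{q}$ immediately gives $\sdeltal=\tfrac{1}{1-q}$ without any computation, whereas you expand the defining equation into the quadratic $q(1-q)x^2 - x + 1 = 0$ and read off both roots. Your route additionally yields $\sdeltar=\tfrac{1}{q}$, which the paper never computes (it is not needed, since $\beta\le 2=\alpha$ already rules out the first branch of \eqref{eq:runtime}). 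One small slip to fix: in your simplification you write ``$s\cdot\D{1/s}{q} = (1-q)\ln\tfrac{1-q}{q} + q\ln\tfrac{q}{1-q}$'' and then multiply by $s$ again; the right-hand side there is $\D{1/s}{q}$, not $s\cdot\D{1/s}{q}$. The subsequent arithmetic and the final answer are unaffected.
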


Both \Cref{theorem:summary_sampling_step_alpha_1,theorem:summary_sampling_step_alpha_2}  follow from a closed form formula which we can attain for $\sdeltar$ or~$\sdeltal$  for the specific values of $\alpha$ and $c$ considered in the theorems. The proof of \Cref{theorem:summary_sampling_step_alpha_1,theorem:summary_sampling_step_alpha_2} is given in \Cref{sec:sdelta_expl_formula}. 

\subsection{Comparison to Fidelity Preserving Transformations}
\label{sec:comparison}

We compare our technique, Sampling with a Black-Box, with the technique of \cite{Fellows2018} for \nmgpivd\  problems in which $\Pi$ is defined by a finite set of forbidden vertex  induced hypergraphs. 

\begin{definition}\label{definition:forbidden_hypergraph}
	Let $\Omega = \{F_1, \ldots, F_\ell\} $ be a finite set of hypergraphs for $\ell > 0$.
		Then $\Pi^{\Omega}$ is the hypergraph property where  a hypergraph $G$ belongs to $\Pi^{\Omega}$
	if and only if there is no vertex induced subhypergraph $X$ of $G$ such that $X$ is isomorphic to $F_i$,
	for some $1 \leq i \leq \ell$.
\end{definition}
We note that $\Pi^{\Omega}$ is always hereditary and polynomial-time decidable.  We also note the the family of graphs properties defined by a finite set of forbidden hypergraph suffices to define many fundamental graph problems such as \vc, \hs{d}, \pathvc{\ell} and \dfvst.  

For a set of hypergraphs $\Omega = \{F_1, \ldots, F_\ell\}$, define
$\eta(\Omega) \coloneqq \max_{1 \leq i \leq \ell} \abs{V\left( F_i \right) }$, the maximal number of vertices of a hypergraph in $\Omega$. The following result has been (implicitly) given in  \cite{Fellows2018}.
\begin{lemma}[\cite{Fellows2018}]
	\label{lemma:fellows2018}
	Let $\Omega$ be a finite set of hypergraphs and let $\cG$ be a closed set of hypergraphs. Furthermore, assume there is an randomized exact parameterized $c^k \cdot n^{\Oh(1)}$ algorithms for $\gpivd[\cG, \Pi^{\Omega}]$. Then for every $1\leq\beta\leq \eta(\Omega)$ there is a randomized parameterized $\beta$-approximation algorithm for $\gpivd[\cG, \Pi^{\Omega}]$ which runs in time $$c^{\frac{\eta(\Omega)-\beta}{\eta(\Omega)-1}\cdot k}\cdot n^{\Oh(1)}.$$ 
\end{lemma}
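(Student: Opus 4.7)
The plan is to prove \Cref{lemma:fellows2018} by a straightforward fidelity-preserving reduction of parameterized instances to smaller parameterized instances, then invoking the given exact algorithm as a black box. Let $\eta = \eta(\Omega)$ and, given an input $(G,k)$, set
\[
t = \left\lfloor \frac{\beta - 1}{\eta - 1}\cdot k \right\rfloor.
\]
The algorithm maintains a partial solution $S$ (initially empty) and repeats the following step at most $t$ times: if $G \in \Pi^{\Omega}$, return $S$; otherwise locate any vertex-induced subhypergraph $F$ of $G$ isomorphic to some $F_i \in \Omega$, add $V(F)$ to $S$, and replace $G$ by $G \setminus V(F)$. Since $\Omega$ is a fixed finite collection and each $F_i$ has at most $\eta$ vertices (a constant), such an $F$ can be found in time $n^{\Oh(1)}$ by brute force over subsets of $V(G)$ of size at most $\eta$. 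After the loop terminates, call the assumed randomized exact parameterized algorithm on the residual graph $G'$ with parameter $k' = k - t'$, where $t'\leq t$ is the number of iterations actually performed, and return the union of $S$ with the algorithm's output.

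For correctness I would argue as follows. Because $\cG$ is closed under vertex deletion (assumption from \Cref{sec:definitions}), every intermediate graph belongs to $\cG$, so it is a valid input for the black box. Each forbidden subhypergraph $F$ selected in the loop must intersect every solution of $G$; equivalently, $\OPT_{\Pi^{\Omega}}(G\setminus V(F)) \leq \OPT_{\Pi^{\Omega}}(G) - 1$. By a telescoping argument, after $t'$ iterations the residual graph $G'$ satisfies $\OPT_{\Pi^{\Omega}}(G') \leq k - t'$. Hence, with probability at least $\tfrac{1}{2}$, the exact algorithm returns a solution of size at most $k - t'$, and the overall returned set $S \cup (\text{output})$ is a valid solution of $G$ of size at most
\[
\eta \cdot t' + (k - t') = k + (\eta - 1)\cdot t' \leq k + (\eta - 1)\cdot t \leq k + (\beta - 1)\cdot k = \beta \cdot k.
\]
If the loop terminated early because the current graph was already in $\Pi^{\Omega}$, then $S$ itself is a valid solution of size at most $\eta \cdot t' \leq \beta \cdot k$, and no black-box call is needed.

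For the running time, the loop performs at most $t$ polynomial-time iterations, and the exact algorithm is invoked with parameter $k - t' \leq k - t \leq \frac{\eta-\beta}{\eta-1}\cdot k$ (using $t \geq \frac{\beta-1}{\eta-1}\cdot k - 1$, the constant additive loss being absorbed into the $n^{\Oh(1)}$ factor). Thus the total running time is bounded by $c^{\frac{\eta - \beta}{\eta - 1}\cdot k}\cdot n^{\Oh(1)}$, and the success probability is inherited from the black-box algorithm. There is essentially no main obstacle here; the only subtlety is the $\lfloor\cdot\rfloor$ in the choice of $t$ and the possibility of early termination, both of which only improve the solution size bound and affect the exponent by a multiplicative constant on $n$.
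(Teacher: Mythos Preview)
The paper does not actually prove \Cref{lemma:fellows2018}; it is attributed to \cite{Fellows2018} and only stated here as a benchmark. Your proof is the standard fidelity-preserving transformation argument and is essentially correct: iteratively kill a forbidden subhypergraph by deleting all of its (at most $\eta$) vertices, which drops the optimum by at least one each time, then hand the residual instance to the exact black box with the reduced parameter. The size and running-time calculations are fine.

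One cosmetic slip: you write ``the exact algorithm is invoked with parameter $k - t' \leq k - t$'', but the inequality goes the other way when $t' \leq t$. This does not matter because, by your own description, the black box is only called when the loop completed all $t$ iterations (early termination returns $S$ directly), so in that branch $t' = t$ and the parameter is exactly $k - t$. Just replace the inequality by an equality there.
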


There is a simple and generic way to design a sampling step for $\Pi^{\Omega}$. The sampling step finds a set of vertices $S\subseteq V(G)$  of the input graph such that $G[V]$ is isomorphic for a graph in $\Omega$, and returns a vertex from $S$ uniformly at random. This leads to the following lemma. 
\begin{lemma}\label{lemma:finite_forb_sampling}
	Let $\Omega$ be a finite set of hypergraphs and let $\cG$ be  a closed set of hypergraphs.
	There is a sampling step for $\gpivd[\mathcal{G}, \Pi^{\Omega}]$ with success probability $\frac{1}{\eta(\Omega)}$.
\end{lemma}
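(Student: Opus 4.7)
The plan is a direct, constructive argument: given an input hypergraph $G \in \mathcal{G}\setminus \Pi^{\Omega}$, the sampling algorithm $\mathcal{R}$ first locates a small vertex set $S \subseteq V(G)$ whose induced subhypergraph realizes one of the forbidden patterns, and then returns a vertex $v$ chosen uniformly at random from $S$.

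To see that this can be done in polynomial time, set $\eta = \eta(\Omega)$ and observe that $\eta$ is a constant depending only on $\Omega$. The algorithm enumerates all subsets $S \subseteq V(G)$ of size at most $\eta$, which can be done in time $O(n^{\eta})$; for each such $S$ it checks, in constant time, whether $G[S]$ is isomorphic to some $F_i \in \Omega$. Since $G \notin \Pi^{\Omega}$, at least one such $S$ must exist by \Cref{definition:forbidden_hypergraph}. The algorithm picks any such $S$ and samples $v \in S$ uniformly.

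For the success probability, fix an optimal solution $O \subseteq V(G)$, so that $G\setminus O \in \Pi^{\Omega}$ and $|O| = \OPT_{\Pi^{\Omega}}(G)$. The key claim is $O \cap S \neq \emptyset$: otherwise $S \subseteq V(G)\setminus O$ would give $G[S] = (G\setminus O)[S]$, so $G\setminus O$ would contain a vertex induced subhypergraph isomorphic to some $F_i \in \Omega$, contradicting $G\setminus O \in \Pi^{\Omega}$. Hence the probability that the returned vertex $v$ belongs to $O$ is $\frac{|S\cap O|}{|S|} \geq \frac{1}{|S|} \geq \frac{1}{\eta(\Omega)}$. Whenever $v \in O$, the set $O \setminus \{v\}$ is a solution of $G\setminus v$ since $(G\setminus v)\setminus (O\setminus\{v\}) = G\setminus O \in \Pi^{\Omega}$, and it has size $|O|-1$; therefore $\OPT_{\Pi^{\Omega}}(G\setminus v) \leq \OPT_{\Pi^{\Omega}}(G) - 1$, which gives the desired bound.

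There is no significant obstacle in this argument; the only subtlety is the identity $G[S] = (G\setminus O)[S]$ when $S \cap O = \emptyset$, which is immediate from the definition of vertex deletion and guarantees that a forbidden induced subhypergraph in $G$ survives in $G\setminus O$ whenever the optimum avoids $S$.
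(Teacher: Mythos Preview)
Your proof is correct and follows essentially the same approach as the paper's: both construct the sampling step by locating a forbidden induced subhypergraph in polynomial time via brute-force enumeration over vertex subsets of bounded size, then sampling a vertex uniformly from it, and both argue that any (optimal) solution must intersect this set, yielding the $\frac{1}{\eta(\Omega)}$ success probability.
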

A formal proof for \Cref{lemma:finite_forb_sampling} is given in 	\Cref{sec:sampling_step}. Together with \Cref{theorem:summary_sampling_step} the lemma implies the following. 
\begin{corollary}\label{cor:finite_forbidden}
		Let $\Omega$ be a finite set of hypergraphs and $\cG$ be  a closed set of hypergraphs. Furthermore, assume there is a randomized exact parameterized $c^k \cdot n^{\Oh(1)}$ algorithms for $\gpivd[\cG, \Pi^{\Omega}]$. Then for every $1\leq\beta\leq \eta(\Omega)$ there is a randomized parameterized $\beta$-approximation algorithm for $\gpivd[\cG, \Pi^{\Omega}]$ which runs in time $$\left(\runtime[1,\beta,c,\frac{1}{\eta(\Omega)}] \right) ^{k}\cdot n^{\Oh(1)}.$$ 
\end{corollary}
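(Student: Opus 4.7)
The proof plan is to compose \Cref{lemma:finite_forb_sampling} with \Cref{theorem:summary_sampling_step}, treating the assumed exact algorithm as the required randomized parameterized approximation. First I would invoke \Cref{lemma:finite_forb_sampling} to obtain a sampling step for $\gpivd[\cG,\Pi^{\Omega}]$ with success probability $q = \frac{1}{\eta(\Omega)}$. Second, I would observe that any exact parameterized algorithm is trivially a randomized parameterized $\alpha$-approximation algorithm with $\alpha = 1$, and that the promised running time $c^k\cdot n^{\Oh(1)}$ matches the hypothesis of \Cref{theorem:summary_sampling_step}.

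The next step is to verify the preconditions of \Cref{theorem:summary_sampling_step} for the parameters $\alpha=1$, $q=\frac{1}{\eta(\Omega)}$, and~$c$. The inequality $1 \leq \alpha \leq \frac{1}{q}$ becomes $1 \leq 1 \leq \eta(\Omega)$, which is immediate whenever $\Omega$ is non-empty. The condition $c \leq \exp\!\left(\alpha \cdot \D{\frac{1}{\alpha}}{q}\right)$ simplifies, using \eqref{eq:KL_eq_1}, to $c \leq \exp(-\ln q) = \eta(\Omega)$. This is the only mildly delicate point of the argument; it is handled by noting that we may assume $c \leq \eta(\Omega)$ without loss of generality, since the naive branching algorithm for $\gpivd[\cG,\Pi^{\Omega}]$ (repeatedly locate in polynomial time a forbidden vertex-induced subhypergraph and branch on which of its at most $\eta(\Omega)$ vertices to delete) solves the problem exactly in time $\eta(\Omega)^k\cdot n^{\Oh(1)}$. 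If the hypothesized algorithm has $c > \eta(\Omega)$, we simply replace it with this branching algorithm.

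With both preconditions met, I would apply \Cref{theorem:summary_sampling_step} with the parameters $\alpha=1$, $c$, and $q=\frac{1}{\eta(\Omega)}$. For each $1 \leq \beta \leq \frac{1}{q} = \eta(\Omega)$ the theorem produces a randomized parameterized $\beta$-approximation algorithm for $\gpivd[\cG,\Pi^{\Omega}]$ whose running time is
\[
\left(\runtime[1,\beta,c,\tfrac{1}{\eta(\Omega)}]\right)^{k}\cdot n^{\Oh(1)},
\]
which is exactly the conclusion of the corollary. Since the range $1\leq \beta \leq \eta(\Omega)$ in the corollary matches the permitted range $\beta\leq \frac{1}{q}$ in the theorem, no additional case analysis is needed.

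I do not expect any substantial obstacle here, as the corollary is a direct plug-and-play composition of results established earlier in the paper. The only item requiring a sentence of justification, as described above, is the WLOG reduction ensuring $c \leq \eta(\Omega)$ so that \Cref{theorem:summary_sampling_step} is applicable; everything else is bookkeeping of parameters.
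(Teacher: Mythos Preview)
Your proposal is correct and follows exactly the route the paper takes: the corollary is stated there as an immediate consequence of \Cref{lemma:finite_forb_sampling} combined with \Cref{theorem:summary_sampling_step}, with no further argument given. You are in fact more careful than the paper, since you explicitly justify the precondition $c \leq \eta(\Omega)$ via the naive branching algorithm, whereas the paper leaves this implicit.
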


Observe that \Cref{lemma:fellows2018} and \Cref{cor:finite_forbidden} only differ in the resulting running time. The following lemma implies  that for every $1<\beta<\eta(\Omega)$ the running time of \Cref{cor:finite_forbidden}, the running time of Sampling with a Black-Box, is always strictly better than the running time of \Cref{lemma:fellows2018}, the result of \cite{Fellows2018}. 

\begin{restatable}{lemma}{fidelitycomparison}
	\label{lemma:comparison}
	For every   $\eta \in \mathbb{N}$ such that $\eta \geq 2$, $1<c<\eta$  and $1<\beta <\eta$ it holds that	
	$$\runtime[1,\beta,c,\frac{1}{\eta}] \,<\, c^{\frac{\eta-\beta}{\eta-1}}.$$
	
\end{restatable}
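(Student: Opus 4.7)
The plan is to take logarithms and reduce the lemma to comparing a piecewise function against a linear one on $(1,\eta)$. Let $\varphi(\beta) := \frac{\eta - \beta}{\eta - 1}\ln c$ be the logarithm of the bound of \cite{Fellows2018}, and let $\psi(\beta) := \ln \runtime[1, \beta, c, 1/\eta]$; the lemma becomes $\psi(\beta) < \varphi(\beta)$ for every $\beta \in (1, \eta)$. The hypothesis $c \leq \exp(1 \cdot \D{1}{1/\eta}) = \eta$ needed to invoke \Cref{theorem:summary_sampling_step_alpha_1} holds since $c < \eta$, so that theorem gives the explicit form
\[
\psi(\beta) = \begin{cases} \ln c + (\beta - 1)\ln\frac{\eta - c}{\eta - 1}, & 1 \leq \beta < \eta/c, \\[0.3em] \beta \cdot \D{1/\beta}{1/\eta}, & \eta/c \leq \beta \leq \eta. \end{cases}
\]
A direct substitution yields $\varphi(1) = \psi(1) = \ln c$ and $\varphi(\eta) = \psi(\eta) = 0$, so the two curves already agree at both endpoints of the full range and the content of the lemma is that $\psi$ stays strictly below $\varphi$ in between.

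For the first piece both $\varphi$ and $\psi$ are affine in $\beta$ with common value $\ln c$ at $\beta = 1$, so $\psi(\beta) < \varphi(\beta)$ on $(1, \eta/c]$ reduces to the slope comparison $\ln \frac{\eta - c}{\eta - 1} < -\frac{\ln c}{\eta - 1}$. I plan to verify this by studying the auxiliary function $f(c) := (\eta - 1)\ln\frac{\eta - c}{\eta - 1} + \ln c$: one checks $f(1) = 0$ and $f'(c) = \frac{\eta(1 - c)}{c(\eta - c)} < 0$ for $c \in (1, \eta)$, so $f(c) < 0$ on that interval. This establishes $\psi(\beta) < \varphi(\beta)$ for $\beta \in (1, \eta/c]$, where I extend the inequality to the endpoint $\beta = \eta/c$ by continuity of $\psi$, which amounts to the short identity $\exp\!\bigl((\eta/c)\,\D{c/\eta}{1/\eta}\bigr) = c \cdot \bigl(\frac{\eta - c}{\eta - 1}\bigr)^{\eta/c - 1}$ verifying that the two pieces of the formula meet at $\beta = \eta/c$.

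For the second piece I will exploit concavity. A one-line differentiation gives $\psi''(\beta) = \frac{1}{\beta(\beta - 1)} > 0$ on $[\eta/c, \eta]$, so $\psi$ is strictly convex there; since $\varphi$ is linear, $G := \varphi - \psi$ is strictly concave on $[\eta/c, \eta]$. I have $G(\eta) = 0$ directly, and the previous paragraph supplies $G(\eta/c) > 0$. Strict concavity implies $G(\beta)$ lies at or above the chord from $(\eta/c, G(\eta/c))$ to $(\eta, 0)$ on the open interval $(\eta/c, \eta)$, and since $G(\eta/c) > 0$ that chord is strictly positive there; so $G > 0$ on $(\eta/c, \eta)$, finishing the proof.

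The main obstacle is not any individual calculation but the structural choice of splitting along $\beta = \eta/c$ and handling the two pieces by essentially different mechanisms: a slope comparison in the affine regime (which turns on the sign of $f$) and a concavity-plus-chord argument in the convex regime. Once this split is chosen, each ingredient — continuity of $\psi$ at $\beta = \eta/c$, the sign of $f$ on $(1, \eta)$, and the strict convexity of $\beta \cdot \D{1/\beta}{1/\eta}$ — is a short computation.
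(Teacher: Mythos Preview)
Your proof is correct and shares the paper's overall structure: take logarithms, use \Cref{theorem:summary_sampling_step_alpha_1} to identify the breakpoint $\beta=\eta/c$, and handle the two regimes separately. For the affine regime $\beta\in(1,\eta/c]$ you and the paper both reduce to the same slope inequality $\ln\frac{\eta-c}{\eta-1}<-\frac{\ln c}{\eta-1}$; you prove it via the auxiliary function $f(c)=(\eta-1)\ln\frac{\eta-c}{\eta-1}+\ln c$ with $f(1)=0$ and $f'<0$, while the paper uses the elementary bounds $\ln(1+x)\geq x/(1+x)$ and $\ln x<x-1$.

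The substantive difference is the second regime $\beta\in[\eta/c,\eta)$. The paper proceeds computationally: it expands $\beta\,\D{1/\beta}{1/\eta}$, proves the auxiliary inequality $\ln\frac{\beta-1}{\eta-1}<\frac{\eta}{\eta-1}\ln\frac{\beta}{\eta}$ via a separate calculus argument, and then assembles the pieces. Your route is more structural: you observe that $\psi''(\beta)=\frac{1}{\beta(\beta-1)}>0$ makes $G=\varphi-\psi$ strictly concave, and since $G(\eta/c)>0$ (inherited from the first regime) and $G(\eta)=0$, the chord lies strictly above zero on the open interval and $G$ lies above the chord. This convexity-plus-endpoint argument is cleaner and avoids the paper's ad hoc inequality; it also makes transparent why the first-regime inequality at the breakpoint is exactly what is needed to seed the second regime.
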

The proof of \Cref{lemma:comparison} is given in \cref{sec:fidelity_proof}.

\section{Applications}
\label{sec:applications}
In this section we will describe some problems to which our results can be applied.
For each problem, we utilize sampling steps to obtain parameterized approximation algorithms. 
We also compare the running time of our algorithm with a benchmark whenever applicable.

\subsection{\fvs}
Recall that given a graph $G$ and integer $k$, the \fvs problem asks whether
there exists a set $S \subseteq V(G)$ of size at most $k$ such that $G \setminus S$ is acyclic.
\fvs can also be described as a $\gpivd$ problem,
where $\mathcal{G}$ is the set of all graphs and $\Pi$ is the set of graphs that have no cycles.
First, we demonstrate that there exists a sampling step for \fvs.
\begin{lemma}\label{lemma:fvs_sampling}
	\fvs has a sampling step with success probability $\frac{1}{4}$.
\end{lemma}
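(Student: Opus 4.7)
The plan is to adapt the classical randomized branching rule of Becker, Bar-Yehuda, Geiger, and Karakostas~\cite{beckerRandomizedAlgorithmsLoop2000}. Before sampling, I would first normalize $G$ by the standard \fvs{} reductions: delete every vertex of degree at most $1$ (such a vertex lies on no cycle, so $\OPT$ is unchanged) and suppress every vertex $v$ of degree exactly $2$ by adding an edge between its two neighbors (any minimum feedback vertex set containing such a $v$ can be rerouted through a neighbor without increasing size, so $\OPT$ is preserved). If at any point a self-loop appears on some vertex $u$, that vertex lies in every feedback vertex set, so the algorithm can return $u$ deterministically, which already yields $\OPT(G \setminus u) \leq \OPT(G) - 1$ with probability $1$. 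Otherwise, the reduction terminates with a (multi-)graph $G'$ of minimum degree at least $3$ such that $V(G') \subseteq V(G)$ and every minimum feedback vertex set of $G'$ is also a minimum feedback vertex set of $G$.

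Having reduced to $G'$, the sampling step is the Becker et al.\ rule: pick an edge $e \in E(G')$ uniformly at random, then pick one of its two endpoints uniformly at random and return that vertex. The probability that this procedure outputs a specific vertex $u$ equals $\deg_{G'}(u)/(2|E(G')|)$. Consequently, for any minimum feedback vertex set $S$ of $G'$ (which is also a minimum feedback vertex set of $G$), the probability that the sampled vertex lies in $S$ equals $\sum_{u \in S} \deg_{G'}(u)/(2|E(G')|)$. Once this quantity is shown to be at least $\tfrac{1}{4}$, the event $v \in S$ implies $\OPT(G \setminus v) \leq \OPT(G) - 1$, completing the proof.

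The heart of the argument is a double-counting inequality. Let $F = V(G') \setminus S$, and write $e_S, e_F, e_{SF}$ for the number of edges inside $G'[S]$, inside $G'[F]$, and between $S$ and $F$ respectively. Since $S$ is a feedback vertex set, $G'[F]$ is a forest, giving $e_F \leq |F| - 1$ (the case $F = \emptyset$ is trivial, as then every vertex lies in $S$). Because every vertex of $F$ has degree at least $3$ in $G'$, we obtain $e_{SF} = \sum_{u \in F}\deg_{G'}(u) - 2e_F \geq 3|F| - 2(|F|-1) = |F| + 2$. Combining $\sum_{u \in S}\deg_{G'}(u) = 2e_S + e_{SF}$ with $|E(G')| = e_S + e_F + e_{SF}$, a short computation yields
\[
2\sum_{u \in S}\deg_{G'}(u) - |E(G')| \;=\; 3e_S + e_{SF} - e_F \;\geq\; (|F|+2) - (|F|-1) \;=\; 3 \;>\; 0,
\]
so $\sum_{u \in S}\deg_{G'}(u) > |E(G')|/2$ and the success probability strictly exceeds $\tfrac{1}{4}$.

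The main obstacle I anticipate is the bookkeeping of the reduction phase: one must verify that the sequence of degree-$\leq 1$ deletions and degree-$2$ suppressions (which may introduce parallel edges and, later, self-loops) preserves the correspondence between minimum feedback vertex sets of $G$ and of $G'$, and that a self-loop produced mid-reduction really marks a vertex that must belong to every minimum feedback vertex set of the \emph{original} $G$. These statements are folklore in the \fvs{} literature, but a careful induction over reduction steps is needed to make the transfer of the $\tfrac{1}{4}$ bound from $G'$ back to $G$ fully rigorous.
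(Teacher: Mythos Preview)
Your argument is correct; the double-counting inequality you give (after reducing to a multigraph of minimum degree $3$) cleanly yields $\sum_{u\in S}\deg_{G'}(u)>|E(G')|/2$ and hence success probability strictly above $\tfrac14$, and your handling of the self-loop corner case is sound because every minimum feedback vertex set of the reduced multigraph is a minimum feedback vertex set of $G$ (your sketch of why suppression preserves this correspondence is accurate).

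The paper takes a slightly different route to avoid the multigraph bookkeeping you flag as the main obstacle. Instead of suppressing degree-$2$ vertices, it keeps them and assigns them weight $0$ (and weight $\deg(v)$ to all other vertices), sampling proportionally to weight; a separate corner case handles a connected component consisting entirely of degree-$2$ vertices. The analysis then first argues, via a swap argument along degree-$2$ paths, that there exists a minimum feedback vertex set $B$ consisting only of vertices of degree at least $3$, and proves $w(B)\ge w(V\setminus B)/3$ by a counting argument analogous to yours. What this buys is that the algorithm stays in the simple-graph world and never needs to track how suppressions interact with minimum solutions; what your approach buys is that once you are in a min-degree-$3$ multigraph, the counting is a couple of lines shorter and you do not need the separate ``swap to a degree-$\ge 3$ optimum'' claim. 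Both are faithful adaptations of Becker et~al.\ and yield the same $\tfrac14$ bound.
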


The proof of \cref{lemma:fvs_sampling} can be found in \cref{sec:fvs}.
The sampling step presented in \cref{sec:fvs} begins by removing vertices
of degree at most 1. Then, a vertex is sampled from the remaining vertices,
where the sampling probability for each vertex is proportional to its
degree in the remaining graph.

In \cite{liDetectingFeedbackVertex2022}, the authors present an FPT algorithm which runs in time $2.7^{k} \cdot n^{\Oh(1)}$ (i.e., in the terminology of this paper, $\alpha = 1, c = 2.7$).
Moreover, \fvs also has a 2-approximation algorithm that runs in polynomial time
(i.e., $\alpha = 2$, $c = 1$) \cite{bafna2ApproximationAlgorithmUndirected1999}.
In the following, we demonstrate how the sampling
step, together with the existing algorithms,
is used to develop a new approximation algorithm with a better running time.

\begin{theorem}
	For each $1 \leq \beta \leq 2$, \fvs has a $\beta$-approximation algorithm
	which runs in time $d^{k} \cdot n^{\Oh(1)}$ where
	\begin{equation}\label{eq:fvs_running_time_final}
		d = \begin{cases}
				2.7 \cdot \left( \frac{1.3}{3} \right) ^{\beta - 1} &\text{if }\, 1 \leq \beta < 1.402\\			
				 \left(\frac{1}{3}\right)^{\beta - 2} &\text{if }\, 1.402 \leq \beta \leq 2\\
			\end{cases}.
	\end{equation}
\end{theorem}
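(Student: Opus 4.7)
The plan is to derive the two pieces of the running time separately by plugging the FVS sampling step into the two simple-formula variants of Sampling with a Black-Box, and then to show that the crossover point between the two resulting bounds is at $\beta \approx 1.402$.

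First I would invoke \Cref{lemma:fvs_sampling} to fix a sampling step for \fvs with success probability $q=\frac{1}{4}$. Combined with the exact parameterized algorithm of~\cite{liDetectingFeedbackVertex2022} (so $\alpha=1$, $c=2.7$), the hypothesis $c\le \exp\bigl(1\cdot \D{1}{q}\bigr)=\frac{1}{q}=4$ is satisfied, so \Cref{theorem:summary_sampling_step_alpha_1} applies. Computing $\frac{1}{qc}=\frac{1}{0.25\cdot 2.7}=\frac{40}{27}\approx 1.481$ and $\frac{1-cq}{1-q}=\frac{1-0.675}{0.75}=\frac{13}{30}=\frac{1.3}{3}$, the first branch of the piecewise formula gives a parameterized $\beta$-approximation in time $\bigl(2.7\cdot(1.3/3)^{\beta-1}\bigr)^{k}\cdot n^{\Oh(1)}$ for every $1\le\beta<\frac{40}{27}$. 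This yields the first case of \eqref{eq:fvs_running_time_final}.

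Next I would combine the same sampling step with the polynomial-time $2$-approximation of Bafna, Berman and Fujito \cite{bafna2ApproximationAlgorithmUndirected1999} (so $\alpha=2$, $c=1$). Since $q=\frac{1}{4}\le\frac{1}{2}$, \Cref{theorem:summary_sampling_step_alpha_2} applies. With $\frac{1}{1-q}=\frac{4}{3}$ and $\frac{q}{1-q}=\frac{1}{3}$, the second branch yields, for every $\frac{4}{3}<\beta\le 2$, a $\beta$-approximation in time $\bigl((1/3)^{\beta-2}\bigr)^{k}\cdot n^{\Oh(1)}$, matching the second case of \eqref{eq:fvs_running_time_final}.

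It remains to argue that the two branches can be stitched together at $\beta\approx 1.402$; that is, that the algorithm of the proof chooses whichever of the two bounds is smaller. Setting
\begin{equation*}
2.7\cdot\left(\tfrac{13}{30}\right)^{\beta-1}=\left(\tfrac{1}{3}\right)^{\beta-2}
\end{equation*}
and taking logarithms reduces to $\beta\ln(1.3)=\ln(13/9)$, i.e.\ $\beta=\ln(13/9)/\ln(1.3)\approx 1.4017$. The function $\beta\mapsto 2.7\cdot(13/30)^{\beta-1}$ is strictly decreasing with base $\tfrac{13}{30}\approx 0.433$, while $\beta\mapsto(1/3)^{\beta-2}$ is strictly increasing (in $\beta\le 2$) with ``growth base'' $3$, so the first is smaller precisely for $\beta<1.402$ and the second is smaller for $\beta>1.402$. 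Since both branches are valid where needed ($\frac{4}{3}<1.402<\frac{40}{27}$, so each side of the crossover lies in the valid range of its formula), taking the better of the two algorithms for each $\beta$ gives exactly \eqref{eq:fvs_running_time_final}. The only moderately delicate step is the numerical comparison at the crossover, which is just the logarithmic identity above; the rest is a direct application of \Cref{theorem:summary_sampling_step_alpha_1,theorem:summary_sampling_step_alpha_2}.
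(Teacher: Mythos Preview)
Your approach is essentially the same as the paper's: apply \Cref{theorem:summary_sampling_step_alpha_1} with the $2.7^k$ exact algorithm and \Cref{theorem:summary_sampling_step_alpha_2} with the polynomial-time $2$-approximation, then take the better of the two bounds; your crossover calculation $\beta=\ln(13/9)/\ln(1.3)\approx 1.402$ matches the paper exactly.

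One small slip: you claim $\beta\mapsto(1/3)^{\beta-2}$ is strictly increasing on $\beta\le 2$, but it is strictly \emph{decreasing} (it equals $3^{2-\beta}$). Both curves are in fact decreasing; the correct way to determine which side is smaller is to compare slopes (the second curve decreases faster, since $\ln(1/3)<\ln(13/30)$) or simply to evaluate at the endpoints ($2.7<3$ at $\beta=1$ and $1<2.7\cdot\tfrac{13}{30}$ at $\beta=2$). Your conclusion about which branch wins on each side of $1.402$ is nonetheless correct, and the range check $\tfrac{4}{3}<1.402<\tfrac{40}{27}$ is exactly what is needed.
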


\begin{proof}
	Let $\mathcal{A}_1$ denote the FPT algorithm from \cite{liDetectingFeedbackVertex2022} and $\mathcal{A}_2$ denote the 2-approximation algorithm from \cite{bafna2ApproximationAlgorithmUndirected1999} that runs in polynomial time.
	Note that when we consider $\beta$-approximation algorithms, we can focus on the values of
	$\beta$ in the range $1 < \beta \leq 2$ because of $\mathcal{A}_2$,
	as for larger values of $\beta$ we immediately get a $\beta$-approximation
	algorithm that runs in polynomial time.
	
	By using $\mathcal{A}_1$ and \cref{theorem:summary_sampling_step_alpha_1}, the first $\beta$-approximation algorithm we obtain
	has the running time $d^{k} \cdot n^{\Oh(1)}$ where
		\begin{equation}\label{eq:fvs_running_time_1}
			d = \begin{cases}
				2.7 \cdot \left(0.433\right) ^{\beta - 1} &\text{if } 1 \leq \beta < 1.481\\
				\exp\left( \beta \cdot \D{\frac{1}{\beta}}{\frac{1}{4}} \right) &\text{if } 1.481 \leq \beta \leq 2.
			\end{cases}
		\end{equation}
	
	Similarly, by using $\mathcal{A}_2$ and \cref{theorem:summary_sampling_step_alpha_2}, the second $\beta$-approximation algorithm we obtain has the running time
	$d^{k} \cdot n^{\Oh(1)}$ where
	\begin{equation}\label{eq:fvs_running_time_2}
		d = \begin{cases}
				 \exp\left( \beta \cdot \D{\frac{1}{\beta}}{\frac{1}{4}} \right)  &\text{if } 1 \leq \beta < 1.333\\
				 \left(\frac{1}{3}\right)^{\beta - 2} &\text{if } 1.333 \leq \beta \leq 2\\
			\end{cases}.
	\end{equation}
	
	Note that for each $1 < \beta$, we can choose the algorithm with the faster running
	time out of \eqref{eq:fvs_running_time_1} and \eqref{eq:fvs_running_time_2}.
	Therefore we compare the base of the exponents in the running time and pick the smallest one.
	After a straightforward calculation, one can observe that for $1 \leq \beta < \frac{\ln\left( \frac{13}{9} \right) }{\ln\left( 1.3 \right) } \approx 1.402$,
	$2.7 \cdot \left( \frac{1.3}{3} \right) ^{\beta - 1}$ is the smallest number. Similarly,
	for $1.402 \leq \beta \leq 2$, the smallest number becomes $\left(\frac{1}{3}\right)^{\beta - 2}$.
	Therefore,
	for each $1 < \beta < 2$,
	we obtain an algorithm with a
	a running time of $d^{k} \cdot n^{\Oh(1)}$ where
	\begin{equation}\label{eq:fvs_final_base}
		d = \begin{cases}
				2.7 \cdot \left( \frac{1.3}{3} \right) ^{\beta - 1} &\text{if } 1 \leq \beta < 1.402\\			
				 \left(\frac{1}{3}\right)^{\beta - 2} &\text{if } 1.402 \leq \beta \leq 2\\
			\end{cases}.
	\end{equation}
\end{proof}

In \cite{janaParameterizedApproximationScheme2023}, the authors present a $\beta$-approximation
algorithm for each $1 < \beta \leq 2$. It can be visually (see \cref{fig:fvs}) and numerically (see \cref{tab:fvs}) verified that our algorithm demonstrates a
strictly better running-time .

\begin{table}[h]
    \centering
    \begin{tabular}{|c|c|c|}
        \hline
	$\beta$ & Our Algorithm \eqref{eq:fvs_final_base} & \cite{janaParameterizedApproximationScheme2023} \\
        \hline
        1.1 & $2.483$ & $2.620$ \\
        1.2 & $2.284$ & $2.467$ \\
        1.3 & $2.101$ & $2.160$ \\
        1.4 & $1.932$ & $1.942$ \\
        1.5 & $1.732$ & $1.778$ \\
        1.6 & $1.552$ & $1.649$ \\
        1.7 & $1.390$ & $1.56$ \\
        1.8 & $1.246$ & $1.319$ \\
        1.9 & $1.116$ & $1.149$ \\
        \hline
    \end{tabular}
    \caption{Comparison of the base of exponents of different algorithms for \FVS.
 	   For each row with a $\beta$ value $b$, a value $d$ in the second or third column
	implies a $b$-approximation algorithm with running time $d^{k} \cdot n^{\Oh(1)}$.}    
    \label{tab:fvs}
\end{table}

\subsection{\povd}
Given a graph $G$ and integer $k$, the \povd (\POVD) problem ask whether there exists a set $S \subseteq V(G)$
of size at most $k$ such that $G \setminus S$ has pathwidth at most 1.
Initially, we demonstrate that there exists a sampling step for \POVD.

\begin{lemma}\label{lemma:povd_sampling}
	\povd has a sampling step with probability $\frac{1}{7}$.
\end{lemma}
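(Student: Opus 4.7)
The plan rests on the classical characterization that a graph has pathwidth at most $1$ if and only if it is a caterpillar forest; equivalently, if and only if it contains neither a cycle nor a copy of $T_2$ as a subgraph, where $T_2$ denotes the $7$-vertex ``spider'' obtained from the star $K_{1,3}$ by subdividing each edge once. Since pathwidth is monotone under subgraphs and $T_2$ itself has pathwidth $2$, every graph containing $T_2$ as a subgraph has pathwidth at least $2$.

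The sampling step I propose branches on whether $G$ contains a $T_2$ subgraph. Testing this is straightforward in polynomial time: enumerate a candidate center $v$ of degree at least $3$, a triple of its neighbors, and then use bipartite matching to find three disjoint pendant extensions. If some $T_2$ subgraph with vertex set $V(T)$ is found, return a vertex chosen uniformly at random from $V(T)$. Otherwise, invoke the \fvs sampling step from \cref{lemma:fvs_sampling} on $G$ and return its output, which succeeds with probability at least $1/4$.

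For correctness in the first branch, every \povd solution $S$ of $G$ must intersect $V(T)$: otherwise $V(T)\subseteq V(G\setminus S)$ would imply $G\setminus S$ contains $T_2$ as a subgraph and thus has pathwidth at least $2$, contradicting $G\setminus S\in \povdPi$. In particular some vertex of $V(T)$ belongs to an optimal \povd, so uniformly sampling from $7$ vertices succeeds with probability at least $1/7$. In the second branch, the $T_2$-freeness of $G$ is inherited by the induced subgraph $G\setminus S$ for every $S$; combined with acyclicity this shows that a set $S$ is an \fvs of $G$ if and only if it is a \povd of $G$. Hence the $1/4$ guarantee of the \fvs sampling step transfers verbatim to \povd, and $1/4 \geq 1/7$.

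The main obstacle is justifying the pathwidth-$1$ characterization itself: a forest has pathwidth at most $1$ exactly when it is $T_2$-free. I would prove this by the standard argument that a tree is a caterpillar iff deleting its leaves yields a path, and a non-caterpillar tree must therefore contain a vertex with three neighbors each admitting a further neighbor, which is precisely a $T_2$ subgraph. Given this characterization, both branches of the algorithm become routine, and the overall success probability is $\min(1/7,1/4)=1/7$.
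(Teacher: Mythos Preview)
Your proposal is correct and follows essentially the same approach as the paper: branch on whether $G$ contains a $T_2$ subgraph, uniformly sample from its seven vertices if so (giving success probability $\ge 1/7$), and otherwise observe that \povd and \fvs coincide on $T_2$-free graphs so that the \fvs sampling step of \cref{lemma:fvs_sampling} applies with success probability $\ge 1/4$. The paper cites the forbidden-subgraph characterization rather than proving it, and detects $T_2$ by brute-force enumeration of $7$-vertex subsets rather than your more efficient matching-based test, but these are cosmetic differences.
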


The proof of \cref{lemma:povd_sampling} can be found in \cref{sec:povd}.
The sampling step for \POVD is very similar to that for \FVS, with a slight modification.
Similar to \FVS, \POVD can be described by a set of forbidden subgraphs.
Moreover, the set of forbidden subgraphs for \POVD includes one additional graph with 7 vertices. Therefore \POVD has a sampling step with probability $\frac{1}{7}$,
instead of $\frac{1}{4}$ as in the case of \FVS. 

To the best of our knowledge, parameterized approximation algorithms have not been studied for \POVD.
However, there exists an FPT algorithm for \POVD with running time $3.888^{k} \cdot n^{\Oh(1)}$ ($\alpha = 1, c = 3.888$) \cite{tsurFasterAlgorithmPathwidth2022}.
Next, we demonstrate how combining the aforementioned sampling step with a parameterized algorithm yields a new approximation algorithm. Refer to \cref{fig:povdplot,tab:povd} for the corresponding running time.

\begin{theorem}
	For each $1 \leq \beta \leq 7$, \povd has a $\beta$-approximation algorithm
	which runs in time $d^{k} \cdot n^{\Oh(1)}$ where
	\begin{equation}\label{eq:povd_running_time_final}
		d = \begin{cases}
		3.888 \cdot \left(0.519\right)^{\beta - 1} &\text{if } 1 \leq \beta \leq 1.8 \\ 
		\exp\left( \beta \cdot \D{\frac{1}{\beta}}{\frac{1}{7}} \right) &\text{if } 1.8 < \beta < 7.
	\end{cases}
	\end{equation}	
\end{theorem}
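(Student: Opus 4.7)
The plan is to obtain this result as a direct instantiation of \Cref{theorem:summary_sampling_step_alpha_1} (the closed-form formula for $\alpha = 1$), using the two ingredients already at hand: the sampling step for \POVD\ with success probability $q = \frac{1}{7}$ from \Cref{lemma:povd_sampling}, and the exact parameterized algorithm of Tsur~\cite{tsurFasterAlgorithmPathwidth2022} which runs in time $3.888^{k} \cdot n^{\Oh(1)}$, corresponding to the choice $\alpha = 1$ and $c = 3.888$. Unlike \fvs, we do not invoke a polynomial-time constant-factor approximation as a second black box, so only a single application of \Cref{theorem:summary_sampling_step_alpha_1} is needed.

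Before applying the theorem I would check that its hypotheses are met. Since $\alpha = 1$, the condition $c \leq \exp\left(\alpha \cdot \D{\frac{1}{\alpha}}{q}\right) = \frac{1}{q} = 7$ reduces to $3.888 \leq 7$, which clearly holds; likewise the range restriction $1 \leq \beta \leq \frac{1}{q} = 7$ matches the statement. The theorem then yields a randomized parameterized $\beta$-approximation running in time $\left( \runtime[1,\beta,c,q] \right)^{k} \cdot n^{\Oh(1)}$, with
\[
\runtime[1,\beta,3.888,\tfrac{1}{7}] = \begin{cases}
3.888 \cdot \left( \frac{1 - 3.888/7}{1 - 1/7} \right)^{\beta - 1} &\text{if } 1 \leq \beta < \frac{1}{(1/7)\cdot 3.888}\\
\exp\left( \beta \cdot \D{\frac{1}{\beta}}{\tfrac{1}{7}} \right) &\text{if } \frac{1}{(1/7)\cdot 3.888} \leq \beta \leq 7.
\end{cases}
\]

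The remaining work is pure arithmetic: simplify $\frac{1 - 3.888/7}{1 - 1/7} = \frac{7 - 3.888}{6} = \frac{3.112}{6} \approx 0.519$, and compute the threshold $\frac{7}{3.888} \approx 1.8$. Substituting these numerical values into the two cases recovers the formula in~\eqref{eq:povd_running_time_final} exactly. Since the proof is a one-line invocation of an already-proved theorem together with a straightforward numerical simplification, there is no real mathematical obstacle; the only place where care is needed is making sure the threshold $\frac{1}{qc}$ is reported consistently (within the rounding of $c = 3.888$) and that the $\beta \leq 7$ upper endpoint is correctly inherited from $\beta \leq \frac{1}{q}$.
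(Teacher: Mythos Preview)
Your proposal is correct and follows essentially the same route as the paper: apply \Cref{theorem:summary_sampling_step_alpha_1} with $q=\tfrac{1}{7}$ from \Cref{lemma:povd_sampling} and the exact $3.888^{k}\cdot n^{\Oh(1)}$ algorithm of~\cite{tsurFasterAlgorithmPathwidth2022}, then simplify the resulting constants. If anything, you are more explicit than the paper in verifying the hypothesis $c\leq 1/q$ and in deriving the threshold $\tfrac{7}{3.888}\approx 1.8$ and the base $\tfrac{7-3.888}{6}\approx 0.519$.
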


\begin{proof}
	Let $\mathcal{A}$ be the FPT algorithm from \cite{tsurFasterAlgorithmPathwidth2022}, with running time $3.888^{k} \cdot n^{\Oh(1)}$.
	By using $\mathcal{A}$ and \cref{theorem:summary_sampling_step_alpha_1}, we obtain a $\beta$-approximation algorithm with running time $d^{k} \cdot n^{\Oh(1)}$ where
	\begin{equation*}
		d = \begin{cases}
		3.888 \cdot \left(0.519\right)^{\beta - 1} &\text{if } 1 \leq \beta \leq 1.8 \\ 
		\exp\left( \beta \cdot \D{\frac{1}{\beta}}{\frac{1}{7}} \right) &\text{if } 1.8 < \beta < 7.
	\end{cases}
	\end{equation*}
\end{proof}

\begin{figure}[h!]
	\centering
	\input{figures/povd.tex}
	\caption{A plot of the running time of our algorithm for \povd.
	The $x$-axis corresponds to the approximation ratio, while the $y$-axis corresponds to the base of the exponent in the running time.
	A point $(\beta, d)$ in the plot describes a running time of the form $d^{k} \cdot n^{\Oh(1)}$
	for a $\beta$-approximation. }	
	\label{fig:povdplot}
\end{figure}

\begin{table}[h]
    \centering
    \begin{tabular}{|c|c|c|c|c|c|c|c|c|c|c|c|}
        \hline
        $\beta$ & Value & $\beta$ & Value & $\beta$ & Value & $\beta$ & Value & $\beta$ & Value & $\beta$ & Value \\
        \hline
        1.1 & 3.6412 & 2.1 & 1.9391 & 3.1 & 1.3776 & 4.1 & 1.1573 & 5.1 & 1.0553 & 6.1 & 1.0107 \\
        \hline
        1.2 & 3.4100 & 2.2 & 1.8498 & 3.2 & 1.3466 & 4.2 & 1.1433 & 5.2 & 1.0488 & 6.2 & 1.0083 \\
        \hline
        1.3 & 3.1936 & 2.3 & 1.7713 & 3.3 & 1.3181 & 4.3 & 1.1303 & 5.3 & 1.0428 & 6.3 & 1.0063 \\
        \hline
        1.4 & 2.9908 & 2.4 & 1.7018 & 3.4 & 1.2920 & 4.4 & 1.1183 & 5.4 & 1.0374 & 6.4 & 1.0046 \\
        \hline
        1.5 & 2.8010 & 2.5 & 1.6399 & 3.5 & 1.2679 & 4.5 & 1.1071 & 5.5 & 1.0323 & 6.5 & 1.0031 \\
        \hline
        1.6 & 2.6232 & 2.6 & 1.5844 & 3.6 & 1.2457 & 4.6 & 1.0968 & 5.6 & 1.0277 & 6.6 & 1.0020 \\
        \hline
        1.7 & 2.4567 & 2.7 & 1.5346 & 3.7 & 1.2252 & 4.7 & 1.0871 & 5.7 & 1.0236 & 6.7 & 1.0011 \\
        \hline
        1.8 & 2.3007 & 2.8 & 1.4895 & 3.8 & 1.2062 & 4.8 & 1.0782 & 5.8 & 1.0198 & 6.8 & 1.0005 \\
        \hline
        1.9 & 2.1604 & 2.9 & 1.4487 & 3.9 & 1.1886 & 4.9 & 1.0700 & 5.9 & 1.0164 & 6.9 & 1.0001 \\
        \hline
        2.0 & 2.0417 & 3.0 & 1.4115 & 4.0 & 1.1724 & 5.0 & 1.0624 & 6.0 & 1.0134 & & \\
        \hline
    \end{tabular}
    \caption{The table displays the base of exponents for our algorithm designed for \povd.
    Each pair $(b,d)$, listed in the same row and consecutive $\beta$ and value columns,
	represents a $\beta$-approximation algorithm with a running time $d^{k} \cdot n^{\Oh(1)}$.}
    \label{tab:povd}
\end{table}

\subsection{\boldmath $(\mathcal{G},\Pi)$-Vertex Deletion for a finite set of forbidden sub-hypergraphs}
\label{sec:pi_fixed_appl}
There are many problems that can be described as \nmgpivd\  problems in which
$\Pi$ is defined by a finite set $\Omega$ of forbidden vertex induced hypergraphs. For
each of those problems, by \cref{lemma:finite_forb_sampling} there exists a
sampling step with success probability $\frac{1}{\eta}$, where $\eta$ is the maximum
number of vertices of a hypergraph in $\Omega$. In the following, we will demonstrate
how we can obtain parameterized approximation algorithms for such problems.
For the sake of presentation, we will focus on a specific problem called \pathvc{3}.

Given a graph $G$, a subset of vertices $S \subseteq V(G)$ is called an $\lpvc$
if every path of length $\ell$ contains a vertex from $S$.
The $\pathvc{\ell}$ problem asks whether there exists an $\lpvc$ of size at most $k$ where
$k$ is the parameter \cite{bresarMinimumKpathVertex2011}. $\pathvc{\ell}$ can be described
as a $\gpivd$ where $\mathcal{G}$ is the set of graphs and $\Pi$ is the set of graphs
with maximum path length at most $\ell - 1$. Alternatively, let $F$ be a path with $\ell$
vertices where we define $\Omega \coloneqq \{F\}$ and $\eta(\Omega) \coloneqq \ell$.
It holds that $\pathvc{\ell}$ is equivalent to $\gpivd[\mathcal{G}, \Pi^{\Omega}]$.
Therefore, by \cref{lemma:finite_forb_sampling}, there is
a sampling step for $\pathvc{\ell}$ with success probability $\frac{1}{\ell}$.

In the following, we will consider $\ell = 3$, i.e. the problem \pathvc{3}.

There exists an FPT algorithm for \pathvc that runs in time
$1.708^{k} \cdot n^{\Oh(1)}$ ($\alpha = 1, c = 1.708$)  \cite{cervenyGeneratingFasterAlgorithms2023}.
Moreover, there is also a 2-approximation algorithm 
that runs in polynomial time. ($\alpha = 2, c = 1$) \cite{tuFactorApproximationAlgorithm2011}.

\begin{theorem}
	For each $1 < \beta < 2$, \pathvc{3} has a $\beta$-approximation algorithm
	which runs in time $d^{k} \cdot n^{\Oh(1)}$ where
	\begin{equation}\label{eq:3pvc_runtime}
		d = \begin{cases}
			1.708 \cdot \left(0.644\right)^{\beta - 1} &\text{if } 1 \leq \beta < 1.6143\\
			\left(0.5\right)^{\beta - 2}  &\text{if } 1.6143 \leq \beta \leq 2
		\end{cases}
	\end{equation}
	
\end{theorem}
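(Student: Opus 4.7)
My plan is to mirror the proof given for \fvs\ in the previous subsection, using the two known algorithms for $\pathvc{3}$ in conjunction with the two closed-form versions of our main theorem and then taking the pointwise minimum of the two resulting running-time functions.

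First, I would record the two ingredients. Since $\pathvc{3} = \gpivd[\cG, \Pi^{\Omega}]$ with $\Omega = \{P_3\}$ and $\eta(\Omega) = 3$, \Cref{lemma:finite_forb_sampling} yields a sampling step with success probability $q = \tfrac{1}{3}$. Together with the FPT algorithm of \cite{cervenyGeneratingFasterAlgorithms2023} (parameters $\alpha = 1$, $c = 1.708$), an application of \Cref{theorem:summary_sampling_step_alpha_1} produces a first $\beta$-approximation algorithm whose base of the exponent is
\[
d_1(\beta) = \begin{cases} 1.708 \cdot \left( \frac{1 - 1.708/3}{2/3} \right)^{\beta-1} & \text{if } 1 \leq \beta < \frac{3}{1.708}, \\ \exp\!\left( \beta \cdot \D{\tfrac{1}{\beta}}{\tfrac{1}{3}} \right) & \text{if } \frac{3}{1.708} \leq \beta \leq 3. \end{cases}
\]
One checks $\tfrac{1-1.708/3}{2/3} \approx 0.644$, which matches the number in the statement.

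Next, I would apply \Cref{theorem:summary_sampling_step_alpha_2} to the polynomial-time $2$-approximation of \cite{tuFactorApproximationAlgorithm2011} (parameters $\alpha = 2$, $c = 1$), yielding a second algorithm with base
\[
d_2(\beta) = \begin{cases} \exp\!\left( \beta \cdot \D{\tfrac{1}{\beta}}{\tfrac{1}{3}} \right) & \text{if } 1 \leq \beta \leq \tfrac{3}{2}, \\ (0.5)^{\beta - 2} & \text{if } \tfrac{3}{2} < \beta \leq 2. \end{cases}
\]
Since we are free to run whichever algorithm has a faster running time for each $\beta$, we take $d(\beta) = \min\{d_1(\beta), d_2(\beta)\}$.

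The remaining routine step is to identify the crossover. On the range where the first piece of $d_1$ and the second piece of $d_2$ are simultaneously active (in particular for $\beta \in (\tfrac{3}{2}, \tfrac{3}{1.708})$), setting $1.708 \cdot (0.644)^{\beta - 1} = (0.5)^{\beta - 2}$ and solving
\[
\ln(1.708) + (\beta - 1)\ln(0.644) = (\beta - 2)\ln(0.5)
\]
gives $\beta \approx 1.6143$. A direct comparison at the endpoints confirms that $d_1 \leq d_2$ for $\beta < 1.6143$ and $d_2 \leq d_1$ for $\beta \geq 1.6143$; on the middle interval $[\tfrac{3}{2}, 1.6143]$ the bound $(0.5)^{\beta - 2}$ from $d_2$ would dominate, but there $d_1$ still equals the first-case formula, so the claimed piecewise expression \eqref{eq:3pvc_runtime} follows. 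The only non-routine ingredient is the crossover calculation; the rest is a direct invocation of \Cref{theorem:summary_sampling_step_alpha_1,theorem:summary_sampling_step_alpha_2} together with \Cref{lemma:finite_forb_sampling}.
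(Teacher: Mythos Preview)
Your proposal is correct and follows essentially the same approach as the paper: invoke the sampling step with $q=\tfrac{1}{3}$ from \Cref{lemma:finite_forb_sampling}, apply \Cref{theorem:summary_sampling_step_alpha_1} with the $1.708^k$ exact algorithm and \Cref{theorem:summary_sampling_step_alpha_2} with the polynomial-time $2$-approximation, and take the pointwise minimum, with the crossover landing at $\beta\approx 1.6143$. The paper's proof is nearly identical, only stating the crossover value without the explicit equation you solve.
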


\begin{proof}
	Let $\mathcal{A}_1$ denote the FPT algorithm from \cite{cervenyGeneratingFasterAlgorithms2023} and $\mathcal{A}_2$ denote the 2-approximation algorithm from \cite{tuFactorApproximationAlgorithm2011}
	that runs in polynomial time. Because of $\mathcal{A}_2$, as in the case of \fvs,
	we can focus on the values of $\beta$ in the range $1 \leq \beta \leq 2$.

	By using $\mathcal{A}_1$ and \cref{theorem:summary_sampling_step_alpha_1},
	it holds that for each $\beta > 1$ there exists a $\beta$-approximation that runs
	in time $d^{k} \cdot n^{\Oh(1)}$ where
		\begin{equation}\label{eq:lpathvc_eq_1}
			d = \begin{cases}
				1.708 \cdot \left(0.644\right)^{\beta - 1} &\text{if } 1 \leq \beta < 1.752\\
				\exp\left( \beta \cdot \D{\frac{1}{\beta}}{\frac{1}{3}} \right) &\text{if } 1.752 \leq \beta \leq 2.
			\end{cases}
		\end{equation}	
	
	By using $\mathcal{A}_2$ together with \cref{theorem:summary_sampling_step_alpha_2},
	it holds that for every $1 < \beta < 2$ there exists a $\beta$-approximation algorithm
	that runs in time $d^{k} \cdot n^{\Oh(1)}$ where
	\begin{equation}\label{eq:lpathvc_eq_2}
		d = \begin{cases}
			\exp\left( \beta \cdot \D{\frac{1}{\beta}}{\frac{1}{3}} \right) &\text{if } 1 < \beta < 1.5\\
			\left(0.5\right)^{\beta - 2}  &\text{if } 1.5 \leq \beta < 2.
		\end{cases}
	\end{equation}	
	
	By taking the minimum of the running times in \eqref{eq:lpathvc_eq_1} and \eqref{eq:lpathvc_eq_2},
	the base of exponent becomes
	\begin{equation*}
		d = \begin{cases}
			1.708 \cdot \left(0.644\right)^{\beta - 1} &\text{if } 1 < \beta < 1.6143\\
			\left(0.5\right)^{\beta - 2}  &\text{if } 1.6143 \leq \beta < 2
		\end{cases}
	\end{equation*}
\end{proof}

In \cite{Fellows2018}, for each $1 \leq \beta \leq 2$, the authors present a $\beta$-approximation algorithm for \pathvc{3} with running time $1.708^{\frac{3-\beta}{2}\cdot k }\cdot n^{\Oh(1)}$.
As can be visually (see \cref{fig:3pvc}) or numerically (see \cref{tab:3pvc}) verified,
our algorithm has a strictly better running time for all values of $1 < \beta \leq 2$ .

\begin{table}[h]
    \centering
    \begin{tabular}{|c|c|c|}
        \hline
	$\beta$ & Our Algorithm \eqref{eq:3pvc_runtime} & \cite{Fellows2018} \\
        \hline
	1.1 & 1.6345 & 1.6628 \\
        \hline
	1.2 & 1.5641 & 1.6189 \\
        \hline
        1.3 & 1.4968 & 1.5762 \\
        \hline
        1.4 & 1.4323 & 1.5345 \\
        \hline
        1.5 & 1.3707 & 1.4940 \\
        \hline
        1.6 & 1.3117 & 1.4545 \\
        \hline
        1.7 & 1.2311 & 1.416\\
        \hline
        1.8 & 1.1487 & 1.3787 \\
        \hline
        1.9 & 1.0718 & 1.3423\\
        \hline
    \end{tabular}
    \caption{Comparison of the base of exponents of different algorithms for \pathvc{3}.
 	   For each row with a $\beta$ value $b$, a value $d$ in the second or third column
	implies a $b$-approximation algorithm with running time $d^{k} \cdot n^{\Oh(1)}$.}    
    \label{tab:3pvc}
\end{table}

\section{Sampling with a Black Box}
\label{sec:algorithms}

In this section we present our main technique, Sampling with a Black Box, and prove \Cref{theorem:summary_sampling_step}.
The technique is designed using three main components, enabling a modular analysis of each part. We use the notion of $\procext{\delta}{p}{r}{T}$ to abstract the outcome of iteratively  executing a sampling step. We use this abstract notion in $\randext$ which combines the $(\delta,p,r,T)$-procedure together with the black box parameterized $\alpha$-approximation algorithm.  On  its own, $\randext$ only attains a $\beta$-approximate solution with a low probability. Our main algorithm, $\iterrandext$, executes $\randext$ multiple times to get  a $\beta$-approximate solution with a constant  probability. The defined algorithm  depends on a parameter $\delta$, for which we find the optimal value.

We  start with the formal definition of a  $\procext{\delta}{p}{r}{T}$. As already mentioned,
a procedure serves as an abstraction of iterative use of a sampling step.
It returns a vertex set $S \subseteq V(G)$
with certain properties related to the size of $S$ and the value of $\OPT(G\setminus S)$. 
\begin{definition}\label{definition:proc}
	Let $\Pi$ be a hypergraph property and $\mathcal{G}$ be a closed set of hypergraphs.	
	For all $\delta \geq 1$, $r \geq 0$, $0 < p \leq 1$ and $T\geq 0$, a $\procext{\delta}{p}{r}{T}$ for $(\mathcal{G},\Pi)$ is a polynomial time randomized algorithm that takes as input a hypergraph $G \in \mathcal{G}$, an integer $t \geq 0$ and returns a set $S \subseteq V(G)$ with the following properties:
	\begin{properties}%
	\item It holds that $\abs{S} \leq \delta \cdot t$.\label{prop:first}
	 
	\item \label{prop:second} Suppose that $\OPT_{\cG,\Pi}(G) \leq k$ for some $k \geq 0$. If $t \geq T$, then with probability at least $\frac{p^{t}}{(t+1)^{r}}$ it holds that $G \setminus S$ has a solution of size at most $\max\left( 0, k -t \right)$, i.e.

	\begin{equation*}
		\Pr\Bigl( \OPT_{\cG,\Pi}(G\setminus S) \leq \max\left( 0, k -t \right) \Bigr) \geq \frac{p^{t}}{(t+1)^{r}}.	
	\end{equation*}
		\end{properties}
	Additionally, we use the notation $\proc{\delta}{p}$ to refer to a $\procext{\delta}{p}{r}{T}$ for some constants $r,T \geq 0$. 
\end{definition}

Observe that if there is a $\proc{\delta}{p}$ for $\gpivd$, then
we can make use of it to obtain a $\delta$
approximation as follows.
Suppose $G$ has a solution of size $k$.
Then, by setting $t = k$
in \Cref{prop:second}, it holds that with probability at least
$\frac{p^{k}}{(k + 1)^{r}}$, $G\setminus S$ has a solution of size~$0$, i.e. $G\setminus S
\in \Pi$ . By our assumption, we can check in polynomial time whether a
graph belongs to the property $\Pi$. By \Cref{prop:first} it holds that $\abs{S}\leq \delta\cdot  k$. Additionally, we can repeat this
algorithm $p^{-k} \cdot n^{\Oh(1)}$ times to obtain a $\delta$-approximate solution constant
probability. We summarize these insights in the following observation. 

\begin{observation}\label{observation:proc_to_approx}
	If  there is a $\proc{\delta}{p}$ for  $\gpivd$, then there is  parameterized $\delta$-approximation algorithm 
	for $\gpivd$  with
	running time $(1 / p)^{k} \cdot n^{\Oh(1)}$.
\end{observation}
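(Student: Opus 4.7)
The plan is to build the $\delta$-approximation algorithm $\mathcal{B}$ by repeatedly invoking the $\procext{\delta}{p}{r}{T}$, call it $\mathcal{P}$, with $t = k$ as input, and to verify each outcome. Concretely, $\mathcal{B}$ first handles trivial cases: if $G \in \Pi$, return $\emptyset$; and since $T$ and $r$ are constants fixed with $\mathcal{P}$, if $k < T$, simply enumerate all vertex subsets of size at most $k$ in time $n^{\Oh(T)} = n^{\Oh(1)}$ and output an optimal solution (or $V(G)$ if no such subset is a solution). Otherwise, for $N = \lceil \ln 2 \cdot (k+1)^{r} \cdot p^{-k} \rceil$ independent rounds, $\mathcal{B}$ invokes $\mathcal{P}$ on $(G, k)$ to obtain a set $S_i$, and tests whether $G \setminus S_i \in \Pi$ (which is possible in polynomial time since $\Pi$ is polynomial-time decidable). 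Among all $S_i$ satisfying $G \setminus S_i \in \Pi$, $\mathcal{B}$ returns the smallest; if none does, $\mathcal{B}$ returns $V(G)$, which is always a feasible solution because $\Pi$ is hereditary and non-empty.

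For correctness, I would argue as follows. Suppose $\OPT_{\cG,\Pi}(G) \leq k$. In each round, since $t = k \geq T$, \Cref{prop:second} yields
\[
	\Pr\Bigl( \OPT_{\cG,\Pi}(G \setminus S_i) \leq \max(0, k - k) = 0 \Bigr) \;\geq\; \frac{p^{k}}{(k+1)^{r}}.
\]
When this event occurs, $G \setminus S_i \in \Pi$, so $S_i$ is a feasible solution; moreover, \Cref{prop:first} guarantees $|S_i| \leq \delta \cdot k$. Hence, in any such round, $\mathcal{B}$ records a valid $\delta$-approximate solution. By independence, the probability that no round succeeds is at most
\[
	\left( 1 - \frac{p^{k}}{(k+1)^{r}} \right)^{N} \;\leq\; \exp\!\left( -N \cdot \frac{p^{k}}{(k+1)^{r}} \right) \;\leq\; \exp(-\ln 2) \;=\; \tfrac{1}{2},
\]
so with probability at least $\tfrac{1}{2}$ the output $S$ satisfies $|S| \leq \delta \cdot k$, matching \Cref{definition:alpha_approx}.

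For the running time, each round runs $\mathcal{P}$ once (polynomial) and tests membership in $\Pi$ (polynomial), so each round costs $n^{\Oh(1)}$. The total cost is dominated by $N \cdot n^{\Oh(1)}$. We may assume $k \leq n$, as otherwise we can simply return $V(G)$ (which has size $n < k \leq \delta k$, trivially satisfying the approximation requirement). Under $k \leq n$, the polynomial factor $(k+1)^{r}$ is absorbed into $n^{\Oh(1)}$, giving total time $(1/p)^{k} \cdot n^{\Oh(1)}$.

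There is no serious technical obstacle here; the only subtleties are the edge cases ($k < T$ requires a separate brute-force branch because the guarantee in \Cref{prop:second} only holds for $t \geq T$, and we need a feasible solution on all inputs regardless of whether $\OPT \leq k$), and the care needed to absorb the $(k+1)^{r}$ factor into $n^{\Oh(1)}$ by the standard assumption $k \leq n$. Everything else follows directly from \Cref{prop:first}, \Cref{prop:second}, and a one-line probability-amplification calculation.
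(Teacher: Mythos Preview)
Your proposal is correct and follows essentially the same approach as the paper: invoke the procedure with $t=k$, test whether $G\setminus S\in\Pi$, and repeat $p^{-k}\cdot (k+1)^r$ times so that a $\delta$-approximate solution is found with constant probability. Your write-up is in fact more careful than the paper's informal sketch, as you explicitly handle the edge cases $k<T$ (brute force) and $k>n$ (return $V(G)$), and you make the absorption of $(k+1)^r$ into $n^{\Oh(1)}$ explicit.
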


In the remainder of \cref{sec:algorithms}, we fix the values of $0 < q \leq 1$, $1 \leq \alpha \leq \frac{1}{q}$, $1 \leq \beta \leq \frac{1}{q}$ and
$1 \leq c \leq \exp\left( \alpha \cdot \D{\frac{1}{\alpha}}{q} \right)$ to specific numbers,
unless specified explicitly.
For notational simplicity, we omit $\alpha$, $\beta$, and $c$ from the subscript of functions dependent on these variables.
Moreover, let $\Pi$ be a fixed polynomial-time decidable hypergraph property, $\mathcal{G}$ be a closed set of hypergraphs
and $\A$ be an $\alpha$-approximation algorithm for the $\gpivd$ problem with running time $c^{k} \cdot n^{\Oh(1)}$.
Let us also define the set of values $\delta$ can take, given $\alpha$ and $\beta$.
\begin{definition}\label{definition:good_set}
	For $\alpha, \beta \geq 1$ such that $\alpha \neq \beta$, we define the set $\goodd$ as
	\begin{equation*}
		\goodd \coloneqq \begin{cases}
			[ \beta, \infty ) &\text{if } \beta > \alpha\\
			[1,\beta] &\text{if } \beta < \alpha.\\			
		\end{cases}
	\end{equation*}
\end{definition}

The next component in our technique is $\randext$, given in \Cref{algo:rand_and_extend}. The algorithm is configured with  a $\proc{\delta}{p}$ $\procalg{\delta}{p}$.  Given an hypergraph $G$
and an integer $t$ in the input, the algorithm  invokes the procedure $\procalg{\delta}{p}$ which returns a random set
$S \subseteq V(G)$ of size at most $\delta \cdot t$,
and then runs the parameterized $\alpha$-approximation algorithm $\A$  on the remaining hypergraph $G \setminus S$.
The idea is to hope that $G \setminus S$ has a solution of size at most $\frac{\beta \cdot k - \delta \cdot t}{\alpha}$.
Note that the parameter for the $\alpha$-approximation algorithm is also $\frac{\beta \cdot k - \delta \cdot t}{\alpha}$,
which ensures that the approximation algorithm returns a set of size at most $\beta \cdot k - \delta \cdot t$,
with high probability.
In the event that this holds, by adding $S$ to the returned set, we obtain a solution with a size of at most $\beta \cdot  k$.

\begin{algorithm}
	\begin{algorithmic}[1]
		\Configuration  $0 < p \leq 1$, $\delta \in \goodd$ and a $\procext{\delta}{p}{r}{T}$ $\procalg{\delta}{p}$ for $(\mathcal{G}, \Pi)$.		
		\Input Hypergraph $G \in \mathcal{G}$, integers $0 \leq k \leq \abs{V(G)}$ and $T \leq t \leq  \frac{\beta}{\delta} \cdot k $ 
			\State $S = \procalg{\delta}{p}(G,t)$
			\State $Y = \A\left( G \setminus S, \frac{\beta \cdot k - \delta \cdot t}{\alpha} \right)$
			\State Return $S \cup Y$
	\end{algorithmic}
	\caption{$\randext$}
	\label{algo:rand_and_extend}
\end{algorithm}

Our main algorithm, given in \Cref{algo:iterative_rand_and_extend}, begins by selecting a value for $t^*$.
This value ensures the following: if the set $S$ in \cref{algo:rand_and_extend} contains
at least $t^*$ many elements from a solution, then $G \setminus S$ has a solution of size at
most $\frac{\beta \cdot k - \delta \cdot t^*}{\alpha}$. Note that this further implies
that the set returned by \cref{algo:rand_and_extend} has size at most $\beta \cdot k$.
Furthermore, \cref{algo:iterative_rand_and_extend} 
utilizes \cref{algo:rand_and_extend} and
executes it multiple times to ensure a $\beta$-approximate solution is attained with a constant probability.
\begin{algorithm}
	\begin{algorithmic}[1]
		\Configuration  $0 < p \leq 1$, $\delta \in \goodd$ and a $\procext{\delta}{p}{r}{T}$ $\procalg{\delta}{p}$ for $(\mathcal{G}, \Pi)$.				
		\Input Hypergraph $G \in \mathcal{G}$, integer $0 \leq k \leq \abs{V(G)}$
		\State $t^* \coloneqq \left\lceil  \frac{\beta - \alpha}{\delta  -\alpha}  \cdot k  \right\rceil $ if $\beta<\alpha$, and $t^* \coloneqq \left\lfloor \frac{\beta - \alpha}{\delta  -\alpha}  \cdot k \right\rfloor $ if $\beta>\alpha$ \label{iterative:tstar}
			
			\If{$t^* < T$}
			\State Return $W$ if and only if there exists $W \subseteq V(G)$ of size at most $k$ such that $W \in \sat_\Pi(G)$\label{line:W_at_most_k}
			\Else
			\State $\mathcal{S} = \emptyset$
			\For{$2 \cdot p^{-t^*} \cdot (t^* + 1)^r$ times}\label{line:for_loop_in_iter}
			\State $\mathcal{S} = \mathcal{S} \cup \Bigl\{\randext(G, k, t^*)\Bigr\} $
			\EndFor
			\State Return a minimum sized set in $\mathcal{S}$
			\EndIf
		\end{algorithmic}
		\caption{$\iterrandext$}
		\label{algo:iterative_rand_and_extend}
	\end{algorithm}

\begin{lemma}\label{lemma:beta_approx}
	Let $0 < p \leq 1$, $\delta \in \goodd$ and a $\procext{\delta}{p}{r}{T}$ $\procalg{\delta}{p}$ for $(\mathcal{G}, \Pi)$.
	Then \cref{algo:iterative_rand_and_extend} is a randomized parameterized $\beta$-approximation algorithm for $\sgpivd$ with running time
	\begin{equation*}
		f\left( \delta, p \right) ^{k} \cdot n^{\Oh(1)}
	\end{equation*}
	where $f(\delta,p)$ is given by 
	\begin{equation*}
		f(\delta, p) \coloneqq  \exp\left( \frac{(\delta - \beta) \cdot \ln(c) + (\beta - \alpha) \cdot \ln\left( \frac{1}{p} \right) }{\delta - \alpha} \right).
	\end{equation*}
\end{lemma}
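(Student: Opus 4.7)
The plan is to prove the lemma in two stages: first establish correctness of \cref{algo:iterative_rand_and_extend} (it returns a $\beta$-approximate solution with probability at least $\frac{1}{2}$), and then bound its running time. The starting observation is that the choice $t^* \approx \frac{\beta-\alpha}{\delta-\alpha} \cdot k$ is designed precisely so that, when the procedure succeeds in producing $S$ with $\OPT(G\setminus S)\leq k - t^*$, the residual problem can be handed to $\mathcal{A}$ with a parameter $\frac{\beta k - \delta t^*}{\alpha}$ that (i) is at least $\OPT(G\setminus S)$, so that $\mathcal{A}$ meets its guarantee, and (ii) has the right value for $|S|+|Y|$ to telescope to $\beta k$.

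For correctness, I first handle the base case $t^* < T$: since $T$ is a constant, the inequality $t^* < T$ together with $t^* = \lfloor \frac{\beta-\alpha}{\delta-\alpha}k \rfloor$ or $t^* = \lceil \frac{\beta-\alpha}{\delta-\alpha}k \rceil$ pins $k$ to a constant, so the exhaustive search in \cref{line:W_at_most_k} runs in polynomial time and returns a minimum solution (hence a $\beta$-approximation) whenever $\OPT(G)\leq k$. In the main case $t^* \geq T$, I would carry out the following single-iteration analysis. From \cref{prop:first,prop:second}, with probability at least $p^{t^*}/(t^*+1)^r$ the set $S$ returned by $\procalg{\delta}{p}$ satisfies $|S|\leq \delta t^*$ and $\OPT(G\setminus S)\leq \max(0,k-t^*)$. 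A short arithmetic check using $\delta\in\goodd$ confirms $k - t^* \leq \frac{\beta k-\delta t^*}{\alpha}$ (both cases $\beta>\alpha$ and $\beta<\alpha$ reduce to the defining inequality for $t^*$) and also that $t^*\leq \frac{\beta k}{\delta}$, so the call to $\randext$ is within its permitted input range. Conditioned on the procedure succeeding, algorithm $\mathcal{A}$ is invoked on an instance whose optimum is at most its parameter, so with probability $\geq \tfrac{1}{2}$ it returns $Y$ with $|Y|\leq \beta k - \delta t^*$. Hence $S\cup Y$ is a solution of size at most $\beta k$, and a single execution of $\randext$ succeeds with probability at least $\frac{p^{t^*}}{2(t^*+1)^{r}}$. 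Running $N = 2p^{-t^*}(t^*+1)^r$ independent copies then yields success probability $\geq 1-(1-1/N)^N \geq 1-e^{-1}>\tfrac{1}{2}$.

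For the running time, I would note that each call to $\procalg{\delta}{p}$ is polynomial and each call to $\mathcal{A}$ costs $c^{(\beta k - \delta t^*)/\alpha}\cdot n^{\Oh(1)}$, so the total time is
\[
N\cdot c^{(\beta k - \delta t^*)/\alpha}\cdot n^{\Oh(1)} \;=\; 2\,p^{-t^*}(t^*+1)^r\cdot c^{(\beta k - \delta t^*)/\alpha}\cdot n^{\Oh(1)}.
\]
Since $t^*\leq k$, the factor $(t^*+1)^r$ is polynomial in $n$ and can be absorbed. The key algebraic step is then to substitute the value of $t^*$ and simplify the exponent:
\[
t^*\ln(1/p)+\frac{\beta k - \delta t^*}{\alpha}\ln c \;=\; k\!\left[\frac{\beta-\alpha}{\delta-\alpha}\ln(1/p)+\frac{\delta-\beta}{\delta-\alpha}\ln c\right] + O(1),
\]
where I use the identity $\beta - \frac{\delta(\beta-\alpha)}{\delta-\alpha}=\frac{\alpha(\delta-\beta)}{\delta-\alpha}$. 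The bracketed expression is exactly $\ln f(\delta,p)$, yielding the claimed bound $f(\delta,p)^k\cdot n^{\Oh(1)}$.

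The main technical obstacle I anticipate is the careful bookkeeping of floors/ceilings in the definition of $t^*$ and in the argument of $\mathcal{A}$: the choice of $\lfloor\cdot\rfloor$ versus $\lceil\cdot\rceil$ depending on $\beta\gtrless\alpha$ is exactly what forces the two basic inequalities ($t^*\leq \beta k/\delta$ and $k-t^*\leq (\beta k-\delta t^*)/\alpha$) to hold at integer values, and also what makes the additive $O(1)$ slack in the exponent correct. Beyond this, the proof is essentially a clean composition of the guarantees of \cref{definition:proc}, the guarantee of $\mathcal{A}$, and a standard repetition amplification.
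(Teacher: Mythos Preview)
Your proposal is correct and follows essentially the same approach as the paper's proof: the paper likewise splits into the trivial case $t^*<T$ (where $k=\Oh(1)$ and brute force suffices) and the main case, establishes the two key inequalities $t^*\leq \frac{\beta}{\delta}k$ and $k-t^*\leq \frac{\beta k-\delta t^*}{\alpha}$ via a case analysis on $\beta\gtrless\alpha$ (their Lemma~5.7 and Claim~5.9), derives the per-iteration success probability $\frac{p^{t^*}}{2(t^*+1)^r}$ (their Lemma~5.8), amplifies by repetition, and performs the same algebraic simplification of the exponent using the identity you state. The only cosmetic difference is that the paper packages the intermediate facts as separate lemmas rather than inlining them.
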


The proof of \cref{lemma:beta_approx} can be found in \cref{sec:proc_to_approx}.

\Cref{algo:iterative_rand_and_extend} relies on the existence of a $\proc{\delta}{p}$, however, insofar we did not show how to design one. 
We generate a $\proc{\delta}{p}$ from a sampling step (\Cref{definition:sampling_step}) via a simple algorithm which iteratively invokes the sampling step. The pseudo-code of  the algorithm is given in \Cref{algorithm:multi_sample}.

\begin{algorithm}
	\begin{algorithmic}[1]
		\Configuration A number $\delta \geq 1$ and a sampling step $\mathcal{R}$ for $\sgpivd$ with success probability $q$ for some $0 < q \leq 1$.		
		\Input Hypergraph $G \in \mathcal{G}$, integer $t \geq 0$ 
		\State $S \gets \emptyset$		
		\While{$\abs{S} < \delta \cdot t$ and $G \not\in \Pi$}
		\State $ v = \mathcal{R}\left(G\right) $\label{line:v_multi_sample}
		\State $G = G \setminus \{v\}$
		\State $S = S \cup \{v\} $		
		\EndWhile
		\State \Return $S$
	\end{algorithmic}
	\caption{MultiSample}\label{algorithm:multi_sample}
\end{algorithm}

Define
\begin{equation}\label{eq:def_p_delta_q}
	\phi(\delta, q) \coloneqq \exp\left( -\delta \cdot \D{\frac{1}{\delta}}{q} \right).
\end{equation}
The following lemma state that \Cref{algorithm:multi_sample} is indeed a $\proc{\delta}{p}$ for $p = \phi(\delta,q)$. 
\begin{lemma}\label{lemma:multi_step_proc}
	Let $0 < q \leq 1$ and $\mathcal{R}$ be a sampling step for $\gpivd$ with success
	probability $q$. Then, for any $1 \leq \delta \leq \frac{1}{q}$, \cref{algorithm:multi_sample}
	is a $\proc{\delta}{\phi\left( \delta, q \right) }$ for $\sgpivd$.
\end{lemma}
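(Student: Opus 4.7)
Property \textbf{P1} is immediate from the while-loop guard $\abs{S}<\delta\cdot t$, since upon exit one has $|S|\leq \lfloor\delta t\rfloor\leq \delta t$. For \textbf{P2}, fix $G\in\cG$ with $\OPT_{\cG,\Pi}(G)\leq k$ and set $m:=\lfloor \delta t\rfloor$. Since $\Pi$ is hereditary, $\OPT_{\cG,\Pi}(G\setminus v)\leq \OPT_{\cG,\Pi}(G)$ for every vertex $v$, so each iteration of the while-loop either leaves $\OPT$ unchanged or decreases it by exactly one, and \cref{definition:sampling_step} guarantees that the latter occurs with conditional probability at least $q$.

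I would then consider a \emph{modified} process that always runs for $m$ iterations: if at some step the current hypergraph already lies in $\Pi$, the original algorithm stops, but in the modified process we instead pad the step with a fresh Bernoulli$(q)$ coin. Let $B_i$ be the $\OPT$-decrease indicator (or the padded coin) at iteration $i$ of the modified process. Then $(B_i)_{i=1}^m$ satisfies $\Pr[B_i=1\mid B_1,\dots,B_{i-1}]\geq q$, and the standard Bernoulli coupling lemma produces i.i.d.\ Bernoulli$(q)$ variables $Y_1,\dots,Y_m$ on the same probability space with $B_i\geq Y_i$ a.s. If the original algorithm terminates early then $G\setminus S\in \Pi$, hence $\OPT(G\setminus S)=0$ and we are done. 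Otherwise it performs all $m$ iterations, and on the event $\{\sum_i Y_i\geq t\}$ one has $\sum_i B_i\geq t$, whence $\OPT(G\setminus S)\leq \OPT(G)-t\leq k-t$. In both cases $\OPT(G\setminus S)\leq \max(0,k-t)$, so the success probability is at least $\Pr[\mathrm{Bin}(m,q)\geq t]$.

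It remains to bound $\Pr[\mathrm{Bin}(m,q)\geq t]$ from below by $\phi(\delta,q)^t/(t+1)^r$. For $\delta=1$ the bound reduces to $q^t=\phi(1,q)^t$, so $r=0$ and $T=0$ work. For $1<\delta\leq 1/q$ I would use the Stirling-type estimate
\begin{equation*}
\Pr[\mathrm{Bin}(m,q)\geq t]\;\geq\; \binom{m}{t}q^{t}(1-q)^{m-t}\;\geq\; \frac{\Omega(1)}{\sqrt{m}}\exp\bigl(-m\cdot\D{t/m}{q}\bigr),
\end{equation*}
together with the smoothness estimate $m\cdot\D{t/m}{q}\leq \delta t\cdot\D{1/\delta}{q}+O(1)$, which follows by a first-order Taylor expansion of $x\mapsto x\cdot\D{t/x}{q}$ at $x=\delta t$ (using $|\delta t-m|\leq 1$ and the fact that the derivative $\ln((1-t/x)/(1-q))$ is $O(1)$ near $x=\delta t$). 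By \eqref{eq:def_p_delta_q} this yields a bound of the form $\Omega(1/\sqrt{t})\cdot\phi(\delta,q)^t$, which is at least $\phi(\delta,q)^t/(t+1)^{1/2}$ for all $t$ beyond a threshold $T=T(\delta,q)$. The main obstacle I anticipate is precisely this discretization step--absorbing both the rounding error from $\delta t$ to $\lfloor\delta t\rfloor$ and the shift from $\D{t/m}{q}$ to $\D{1/\delta}{q}$ uniformly in $t\geq T$--and dealing with the boundary case $\delta=1/q$ (where $\phi=1$), which should follow from the CLT-type observation that $\Pr[\mathrm{Bin}(m,q)\geq mq]$ is bounded away from $0$, so the required inequality holds with $r=0$ and any sufficiently large $T$.
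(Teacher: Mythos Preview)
Your proposal is correct and follows essentially the same route as the paper. The paper also pads the process to exactly $m=\lfloor\delta t\rfloor$ steps (setting the indicator to $1$ once $G\in\Pi$), invokes a stochastic-dominance lemma to reduce to $\Pr[\mathrm{Bin}(m,q)\geq t]$, and then lower-bounds this tail via a Cover--Thomas type estimate combined with a Mean Value Theorem argument to pass from $m\,\D{t/m}{q}$ to $\delta t\,\D{1/\delta}{q}+O(1)$; your Stirling estimate and first-order Taylor expansion of $x\mapsto x\,\D{t/x}{q}$ (with derivative $\ln\bigl((1-t/x)/(1-q)\bigr)$) are exactly the same computation in different coordinates. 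Two small remarks: you need not treat $\delta=1/q$ separately, since your general Taylor bound already applies there (the derivative stays bounded because $1/\delta=q$ is bounded away from $0$ and $1$), and you should also record that the algorithm runs in polynomial time, which the definition of a procedure requires (this is immediate since the loop executes at most $|V(G)|$ times).
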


The proof of \cref{lemma:multi_step_proc} can be found in \cref{sec:sampling_to_proc}.
\cref{lemma:multi_step_proc} implies that for $\delta = \frac{1}{q}$,
 there exists a
$\proc{\frac{1}{q}}{1}$ for $\gpivd$.
Observe that this serves as a $\proc{\delta}{1}$
for $\delta > \frac{1}{q}$.
Therefore, we make the following observation.

\begin{observation}\label{observation:delta_1_proc}
	Let $0 < q \leq 1$ and $\mathcal{R}$ be a sampling step for $\gpivd$ with success
	probability $q$. Then, for any $\delta > \frac{1}{q}$,  there is a
	$\proc{\delta}{1}$ for $\gpivd$.
\end{observation}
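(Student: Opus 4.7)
The plan is to obtain the observation as an essentially immediate consequence of Lemma~\ref{lemma:multi_step_proc}, combined with a trivial monotonicity property of Definition~\ref{definition:proc}. First I would instantiate Lemma~\ref{lemma:multi_step_proc} with the boundary choice $\delta' = \frac{1}{q}$, which is admissible because $q \in (0,1]$ implies $\frac{1}{q} \in [1, \frac{1}{q}]$. This yields that Algorithm~\ref{algorithm:multi_sample} (configured with $\delta'$ and the given sampling step $\mathcal{R}$) is a $\proc{\frac{1}{q}}{\phi(\frac{1}{q}, q)}$ for $\sgpivd$.

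Next I would simplify the success probability by unpacking the definition of $\phi$ in \eqref{eq:def_p_delta_q}. With $\delta' = \frac{1}{q}$ we have $\frac{1}{\delta'} = q$, so $\D{\frac{1}{\delta'}}{q} = \D{q}{q} = 0$, and hence $\phi(\frac{1}{q}, q) = \exp(0) = 1$. Thus the algorithm is in fact a $\proc{\frac{1}{q}}{1}$ for $\sgpivd$.

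Finally, I would observe that for any $\delta \geq \delta' \geq 1$, every $\proc{\delta'}{p}$ is automatically a $\proc{\delta}{p}$. Inspecting Definition~\ref{definition:proc}, property \ref{prop:second} does not depend on $\delta$ at all, while property \ref{prop:first} requires $|S| \leq \delta \cdot t$, a bound which is only weakened when $\delta$ increases. Therefore the $\proc{\frac{1}{q}}{1}$ produced above is simultaneously a $\proc{\delta}{1}$ for every $\delta > \frac{1}{q}$, establishing the observation.

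I do not anticipate any real obstacle here; the only subtlety is verifying that the boundary case $\delta = \frac{1}{q}$ is covered by the range $1 \leq \delta \leq \frac{1}{q}$ appearing in the hypothesis of Lemma~\ref{lemma:multi_step_proc}, and checking that $\D{q}{q}$ is indeed zero (immediate from the definition of Kullback–Leibler divergence given in Section~\ref{sec:definitions}).
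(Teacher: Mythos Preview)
Your proposal is correct and matches the paper's own reasoning essentially verbatim: the paper simply remarks, just before stating the observation, that Lemma~\ref{lemma:multi_step_proc} with $\delta=\frac{1}{q}$ yields a $\proc{\frac{1}{q}}{1}$, and that this automatically serves as a $\proc{\delta}{1}$ for any larger~$\delta$. Your added justification that \ref{prop:second} is independent of $\delta$ while \ref{prop:first} only weakens as $\delta$ grows makes the monotonicity step explicit, which the paper leaves implicit.
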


Furthermore, we can now prove \Cref{lem:sampling_to_approximation} using   \Cref{lemma:multi_step_proc} together with  \Cref{observation:proc_to_approx}.
\begin{proof}[Proof of \cref{lem:sampling_to_approximation}]
	Assume there is a sampling step with success probability $q$ for $\sgpivd$, and let $1\leq \beta\leq \frac{1}{q}$. Then by \Cref{lemma:multi_step_proc} there is a $\proc{\beta}{\phi(\beta,q)}$ for $\sgpivd$. Hence, by \Cref{observation:proc_to_approx} there is a parameterized $\beta$-approximation for $\sgpivd$ which runs in time $d^k\cdot n^{\Oh(1)}$ where 
	$$
	d=\frac{1}{\phi(\beta,q)} = \exp\left(\beta\cdot \D{\frac{1}{\beta}}{q} \right).
	$$
\end{proof}

Using  the procedure from \Cref{lemma:multi_step_proc} together with \Cref{lemma:beta_approx} (\Cref{algo:iterative_rand_and_extend}) we get the following results. 
\begin{lemma}\label{lemma:approx_sampling}
	Suppose that:
	\begin{itemize}
		\item There is a sampling step with success probability $q\in(0,1)$ for $\sgpivd$.
		\item There is a randomized parameterized $\alpha$-approximation algorithm for $\sgpivd$ which runs in time $c^{k}\cdot n^{\Oh(1)}$. 
	\end{itemize}
	Then,  there is a randomized parameterized
	$\beta$-approximation algorithm for $\sgpivd$ with running time
	\begin{equation*}
		\biggl(\min_{\delta \in \goodd \cap [1, \frac{1}{q}]}\tilde{f}\,(\delta, q)\biggr)^{k} \cdot n^{\Oh(1)}
	\end{equation*}
	where $\tilde{f}(\delta ,q)$ is defined as
\begin{equation*}
	\tilde{f}\,(\delta, q) \coloneqq  f(\delta,\phi(\delta,q) )=\exp\left( \frac{(\delta - \beta) \cdot \ln(c) + (\beta - \alpha) \cdot \ln\left( \frac{1}{\phi(\delta, q)} \right)  }{\delta - \alpha} \right).
\end{equation*}
\end{lemma}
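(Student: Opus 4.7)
\medskip

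\textbf{Proof plan.} The statement is essentially a composition of two results already proved earlier in the excerpt: \Cref{lemma:multi_step_proc}, which converts a sampling step into a procedure, and \Cref{lemma:beta_approx}, which converts a procedure into a parameterized $\beta$-approximation. So the plan is to chain these together and then optimize the free parameter $\delta$.

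\medskip

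First I would fix any $\delta \in \goodd \cap [1, \tfrac{1}{q}]$. By \Cref{lemma:multi_step_proc}, applied to the given sampling step of success probability $q$ (and using that $1 \leq \delta \leq \tfrac{1}{q}$), \Cref{algorithm:multi_sample} with configuration $\delta$ is a $\proc{\delta}{\phi(\delta,q)}$ for $\sgpivd$, where $\phi(\delta,q) = \exp(-\delta \cdot \D{\frac{1}{\delta}}{q})$ as defined in \eqref{eq:def_p_delta_q}. Call this procedure $\procalg{\delta}{\phi(\delta,q)}$.

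\medskip

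Next I would feed this procedure, together with the given $\alpha$-approximation algorithm $\mathcal{A}$ of running time $c^k \cdot n^{\Oh(1)}$, into \Cref{algo:iterative_rand_and_extend} (i.e.\ $\iterrandext$), configured with $\delta$, $p = \phi(\delta,q)$, and $\procalg{\delta}{\phi(\delta,q)}$. Because $\delta \in \goodd$, the hypotheses of \Cref{lemma:beta_approx} are met, so the resulting algorithm is a randomized parameterized $\beta$-approximation for $\sgpivd$ with running time $f(\delta, \phi(\delta,q))^k \cdot n^{\Oh(1)}$. By definition of $\tilde{f}$ in the lemma statement, $f(\delta, \phi(\delta,q)) = \tilde{f}(\delta,q)$, so this running time is exactly $\tilde{f}(\delta,q)^k \cdot n^{\Oh(1)}$.

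\medskip

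Finally, since $\delta$ is a parameter that can be fixed in advance (independently of the input graph $G$ and of $k$), I would choose $\delta$ to be any minimizer of $\tilde{f}(\cdot, q)$ over the compact set $\goodd \cap [1, \tfrac{1}{q}]$; this minimum is attained because $\tilde f(\cdot,q)$ is continuous on this set. This yields the claimed running time
\[
\biggl(\min_{\delta \in \goodd \cap [1, \tfrac{1}{q}]} \tilde{f}\,(\delta, q)\biggr)^{k} \cdot n^{\Oh(1)}.
\]
No real obstacle arises: the argument is a clean composition, and the only thing to verify carefully is that for every admissible $\delta$ both source lemmas apply, i.e.\ $\delta \in \goodd$ (for \Cref{lemma:beta_approx}) and $1 \leq \delta \leq \tfrac{1}{q}$ (for \Cref{lemma:multi_step_proc}), which is exactly the domain over which we take the minimum.
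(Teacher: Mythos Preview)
Your proposal is correct and follows essentially the same approach as the paper: apply \Cref{lemma:multi_step_proc} to obtain a $\proc{\delta}{\phi(\delta,q)}$ for each admissible $\delta$, then apply \Cref{lemma:beta_approx} to get a $\beta$-approximation with running time $\tilde f(\delta,q)^k\cdot n^{\Oh(1)}$, and finally pick the minimizing $\delta$. Your added remark that the minimum is attained by continuity on a compact set is a small refinement the paper omits, but otherwise the arguments coincide.
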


\cref{lemma:approx_sampling} provides a $\beta$ approximation algorithms whose running time is a solution for an optimization problem. 
The final step towards the proof of \Cref{theorem:summary_sampling_step} is to solve this optimization problem.
\begin{lemma}\label{lemma:min_f_convert_equiv}
	It holds that
	\begin{equation*}
		\min_{\delta \in \goodd \cap [1, \frac{1}{q}]}\tilde{f}\,(\delta, q) = \runtime.
	\end{equation*}
\end{lemma}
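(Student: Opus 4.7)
}

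Write $h(\delta) := \ln \tilde f(\delta, q)$ and $g(\delta) := \delta \cdot \D{1/\delta}{q} = \ln(1/\phi(\delta,q))$. Then
$$h(\delta) = \frac{(\delta - \beta)\ln c + (\beta - \alpha) g(\delta)}{\delta - \alpha},$$
which, viewed geometrically, is the value at $x=\beta$ of the line through the points $(\alpha, \ln c)$ and $(\delta, g(\delta))$. So the minimization is the problem of finding the line through $(\alpha, \ln c)$ and a point of the graph of $g$ whose value at $\beta$ is smallest. The function $g$ is strictly convex on $(1, 1/q]$ (one checks $g''(\delta) = 1/(\delta(\delta-1))>0$), and by hypothesis $(\alpha,\ln c)$ lies on or below the graph of $g$, so the minimizing line is intuitively the tangent to $g$ that also passes through $(\alpha, \ln c)$, provided that the tangency point lies in the admissible region.

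Next I would compute $h'$ and identify its zeros. A routine differentiation gives
$$h'(\delta) \cdot (\delta - \alpha)^2 = -(\beta - \alpha) \cdot \Bigl[ g(\delta) - (\delta - \alpha) g'(\delta) - \ln c \Bigr].$$
The key identity to establish is
$$g(\delta) - (\delta - \alpha) g'(\delta) \;=\; \alpha \cdot \Bigl( \D{1/\alpha}{q} - \D{1/\alpha}{1/\delta} \Bigr). \tag{$\star$}$$
This is proven by expanding $g(\delta) = \tfrac{1}{\delta}\ln\tfrac{1}{\delta q} \cdot \delta + \tfrac{\delta-1}{\delta}\ln\tfrac{\delta-1}{\delta(1-q)} \cdot \delta$ and $g'(\delta) = \D{1/\delta}{q} - \tfrac{1}{\delta}\ln\tfrac{1-q}{q(\delta-1)}$, using the substitution $\ln\tfrac{1-q}{q(\delta-1)} = \ln\tfrac{1}{\delta q} - \ln\tfrac{\delta-1}{\delta(1-q)}$, and collecting terms. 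Combined with the defining relation \eqref{eq:deltast_def}, $(\star)$ yields
$$h'(\delta) \cdot (\delta - \alpha)^2 \;=\; \alpha \, (\beta - \alpha) \cdot \Bigl[ \D{1/\alpha}{1/\delta} - \bigl( \D{1/\alpha}{q} - \tfrac{\ln c}{\alpha} \bigr) \Bigr],$$
so the critical points of $h$ are exactly $\delta = \sdeltal$ and $\delta = \sdeltar$.

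The sign of $h'$ is then read off from the monotonicity of $\delta \mapsto \D{1/\alpha}{1/\delta}$, which (since $\D{1/\alpha}{\cdot}$ is convex and minimized at $1/\alpha$) is strictly decreasing on $[1,\alpha]$ and strictly increasing on $[\alpha, 1/q]$. A side calculation shows $\sdeltar \leq 1/q$ (with equality iff $c=1$), using $\D{1/\alpha}{q} - \ln(c)/\alpha \leq \D{1/\alpha}{q}$. So for $\beta>\alpha$ the function $h$ is decreasing on $[\alpha, \sdeltar]$ and increasing on $[\sdeltar, 1/q]$, while for $\beta<\alpha$ it is decreasing on $[1,\sdeltal]$ and increasing on $[\sdeltal, \alpha]$. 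Consequently the minimizer over $\goodd \cap [1,1/q]$ is $\sdeltar$ when $\sdeltar \geq \beta \geq \alpha$, is $\sdeltal$ when $\sdeltal \leq \beta \leq \alpha$, and is the boundary point $\beta$ in the remaining case. Plugging these back in: at the boundary $h(\beta) = g(\beta) = \beta \D{1/\beta}{q}$, and at a critical point $\delta^* \in \{\sdeltal, \sdeltar\}$ an algebraic rearrangement gives
$$h(\delta^*) \;=\; \ln c + (\beta - \alpha) \cdot \frac{\delta^* \D{1/\delta^*}{q} - \ln c}{\delta^* - \alpha},$$
exactly matching the three branches in the definition \eqref{eq:runtime} of $\runtime$.

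The principal obstacle is the identity $(\star)$: the derivative calculation involves several logarithmic terms and the cancellation leading to the clean form on the right-hand side is not obvious from inspection. Once $(\star)$ is in hand, the critical-point characterization, the monotonicity of $\D{1/\alpha}{1/\delta}$, the bound $\sdeltar \leq 1/q$, and the final case distinction are all straightforward.
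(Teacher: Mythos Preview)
Your proposal is correct and takes essentially the same approach as the paper. The paper factors $h(\delta) = \ln c + (\beta-\alpha)\, s_q(\delta)$ with $s_q(\delta) = (g(\delta)-\ln c)/(\delta-\alpha)$ and studies the sign of $s_q'$ instead of $h'$, but since $h' = (\beta-\alpha)\, s_q'$ this is the same computation; your identity $(\star)$ is exactly the content of the paper's \Cref{lemma:s_deriv_alternative} (whose $\Gamma_q(\delta)$ equals $-[g(\delta)-(\delta-\alpha)g'(\delta)-\ln c]$), and your monotonicity/unimodality argument coincides with \Cref{lemma:s_q_unimodal} and \Cref{lemma:min_s_beta_>_alpha,lemma:max_s_beta_<_alpha}.
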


The proofs of \cref{lemma:approx_sampling,lemma:min_f_convert_equiv} can be found in \cref{sec:optim_delta}.
We now have everything to proceed with the proof of \cref{theorem:summary_sampling_step}.

\begin{proof}[Proof of \cref{theorem:summary_sampling_step}]
	The claim follows immediately from \cref{lemma:approx_sampling,lemma:min_f_convert_equiv}.
\end{proof}

\subsection{Converting Procedures to Approximation Algorithms}
\label{sec:proc_to_approx}
In this section we will prove \cref{lemma:beta_approx}.
To accomplish this, we will examine some properties of \cref{algo:rand_and_extend}.
First, we will show that \cref{algo:rand_and_extend} indeed returns a solution $W \in \sat_{\Pi}(G)$.
Subsequently, we will establish that with a certain probability, the set returned has size at most $\beta \cdot k$.
Equipped with these results, we will proceed with the proof of \cref{lemma:beta_approx}.

\begin{lemma}\label{lem:randext_ret_yes}
	Let $G \in \mathcal{G}$ be a hypergraph, and $0 \leq k \leq n$, $T \leq t \leq \frac{\beta}{\delta} \cdot k$ be integers.
	Let $W$ denote the set returned by $\randext(G,k,t)$, then it holds that $W \in \sat_\Pi(G)$.
\end{lemma}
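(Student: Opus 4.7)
The plan is essentially a direct unpacking of the definitions of the two algorithmic components invoked inside \cref{algo:rand_and_extend}. The set returned by $\randext(G,k,t)$ is $W = S \cup Y$, where $S$ comes from the procedure $\procalg{\delta}{p}$ and $Y$ comes from the black box approximation algorithm $\A$. I would show $W \in \sat_\Pi(G)$ by arguing $G \setminus W \in \Pi$.

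First, I would observe that by \cref{definition:proc}, the procedure $\procalg{\delta}{p}(G,t)$ returns a subset $S \subseteq V(G)$ (so that $G \setminus S$ is a well-defined induced subhypergraph of $G$, and since $\mathcal{G}$ is closed, $G \setminus S \in \mathcal{G}$). Second, by \cref{definition:alpha_approx}, the randomized parameterized $\alpha$-approximation algorithm $\A$ always returns a solution in $\sat_\Pi$ of its input graph, regardless of the integer parameter it is invoked with; the parameter only governs the approximation guarantee that holds with probability at least $1/2$. Hence $Y \in \sat_\Pi(G \setminus S)$, i.e., $(G \setminus S) \setminus Y \in \Pi$. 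The conclusion then follows from the identity $G \setminus W = G \setminus (S \cup Y) = (G \setminus S) \setminus Y$, giving $G \setminus W \in \Pi$.

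The only subtlety worth noting is that the ``parameter'' passed to $\A$, namely $\frac{\beta \cdot k - \delta \cdot t}{\alpha}$, is nonnegative (because the hypothesis $t \leq \frac{\beta}{\delta} \cdot k$ implies $\beta k - \delta t \geq 0$), though it need not be an integer; one may implicitly round it (e.g.\ to $\lfloor \cdot \rfloor$) or extend the definition of $\A$ to nonnegative reals without affecting its status as a solution-returning algorithm. No step of this argument is an obstacle; the lemma is a bookkeeping statement confirming that $\randext$ always outputs a feasible solution, independently of the probabilistic claims we will make about its size in subsequent lemmas.
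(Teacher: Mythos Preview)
Your proposal is correct and follows essentially the same argument as the paper's proof: both observe that $Y \in \sat_\Pi(G \setminus S)$ because $\A$ always returns a feasible solution (per \cref{definition:alpha_approx}), and then conclude $W = S \cup Y \in \sat_\Pi(G)$ via the identity $G \setminus (S \cup Y) = (G \setminus S) \setminus Y \in \Pi$. Your additional remarks about $\mathcal{G}$ being closed and the nonnegativity of the parameter passed to $\A$ are careful touches the paper omits, but they do not change the route.
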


\begin{proof}
	Let $S$ be as defined in \cref{algo:rand_and_extend}.
	Since $\A$ is a parameterized $\alpha$-approximation
	algorithm, by \Cref{definition:alpha_approx}, $Y$ is a solution of $G \setminus S$,
	i.e. $Y \in \sat_{\Pi}(G \setminus S)$.
	Equivalently, it holds that $\left( G \setminus S \right) \setminus Y \in \Pi$.
	Since $G \setminus \left( Y \cup S \right) = \Bigl(\left( G \setminus S \right) \setminus Y\Bigr) \in \Pi$, it also holds
	that $W = (Y \cup S) \in \sat_{\Pi}(G)$.
\end{proof}

\begin{lemma}\label{lem:tstar_bound}
	Let $t^*$ be as defined in \cref{algo:iterative_rand_and_extend}, then it holds that
	\begin{equation*}
		\max\biggl(0,\,\frac{\beta - \alpha}{\delta - \alpha} \cdot k - 1\biggr) \leq t^* \leq \frac{\beta - \alpha}{\delta - \alpha}\cdot k + 1.
	\end{equation*}
	Moreover, it also holds that $t^* \leq \frac{\beta}{\delta} \cdot k$.
\end{lemma}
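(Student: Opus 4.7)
\textbf{Proof proposal for \cref{lem:tstar_bound}.}

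The plan is to handle the two cases in the definition of $t^*$ separately: when $\beta>\alpha$, $t^*$ is a floor, and when $\beta<\alpha$, $t^*$ is a ceiling. In each case, the displayed two-sided bound
\[
    \max\!\Bigl(0,\,\tfrac{\beta-\alpha}{\delta-\alpha}\,k-1\Bigr) \leq t^* \leq \tfrac{\beta-\alpha}{\delta-\alpha}\,k+1
\]
will follow directly from the elementary inequalities $\lfloor x\rfloor>x-1$ and $\lceil x\rceil\leq x+1$, together with the observation that $\frac{\beta-\alpha}{\delta-\alpha}\,k\geq 0$ for $\delta\in\goodd$. For the non-negativity, note that in case $\beta>\alpha$ we have $\delta\geq\beta>\alpha$, while in case $\beta<\alpha$ we have $\delta\leq\beta<\alpha$, so in both cases the numerator and denominator of $\frac{\beta-\alpha}{\delta-\alpha}$ have the same sign, making it non-negative. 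The non-negativity of $t^*$ follows because floors/ceilings of non-negative reals are non-negative.

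The next step is to establish the algebraic inequality
\[
\frac{\beta-\alpha}{\delta-\alpha}\ \leq\ \frac{\beta}{\delta},
\]
which holds in both cases but for different reasons related to the sign of $\delta-\alpha$. When $\beta>\alpha$ (so $\delta\geq\beta$ and $\delta-\alpha>0$), clearing denominators gives $\delta(\beta-\alpha)\leq\beta(\delta-\alpha)$, i.e. $\beta\leq\delta$, which holds. When $\beta<\alpha$ (so $\delta\leq\beta$ and $\delta-\alpha<0$), clearing denominators flips the inequality, yielding the equivalent condition $\delta\leq\beta$, which again holds. This immediately gives $\frac{\beta-\alpha}{\delta-\alpha}\,k\leq\frac{\beta}{\delta}\,k$.

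The statement $t^*\leq\frac{\beta}{\delta}\,k$ is then straightforward in case $\beta>\alpha$: $t^*=\lfloor\frac{\beta-\alpha}{\delta-\alpha}\,k\rfloor\leq\frac{\beta-\alpha}{\delta-\alpha}\,k\leq\frac{\beta}{\delta}\,k$. The subtle step---and the one I expect to be the main obstacle---is the same bound in case $\beta<\alpha$, where $t^*=\lceil\frac{\beta-\alpha}{\delta-\alpha}\,k\rceil$ and $\frac{\beta}{\delta}\,k$ need not be an integer, so the ceiling operation might a priori jump past the target. The trick I plan to use is to insert the integer $k$ as a witness: in case $\beta<\alpha$, the inequality $\delta\leq\beta<\alpha$ gives both $\frac{\beta-\alpha}{\delta-\alpha}=\frac{\alpha-\beta}{\alpha-\delta}\leq 1$ (since $\alpha-\beta\leq\alpha-\delta$) and $\frac{\beta}{\delta}\geq 1$. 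Hence $\frac{\beta-\alpha}{\delta-\alpha}\,k\leq k\leq\frac{\beta}{\delta}\,k$. Since $k$ is an integer upper bound for $\frac{\beta-\alpha}{\delta-\alpha}\,k$, we conclude $\lceil\frac{\beta-\alpha}{\delta-\alpha}\,k\rceil\leq k\leq\frac{\beta}{\delta}\,k$, completing the proof.
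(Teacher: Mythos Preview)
Your proof is correct and follows essentially the same route as the paper: both split on $\beta>\alpha$ versus $\beta<\alpha$, use the non-negativity of $\frac{\beta-\alpha}{\delta-\alpha}$ together with floor/ceiling bounds for the first inequality, and in the $\beta<\alpha$ case insert the integer $k$ (via $\frac{\alpha-\beta}{\alpha-\delta}\leq 1\leq\frac{\beta}{\delta}$) to control the ceiling for the second inequality. If anything, your handling of strict versus non-strict inequalities is slightly cleaner than the paper's.
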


\begin{proof}
	First, notice that for $\delta \in \goodd$, it holds that either $\delta > \beta > \alpha$ or $\delta < \beta < \alpha$.
	Hence, the term $\frac{\beta - \alpha}{\delta - \alpha}$ is always positive. By definition of $t^*$, we also have
	\begin{equation*}
		   \max\biggl(0,\,\frac{\beta - \alpha}{\delta - \alpha} \cdot k - 1\biggr) \leq \left\lfloor   \frac{\beta - \alpha}{\delta  -\alpha}  \cdot k   \right\rfloor \leq t^* \leq \left\lceil  \frac{\beta - \alpha}{\delta  -\alpha}  \cdot k  \right\rceil <    \frac{\beta - \alpha}{\delta  -\alpha}  \cdot k  + 1.
	\end{equation*}
	If $\beta > \alpha$, then we have
	\begin{equation*}
		t^* = \left\lfloor    \frac{\beta - \alpha}{\delta  -\alpha}  \cdot k    \right\rfloor \leq  \frac{\beta - \alpha}{\delta  -\alpha}  \cdot k  < \frac{\beta}{\delta} \cdot k
	\end{equation*}
	where the last inequality is true because $\alpha < \beta < \delta$. Similarly,
	if $\beta < \alpha$, then $\delta < \beta$ as well and we have
	\begin{equation*}
		t^* = \left\lceil     \frac{\alpha - \beta}{\alpha - \delta}  \cdot k  \right\rceil < \left\lceil \frac{\alpha - \beta}{\alpha - \beta}  \cdot k \right\rceil = k < \frac{\beta}{\delta} \cdot k.
	\end{equation*}
\end{proof}

\begin{lemma}\label{lem:randext_exists}
	Let $G \in \mathcal{G}$ be a hypergraph, $0 \leq k \leq n$ be an integer and let $t^*$ be as defined in \cref{algo:iterative_rand_and_extend} such that $t^* \geq T$.
	Moreover, let $Z$ be the set returned by $\randext(G,k,t^*)$.
	If $\OPT_{\cG,\Pi}(G) \leq k$, then $\abs{Z} \leq \beta \cdot k$ with probability at least $\frac{p^{t^*}}{2 \cdot (t^* + 1)^{r}}$.
\end{lemma}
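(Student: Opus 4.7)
The plan is to decompose the probability of success into two independent ingredients: the event that the set $S$ returned by the procedure $\procalg{\delta}{p}$ "captures enough" of the optimum, and the event that the black-box $\alpha$-approximation $\A$ succeeds on the residual instance. Concretely, I would first apply \Cref{prop:first} to $\procalg{\delta}{p}$ to get the deterministic bound $\abs{S}\leq \delta\cdot t^*$. Then, since the hypothesis includes $t^*\geq T$ and $\OPT_{\cG,\Pi}(G)\leq k$, \Cref{prop:second} gives the event
\[
E_1 \;:=\; \Bigl\{ \OPT_{\cG,\Pi}(G\setminus S) \leq \max(0, k-t^*)\Bigr\}
\]
with probability at least $p^{t^*}/(t^*+1)^r$.

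The next step is the algebraic check that, on the event $E_1$, the residual instance is within the "reach" of $\A$ when invoked with parameter $(\beta k-\delta t^*)/\alpha$. Specifically, I want to verify
\[
\max(0, k-t^*) \;\leq\; \frac{\beta k - \delta t^*}{\alpha}.
\]
Unwinding, this is equivalent to $(\delta-\alpha)t^* \leq (\beta-\alpha)k$, i.e.\ $t^* \leq \tfrac{\beta-\alpha}{\delta-\alpha}k$ when $\beta>\alpha$ and $t^* \geq \tfrac{\beta-\alpha}{\delta-\alpha}k$ when $\beta<\alpha$ (the sign of $\delta-\alpha$ flips, so the inequality flips). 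Both of these follow directly from the definition of $t^*$ in \Cref{iterative:tstar}. In addition, $\tfrac{\beta k - \delta t^*}{\alpha}\geq 0$ by \Cref{lem:tstar_bound} (which gives $t^*\leq \tfrac{\beta}{\delta}k$), so the parameter passed to $\A$ is a valid non-negative number.

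Conditioning on $E_1$, I now invoke \Cref{definition:alpha_approx} for $\A$: since $\OPT_{\cG,\Pi}(G\setminus S)\leq (\beta k-\delta t^*)/\alpha$, with probability at least $\tfrac{1}{2}$ the returned set $Y$ satisfies $\abs{Y}\leq \alpha\cdot \tfrac{\beta k-\delta t^*}{\alpha} = \beta k-\delta t^*$. Multiplying the two (essentially independent) probabilities gives the desired bound $\tfrac{p^{t^*}}{2(t^*+1)^r}$, and on this joint event we have
\[
\abs{Z} \;=\; \abs{S\cup Y} \;\leq\; \abs{S}+\abs{Y} \;\leq\; \delta t^* + (\beta k-\delta t^*) \;=\; \beta k,
\]
completing the argument.

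There is essentially no "main obstacle" here; the only subtlety is making sure the sign of $\delta-\alpha$ is handled correctly in the two cases $\beta>\alpha$ and $\beta<\alpha$ of \Cref{definition:good_set}, and remembering that the success probability of $\A$ is taken \emph{conditional} on $E_1$, so that the two probabilities multiply rather than add.
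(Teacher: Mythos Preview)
Your proposal is correct and follows essentially the same approach as the paper's proof: invoke \Cref{prop:first} and \Cref{prop:second} for $\procalg{\delta}{p}$, verify the algebraic inequality $\max(0,k-t^*)\leq (\beta k-\delta t^*)/\alpha$ by a case split on the sign of $\delta-\alpha$ (this is the paper's \Cref{claim:simple_ineq}), and then condition on $E_1$ to apply the guarantee of $\A$ from \Cref{definition:alpha_approx}, multiplying the two probabilities. The paper's writeup introduces an intermediate identity for $k-t^*$ before proving the key inequality, whereas you rearrange directly to $(\delta-\alpha)t^*\leq(\beta-\alpha)k$; this is a cosmetic difference only.
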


\begin{proof}
	Suppose $\OPT_{\cG,\Pi}(G) \leq k$.
	Observe that \cref{lem:tstar_bound} implies $0 \leq t^* \leq \frac{\beta}{\delta} \cdot  k$.
	We have
	\begin{align}
		k - t^* &= \frac{\beta}{\alpha}\cdot k + \left( 1 - \frac{\beta}{\alpha} \right) \cdot k -\frac{\delta \cdot t^*}{\alpha} - \left( t^* - \frac{\delta \cdot t^*}{\alpha} \right)\nonumber\\
				     &= \frac{\beta \cdot k - \delta\cdot t^*}{\alpha} + \left( 1 - \frac{\beta}{\alpha} \right) \cdot k + t^* \cdot \left(\frac{ \delta - \alpha }{\alpha}\right)\label{eq:kminustdelta}.
	\end{align}

	\begin{claim}\label{claim:simple_ineq}
		It holds that $t^* \cdot \frac{\delta - \alpha}{\alpha}  \leq -\left( 1 - \frac{\beta}{\alpha} \right)\cdot k$.
	\end{claim}

	\begin{claimproof}
		Recall that for $(\delta) \in \goodd$, it holds that either $\delta > \beta > \alpha$ or $\delta < \beta < \alpha$.
		
		If $\beta > \alpha$, we have that $\frac{\delta - \alpha}{\alpha} > 0$ and $t^* = \left\lfloor \frac{\beta - \alpha}{\delta  -\alpha}  \cdot k \right\rfloor \leq  \frac{\beta - \alpha}{\delta  -\alpha}  \cdot k$. Hence,
		\begin{equation*}
			t^* \cdot \frac{\delta - \alpha}{\alpha} \leq \frac{\beta - \alpha}{\delta  -\alpha}  \cdot k \cdot \frac{\delta - \alpha}{\alpha} = \frac{\beta - \alpha}{\alpha} \cdot k = -\left( 1 - \frac{\beta}{\alpha} \right)\cdot k .
		\end{equation*}
		Similarly, if $\beta < \alpha$, then we have that $\frac{\delta - \alpha}{\alpha} < 0$ and $t^* = \left\lceil  \frac{\beta - \alpha}{\delta  -\alpha}  \cdot k  \right\rceil \geq  \frac{\beta - \alpha}{\delta  -\alpha}  \cdot k$. Therefore, we also get
		\begin{equation*}
			t^* \cdot \frac{\delta - \alpha}{\alpha} \leq \frac{\beta - \alpha}{\delta  -\alpha}  \cdot k \cdot \frac{\delta - \alpha}{\alpha} = \frac{\beta - \alpha}{\alpha} \cdot k = -\left( 1 - \frac{\beta}{\alpha} \right)\cdot k.
		\end{equation*}
	\end{claimproof}

	\cref{claim:simple_ineq} implies that
\begin{align*}
	k - t^* \,&= \frac{\beta \cdot k - \delta\cdot t^*}{\alpha} + \left( 1 - \frac{\beta}{\alpha} \right) \cdot k + t^* \cdot \left(\frac{ \delta - \alpha }{\alpha}\right)\label{eq:kminustdelta}\\
	&\leq \,\frac{\beta \cdot k - \delta \cdot t^*}{\alpha} + \left( 1 - \frac{\beta}{\alpha} \right) \cdot k - \left( 1 - \frac{\beta}{\alpha} \right) \cdot k\\	
	&= \frac{\beta \cdot k - \delta \cdot t^*}{\alpha}.	
\end{align*}

Finally, since $t^* \leq \frac{\beta}{\delta}\cdot k$, it also holds that $0 \leq \frac{\beta \cdot k - \delta \cdot t^*}{\alpha}$.
Therefore we get
\begin{equation}\label{eq:kminust_ub}
	\max\left( 0, k - t^* \right) \leq \frac{\beta \cdot k - \delta \cdot t^*}{\alpha}.
\end{equation}

	Now let $S$ be as defined in \cref{algo:rand_and_extend}.
	According to \Cref{definition:proc} and \eqref{eq:kminust_ub}, with probability at least $\frac{p^{t^*}}{(t^* + 1)^{r}}$, it holds that $G \setminus S$ has a solution of size at most $\max\left( 0, k - t^* \right) \leq \frac{\beta \cdot k - \delta \cdot t^*}{\alpha}$.
	Note that in this scenario, the set returned by $\A(G \setminus S, \frac{\beta \cdot k - \delta \cdot t^*}{\alpha})$, i.e. $Y$, satisfies
	\begin{equation}\label{eq:Y_ineq_prob}
		\abs{Y} \leq \alpha \cdot \frac{\beta \cdot k - \delta \cdot t^*}{\alpha} = \beta \cdot k - \delta \cdot t^*
	\end{equation}
	with probability at least $\frac{1}{2}$ by \cref{definition:alpha_approx}.
	Moreover, since $Z = Y \cup S$ and $\abs{S}  \leq \delta \cdot t$ it holds that
	\begin{equation}\label{eq:Y_ineq_implies}
		\abs{Y} \leq \beta \cdot k - \delta \cdot t^* \implies \abs{Z} \leq  \beta \cdot k.
	\end{equation}
	Therefore we have
	\begin{align*}
		\Pr\left( \abs{Z} \leq \beta \cdot k \right) &\geq \Pr\left( \abs{Y} \leq \beta \cdot k - \delta \cdot t^* \right)\\
							     &\geq \Pr\left(  \abs{Y} \leq \beta \cdot k - \delta \cdot t^*  \mid   G \setminus S \text{ has a solution of size at most} \max\left( 0, k - t^* \right) \right)\\
									&\quad \cdot \Pr\Bigl( G \setminus S \text{ has a solution of size at most} \max\left( 0, k - t^* \right) \Bigr) \\
									&\geq \frac{1}{2} \cdot \frac{p^{t^*}}{(t^* + 1)^{r}}
	\end{align*}
	where the first inequality follows from \eqref{eq:Y_ineq_implies} and the last one follows from \eqref{eq:Y_ineq_prob}.
\end{proof}

Now we are ready to prove \cref{lemma:beta_approx}.

\begin{proof}[Proof of \cref{lemma:beta_approx}]
	In order to show that \cref{algo:iterative_rand_and_extend} is a
	randomized parameterized $\beta$-approximation algorithm for $\gpivd$
	problem, we need to prove that it satisfies
	\cref{definition:alpha_approx}.
	To that end, let $S$ denote the set returned by \cref{algo:iterative_rand_and_extend}
	and assume that $\OPT_{\cG,\Pi}(G) \leq k$.
	
	If $t^* < T$, by \cref{line:W_at_most_k} in \cref{algo:iterative_rand_and_extend},
	the set $S$ returned by the algorithm
	satisfies $S \in \sat_{\Pi}(G)$ and $\abs{S} \leq k < \beta \cdot k$ with probability 1.
	In this case \cref{algo:iterative_rand_and_extend} satisfies \cref{definition:alpha_approx}.
	
	If $t^* \geq T$, consider an iteration $i$ of the for loop in \cref{line:for_loop_in_iter}
	in \cref{algo:iterative_rand_and_extend}, for $1 \leq i \leq 2 \cdot p^{-t^*} \cdot (t^* + 1)^r$.
	Let $Y_i$ denote the set returned by $\randext(G, k, t^*)$ in iteration $i$.
	Recall that $t^* \leq \frac{\beta}{\delta} \cdot k$, therefore
	\cref{lem:randext_ret_yes} implies that $Y_i \in \sat_{\Pi}(G)$.

	On the other hand, by \cref{lem:randext_exists}, 
	it holds that 
	$\abs{Y_i} \leq \beta \cdot k$ with probability at least
	$\frac{p^{t^*}}{2 \cdot (t^* + 1)^{r}}$. Therefore we get
	\begin{align}
		\Pr\left(\abs{S} > \beta \cdot k\right) &= \Pr\left( \abs{Y_i} > \beta \cdot k \text{ for } 1 \leq i \leq 2 \cdot p^{-t^*} \cdot (t^* + 1)^r \right) \notag \\
									 &= \left( 1 - \frac{p^{t^*}}{2 \cdot (t^* + 1)^{r}} \right) ^{2 \cdot p^{-t^*} \cdot (t^* + 1)^{r}} \notag \\
									 &\leq e^{-1} \label{eq:pr_S_exceed_betak}
	\end{align}
	where the second equality holds since each iteration of the for loop is independent.
	Finally, \eqref{eq:pr_S_exceed_betak} implies that

	\begin{equation*}
		\Pr\left( \abs{S} \leq \beta \cdot k \right) \geq \left(1 - \frac{1}{e}\right) \geq \frac{1}{2}.
	\end{equation*}

	Now let us consider the running time of the algorithm.

	\begin{claim}\label{claim:running_time_iter_randandextend}
		The running time of the algorithm is $c^{\frac{\beta \cdot k - \delta \cdot t^*}{\alpha}} \cdot p^{-t^*}  \cdot n^{\Oh(1)}$.
	\end{claim}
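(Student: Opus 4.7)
The plan is to prove the claim by splitting into the two branches of \cref{algo:iterative_rand_and_extend}, namely $t^* < T$ and $t^* \geq T$, and analyzing each separately. The key observation is that $T$ and $r$ are constants (fixed by the procedure $\procalg{\delta}{p}$), and that by \cref{lem:tstar_bound} we have $t^* \leq \frac{\beta}{\delta}\cdot k \leq k \leq n$, so any polynomial factor in $t^*$ can be absorbed into $n^{\Oh(1)}$.

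First, I would handle the easy case $t^* < T$. Since $t^*$ is either $\lceil \frac{\beta-\alpha}{\delta-\alpha} k\rceil$ or $\lfloor\frac{\beta-\alpha}{\delta-\alpha} k\rfloor$ with $\frac{\beta-\alpha}{\delta-\alpha}$ a fixed positive constant, $t^* < T$ implies $k$ is bounded by a constant (depending only on $T$, $\alpha$, $\beta$, $\delta$). Then the search in \cref{line:W_at_most_k} for a set $W\subseteq V(G)$ of size at most $k$ with $W\in \sat_{\Pi}(G)$ takes $\binom{n}{\leq k}\cdot n^{\Oh(1)} = n^{\Oh(1)}$ time (using that $\Pi$ is polynomial-time decidable). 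Simultaneously, with $k$ bounded we have $c^{\frac{\beta k - \delta t^*}{\alpha}}\cdot p^{-t^*} = \Oh(1)$, so the desired bound $c^{\frac{\beta k - \delta t^*}{\alpha}}\cdot p^{-t^*}\cdot n^{\Oh(1)}$ holds trivially.

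Next, for the main case $t^* \geq T$, I would bound the cost of a single invocation of $\randext(G, k, t^*)$. This involves one call to $\procalg{\delta}{p}(G,t^*)$, which by \cref{definition:proc} runs in polynomial time, followed by one call to the black-box approximation algorithm $\A(G\setminus S, \frac{\beta k - \delta t^*}{\alpha})$, which by assumption runs in time $c^{\frac{\beta k - \delta t^*}{\alpha}}\cdot n^{\Oh(1)}$; note here that $\frac{\beta k - \delta t^*}{\alpha}\geq 0$ by \cref{lem:tstar_bound}. Thus one invocation of $\randext$ takes $c^{\frac{\beta k - \delta t^*}{\alpha}}\cdot n^{\Oh(1)}$ time. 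The for loop of \cref{line:for_loop_in_iter} repeats this $2\cdot p^{-t^*}\cdot (t^*+1)^r$ times, giving total time
\[
2\cdot p^{-t^*}\cdot (t^*+1)^r\cdot c^{\frac{\beta k - \delta t^*}{\alpha}}\cdot n^{\Oh(1)}.
\]

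Finally, I would absorb the factor $2(t^*+1)^r$ into $n^{\Oh(1)}$. Since $r$ is a constant and $t^*\leq k\leq n$ by \cref{lem:tstar_bound}, we have $(t^*+1)^r\leq (n+1)^r = n^{\Oh(1)}$. Combining, the overall running time of \cref{algo:iterative_rand_and_extend} is $c^{\frac{\beta k - \delta t^*}{\alpha}}\cdot p^{-t^*}\cdot n^{\Oh(1)}$, as claimed. There is no real obstacle here; the main thing to be careful about is the case split and verifying that $T$ and $r$ being constants (a consequence of the notation $\proc{\delta}{p}$ in \cref{definition:proc}) allows the polynomial absorption.
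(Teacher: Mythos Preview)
Your proposal is correct and follows essentially the same approach as the paper: a case split on $t^* < T$ versus $t^* \geq T$, arguing in the former case that $k = \Oh(1)$ so the brute-force search is polynomial, and in the latter case multiplying the per-iteration cost of $\randext$ by the number of iterations and absorbing the $(t^*+1)^r$ factor into $n^{\Oh(1)}$. One inconsequential slip: the chain $t^* \leq \frac{\beta}{\delta} k \leq k$ fails when $\beta < \alpha$ (since then $\delta \leq \beta$), but you only need $t^* = \Oh(k) = \Oh(n)$, which follows directly from $t^* \leq \frac{\beta}{\delta}k$ with $\frac{\beta}{\delta}$ a constant.
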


	\begin{claimproof}
		Observe that if $t^* \geq T$, each execution of $\randext$ takes time $c^{\frac{\beta \cdot k - \delta \cdot t^*}{\alpha}} \cdot n^{\Oh(1)}$.
		This is because the algorithm executes the $\procext{\delta}{p}{r}{T}$ $\procalg{\delta}{p}$ and the parameterized approximation algorithm, where the former has polynomial running time and the latter has a running time of $c^{\frac{\beta \cdot k - \delta \cdot t^*}{\alpha}} \cdot n^{\Oh(1)}$.
		Since the number of iterations is $2 \cdot p^{-t^*} \cdot (t^* + 1)^{r}$, the total running time becomes
		\begin{align*}
			c^{\frac{\beta \cdot k - \delta\cdot t^*}{\alpha}} \cdot p^{-t^*} \cdot (t^* + 1)^{r} \cdot n^{\Oh(1)} = c^{\frac{\beta \cdot k - \delta \cdot t^*}{\alpha}} \cdot p^{-t^*}  \cdot n^{\Oh(1)}
		\end{align*}
		since $t^* \leq \left\lceil  \frac{\beta - \alpha}{\delta - \alpha} \cdot k \right\rceil  = \Oh(k) = \Oh(n)$.

		Now suppose $t^* < T$ and observe that $k = \Oh(1)$ because $t^* < T = \Oh(1)$ and $k \leq \left( t^* + 1 \right)\cdot \frac{\delta - \alpha}{\beta - \alpha} = \Oh(1)$. Since the algorithm goes over all sets $W \subseteq V(G)$ of size at most $k$, the running time is at most $n^{\Oh(k)} = n^{\Oh(1)}$. Therefore we can conclude that the running time of the algorithm is upper bounded by $c^{\frac{\beta \cdot k - \delta \cdot t^*}{\alpha}} \cdot p^{-t^*}  \cdot n^{\Oh(1)}$.	
	\end{claimproof}

	Since $t^* \geq \frac{\beta - \alpha}{\delta - \alpha} \cdot k  - 1$, by \cref{lem:tstar_bound}, we have
	\begin{align}
		\frac{\beta \cdot k - \delta \cdot t^*}{\alpha} &\leq \frac{\beta \cdot k - \delta \cdot \biggl(\frac{\beta - \alpha}{\delta - \alpha} \cdot k - 1\biggr)}{\alpha} \nonumber \\ 
						   &= \frac{\beta \cdot k}{\alpha} - \frac{(\beta - \alpha) \cdot \delta \cdot k}{(\delta - \alpha) \cdot \alpha} + \frac{\delta}{\alpha}.\label{eq:bkdt_ub}
	\end{align}			

	Therefore it holds that
	\begin{align}
		c^{\frac{\beta \cdot k - \delta \cdot t^*}{\alpha}} \cdot p^{-t^*} &= \exp\left( \left( \frac{\beta \cdot k - \delta \cdot t^*}{\alpha} \right)  \cdot \ln(c) + t^* \cdot \ln\left( \frac{1}{p} \right)  \right) \notag \\
		&\leq \exp\left( \left( \frac{\beta \cdot k}{\alpha} - \frac{(\beta - \alpha) \cdot \delta \cdot k}{(\delta - \alpha) \cdot \alpha}  + \frac{\delta}{\alpha}\right)  \cdot \ln(c) + \left( \frac{\beta - \alpha}{\delta - \alpha}  \cdot k + 1 \right)  \cdot \ln\left( \frac{1}{p} \right) \right)  \notag \\
		&= \exp\left( \left(   \frac{\beta \cdot k \cdot (\delta - \alpha) - (\beta - \alpha) \cdot \delta \cdot k}{(\delta - \alpha) \cdot \alpha} \right)  \cdot \ln(c) + \left( \frac{\beta - \alpha}{\delta - \alpha} \cdot k \right)  \cdot \ln\left( \frac{1}{p} \right)  \right) \cdot c^{\frac{\delta}{\alpha}} \cdot \frac{1}{p}\notag \\		
		&= \exp\left( \left(   \frac{\alpha \cdot k \cdot (\delta - \beta)}{\alpha \cdot (\delta - \alpha)}  \right)  \cdot \ln(c) + \left( \frac{\beta - \alpha}{\delta - \alpha} \cdot k \right)  \cdot \ln\left( \frac{1}{p} \right)  \right)\cdot c^{\frac{\delta}{\alpha}} \cdot \frac{1}{p}\notag \\				
								  &= \exp\left( \frac{\delta - \beta}{\delta - \alpha} \cdot k\cdot \ln(c) + \frac{\beta - \alpha}{\delta - \alpha}  \cdot k \cdot \ln\left( \frac{1}{p} \right)  \right)\cdot c^{\frac{\delta}{\alpha}} \cdot \frac{1}{p}\notag \\
								  &= \exp\left( \frac{(\delta - \beta) \cdot \ln(c) + (\beta - \alpha) \cdot \ln\left( \frac{1}{p} \right) }{\delta - \alpha} \right)^{k} \cdot c^{\frac{\delta}{\alpha}} \cdot \frac{1}{p},\label{eq:rt_formula_bound}
	\end{align}
	where the inequality follows from \eqref{eq:bkdt_ub} and \cref{lem:tstar_bound}.
	Therefore, by \cref{claim:running_time_iter_randandextend} and \eqref{eq:rt_formula_bound}, the running time of the algorithm is
	\begin{align*}
		f(\delta, p)^{k} \cdot n^{\Oh(1)}.
	\end{align*}
\end{proof}

\subsection{Converting Sampling Steps to Procedures}
\label{sec:sampling_to_proc}

In this section, we will prove \cref{lemma:multi_step_proc} by developing
several auxiliary lemmas. Let $0 < q \leq 1$, $1 \leq \delta \leq \frac{1}{q}$ and $\mathcal{R}$
be a sampling step  for $\gpivd$, with success probability $q$.
Consider \cref{algorithm:multi_sample} with these parameters.
We will demonstrate that there exists integers $r$ and $T$
such that \cref{algorithm:multi_sample} is a $\procext{\delta}{\phi\left( \delta, q \right) }{r}{T}$.
To that end, we will need to show that given a hypergraph $G \in
\mathcal{G}$ and $t \geq 0$ as input,
\cref{algorithm:multi_sample} runs in polynomial time and satisfies 
\cref{prop:first,prop:second}, as defined in \cref{definition:proc}.

Note that neither the running time of the algorithm nor \cref{prop:first}
depend on the values of $r$ and $T$. Therefore, irrespective of the values of $r$
and $T$, we will show that \cref{algorithm:multi_sample} runs in
polynomial time and that \cref{prop:first} holds.
Then, we will show that there exists $r$ and $T$ for which \cref{prop:second} holds,
implying the truth of  \cref{lemma:multi_step_proc}.

\begin{lemma}\label{lemma:multi_step_proc_rt}
	\cref{algorithm:multi_sample} runs in polynomial time.
\end{lemma}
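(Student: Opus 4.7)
The plan is to observe that the while loop in Algorithm~\ref{algorithm:multi_sample} terminates after a polynomial number of iterations and that each iteration does only polynomial work. Concretely, I would argue along the following lines.

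First, I would bound the number of loop iterations. Each iteration of the \textbf{while} loop strictly increases $|S|$ by one (since $v \in V(G)$ is removed from $G$ and added to $S$, and prior to the iteration $v$ was not in $S$). Since the loop terminates as soon as $|S| \geq \delta \cdot t$, and trivially $|S| \leq |V(G)| = n$ at all times, the total number of iterations is at most $\min(\lceil \delta \cdot t \rceil, n)$. In particular, as $S \subseteq V(G)$ throughout, the number of iterations never exceeds $n$, so this bound is polynomial in $n$ regardless of how $t$ is encoded.

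Next, I would verify that each iteration costs only polynomial time. The loop guard requires evaluating $G \in \Pi$, which is polynomial by our standing assumption that every hypergraph property considered is polynomial-time decidable (see Section~\ref{sec:definitions}). The call $v = \mathcal{R}(G)$ on Line~\ref{line:v_multi_sample} runs in polynomial time by the definition of a sampling step (Definition~\ref{definition:sampling_step}). The operations $G \gets G\setminus\{v\}$ and $S \gets S \cup \{v\}$ are clearly polynomial in the hypergraph representation. Combining this with the iteration bound above yields an overall polynomial running time.

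There is no real obstacle here; the only subtle point worth mentioning is that the bound on the number of iterations cannot rely on $\delta \cdot t$ alone (which depends on the input integer $t$), but must use the structural bound $|S| \leq n$, which makes the argument robust to the encoding of $t$. The proof is otherwise a direct unpacking of the definitions.
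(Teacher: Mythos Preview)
Your proposal is correct and matches the paper's approach: both bound the number of while-loop iterations by $n$ and observe that each iteration (membership test in $\Pi$, call to $\mathcal{R}$, and updates) is polynomial. The only cosmetic difference is that the paper bounds the iterations via the second loop condition (after $n$ removals $G$ is empty, hence in $\Pi$), whereas you use the first condition ($|S|\le n$); both yield the same bound.
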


\begin{proof}
	Since $\mathcal{R}$ is a sampling step for $\gpivd$, it runs in polynomial time.
	Moreover, after $n$ steps, $\overline{G}$ becomes empty and therefore belongs
	to $\Pi$, as $\Pi$ is hereditary and includes the empty graph.
	Therefore, the number of iterations of the while loop in \cref{algorithm:multi_sample} is at
	most $n$.
	Finally, membership to $\Pi$ can be tested in polynomial time since $\Pi$
	is polynomial-time decidable.
	Therefore, the whole algorithm runs in polynomial time.	
\end{proof}

\begin{lemma}\label{lemma:multi_step_proc_prop_1}
	\cref{algorithm:multi_sample} satisfies \cref{prop:first} in \cref{definition:proc}.
\end{lemma}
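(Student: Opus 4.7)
The claim is essentially a loop-invariant argument about Algorithm~\ref{algorithm:multi_sample}. My plan is to track the size of $S$ across iterations of the \textbf{while} loop and argue that it can never exceed $\delta \cdot t$ at termination.

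First, I would observe that $S$ is initialized to the empty set, so $|S|=0 \leq \delta \cdot t$ trivially holds before the loop begins (recall $\delta \geq 1$ and $t \geq 0$). Then, I would note that each iteration of the loop adds exactly one new vertex $v$ to $S$, so $|S|$ is incremented by exactly one per iteration.

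Next, I would invoke the guard of the \textbf{while} loop: the body is executed only when $|S| < \delta \cdot t$ holds at the start of the iteration. Since $|S|$ is always a non-negative integer, the condition $|S| < \delta \cdot t$ implies $|S| \leq \delta \cdot t - 1$ after the check and hence $|S| + 1 \leq \delta \cdot t$ after the increment. Consequently, the invariant $|S| \leq \delta \cdot t$ is preserved across every iteration. Therefore, upon termination of the loop (either because $|S| \geq \delta \cdot t$ or because $G \in \Pi$), we still have $|S| \leq \delta \cdot t$, which is exactly the statement of \cref{prop:first}.

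There is no real obstacle here; the main thing to be careful about is to phrase the invariant so that the integer nature of $|S|$ is used explicitly when comparing against the possibly non-integer quantity $\delta \cdot t$. The proof is essentially two or three lines once this invariant is stated.
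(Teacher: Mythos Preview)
Your proposal is correct and essentially identical to the paper's proof, which simply observes that the while loop runs at most $\delta \cdot t$ times and that $|S|$ grows by one per iteration. (A minor nit: your implication ``$|S|$ integer and $|S| < \delta\cdot t$ gives $|S| \leq \delta\cdot t - 1$'' fails when $\delta\cdot t \notin \mathbb{Z}$, but the paper's own proof is equally loose on this point and the off-by-one is immaterial downstream.)
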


\begin{proof}
	Observe that the while loop in \cref{algorithm:multi_sample} runs at most
	$\delta \cdot t$ times. Moreover, in each iteration of the for loop, the
	size of $S$ increases by at most 1. Therefore the claim follows.
\end{proof}

Next, we demonstrate a simple feature of hereditary hypergraph properties.
Intuitively, removing a vertex from a hypergraph does not increase the size
of the optimal solution. The proof of \cref{lemma:hereditary_opt_decrease}
can be found in \cref{sec:omitted_proofs}.
\begin{restatable}{lemma}{heredoptdecrease}
	\label{lemma:hereditary_opt_decrease}
	Let $\Pi$ be a hereditary hypergraph property and $G$ be a hypergraph.
	For any $v \in V(G)$, it holds that
	\begin{equation*}
		 0 \leq \OPT_{\Pi}(G) - \OPT_{\Pi}(G \setminus v ) \leq 1.
	\end{equation*}
\end{restatable}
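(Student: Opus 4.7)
My plan is to prove the two inequalities separately, each by exhibiting a transformation between solutions of $G$ and of $G \setminus v$ that changes the solution size by at most one in the appropriate direction.

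For the left inequality $\OPT_\Pi(G \setminus v) \leq \OPT_\Pi(G)$, I would start with an optimal solution $S \in \sat_\Pi(G)$, so $|S| = \OPT_\Pi(G)$ and $G \setminus S \in \Pi$, and consider $S' = S \setminus \{v\} \subseteq V(G \setminus v)$. In the case $v \in S$ we have $(G \setminus v) \setminus S' = G \setminus (S' \cup \{v\}) = G \setminus S \in \Pi$ directly. In the case $v \notin S$ we have $S' = S$ and $(G \setminus v) \setminus S' = (G \setminus S)[V(G \setminus S) \setminus \{v\}]$, which belongs to $\Pi$ because $\Pi$ is hereditary and $G \setminus S \in \Pi$. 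Either way $S' \in \sat_\Pi(G \setminus v)$ with $|S'| \leq |S|$, giving the desired bound.

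For the right inequality $\OPT_\Pi(G) \leq \OPT_\Pi(G \setminus v) + 1$, I would go the other direction: take an optimal solution $S'' \in \sat_\Pi(G \setminus v)$ of size $\OPT_\Pi(G \setminus v)$, and set $S = S'' \cup \{v\}$. Then $G \setminus S = G \setminus (\{v\} \cup S'') = (G \setminus v) \setminus S'' \in \Pi$, so $S \in \sat_\Pi(G)$, and $|S| \leq |S''| + 1$.

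The proof involves essentially no obstacles; the only subtlety is to treat the two cases $v \in S$ and $v \notin S$ correctly in the first direction, since only the latter requires appealing to heredity of $\Pi$, while the second direction works without heredity at all and simply uses that $(G \setminus v) \setminus S'' = G \setminus (S'' \cup \{v\})$.
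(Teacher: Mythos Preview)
Your proof is correct and follows essentially the same approach as the paper: both directions are handled by taking an optimal solution on one side and adjusting it by at most one vertex to obtain a solution on the other side. The only cosmetic difference is that the paper treats the first direction uniformly via $(G\setminus v)\setminus(A\setminus\{v\}) = G\setminus(A\cup\{v\})$ and a single appeal to heredity, whereas you split into the cases $v\in S$ and $v\notin S$; the content is the same.
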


Let $\xi_1, \ldots, \xi_n$ be i.i.d. binary random variables and $\nu \in (0,1]$ such that $\Pr\left( \xi_i = 1 \right) \geq \nu$
for all $1 \leq i \leq n$.
The following inequality can be shown using standard arguments
\begin{equation*}
	\Pr\Biggl(\sum_{i = 1}^{n} \xi_i \geq t\Biggr) \geq \exp\Biggl(- \frac{n}{t} \cdot \D{\frac{t}{n}}{\nu}\Biggr) \cdot n^{\Oh(1)}.
\end{equation*}
In the following lemma, we prove a similar statement in our setting where the i.i.d. assumption is dropped. 
Its proof can be found in \cref{sec:prob_results}.

\begin{restatable}{lemma}{probmainresult}\label{lemma:prob_main_result}
	Let $\delta, t \geq 1$ be integers, $\nu \in (0,1]$ be a real number
	and $\xi_1, \ldots, \xi_{\left\lfloor \delta \cdot t \right\rfloor} \in
	\{0,1\}$ be random variables such that
	\begin{equation*}
		\Pr\left( \xi_j =1\,\middle|\, \xi_1 = x_1,\ldots,\xi_{j-1}
	= x_{j - 1}\right)\geq \nu
	\end{equation*}
	for all $1 \leq j \leq \left\lfloor \delta \cdot t \right\rfloor$ and $\left( x_1, \ldots,
x_{j-1} \right) \in \{0,1\}^{j-1}$. 
Then, there exist integers $r$ and $T$ that depend on
$\delta$, such that for $t \geq T$ it holds that
			\begin{equation*}
				\Pr\biggl( \sum_{j = 1}^{\floor{\delta \cdot t}} \xi_j \geq t\biggr) \geq \left( \delta\cdot t + 1 \right) ^{-r} \cdot \exp\biggl( -\delta \cdot \D{\frac{1}{\delta}}{\nu}\biggr)^{t}.
			\end{equation*}
\end{restatable}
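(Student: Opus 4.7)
The plan is to compare $(\xi_j)$ to an i.i.d.\ Bernoulli$(\nu)$ sequence via a coupling, then lower-bound a single term of the resulting binomial distribution. Concretely, I would introduce i.i.d.\ uniform random variables $U_1,\ldots,U_{\delta t}$ on $[0,1]$ and realize $\xi_j$ inductively by $\xi_j=\1{U_j\leq p_j}$, where $p_j:=\Pr(\xi_j=1\mid\xi_1,\ldots,\xi_{j-1})\geq\nu$ by hypothesis; this is a consistent realization of the given conditional law. Setting $\eta_j:=\1{U_j\leq\nu}$ then produces i.i.d.\ Bernoulli$(\nu)$ variables with $\eta_j\leq\xi_j$ almost surely, hence
\[
\Pr\!\left(\sum_{j=1}^{\delta t}\xi_j\geq t\right)\;\geq\;\Pr\!\left(\sum_{j=1}^{\delta t}\eta_j\geq t\right).
\]

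Since $\delta$ and $t$ are integers, $\floor{\delta\cdot t}=\delta t$ and $S:=\sum_{j=1}^{\delta t}\eta_j$ is distributed as $\operatorname{Bin}(\delta t,\nu)$. The next step is to lower-bound its upper tail by the single atom at $t$:
\[
\Pr(S\geq t)\;\geq\;\Pr(S=t)\;=\;\binom{\delta t}{t}\,\nu^{t}(1-\nu)^{(\delta-1)t},
\]
and then invoke the standard inequality $\binom{n}{k}\geq\tfrac{1}{n+1}\exp\bigl(n\cdot H(k/n)\bigr)$ with $n=\delta t$ and $k=t$, where $H(p):=-p\ln p-(1-p)\ln(1-p)$ is the binary entropy in nats. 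This inequality follows by noting that $\binom{n}{k}\exp(-n\,H(k/n))$ equals the $\operatorname{Bin}(n,k/n)$-probability of exactly $k$ successes and is therefore at least $1/(n+1)$, since the total mass is $1$ spread over only $n+1$ atoms.

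The last step is purely algebraic: directly from the definitions of $H$ and the Kullback--Leibler divergence one verifies that
\[
\delta\cdot\Bigl[H\!\bigl(\tfrac{1}{\delta}\bigr)+\tfrac{1}{\delta}\ln\nu+\bigl(1-\tfrac{1}{\delta}\bigr)\ln(1-\nu)\Bigr]\;=\;-\delta\cdot \D{\tfrac{1}{\delta}}{\nu},
\]
and combining the previous estimates yields
\[
\Pr\!\left(\sum_{j=1}^{\delta t}\xi_j\geq t\right)\;\geq\;\frac{1}{\delta t+1}\,\exp\!\bigl(-\delta\cdot \D{\tfrac{1}{\delta}}{\nu}\bigr)^{t},
\]
which establishes the lemma with $r=1$ and $T=1$. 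The only delicate point is arranging the coupling so that the per-step conditional bound $p_j\geq\nu$ translates to the sample-path inequality $\eta_j\leq\xi_j$; once that reduction is in place, the remainder is a textbook entropy/KL identity combined with the standard lower bound on a central binomial coefficient.
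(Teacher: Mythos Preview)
Your proof is correct and follows the same two-step strategy as the paper: first reduce to an i.i.d.\ Bernoulli$(\nu)$ sum via stochastic domination, then lower-bound a single binomial atom by an entropy/KL expression. The paper establishes the domination step as a separate induction lemma, while you give an explicit coupling through common uniform randomness; these are equivalent.

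Where you genuinely diverge is in exploiting the integrality of $\delta$ (as stated in the lemma) to get $\lfloor\delta t\rfloor=\delta t$ exactly, so that $t/\lfloor\delta t\rfloor=1/\delta$ and the KL identity falls out immediately, giving the sharp constants $r=1$, $T=1$. The paper's proof, by contrast, treats $\delta$ as a general real $\geq 1$: it keeps the floor throughout, cites the Cover--Thomas method-of-types bound (with exponent $-2$) to obtain $\D{t/\lfloor\delta t\rfloor}{\nu}$ in the exponent, and then spends an additional claim plus a mean-value argument to replace $t/\lfloor\delta t\rfloor$ by $1/\delta$ at the cost of larger $r$ and $T$. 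Your route is cleaner for the lemma as written; be aware, however, that the downstream application (constructing a $(\delta,p)$-procedure from a sampling step) uses arbitrary real $\delta\in[1,1/q]$, so if you later need the result in that generality you will have to do the extra work the paper does.
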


Given a hypergraph $G \in
\mathcal{G}$ and $t \geq 0$ as input, let $\ell \leq \floor{\delta \cdot t}$ be
the number of iterations of the while loop in
\cref{algorithm:multi_sample}. Let $G_0 \coloneqq G$ and for $1 \leq i
\leq \ell$, let $G_i$ denote the hypergraph $G$ at the end
of the $i$'th iteration. Similarly, let $v_i$ denote the vertex $v$ at
the end of the $i$'th iteration, i.e. $v_i = \mathcal{R}\left( G_{i -1}
\right)$. For $\ell + 1 \leq i \leq \floor{\delta \cdot t}$, we let
$G_i \coloneqq G_{i - 1}$ and $v_i \coloneqq v_{i - 1}$.
Furthermore, we define the random variables $Z_0 \coloneqq 0$ and
\begin{equation*}
	Z_i = \begin{cases}
		\OPT\left( G_{i - 1} \right) - \OPT\left( G_i \right) &\text{if } 1 \leq i \leq \ell\\
		1 & \text{if } \ell < i \leq \floor{\delta \cdot t}
	\end{cases}
\end{equation*}
for $1 \leq i \leq \floor{\delta \cdot t}$.
Intuitively, for $1 \leq i \leq \ell$, $Z_i$ measures the decrease in
the optimal solution size, from $G_{i - 1}$ to $G_i$. Note that, by
definition, for $i \leq \ell$ it holds that $G_i = G_{i - 1} \setminus
\{v\}$ for some $v \in V\left( G_{i - 1} \right)$.
Therefore, we have $Z_i \in \{0,1\}$ by
\cref{lemma:hereditary_opt_decrease}.

\begin{lemma}\label{lemma:pr_Z_j_ind}
	For $1 \leq j \leq \floor{\delta \cdot t}$ and $(x_1, \ldots, x_{j -1}) \in \{0,1\}^{j-1}$ it holds that
	\begin{equation*}
		\Pr\Bigl( Z_j = 1 \mid Z_1 = x_1, \ldots, Z_{j- 1}  = x_{j- 1}\Bigr) \geq q.
	\end{equation*}
\end{lemma}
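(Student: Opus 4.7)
The plan is to condition on the full history of the algorithm up to step $j-1$ and then invoke the sampling step's success probability. Concretely, let $\mathcal{H}_{j-1}$ denote the random tuple $(G_0, v_1, G_1, \ldots, v_{j-1}, G_{j-1})$ produced by \cref{algorithm:multi_sample}, and observe that each $Z_i$ is a (deterministic) function of $\mathcal{H}_i$, hence of $\mathcal{H}_{j-1}$ for $i \leq j-1$. By the tower property,
\begin{equation*}
	\Pr\bigl(Z_j = 1 \,\big|\, Z_1 = x_1, \ldots, Z_{j-1} = x_{j-1}\bigr) = \mathbb{E}\Bigl[\Pr\bigl(Z_j = 1 \,\big|\, \mathcal{H}_{j-1}\bigr)\,\Big|\, Z_1 = x_1, \ldots, Z_{j-1} = x_{j-1}\Bigr],
\end{equation*}
so it suffices to show $\Pr(Z_j = 1 \mid \mathcal{H}_{j-1} = h) \geq q$ for every history $h$ that has positive probability.

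Fix such a history $h$, and consider two cases. First, if the graph $G_{j-1}$ in $h$ already belongs to $\Pi$, then the \textbf{while} loop in \cref{algorithm:multi_sample} has terminated by iteration $j-1$, so $j > \ell$ and by construction $Z_j = 1$ deterministically, giving the bound trivially. Second, if $G_{j-1} \notin \Pi$, then since $j \leq \floor{\delta \cdot t}$ we also have $|S| = j-1 < \delta \cdot t$, so the loop executes iteration $j$ and sets $v_j = \mathcal{R}(G_{j-1})$ and $G_j = G_{j-1} \setminus v_j$. By the definition of a sampling step with success probability $q$ (\Cref{definition:sampling_step}),
\begin{equation*}
	\Pr\bigl(\OPT_{\Pi}(G_{j-1} \setminus v_j) \leq \OPT_{\Pi}(G_{j-1}) - 1 \,\big|\, \mathcal{H}_{j-1} = h\bigr) \geq q.
\end{equation*}
By \cref{lemma:hereditary_opt_decrease}, $\OPT_{\Pi}(G_{j-1}) - \OPT_{\Pi}(G_j) \in \{0,1\}$, so the event above is exactly $\{Z_j = 1\}$. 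Hence $\Pr(Z_j = 1 \mid \mathcal{H}_{j-1} = h) \geq q$ in this case as well.

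Averaging the pointwise lower bound over the conditional distribution of $\mathcal{H}_{j-1}$ given $(Z_1, \ldots, Z_{j-1}) = (x_1, \ldots, x_{j-1})$ yields the desired inequality. The only delicate point is verifying that $Z_j$ is a measurable function of the extended history $\mathcal{H}_j$ (so that the tower property applies cleanly) and that the sampling step's probability bound, which is stated for a fixed input graph, transfers to the conditional setting where $G_{j-1}$ is random but measurable with respect to $\mathcal{H}_{j-1}$; this follows from the fact that the randomness internal to $\mathcal{R}$ at step $j$ is independent of $\mathcal{H}_{j-1}$. No substantial obstacle is expected beyond this bookkeeping.
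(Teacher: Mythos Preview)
Your proposal is correct and follows essentially the same approach as the paper: both arguments split into the case where the while loop is still running at step $j$ (where the sampling step's guarantee applies) versus the case where it has already terminated (where $Z_j = 1$ deterministically). Your version conditions on the full history $\mathcal{H}_{j-1}$ and applies the tower property, whereas the paper conditions directly on the events $\{j \leq \ell\}$ and $\{j > \ell\}$ via the law of total probability; your formulation is slightly more explicit about why the sampling step's guarantee transfers to the conditional setting, but the underlying argument is the same.
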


\begin{proof}
	By the law of total probability, it holds that
	\begin{align}
		\Pr\Bigl( Z_j = 1 \mid Z_1 &= x_1, \ldots, Z_{j- 1}  = x_{j- 1}\Bigr) = \notag\\
				  &\Pr\Bigl( Z_j = 1 \mid Z_1 = x_1, \ldots, Z_{j- 1}  = x_{j- 1}, j \leq \ell\Bigr) \cdot \Pr\left(j \leq \ell\right) + \notag\\
				  &\Pr\Bigl( Z_j = 1 \mid Z_1 = x_1, \ldots, Z_{j- 1}  = x_{j- 1}, j > \ell\Bigr) \cdot \Pr\left(j > \ell\right).\label{eq:Z_j_prob_1}
	\end{align}

	We have
	\begin{align}
		\Pr\Bigl( &Z_j = 1 \mid Z_1 = x_1, \ldots, Z_{j- 1}  = x_{j- 1},\, j \leq \ell\Bigr)  \geq q\label{eq:Z_j_prob_2}
	\end{align}
	because $j \leq \ell$ implies that $G_{j} = G_{j - 1} \setminus v$ where $v = \mathcal{R}\left( G_{j - 1} \right)$. Since $\mathcal{R}$ is a sampling step for $\Pi$ with success probability $q$, the inequality follows.
	Similarly, we have
	\begin{align}
		\Pr\Bigl( &Z_j = 1 \mid Z_1 = x_1, \ldots, Z_{j- 1}  = x_{j- 1}, j > \ell\Bigr) = 1,\label{eq:Z_j_prob_3}  				
	\end{align}
	which holds because $j > \ell$ implies that $Z_j = 1$ with probability 1, by definition.
	Therefore, by \eqref{eq:Z_j_prob_1}, \eqref{eq:Z_j_prob_2} and \eqref{eq:Z_j_prob_3}
	\begin{equation*}
		\Pr\Bigl( Z_j = 1 \mid Z_1 = x_1, \ldots, Z_{j- 1}  = x_{j- 1}\Bigr) \geq q.
	\end{equation*}		
\end{proof}

In the following lemma, we establish a lower bound on the probability
that the graph returned by the algorithm, i.e. $G_{\floor{\delta \cdot t}}$,
has a solution of size at most $\max\left( 0, k - t \right)$.
This lower bound is equal to the probability that the sum of $Z_i$
for $i$ from 1 to $\floor{\delta \cdot t}$ exceeds $t$.

\begin{lemma}\label{lemma:pr_G_deltat_soln_lb}
	It holds that
	\begin{equation*}
		\Pr\Bigl( G_{\floor{\delta \cdot t}} \text{ has a solution of size at most } \max\left( 0, k - t \right)  \Bigr) \geq \Pr\Biggl(\sum_{i =1}^{\floor{\delta \cdot t}} Z_i \geq t\Biggr).
	\end{equation*}
\end{lemma}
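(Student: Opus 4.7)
The plan is to establish the stronger set-theoretic inclusion
$$\Bigl\{\textstyle\sum_{i=1}^{\floor{\delta t}} Z_i \geq t\Bigr\} \;\subseteq\; \Bigl\{G_{\floor{\delta t}} \text{ has a solution of size at most } \max(0,k-t)\Bigr\},$$
from which the probability inequality follows by monotonicity. I would argue the inclusion pointwise (per sample path) via a short case split on whether the while loop of \cref{algorithm:multi_sample} terminates early.

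Let $\ell \leq \floor{\delta t}$ denote the number of iterations of the while loop on the given sample path. The loop exits either because $|S| \geq \delta\cdot t$ (so $\ell = \floor{\delta t}$) or because the current hypergraph entered $\Pi$ (so $\ell < \floor{\delta t}$). In the early-termination case, $G_\ell \in \Pi$, and because the definition sets $G_i = G_{i-1}$ for $i > \ell$, the final hypergraph satisfies $G_{\floor{\delta t}} = G_\ell \in \Pi$. Thus $\OPT_\Pi(G_{\floor{\delta t}}) = 0 \leq \max(0, k-t)$ holds unconditionally on this path, so the right-hand event holds regardless of whether the sum condition does.

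In the remaining case $\ell = \floor{\delta t}$, by definition $Z_i = \OPT_\Pi(G_{i-1}) - \OPT_\Pi(G_i)$ for every $1 \leq i \leq \floor{\delta t}$, so the sum telescopes:
$$\sum_{i=1}^{\floor{\delta t}} Z_i \;=\; \OPT_\Pi(G_0) - \OPT_\Pi(G_{\floor{\delta t}}) \;=\; \OPT_\Pi(G) - \OPT_\Pi(G_{\floor{\delta t}}).$$
Assuming the sample path lies in the left-hand event, rearranging gives $\OPT_\Pi(G_{\floor{\delta t}}) \leq \OPT_\Pi(G) - t \leq k - t$, using the hypothesis $\OPT_\Pi(G) \leq k$. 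Since $\OPT_\Pi(\cdot) \geq 0$, we conclude $\OPT_\Pi(G_{\floor{\delta t}}) \leq \max(0, k-t)$, which is exactly the right-hand event.

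Combining both cases yields the desired inclusion, and taking probabilities finishes the proof. There is no real obstacle here: the genuine work lies in \cref{lemma:pr_Z_j_ind} (which converts the sampling-step guarantee into a conditional lower bound on $\Pr(Z_j = 1 \mid Z_1, \dots, Z_{j-1})$, reducing the setting to the one handled by \cref{lemma:prob_main_result}), whereas the present lemma is pure bookkeeping — once one notices that the ``dummy'' assignment $Z_i := 1$ for $i > \ell$ is cost-free precisely because early termination already places $G_{\floor{\delta t}}$ inside $\Pi$.
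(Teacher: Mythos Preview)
Your proposal is correct and essentially identical to the paper's proof: both establish the event inclusion $\{\sum Z_i \geq t\} \subseteq \{G_{\floor{\delta t}}\text{ has a solution of size at most }\max(0,k-t)\}$ by splitting on whether $\ell = \floor{\delta t}$ or $\ell < \floor{\delta t}$, handling the latter via $G_{\floor{\delta t}} = G_\ell \in \Pi$ and the former via the telescoping identity $\sum_{i} Z_i = \OPT(G_0) - \OPT(G_{\floor{\delta t}})$.
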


\begin{proof}
Let $A$ be the event that $G_{\floor{\delta \cdot t}}$ has a solution
of size at most $\max\left( 0, k - t
\right) $, and let $B$ be the event that $\sum_{i
=1}^{\floor{\delta \cdot t}} Z_i \geq t$. Finally, let
$C$ be the event that $\ell = \floor{\delta \cdot t}$.
In the following, we say that an event $X$
implies another event $Y$ if $X \subseteq Y$.

Let us write $B = \left( B \cap C \right) \cup \left( B \cap \overline{C} \right) $. 
It holds that $(B \cap \overline{C}) \subseteq \overline{C}\subseteq A$, because if
$\ell < \floor{\delta \cdot t}$, then $G_{\floor{\delta \cdot t}} = G_{\ell}  \in \Pi$ and therefore
$G_{\floor{\delta \cdot t}}$ has a solution of size 0. On the other hand,
$B \cap C$ implies that $\ell = \floor{\delta \cdot t}$ and
$\sum_{i =1}^{\floor{\delta \cdot t}} Z_i \geq t$.
In this case, for each $1 \leq i \leq \floor{\delta \cdot t}$,
it holds that $i \leq \floor{\delta \cdot t} = \ell$ and
$Z_i = \OPT\left( G_{i - 1} \right) - \OPT\left( G_i \right)$.
Moreover,
	\begin{align*}
		\Biggl(\sum_{i =1}^{\floor{\delta \cdot t}} Z_i\Biggr) &= \sum_{i =1}^{\floor{\delta \cdot t}} \OPT\left( G_{i - 1} \right) - \OPT\left( G_i \right)\\
								       &= \OPT\left( G_0 \right) - \OPT\left( G_{\floor{\delta \cdot t}} \right) 
	\end{align*}
	which implies that
	\begin{align*}
		\OPT\left( G_{\floor{\delta \cdot t}} \right) &= \OPT\left( G_0 \right) - \Biggl(\sum_{i =1}^{\floor{\delta \cdot t}} Z_i\Biggr)\\
							      &\leq k - \Biggl(\sum_{i =1}^{\floor{\delta \cdot t}} Z_i\Biggr)\\
							      &\leq k - t\\
							      &\leq \max\left( 0, k - t \right) .
	\end{align*}
	Therefore, $(B \cap C) \subseteq A$ and it holds that $B \subseteq A$. Hence we get
	\begin{equation*}
		\Pr\Bigl( G_{\floor{\delta \cdot t}} \text{ has a solution of size at most } \max\left( 0, k - t \right)  \Bigr) = \Pr(A) \geq  \Pr(B) = \Pr\Biggl(\sum_{i =1}^{\floor{\delta \cdot t}} Z_i \geq t\Biggr).
	\end{equation*}
\end{proof}

\begin{lemma}\label{lemma:multi_step_proc_prop_2}
	There exists $r,T \geq 0$ such that \cref{algorithm:multi_sample} satisfies \cref{prop:second} in \cref{definition:proc}.
\end{lemma}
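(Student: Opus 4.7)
The plan is to chain together the three preceding auxiliary lemmas and then absorb the gap between a $(\delta t+1)^{-r_0}$ polynomial factor and the $(t+1)^{-r}$ factor required by \Cref{definition:proc}.

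First, fix an input $(G,t)$ with $\OPT_{\cG,\Pi}(G)\leq k$, and adopt the notation $G_0,\dots,G_{\floor{\delta\cdot t}}$ and $Z_1,\dots,Z_{\floor{\delta\cdot t}}$ introduced before \Cref{lemma:pr_Z_j_ind}. By construction the output of \Cref{algorithm:multi_sample} is $S=V(G)\setminus V(G_{\floor{\delta\cdot t}})$, so
\[
\Pr\Bigl(\OPT_{\cG,\Pi}(G\setminus S)\leq \max(0,k-t)\Bigr)
\;=\;
\Pr\Bigl(G_{\floor{\delta\cdot t}}\text{ has a solution of size at most }\max(0,k-t)\Bigr).
\]
By \Cref{lemma:pr_G_deltat_soln_lb} this probability is at least $\Pr\bigl(\sum_{i=1}^{\floor{\delta\cdot t}} Z_i\geq t\bigr)$, so it suffices to lower bound the tail probability of the sum $\sum Z_i$.

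Next, \Cref{lemma:pr_Z_j_ind} asserts exactly the conditional hypothesis required by \Cref{lemma:prob_main_result} with $\nu=q$: for every $j$ and every history $(x_1,\dots,x_{j-1})\in\{0,1\}^{j-1}$, $\Pr(Z_j=1\mid Z_1=x_1,\dots,Z_{j-1}=x_{j-1})\geq q$. Applying \Cref{lemma:prob_main_result} then yields constants $r_0$ and $T_0$ (depending only on $\delta$) such that for every $t\geq T_0$,
\[
\Pr\Biggl(\sum_{i=1}^{\floor{\delta\cdot t}} Z_i\geq t\Biggr)\;\geq\;(\delta\cdot t+1)^{-r_0}\cdot \exp\!\Bigl(-\delta\cdot \D{\tfrac{1}{\delta}}{q}\Bigr)^{t}\;=\;(\delta\cdot t+1)^{-r_0}\cdot \phi(\delta,q)^{t}.
\]

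Finally, we convert the polynomial factor. Since $\delta\geq 1$ we have $\delta\cdot t+1\leq \delta\cdot(t+1)$, hence $(\delta\cdot t+1)^{-r_0}\geq \delta^{-r_0}(t+1)^{-r_0}$. Set $r:=r_0+1$ and $T:=\max(T_0,\lceil \delta^{r_0}\rceil)$. For $t\geq T$ we have $(t+1)\geq \delta^{r_0}$, so $\delta^{-r_0}\geq (t+1)^{-1}$, which gives $\delta^{-r_0}(t+1)^{-r_0}\geq (t+1)^{-(r_0+1)}=(t+1)^{-r}$. Combining the three inequalities produces
\[
\Pr\Bigl(\OPT_{\cG,\Pi}(G\setminus S)\leq \max(0,k-t)\Bigr)\;\geq\;\frac{\phi(\delta,q)^{t}}{(t+1)^{r}},
\]
which is exactly \Cref{prop:second}. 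Together with \Cref{lemma:multi_step_proc_rt,lemma:multi_step_proc_prop_1}, this establishes \Cref{lemma:multi_step_proc}.

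The main obstacle is essentially bookkeeping rather than a new idea: all probabilistic content is packaged in \Cref{lemma:prob_main_result}, and the only nontrivial step here is to verify its conditional lower-bound hypothesis via \Cref{lemma:pr_Z_j_ind} and to absorb the $(\delta\cdot t+1)^{-r_0}$ factor into a $(t+1)^{-r}$ factor by enlarging $r$ and $T$ by constants depending on $\delta$.
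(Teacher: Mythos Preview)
Your proof is correct and follows essentially the same approach as the paper: chain \Cref{lemma:pr_G_deltat_soln_lb}, \Cref{lemma:pr_Z_j_ind}, and \Cref{lemma:prob_main_result}, then absorb the $(\delta t+1)^{-r_0}$ factor into a $(t+1)^{-r}$ factor by enlarging the constants. The only cosmetic difference is that the paper sets $r'=2r_0$ and $T'=\max(T_0,\delta)$ rather than your $r=r_0+1$ and $T=\max(T_0,\lceil\delta^{r_0}\rceil)$, but both choices accomplish the same bookkeeping.
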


\begin{proof}
	
Suppose that $\OPT_{\cG,\Pi}(G) \leq k$ for some $k \geq 0$ and let $S$
be the set returned by \cref{algorithm:multi_sample}.

In order to prove the \cref{prop:second} holds, we will show that there exists integers $r,T \geq 0$
such that for $t \geq T$, it holds that $(G \setminus S) = G_{\left\lfloor \delta \cdot t \right\rfloor}$ has a solution of size at
most $\max\left( 0, k -t \right)$, with probability at least
$\frac{(\phi(\delta,q))^{t}}{(t+1)^{r}}$. 
By \cref{lemma:pr_G_deltat_soln_lb} it holds that
\begin{equation}\label{eq:multi_step_intermediary}
	\Pr\Bigl( G_{\floor{\delta \cdot t}} \text{ has a solution of size at most } \max\left( 0, k - t \right)  \Bigr) \geq \Pr\Biggl(\sum_{i =1}^{\floor{\delta \cdot t}} Z_i \geq t\Biggr).	
\end{equation}
	In light of \eqref{eq:multi_step_intermediary}, our goal is to establish a lower bound for the probability that $\sum_{i =1}^{\floor{\delta \cdot t}} Z_i \geq t$.
	By \cref{lemma:prob_main_result,lemma:pr_Z_j_ind},
 there exist $r$ and $T$ that depend on $\delta$ such that for
	all $t \geq T$, we have 
		\begin{align}\label{eq:r_T_prob_ineq}
			 \Pr\Biggl(\sum_{i =1}^{\floor{\delta \cdot t}} Z_i \geq t\Biggr) \geq \left( \delta\cdot t + 1 \right)
		^{-r} \cdot \exp\biggl( -\delta \cdot
		\D{\frac{1}{\delta}}{q}\biggr)^{t}.
		\end{align}

		Finally, we let $T' = \max\left( \delta, T \right)$ and $r' = 2 \cdot r$.
		Then, for all $t \geq T'$, it holds that

		\begin{equation}\label{eq:delta_t_plus_1}
			\left( \delta\cdot t + 1 \right)^{-r} = \left( \delta\cdot t + 1 \right)^{-\frac{r'}{2}} \geq
			\left( \delta\cdot t + \delta \right)^{-\frac{r'}{2}} = \delta^{-\frac{r'}{2}} \cdot \left( t + 1 \right)^{-\frac{r'}{2}} \geq \left( t + 1 \right)^{-r'}
		\end{equation}
		where the last inequality holds because $t \geq T' \geq \delta - 1$. By \cref{lemma:pr_G_deltat_soln_lb},
		\eqref{eq:multi_step_intermediary}, \eqref{eq:r_T_prob_ineq} and \eqref{eq:delta_t_plus_1},
		for all $t \geq T'$ we have
		\begin{align*}
			\Pr\Bigl( G_{\floor{\delta \cdot t}} \text{ has a solution of size at most } \max\left( 0, k - t \right)  \Bigr) &\geq \Pr\Biggl(\sum_{i =1}^{\floor{\delta \cdot t}} Z_i \geq t\Biggr)\\
			&\geq \left( \delta\cdot t + 1 \right)^{-r} \cdot \exp\biggl( -\delta \cdot \D{\frac{1}{\delta}}{q}\biggr)^{t}\\
			&\geq \left( t + 1 \right)^{-r'} \cdot \exp\biggl( -\delta \cdot \D{\frac{1}{\delta}}{q}\biggr)^{t}\\			
			&= (t + 1)^{-r'} \cdot \Bigl( \phi(\delta, q)\Bigr)^{t},
		\end{align*}
		which implies that \cref{prop:second} holds for $T'$ and $r'$.
\end{proof}
Finally, the proof of \cref{lemma:multi_step_proc} simply follows from
\cref{lemma:multi_step_proc_rt,lemma:multi_step_proc_prop_1,lemma:multi_step_proc_prop_2}.

\subsection{Which procedure to choose? Optimizing $\delta$.}
\label{sec:optim_delta}
In this section, we prove \cref{lemma:approx_sampling,lemma:min_f_convert_equiv}.
We first give the proof of \cref{lemma:approx_sampling}.
Then, with the aim of proving \cref{lemma:min_f_convert_equiv}, we 
analyze functions that arise during our analysis of the running time of the
algorithm. 

Observe that \cref{lemma:beta_approx} and \cref{observation:delta_1_proc} together
give a randomized parameterized $\beta$-approximation algorithm for $\sgpivd$
whose running time depends on $\delta$. More specifically, for $\delta > \frac{1}{q}$,
this running is time equal to
\begin{equation*}
	\exp\left( \frac{(\delta - \beta) \cdot \ln(c) + (\beta - \alpha) \cdot \ln\left( 1 \right) }{\delta - \alpha} \right)^{k} \cdot n^{\Oh(1)} = c^{\frac{\delta - \beta}{\delta - \alpha} \cdot k} \cdot n^{\Oh(1)}.
\end{equation*}
Since $\frac{\delta - \beta}{\delta - \alpha} = 1 - \frac{\beta - \alpha}{\delta - \alpha}$ is increasing for $\delta \in \goodd$,
the running time also increases for $\delta > \frac{1}{q}$.
Therefore it doesn't make sense to consider values of
$\delta > \frac{1}{q}$.
This is why the range of $\delta$ is constrained to be less than or equal to $\frac{1}{q}$ in \cref{lemma:approx_sampling}.
\begin{proof}[Proof of \cref{lemma:approx_sampling}]
	For each $\delta \in \goodd$ such that $1 \leq \delta \leq \frac{1}{q}$, by \Cref{lemma:multi_step_proc} there is a $\proc{\delta}{\phi(\delta,q)}$ for $\sgpivd$. Therefore,  
	by \cref{lemma:beta_approx}, there is a randomized parameterized
	$\beta$-approximation algorithm for $\sgpivd$ with running time
	\begin{equation*}
		f(\delta, \phi(\delta,q))^k\cdot n^{\Oh(1)}=\tilde{f}\left(\delta ,q \right)^{k} \cdot n^{\Oh(1)}.
	\end{equation*}
	In particular, one can consider all possible $\delta \in \Bigl(\goodd \cap [1,\frac{1}{q}]\Bigr)$
	and choose the $\delta$ that minimizes the running time.
	Therefore, there is a randomized  parameterized $\beta$-approximation algorithm for $\sgpivd$ which runs in time $\left(\min_{\delta\in  \goodd \cap [1,\frac{1}{q}]} \tilde{f}(\delta,q)\right)^k \cdot n^{\Oh(1)}$. 
\end{proof}

Our next goal is to prove \cref{lemma:min_f_convert_equiv}.
For fixed $0 < q \leq 1$, let us define
\begin{align*}
	h_{q}(\delta) \coloneqq \ln\Bigl(\tilde{f}(\delta,q)\Bigr) &=  \frac{(\delta - \beta) \cdot \ln(c) + (\beta - \alpha) \cdot \ln\left( \frac{1}{\phi(\delta, q)} \right) }{\delta - \alpha}\\
	&=  \frac{\delta - \beta}{\delta - \alpha} \cdot \ln(c) + \frac{\beta - \alpha}{\delta - \alpha}\cdot \ln\left( \frac{1}{\phi(\delta,q)} \right)
\end{align*}
We omit the subscript and write $h(\delta)$ instead of $h_{q}(\delta)$ whenever the values are clear from the context.
Note that minimizing $h$ is equivalent to minimizing $\tilde{f}$ as $\ln$ is a monotone increasing function,
in other words
\begin{equation}\label{eq:f_h_equivalence}
	\min_{\delta \in \goodd \cap [1,\frac{1}{q}]} \tilde{f}(\delta,q) = \exp\biggl(\min_{\delta \in \goodd \cap [1,\frac{1}{q}]} h_{q}(\delta)\biggr).
\end{equation}
Also observe that for any fixed $\delta \in \goodd$, $h(\delta)$ is a convex combination of $\ln(c)$ and $\ln\left( \frac{1}{\phi(\delta,q)} \right)$.
To make use of this property, let us define the following function which is linear in the variable $x$
\begin{align}
	m_{\delta,q}(x) &\coloneqq \frac{\delta - x}{\delta - \alpha} \cdot \ln(c) + \frac{x - \alpha}{\delta - \alpha}\cdot \ln\left( \frac{1}{\phi(\delta,q)} \right)\notag\\ 
				 &= \ln(c) +  s_q(\delta) \cdot (x - \alpha) \label{eq:m_slope}
\end{align}
where
\begin{equation*}
	s_q(\delta) \coloneqq \frac{\ln\left( \frac{1}{\phi(\delta,q)} \right) - \ln(c)}{\delta - \alpha}.
\end{equation*}

Moreover, we have
\begin{equation}\label{eq:m_h_equiv}
	h_{q}(\delta) = m_{\delta,q}(\beta),
\end{equation}
and using this equivalence, the running time for an
$x$ approximation, for a fixed $\delta$, can be stated as $d^{k} \cdot n^{\Oh(1)}$
where $d = \exp\Bigl( m_{\delta,q}(x) \Bigr)$.
\begin{figure}[h!]
	\centering
	\begin{tikzpicture}

    \def\kl#1#2{(#1 * ln(#1 / #2) + (1 - #1) * ln((1 - #1) / (1 - #2)))}

     \begin{axis}[
         samples=100,                
         xmin=1, xmax=2,
         width=10cm,
         height=8cm,
         extra x ticks={ 1.2 },
         extra x tick labels={ $\beta$ },
         extra x tick style={grid=major, tick label style={font=\small, fill=white, text=red}}
     ] 

     \addplot[
         samples=100,
         blue
     ] {(1.2 - x)/(1.2 - 1.9)*ln(1.6)  + (x - 1.9)/(1.2 - 1.9)*1.2*\kl{ 1/1.2 }{ 0.2 }};
    \addlegendentry{$\delta = 1.2$}     

     \addplot[
         samples=100,
         green
     ] {(1.645 - x)/(1.645 - 1.9)*ln(1.6)  + (x - 1.9)/(1.645 - 1.9)*1.645*\kl{ 1/1.645 }{ 0.2 }};
    \addlegendentry{$\delta = 1.645$}     

     \addplot[
         samples=100,
         cyan
     ] {(1.8 - x)/(1.8 - 1.9)*ln(1.6)  + (x - 1.9)/(1.8 - 1.9)*1.8*\kl{ 1/1.8 }{ 0.2 }};
    \addlegendentry{$\delta = 1.8$}     

     \addplot[
         samples=500,
         red                    
     ] { x*\kl{ 1/x }{ 0.2 } };
     \addlegendentry{$z_q(x)$}
    
    \end{axis}    
\end{tikzpicture}		
	\caption{Comparison of the functions $m_{\delta, q}(x)$ with varying $\delta$ values (blue, green, and cyan), alongside the function $\ln\left( \frac{1}{\phi(x,q)}\right)$ (red). Recall that $m_{\delta, q}(x)$ is a linear function of $x$ and $m_{ \delta, q}(\beta) = h_{ q}(\delta)$. Also observe that the functions $\ln\left( \frac{1}{\phi(x,q)}\right)$ and $m_{ \delta, q}(x)$ meet at $\delta$.
}
	\label{fig:mplots}
\end{figure}

In the following we will demonstrate that for $0 < q < 1$, the value of $\delta$ that minimizes $h_{q}(\delta)$ (equivalently, $\tilde{f}(\delta, q)$) and can be found by analyzing $s_q(\delta)$.

\begin{lemma}\label{lemma:minim_h_slope}
	Let $0 < q < 1$. It holds that
	\begin{equation*}
		\min_{\delta \in \goodd \cap [1,\frac{1}{q}]} h_{q}(\delta) = \begin{cases}
		\ln(c) + (\beta - \alpha) \cdot \biggl(\min_{\delta \in \goodd \cap [1,\frac{1}{q}]} \;s_q(\delta)\biggr) &\text{if } \beta > \alpha \\
		\ln(c) + (\beta - \alpha) \cdot \biggl(\max_{\delta \in \goodd \cap [1,\frac{1}{q}]} \;s_q(\delta)\biggr) &\text{if } \beta < \alpha.	
	\end{cases}  
	\end{equation*}
\end{lemma}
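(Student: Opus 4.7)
The plan is to reduce everything to the identity
\[
	h_q(\delta) \;=\; \ln(c) + (\beta-\alpha)\cdot s_q(\delta),
\]
which expresses $h_q(\delta)$ as an affine function of $s_q(\delta)$ whose slope in $s_q$ equals $\beta-\alpha$. The sign of this slope is what drives the case split in the lemma, so once the identity is established the statement follows by an elementary monotonicity argument.

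First I would justify that $s_q(\delta)$ is well-defined on the entire optimization domain $\goodd \cap [1,\tfrac{1}{q}]$. The only point where $s_q$ could fail to be defined is $\delta=\alpha$, but by \Cref{definition:good_set} one has either $\delta \geq \beta > \alpha$ (case $\beta>\alpha$) or $\delta \leq \beta < \alpha$ (case $\beta<\alpha$), so $\delta \neq \alpha$ throughout. Next I would derive the identity above directly from the definitions: evaluating $m_{\delta,q}(x)$ at $x=\beta$ gives $m_{\delta,q}(\beta) = \ln(c) + s_q(\delta)(\beta-\alpha)$ by \eqref{eq:m_slope}, and on the other hand the equality $h_q(\delta)=m_{\delta,q}(\beta)$ was recorded in \eqref{eq:m_h_equiv}. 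Combining the two yields the desired identity.

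From here the proof is a one-line monotonicity observation. If $\beta > \alpha$, then $\beta-\alpha > 0$, so $\delta \mapsto h_q(\delta)$ is an increasing affine function of $\delta \mapsto s_q(\delta)$; hence the minimum of $h_q$ over $\goodd \cap [1,\tfrac{1}{q}]$ is attained at the same point as the minimum of $s_q$ and equals $\ln(c) + (\beta-\alpha)\cdot \min s_q(\delta)$. Symmetrically, if $\beta < \alpha$, then $\beta-\alpha < 0$, the dependence is affine-decreasing, and the minimum of $h_q$ coincides with the maximum of $s_q$, giving $\ln(c) + (\beta-\alpha)\cdot \max s_q(\delta)$. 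There is no real obstacle in the proof; the only mild subtlety is the bookkeeping around the sign of $\beta-\alpha$ and verifying that the domain avoids the singular point $\delta=\alpha$, which is immediate from \Cref{definition:good_set}.
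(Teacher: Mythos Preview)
Your proposal is correct and follows essentially the same approach as the paper: both establish the identity $h_q(\delta) = \ln(c) + (\beta-\alpha)\,s_q(\delta)$ from \eqref{eq:m_slope} and \eqref{eq:m_h_equiv}, then apply the obvious monotonicity argument according to the sign of $\beta-\alpha$. Your additional remark that $\delta=\alpha$ is excluded from the optimization domain (so $s_q$ is well-defined there) is a small extra bit of care not made explicit in the paper's proof, but otherwise the arguments are the same.
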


\begin{proof}
	Observe that
	\begin{equation*}
		h_{q}(\delta) = \ln(c) + s_{q}(\delta) \cdot (\beta - \alpha)
	\end{equation*}
	which follows from \eqref{eq:m_slope} and \eqref{eq:m_h_equiv}.
	Assume that $\beta > \alpha$.
	Since $\alpha, \beta$ and $c$ are independent of $\delta$ and $(\beta -\alpha) > 0$,
	the value of $\delta$ that minimizes $h_{q}(\delta)$
	is the one for which the value of $s_q(\delta)$ is minimized.
	The case of $\beta < \alpha$ follows in the same way by noting that in this case
	$(\beta - \alpha) < 0$.
\end{proof}

\cref{lemma:minim_h_slope} states that depending on the values of $\alpha$ and $\beta$,
minimizing $h_{q}(\delta)$ is equivalent to either minimizing
or maximizing $s_q(\delta)$.
Therefore, in what follows, we will study the analytical properties of
$s_q(\delta)$.

\begin{restatable}{lemma}{sderivalternative}
	\label{lemma:s_deriv_alternative}
	Let $0 < q \leq 1$.
	For $\delta \in (1,\infty)\setminus{\alpha}$, define
	\begin{equation*}
		\Gamma_q(\delta) \coloneqq 
			-\alpha \cdot \D{\frac{1}{\alpha}}{q} +  \alpha \cdot \D{\frac{1}{\alpha}}{\frac{1}{\delta}} + \ln(c) 
	\end{equation*}

	It holds that
	\begin{equation}\label{eq:s_deriv_sign}
		\sign\biggl(\frac{\partial}{\partial\,\delta}\,s_{q}(\delta)\biggr) = \sign\biggl(\Gamma_q(\delta)\biggr).
	\end{equation}
	Moreover, it also holds that
	$\frac{\partial}{\partial\,\delta}\,s_{q}(\delta) = 0$ if and only if $\Gamma_q(\delta) = 0$.	
\end{restatable}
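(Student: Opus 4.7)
The plan is to compute $\frac{\partial}{\partial \delta} s_q(\delta)$ directly via the quotient rule, show that its numerator equals $\Gamma_q(\delta)$, and then read off both claims from the fact that the denominator $(\delta-\alpha)^2$ is strictly positive on the specified domain.

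First, I would introduce the shorthand $z_q(\delta) := \ln\!\left(\tfrac{1}{\phi(\delta,q)}\right) = \delta\cdot \D{\frac{1}{\delta}}{q}$, so that
\[
s_q(\delta) \,=\, \frac{z_q(\delta) - \ln(c)}{\delta - \alpha}.
\]
Applying the quotient rule gives
\[
\frac{\partial}{\partial \delta}\,s_q(\delta) \,=\, \frac{z_q'(\delta)\,(\delta-\alpha) \,-\, z_q(\delta) \,+\, \ln(c)}{(\delta-\alpha)^2}.
\]
Since $(\delta-\alpha)^2 > 0$ for $\delta\in (1,\infty)\setminus\{\alpha\}$, the sign of the derivative (and its vanishing) coincides with the sign (resp.\ vanishing) of the numerator $N(\delta) := z_q'(\delta)\,(\delta-\alpha) - z_q(\delta) + \ln(c)$. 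Therefore it suffices to show $N(\delta) = \Gamma_q(\delta)$.

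Next I would expand $z_q(\delta)$ via the definition of Kullback–Leibler divergence:
\[
z_q(\delta) \,=\, -\ln(\delta q) \,+\, (\delta-1)\,\ln\!\left(\tfrac{\delta-1}{\delta(1-q)}\right),
\]
and differentiate to obtain (after cancellation of the $\pm 1/\delta$ terms)
\[
z_q'(\delta) \,=\, \ln\!\left(\tfrac{\delta-1}{\delta(1-q)}\right).
\]
Observe the useful identity $z_q(\delta) = -\ln(\delta q) + (\delta-1)\,z_q'(\delta)$, which substituted into $N(\delta)$ telescopes most terms:
\[
N(\delta) \,=\, z_q'(\delta)\,(\delta-\alpha) - \bigl[-\ln(\delta q) + (\delta-1)\,z_q'(\delta)\bigr] + \ln(c) \,=\, (1-\alpha)\,z_q'(\delta) + \ln(\delta q) + \ln(c).
\]

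The main (and essentially only) obstacle is now to verify the algebraic identity
\[
(1-\alpha)\,z_q'(\delta) + \ln(\delta q) \,=\, -\alpha\,\D{\tfrac{1}{\alpha}}{q} + \alpha\,\D{\tfrac{1}{\alpha}}{\tfrac{1}{\delta}}.
\]
For the right-hand side, I would expand both divergences by definition and multiply by $\alpha$, obtaining
$\alpha \,\D{1/\alpha}{q} = \ln\!\tfrac{1}{\alpha q} + (\alpha-1)\,\ln\!\tfrac{\alpha-1}{\alpha(1-q)}$ and $\alpha\,\D{1/\alpha}{1/\delta} = \ln\!\tfrac{\delta}{\alpha} + (\alpha-1)\,\ln\!\tfrac{(\alpha-1)\delta}{\alpha(\delta-1)}$. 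Subtracting and simplifying the $\ln(\cdot/\alpha)$ pieces yields $\ln(q\delta)$, while combining the logs weighted by $(\alpha-1)$ yields $(\alpha-1)\,\ln\!\tfrac{\delta(1-q)}{\delta-1} = -(\alpha-1)\,z_q'(\delta)$. This matches the left-hand side exactly, completing the identity and hence $N(\delta) = \Gamma_q(\delta)$.

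With $\frac{\partial}{\partial \delta} s_q(\delta) = \Gamma_q(\delta)/(\delta-\alpha)^2$ in hand, both conclusions of the lemma follow immediately: the sign equation \eqref{eq:s_deriv_sign} holds because $(\delta-\alpha)^2 > 0$, and the derivative vanishes exactly when $\Gamma_q(\delta) = 0$ for the same reason.
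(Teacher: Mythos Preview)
Your proposal is correct and follows essentially the same approach as the paper: compute $\frac{\partial}{\partial\delta}s_q(\delta)$ via the quotient rule, observe that the denominator $(\delta-\alpha)^2$ is positive, and verify that the numerator equals $\Gamma_q(\delta)$ by expanding the Kullback--Leibler divergences. The paper splits off the derivative computation into a separate technical lemma (\Cref{lemma:z_s_deriv_formula}) and arrives at the numerator $\Psi=(\alpha-1)\ln\!\bigl(\tfrac{1-q}{1-1/\delta}\bigr)+\ln(\delta q)+\ln(c)$, which is exactly your $(1-\alpha)z_q'(\delta)+\ln(\delta q)+\ln(c)$; the subsequent identification with $\Gamma_q(\delta)$ is the same algebra in both cases.
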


The proof of \cref{lemma:s_deriv_alternative} can be found in \cref{sec:omitted_proofs}.
The following lemma describes the behavior of the function $s_q(\delta)$ over specific
intervals. It demonstrates that $s_q(\delta)$ is unimodal over intervals, meaning that it exhibits a
strictly decreasing trend followed by a strictly increasing trend, or vice versa,
depending on the interval.

\begin{restatable}{lemma}{squnimodal}
	\label{lemma:s_q_unimodal}
	The function $s_q(\delta)$ is strictly decreasing for $\alpha \leq \delta \leq \sdeltar$
	and strictly increasing for $\sdeltar \leq \delta \leq \frac{1}{q}$.
	Moreover, if $\alpha > 1$, then the function $s_q(\delta)$ is strictly
	increasing for $1 \leq \delta \leq \sdeltal$ and strictly decreasing for
	$\sdeltal \leq \delta \leq \alpha$.
\end{restatable}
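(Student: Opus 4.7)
The plan is to leverage Lemma~\ref{lemma:s_deriv_alternative}, which reduces the problem to analyzing the sign of $\Gamma_q(\delta)$. Since $\Gamma_q(\delta) = -\alpha \cdot \D{\frac{1}{\alpha}}{q} + \alpha\cdot \D{\frac{1}{\alpha}}{\frac{1}{\delta}} + \ln(c)$, where the first and last terms are constants in $\delta$, the monotonicity of $\Gamma_q$ matches that of $g(\delta) \coloneqq \D{\frac{1}{\alpha}}{\frac{1}{\delta}}$. So the first step will be to compute $g'(\delta)$ directly from the definition of Kullback-Leibler divergence. A short calculation gives
\begin{equation*}
    g'(\delta) \;=\; \frac{1}{\alpha\delta} - \frac{\alpha-1}{\alpha\delta(\delta-1)} \;=\; \frac{\delta - \alpha}{\alpha\cdot \delta\cdot (\delta-1)},
\end{equation*}
which is negative on $(1,\alpha)$ and positive on $(\alpha,\infty)$; hence $g$ (and therefore $\Gamma_q$) is strictly decreasing on $(1,\alpha]$ and strictly increasing on $[\alpha,\infty)$, with the unique minimum attained at $\delta=\alpha$.

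Next, I will pin down the sign of $\Gamma_q$ at the key points. At $\delta=\alpha$, we have $g(\alpha)=0$, so $\Gamma_q(\alpha) = \ln(c) - \alpha\cdot \D{\frac{1}{\alpha}}{q} \leq 0$ by the hypothesis $c \leq \exp\!\left(\alpha\cdot \D{\frac{1}{\alpha}}{q}\right)$. Plugging the definition \eqref{eq:deltast_def} of $\sdeltar$ (and of $\sdeltal$ when $\alpha>1$) into $\Gamma_q$ immediately yields $\Gamma_q(\sdeltar)=\Gamma_q(\sdeltal)=0$. Finally, at $\delta=\frac{1}{q}$, $g(\frac{1}{q})=\D{\frac{1}{\alpha}}{q}$, so $\Gamma_q(\frac{1}{q}) = \ln(c) \geq 0$; and as $\delta \to 1^{+}$, the term $(1-\frac{1}{\alpha})\ln\!\left(\frac{(\alpha-1)\delta}{\alpha(\delta-1)}\right)$ in $g(\delta)$ blows up to $+\infty$ when $\alpha>1$, so $\Gamma_q(\delta)\to +\infty$.

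Combining these pieces gives the lemma. On $[\alpha, \frac{1}{q}]$, $\Gamma_q$ is continuous and strictly increasing from a non-positive value at $\alpha$ to a non-negative value at $\frac{1}{q}$, vanishing precisely at $\sdeltar$; consequently $\Gamma_q<0$ on $[\alpha,\sdeltar)$ and $\Gamma_q>0$ on $(\sdeltar,\frac{1}{q}]$, which by Lemma~\ref{lemma:s_deriv_alternative} gives the claimed strict monotonicity of $s_q$. When $\alpha>1$, the same argument on $(1,\alpha]$---where $\Gamma_q$ is strictly decreasing from $+\infty$ down to $\Gamma_q(\alpha)\le 0$ and crosses zero precisely at $\sdeltal$---yields $\Gamma_q>0$ on $[1,\sdeltal)$ and $\Gamma_q<0$ on $(\sdeltal,\alpha]$, proving that $s_q$ is strictly increasing on $[1,\sdeltal]$ and strictly decreasing on $[\sdeltal,\alpha]$.

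The only mildly delicate point is the derivative computation for $g$: one has to expand $\D{\frac{1}{\alpha}}{\frac{1}{\delta}}$ using the convention $\ln \frac{1/\alpha}{1/\delta} = \ln(\delta/\alpha)$ and $\ln \frac{1-1/\alpha}{1-1/\delta} = \ln\!\left(\frac{(\alpha-1)\delta}{\alpha(\delta-1)}\right)$ and differentiate carefully, but the resulting expression $\frac{\delta-\alpha}{\alpha\delta(\delta-1)}$ makes the sign structure transparent. Beyond this, the rest is just assembling the facts about $\Gamma_q$ at $\alpha$, $\sdeltal$, $\sdeltar$, and the endpoints, and invoking Lemma~\ref{lemma:s_deriv_alternative}.
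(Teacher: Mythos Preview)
Your proposal is correct and follows essentially the same approach as the paper: reduce to the sign of $\Gamma_q(\delta)$ via Lemma~\ref{lemma:s_deriv_alternative}, establish that $\Gamma_q$ is strictly monotone on each side of $\delta=\alpha$, and use $\Gamma_q(\sdeltar)=\Gamma_q(\sdeltal)=0$ to pin down the sign. The only difference is stylistic---you compute $g'(\delta)=\frac{\delta-\alpha}{\alpha\delta(\delta-1)}$ explicitly, whereas the paper appeals to the known monotonicity of $x\mapsto \D{\frac{1}{\alpha}}{x}$ on either side of its minimum $x=\frac{1}{\alpha}$; your endpoint checks at $\delta=\frac{1}{q}$ and $\delta\to 1^+$ are extra detail the paper leaves implicit in the well-definedness of $\sdeltar,\sdeltal$.
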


The proof of \cref{lemma:s_q_unimodal} can be found in \cref{sec:omitted_proofs}.

\begin{figure}[h!]
	\centering
	\begin{tikzpicture}
    \begin{axis}[
	legend style={at={(0.5,0.8)}, anchor=south},
        domain=0.01:0.99,
        samples=1000,
        ymin=-0.5, ymax=2,  %
        xtick={0, 0.2, 0.4, 0.8},
        xmin=0, xmax=1.1,
        extra x ticks={ 0.5657220994862653, 0.9708196369986882 },
        extra x tick style={grid=major, tick label style={font=\small, fill=white, text=red}},
        axis x line=center,  %
        axis y line=left,  %
        height=8cm,  %
        width=10cm,  %
    ]
    \def\q{ 0.5 }
    \def\alph{ 1.2 }
    \def\c{ 1.1 } %
    \def\r1{ 0.5657220994862653 }
    \def\r2{ 0.9708196369986882 }    

    \def\kl#1#2{(#1 * ln(#1 / #2) + (1 - #1) * ln((1 - #1) / (1 - #2)))}

    \addplot[
        domain=0.01:0.9999999,
        samples=10000,
        thick,
        blue
    ] 
    { \kl{1/ \alph}{x} }; 
    \addlegendentry{$\D{\frac{1}{\alpha}}{x}$}

    \addplot[
        domain=0:1,
        samples=2,
        thick,
        red,
        dashed
    ] 
    { \kl{1/ \alph}{\q} - ln(\c)/\alph };
    \addlegendentry{$y = \D{\frac{1}{\alpha}}{q} - \frac{\ln(c)}{\alpha}$}

    \node[below,red] at (axis cs:0.5657220994862653,-0.2) {$\sdeltal$};
    \node[below,red] at (axis cs:0.9708196369986882,-0.2) {$\sdeltar$};

    \end{axis}
\end{tikzpicture}		
	\caption{The plot of the function $\psi(x) = \D{\frac{1}{\alpha}}{x}$ and $y = \D{\frac{1}{\alpha}}{q} - \frac{\ln(c)}{\alpha}$ for $\alpha = 1.2, q = 0.5$ and $c = 1.1$. Note that $\psi(x)$ has a zero at $\frac{1}{\alpha}$ and it is monotone in intervals $(1,\frac{1}{\alpha}]$ and $[\frac{1}{\alpha},1)$. }
	\label{fig:klplot}
\end{figure}

With these definitions and results, we are now able to calculate the minimum (or maximum)
of $s_q(\delta)$ over $\delta \in \goodd \cap [1,\frac{1}{q}]$.
In \cref{lemma:min_s_beta_>_alpha,lemma:max_s_beta_<_alpha}, we separately consider
the two complementary cases based on whether $\alpha$ is greater than $\beta$ or not.

\begin{lemma}\label{lemma:min_s_beta_>_alpha}
	Let $0 < q \leq 1$ such that $c\leq \exp\left(\alpha\cdot \D{\frac{1}{\alpha}}{q}\right)$.
	Suppose that $\frac{1}{q} > \beta > \alpha$ and $\beta < \sdeltar$.
	Then
	\begin{equation*}
		\biggl(\,\min_{\delta \in \goodd \cap [1,\frac{1}{q}]} s_q(\delta)\,\biggr) = s_{q}(\sdeltar).
	\end{equation*}
\end{lemma}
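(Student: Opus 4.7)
The plan is to combine the unimodality result of \cref{lemma:s_q_unimodal} with a location argument for $\sdeltar$ inside the feasible interval. Since $\beta > \alpha$, by \Cref{definition:good_set} we have $\goodd = [\beta, \infty)$, and hence the feasible set is $\goodd \cap [1, \frac{1}{q}] = [\beta, \frac{1}{q}]$. By \Cref{lemmma:sdelta_well_defined}, the hypothesis $c \leq \exp(\alpha \cdot \D{\frac{1}{\alpha}}{q})$ guarantees that $\sdeltar$ is well defined, and by \eqref{eq:deltast_def} we know $\sdeltar \geq \alpha$.

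The first step is to show $\sdeltar \leq \frac{1}{q}$, so that $\sdeltar$ lies in the feasible set. By \eqref{eq:deltast_def} we have
\[
\D{\frac{1}{\alpha}}{\frac{1}{\sdeltar}} \;=\; \D{\frac{1}{\alpha}}{q} - \frac{\ln(c)}{\alpha} \;\leq\; \D{\frac{1}{\alpha}}{q},
\]
the inequality using $c \geq 1$. Now the map $x \mapsto \D{\frac{1}{\alpha}}{x}$ is strictly decreasing on $(0, \frac{1}{\alpha}]$ (see the shape of $\psi$ illustrated in \cref{fig:klplot}). Since $\sdeltar \geq \alpha$ implies $\frac{1}{\sdeltar} \leq \frac{1}{\alpha}$, and $\beta \leq \frac{1}{q}$ combined with $\alpha \leq \frac{1}{q}$ gives $q \leq \frac{1}{\alpha}$, both $\frac{1}{\sdeltar}$ and $q$ lie in the region of strict monotonicity, so the divergence inequality above translates to $\frac{1}{\sdeltar} \geq q$, i.e. $\sdeltar \leq \frac{1}{q}$.

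The second step is to locate $\sdeltar$ relative to $\beta$. We are given the hypothesis $\beta < \sdeltar$, and we have just shown $\sdeltar \leq \frac{1}{q}$, so $\sdeltar \in (\beta, \frac{1}{q}] \subseteq [\beta, \frac{1}{q}]$. By \cref{lemma:s_q_unimodal}, $s_q$ is strictly decreasing on $[\alpha, \sdeltar]$ and strictly increasing on $[\sdeltar, \frac{1}{q}]$. Restricting to the subinterval $[\beta, \frac{1}{q}]$ (recall $\alpha < \beta < \sdeltar$), we conclude that $s_q$ is strictly decreasing on $[\beta, \sdeltar]$ and strictly increasing on $[\sdeltar, \frac{1}{q}]$, which means $\sdeltar$ is the unique minimizer of $s_q$ on $[\beta, \frac{1}{q}]$. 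This yields
\[
\min_{\delta \in \goodd \cap [1, \frac{1}{q}]} s_q(\delta) \;=\; \min_{\delta \in [\beta, \frac{1}{q}]} s_q(\delta) \;=\; s_q(\sdeltar),
\]
as required.

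The main obstacle is ensuring that the minimizer $\sdeltar$ actually lies inside the feasible interval $[\beta, \frac{1}{q}]$; once this is established, the unimodality lemma does the rest. The bound $\sdeltar \leq \frac{1}{q}$ relies on the monotonicity of $\D{\frac{1}{\alpha}}{\cdot}$ on $(0, \frac{1}{\alpha}]$ together with the standing assumption $\alpha \leq \frac{1}{q}$ from the setup of \cref{sec:algorithms}; the inequality $\beta < \sdeltar$ is given by hypothesis, so no optimization is required beyond invoking the unimodality of $s_q$.
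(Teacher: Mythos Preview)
Your proof is correct and follows essentially the same approach as the paper: you identify the feasible set as $[\beta,\frac{1}{q}]$, show $\sdeltar\le \frac{1}{q}$ via the monotonicity of $x\mapsto\D{\frac{1}{\alpha}}{x}$ on $(0,\frac{1}{\alpha}]$ (the paper phrases this step as a contradiction argument, but the content is identical), and then invoke \cref{lemma:s_q_unimodal} together with the hypothesis $\beta<\sdeltar$ to conclude that $\sdeltar$ is the unique minimizer. Your write-up is in fact slightly more explicit than the paper's in spelling out why $\sdeltar$ lies in the feasible interval.
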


\begin{proof}
		We first show that  $\alpha \leq \sdeltar \leq \frac{1}{q}$. 
		By definition, it holds that $\sdeltar \geq \alpha$.
		To prove that $\sdeltar \leq \frac{1}{q}$, suppose for a contradiction that
		$\sdeltar > \frac{1}{q}$, i.e. $\frac{1}{\sdeltar} < q$.
		Since $\frac{1}{\sdeltar} < q < \frac{1}{\alpha}$, it holds that
		\begin{equation*}
			\D{\frac{1}{\alpha}}{q} - \frac{\ln(c)}{\alpha} = \D{\frac{1}{\alpha}}{\frac{1}{\sdeltar}} > \D{\frac{1}{\alpha}}{q},
		\end{equation*}
		where the equality follows from the definition of $\sdeltar$.
		Therefore we arrive at a contradiction.
		
		By \Cref{lemma:s_q_unimodal} it holds that $s_q(\delta)$ is decreasing in $[\alpha, \sdeltar]$ and increasing in $[\sdeltar,\frac{1}{q}]$, therefore 
			\begin{equation*}
			\biggl(\,\min_{\delta \in \goodd \cap [1,\frac{1}{q}]} s_q(\delta)\,\biggr) = s_{q}(\sdeltar).
		\end{equation*}
\end{proof}

Next, we state the analogue of \cref{lemma:min_s_beta_>_alpha} for the $\beta > \alpha$ case.
The proof \cref{lemma:max_s_beta_<_alpha}, which is nearly identical to the proof of \cref{lemma:min_s_beta_>_alpha},
can be found in \cref{sec:omitted_proofs}.

\begin{restatable}{lemma}{maxsbetasmalleralpha}
	\label{lemma:max_s_beta_<_alpha}
	Let $(\alpha, \beta) \in \params$, $0 < q \leq 1$, and $c \geq 1$ such that $c\leq \exp\left(\alpha\cdot \D{\frac{1}{\alpha}}{q}\right)$.
	Suppose that $1 < \beta < \alpha < \frac{1}{q}$ and $\beta > \sdeltal$.
	Then
	\begin{equation*}
		\biggl(\,\max_{\delta \in \goodd \cap [1,\frac{1}{q}]} s_q(\delta)\,\biggr) = s_{q}(\sdeltal).
	\end{equation*}	
\end{restatable}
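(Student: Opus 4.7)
The plan is to mirror the proof of Lemma~\ref{lemma:min_s_beta_>_alpha} verbatim, swapping the roles of $\sdeltar$ with $\sdeltal$ and the roles of the right-sided interval $[\alpha, 1/q]$ with the left-sided interval $[1, \alpha]$. The argument decomposes into three short steps.

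First, I would establish well-definedness of $\sdeltal$ and identify the relevant domain. Since $\beta > 1$ and $\alpha > \beta$, we have $\alpha > 1$, so by Lemma~\ref{lemma:sdelta_well_defined} together with the hypothesis $c \leq \exp(\alpha \cdot \D{\frac{1}{\alpha}}{q})$, the value $\sdeltal = \sdeltal(\alpha,c,q) \in (1, \alpha]$ is well defined. Because $\beta < \alpha$, we have $\goodd = [1, \beta]$, and because $\beta < \alpha < \frac{1}{q}$, we have $[1,\beta] \subseteq [1, \frac{1}{q}]$, so $\goodd \cap [1, \frac{1}{q}] = [1, \beta]$.

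Second, I would locate $\sdeltal$ inside this interval. By the hypothesis $\beta > \sdeltal$, combined with the fact that $\sdeltal > 1$ by definition, we have $1 < \sdeltal < \beta$, so $\sdeltal$ lies in the interior of $[1,\beta]$. Moreover, since $\beta \leq \alpha$, we have $[1, \beta] \subseteq [1, \alpha]$.

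Third, I would apply the unimodality result. By Lemma~\ref{lemma:s_q_unimodal}, $s_q(\delta)$ is strictly increasing on $[1, \sdeltal]$ and strictly decreasing on $[\sdeltal, \alpha]$. Restricted to the subinterval $[1,\beta] \subseteq [1, \alpha]$, the function $s_q$ is therefore strictly increasing on $[1, \sdeltal]$ and strictly decreasing on $[\sdeltal, \beta]$, so its maximum on $[1,\beta]$ is attained at $\delta = \sdeltal$. This yields
\begin{equation*}
\max_{\delta \in \goodd \cap [1, 1/q]} s_q(\delta) \;=\; \max_{\delta \in [1, \beta]} s_q(\delta) \;=\; s_q(\sdeltal),
\end{equation*}
which is the claim. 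No step here appears to be the main obstacle --- the heavy lifting (existence of $\sdeltal$ and unimodality of $s_q$) has already been done in Lemmas~\ref{lemma:sdelta_well_defined} and~\ref{lemma:s_q_unimodal}, and the present lemma is a routine application, the only care needed being to check that the hypothesis $\beta > \sdeltal$ places the unique critical point of $s_q$ on the left of $\alpha$ strictly inside the domain of optimization.
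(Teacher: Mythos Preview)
Your proposal is correct and follows essentially the same approach as the paper: identify $\goodd \cap [1,1/q] = [1,\beta]$, verify that $\sdeltal$ lies in this interval, and invoke Lemma~\ref{lemma:s_q_unimodal} to conclude the maximum is at $\sdeltal$. The paper's own proof is terser (it compresses your first two steps into a single sentence asserting $\sdeltal \in \goodd \cap [1,1/q]$), but the logic is identical.
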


Finally, we present the proof of \cref{lemma:min_f_convert_equiv}.

\begin{proof}[Proof of \cref{lemma:min_f_convert_equiv}]
	First, let us assume that $\beta \geq \alpha$.
	By \eqref{eq:f_h_equivalence}, it holds that
	\begin{align}
		\min_{\delta \in \goodd \cap [1,\frac{1}{q}]} \tilde{f}(\delta,q) &= \exp\biggl(\min_{\delta \in \goodd \cap [1,\frac{1}{q}]} h_{q}(\delta)\biggr) \notag\\
										  &= \exp\Biggl(\ln(c) + (\beta - \alpha) \cdot \biggl(\min_{\delta \in \goodd \cap [1,\frac{1}{q}]} s_q(\delta)\biggr)\Biggr).			\label{eq:tilde_f_s_q_equiv}
	\end{align}
	where the second equality follows from \cref{lemma:minim_h_slope}.
	 For $\sdeltar > \beta$, we have
	\begin{align*}
		  \exp\Biggl(\ln(c) + (\beta - \alpha) \cdot \biggl(\min_{\delta \in \goodd \cap [1,\frac{1}{q}]} s_q(\delta)\biggr)\Biggr) &=  \exp\Biggl(\ln(c) + (\beta - \alpha) \cdot \biggl(s_{q}(\sdeltar)\biggr)\Biggr)\\
&= c \cdot \exp\Biggl(\frac{\ln\left( \frac{1}{\phi\left( \sdeltar, q \right) } \right) - \ln(c) }{\sdeltar - \alpha} \cdot \left( \beta - \alpha \right) \Biggr)\\
&= c \cdot \exp\Biggl(\frac{\sdeltar\cdot \D{\frac{1}{\sdeltar}}{q} - \ln(c) }{\sdeltar - \alpha} \cdot \left( \beta - \alpha \right) \Biggr)\\
&= \runtime
	\end{align*}
	where the first equality follows from \cref{lemma:min_s_beta_>_alpha}, and the second and third equalities follow from the definition of the functions $s_{q}(\delta)$ and $\phi(\delta, q)$ respectively. If $\beta \geq \sdeltar$, then it holds that
	\begin{equation*}
		\goodd \cap [1,\frac{1}{q}] \subset [\sdeltar, \frac{1}{q}].
	\end{equation*}
	Therefore, by \cref{lemma:s_q_unimodal}, the function $s_q(\delta)$ is strictly increasing over $\goodd \cap [1,\frac{1}{q}]$.
	Hence, we get
	\begin{align*}
		\exp\Biggl(\ln(c) + (\beta - \alpha) \cdot \biggl(\min_{\delta \in \goodd \cap [1,\frac{1}{q}]} s_q(\delta)\biggr)\Biggr) &=  \exp\Biggl(\ln(c) + (\beta - \alpha) \cdot \biggl(s_{q}(\beta)\biggr)\Biggr)\\
																	  &= \exp\Biggl(\ln(c) + \frac{\ln\left( \frac{1}{\phi\left( \beta, q \right) } \right) - \ln(c) }{\beta - \alpha} \cdot \left( \beta - \alpha \right) \Biggr)\\
																	  &= \exp\Biggl(\ln(c) +\frac{\beta\cdot \D{\frac{1}{\beta}}{q} - \ln(c) }{\beta  - \alpha} \cdot \left( \beta - \alpha \right) \Biggr)\\
																	  &= \exp\left(\beta\cdot \D{\frac{1}{\beta}}{q}\right)\\
																	  &= \runtime.
	\end{align*}
	Therefore the claim holds for $\beta \geq \alpha$. The proof for the case $\beta < \alpha$ is nearly identical to the one above and is left to the reader.
\end{proof}

\section{Properties of  $\sdeltal$ and $\sdeltar$}
\label{sec:sdelta_expl_formula}

In this section we show properties of the functions $\sdeltal$ and $\sdeltar$ which has been defined in \eqref{eq:deltast_def}.  We first prove \Cref{lemma:sdelta_well_defined} which shows the functions are well defined. Then, we provide a closed formula for $\sdeltal$ and $\sdeltar$ for some special cases. 
We also use the   closed formulas   to prove 
 \cref{theorem:summary_sampling_step_alpha_1,theorem:summary_sampling_step_alpha_2} which provide a closed formula for $\runtime$ in the these special cases.

\sdeltawelldefined*
\begin{proof}
	Note that $c\leq \exp\left(\alpha\cdot \D{\frac{1}{\alpha}}{q}\right)$ implies that $\D{\frac{1}{\alpha}}{q} - \frac{\ln(c)}{\alpha} \geq 0$.
	Furthermore, observe that $\D{\frac{1}{\alpha}}{x}$ is a differentiable, non-negative convex function of $x$
	and has a global minimizer at $\frac{1}{\alpha}$.
	Therefore, $\D{\frac{1}{\alpha}}{x}$ is strictly decreasing for $x \leq \frac{1}{\alpha}$
	and strictly increasing for $x \geq \frac{1}{\alpha}$.
	
	Let us now consider the range $I_1 = [\alpha, \infty)$.
	If $\delta \in I_1$, then $\frac{1}{\delta} \leq \frac{1}{\alpha}$,
	consequently the function $\D{\frac{1}{\alpha}}{\frac{1}{\delta}}$ is a strictly
	increasing function of $\delta$ for $\delta \in I_1$.
	Additionally, it holds that
	\begin{equation}\label{eq:D_to_inf}
		\lim_{x \to 1} \D{\frac{1}{\alpha}}{x} = \lim_{x \to 0} \D{\frac{1}{\alpha}}{x} = \infty.
	\end{equation}
	Hence, there exists a unique value of $\delta \in I_1$ such that
	\begin{equation}\label{eq:D_eq_claim}
		\D{\frac{1}{\alpha}}{\frac{1}{\delta}} = \D{\frac{1}{\alpha}}{q} - \frac{\ln(c)}{\alpha} \geq 0,
	\end{equation}
	which implies that $\sdeltar$ is well-defined.
	Similarly, let's consider the case where $\alpha > 1$ and define the interval $I_2=(1,\alpha]$.
	In this case, the function $\D{\frac{1}{\alpha}}{\frac{1}{\delta}}$ is strictly
	decreasing for $\delta \in I_2$.
	By \eqref{eq:D_to_inf}, it holds that there exists a unique value of $\delta \in I_2$
	such that \cref{eq:D_eq_claim} holds.
\end{proof}

Next, we  first give a closed formula for $\sdeltar$ in case $\alpha=1$. 

\begin{lemma}\label{lemma:sampling_2_approx_alpha_1}
	Let $\alpha = 1$,  $0 < q \leq 1$ and $c \geq 1$
	such that $c \leq \frac{1}{q}$.
	Then we have
	\begin{equation*}
		\sdeltar(\alpha, c, q) = \frac{1}{q \cdot c}.
	\end{equation*}
\end{lemma}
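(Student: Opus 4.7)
The plan is a direct computation from the definition of $\sdeltar$. Recall that $\sdeltar(\alpha,c,q)$ is defined as the unique value in $[\alpha,\infty)$ satisfying
\begin{equation*}
	\D{\tfrac{1}{\alpha}}{\tfrac{1}{\sdeltar}} \;=\; \D{\tfrac{1}{\alpha}}{q} - \tfrac{\ln(c)}{\alpha}.
\end{equation*}
For $\alpha = 1$, this specializes to $\D{1}{1/\sdeltar} = \D{1}{q} - \ln(c)$. Here I can apply the identity \eqref{eq:KL_eq_1} from the preliminaries, namely $\D{1}{b} = -\ln(b)$, to collapse both divergence terms into logarithms.

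With that identity, the defining equation becomes $-\ln(1/\sdeltar) = -\ln(q) - \ln(c)$, i.e.\ $\ln(\sdeltar) = \ln(1/(qc))$, which yields $\sdeltar = 1/(qc)$ as the unique solution in $\mathbb{R}_{>0}$. The only thing left to check is that this value actually lies in the required interval $[\alpha,\infty) = [1,\infty)$, which amounts to verifying $1/(qc) \geq 1$, equivalently $qc \leq 1$. This is exactly the standing hypothesis $c \leq 1/q$ of the lemma. Uniqueness of $\sdeltar$ in the interval $[1,\infty)$ is already guaranteed by \Cref{lemma:sdelta_well_defined} (which is applicable since $c \leq 1/q = \exp(1\cdot \D{1}{q})$ by the same identity), so no additional monotonicity argument is needed.

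There is essentially no obstacle here: the proof is a two-line computation, and the lemma's hypothesis $c \leq 1/q$ is used precisely to ensure that the closed-form expression $1/(qc)$ falls in the correct branch. This closed form will then be plugged into formula \eqref{eq:runtime} to yield the simpler expression for $\runtime[1,\beta,c,q]$ stated in \Cref{theorem:summary_sampling_step_alpha_1}.
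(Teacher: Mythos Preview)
Your proof is correct and follows essentially the same approach as the paper: both use the identity $\D{1}{b} = -\ln(b)$ from \eqref{eq:KL_eq_1} to reduce the defining equation for $\sdeltar$ to a simple logarithmic identity, then verify that $\tfrac{1}{qc} \geq 1$ via the hypothesis $c \leq \tfrac{1}{q}$. The paper is marginally terser in that it directly plugs in the candidate value and checks the defining equation rather than solving for it, but the substance is identical.
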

\begin{proof}
 Since $1\leq c\leq \frac{1}{q}$ it holds that $\frac{1}{c\cdot q} \geq 1 = \alpha$. Furthermore,
 $$
 \D{\frac{1}{\alpha}}{\frac{1}{\frac{1}{c\cdot q}}} \,=\,  -\ln(q\cdot c) \,=\, -\ln(q)-\ln(c) \,=\, \D{\frac{1}{\alpha}}{q} -\ln(c).  
  $$
  The first and last equalities holds as $\alpha=1$ and due to \eqref{eq:KL_eq_1}.  Therefore, by the definition of $\sdeltar$ in~\eqref{eq:deltast_def} we have $		\sdeltar(\alpha, c, q) = \frac{1}{q \cdot c}$. 
\end{proof}

 \cref{theorem:summary_sampling_step_alpha_1} is a simple consequence of \Cref{lemma:sampling_2_approx_alpha_1}.
 \simpslaphaone*
\begin{proof}
	Note that by \cref{lemma:sampling_2_approx_alpha_1}, we have that $\sdeltar=\sdeltar(1,c,q) = \frac{1}{q \cdot c}$.
	By \eqref{eq:runtime}, if $\beta < \frac{1}{q \cdot c}$, then 
	\begin{equation}
		\label{eq:simple_alpha_one_first}
	\begin{aligned}
	\runtime[1,\beta,c,q]\,&=\, c \cdot \exp\biggl(\frac{ \sdeltar \cdot \D{\frac{1}{\sdeltar}}{q}- \ln(c) }{\sdeltar - 1} \cdot \left( \beta - 1 \right) \biggr) \\
	&=\,  c \cdot \exp\biggl(\frac{ \frac{1}{c\cdot q} \cdot \D{c\cdot q}{q}- \ln(c) }{\frac{1}{cq} - 1} \cdot \left( \beta - 1 \right) \biggr).
	\end{aligned}
	\end{equation}
	By the formula of Kullback-Leibler divergence we have,
 	\begin{equation}
 				\label{eq:simple_alpha_one_second}
 	\begin{aligned}\frac{1}{c\cdot q} \cdot \D{c\cdot q}{q}- \ln(c) \,&=\, \frac{ c\cdot q }{c\cdot q}\cdot \ln\left( \frac{c\cdot q}{q}\right)+ \frac{1-c\cdot q}{c\cdot q } \cdot \ln \left(\frac{1-c\cdot q}{1-q}\right)-\ln(c)
 		\\
 		&=\,\left( \frac{1}{c\cdot q}- 1\right)\cdot   \ln \left(\frac{1-c\cdot q}{1-q}\right).
 		\end{aligned}
 		\end{equation}
 		By \eqref{eq:simple_alpha_one_first} and \eqref{eq:simple_alpha_one_second} we have 
 		$$
 			\begin{aligned}
 			\runtime[1,\beta,c,q]\,
 			&=\,  c \cdot \exp\biggl(\frac{ \frac{1}{c\cdot q} \cdot \D{c\cdot q}{q}- \ln(c) }{\frac{1}{cq} - 1} \cdot \left( \beta - 1 \right) \biggr)\\
 			&=\, c \cdot \exp\biggl(\frac{ \left( \frac{1}{c\cdot q}- 1\right)\cdot   \ln \left(\frac{1-c\cdot q}{1-q}\right) }{\frac{1}{cq} - 1} \cdot \left( \beta - 1 \right) \biggr)\\
 			&= c\cdot  \left( \frac{1-c\cdot q}{1-q}\right)^{\beta-1}.
 		\end{aligned}
 		$$

	On the other hand,
	if $\beta \geq \frac{1}{q \cdot c}$, again by \eqref{eq:runtime} it follows that
	\begin{align*}
		\runtime[1,\beta,c,q] &=  \exp\Bigl(\beta \cdot \D{\frac{1}{\beta}}{q}\Bigr).
	\end{align*}
	Therefore, the theorem follows. 
\end{proof}

Finally,  we also provide a closed formula for $\sdeltal$ in case $\alpha=2$ and $c=1$. Here, we rely on the fact that $\D{\frac{1}{2}}{x}$ is symmetric around $\frac{1}{2}$. 
\begin{lemma}\label{lemma:sampling_2_approx_alpha_2}
	Let $\alpha = 2$, $0 < q \leq \frac{1}{\alpha} = \frac{1}{2}$ and $c = 1$.
	Then we have
	\begin{equation*}
		\sdeltal(\alpha,c,q) = \frac{1}{1 - q}.
	\end{equation*}
\end{lemma}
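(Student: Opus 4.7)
The plan is to unfold the defining equation of $\sdeltal$ for the specified parameters, exploit the symmetry of the Kullback-Leibler divergence $\D{\frac{1}{2}}{\cdot}$ around $\frac{1}{2}$, and then appeal to uniqueness.

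Concretely, with $\alpha = 2$ and $c = 1$, equation \eqref{eq:deltast_def} requires $\sdeltal \in (1, 2]$ to be the unique number satisfying
\[
\D{\tfrac{1}{2}}{\tfrac{1}{\sdeltal}} \;=\; \D{\tfrac{1}{2}}{q} \,-\, \tfrac{\ln(1)}{2} \;=\; \D{\tfrac{1}{2}}{q}.
\]
The first step is to verify, by direct computation from the definition of Kullback-Leibler divergence, the symmetry identity $\D{\frac{1}{2}}{x} = \D{\frac{1}{2}}{1-x}$ for every $x \in (0,1)$, which follows because the two terms $\frac{1}{2}\ln\frac{1/2}{x}$ and $\frac{1}{2}\ln\frac{1/2}{1-x}$ swap roles under $x \mapsto 1-x$. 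Applied at $x=q$, this yields $\D{\frac{1}{2}}{q} = \D{\frac{1}{2}}{1-q}$.

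The second step is to set $\delta^{\star} \coloneqq \frac{1}{1-q}$ and confirm that $\delta^{\star}$ lies in the correct interval. Since $0 < q \leq \frac{1}{2}$, we have $\frac{1}{2} \leq 1-q < 1$, so $\delta^{\star} \in (1, 2]$, i.e. $\delta^{\star} \in (1,\alpha]$. Moreover, by construction $\frac{1}{\delta^{\star}} = 1-q$, and combined with the symmetry identity we get
\[
\D{\tfrac{1}{2}}{\tfrac{1}{\delta^{\star}}} \;=\; \D{\tfrac{1}{2}}{1-q} \;=\; \D{\tfrac{1}{2}}{q},
\]
so $\delta^{\star}$ satisfies the defining equation of $\sdeltal$.

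The final step is to invoke uniqueness from \Cref{lemma:sdelta_well_defined} (which applies because $c = 1 \leq \exp(2 \cdot \D{\frac{1}{2}}{q})$ trivially, and $\alpha = 2 > 1$), yielding $\sdeltal(2,1,q) = \delta^{\star} = \frac{1}{1-q}$. There is no serious obstacle here: the only thing to be careful about is confirming that the candidate value falls strictly inside $(1, \alpha]$ rather than outside it, which is exactly the assumption $q \leq \frac{1}{2}$ (at $q = \frac{1}{2}$ we have $\sdeltal = 2 = \alpha$, which is still in the allowed half-open interval).
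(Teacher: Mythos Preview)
Your proof is correct and follows essentially the same approach as the paper: verify that the candidate $\frac{1}{1-q}$ lies in $(1,\alpha]$, use the symmetry $\D{\tfrac{1}{2}}{x}=\D{\tfrac{1}{2}}{1-x}$ to check it satisfies the defining equation \eqref{eq:deltast_def}, and conclude by uniqueness. The paper's version is slightly terser (it appeals directly to the definition rather than explicitly invoking \Cref{lemma:sdelta_well_defined}), but the argument is the same.
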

\begin{proof}
We first observe that $\frac{1}{1-q} > 1$ since $q>0$ and $\frac{1}{1-q} \leq \frac{1}{1-\frac{1}{2}}\leq 2=\alpha$ as $q\leq \frac{1}{2}$. Thus, $\frac{1}{1-q} \in (1,\alpha]$.  Furthermore,
$$
\D{\frac{1}{\alpha}}{\frac{1}{\frac{1}{1-q}}} \,=\,\D{\frac{1}{2}}{1-q} \, = \D{\frac{1}{2}}{q} = \D{\frac{1}{\alpha}}{q} -\frac{\ln(c)}{\alpha},
$$
the second equality holds as $\D{\frac{1}{2}}{x} =\frac{1}{2}\cdot \ln \left(\frac{1}{2x} \right) + \frac{1}{2}\cdot \ln \left(\frac{1}{2\cdot (1-x)}\right) $ is symmetric around $x=\frac{1}{2}$, and the last equality holds as $\ln(c)=\ln(1)=0$.  Thus, we have $\sdeltal(\alpha,c,q)=\frac{1}{1-q}$ by its definition in \eqref{eq:deltast_def}.
\end{proof}
\simplealphatwo*
\begin{proof}
	 By  \Cref{lemma:sampling_2_approx_alpha_2} it holds that 
	 $\sdeltal=\sdeltal(2,1,q) = \frac{1}{1 - q}$ by \cref{lemma:sampling_2_approx_alpha_2}. 
	In case $\beta>\frac{1}{1-q}$, by  \eqref{eq:runtime} it holds that 
	 \begin{equation*}
	 	\label{eq:simple_alpha_two_first}
	 	\begin{aligned}
	\runtime[2,\beta,1,q]&= 1 \cdot \exp\biggl(\frac{\sdeltal \cdot \D{\frac{1}{\sdeltal}}{q}- \ln(1) }{\sdeltal - 2} \cdot \left( \beta - 2 \right) \biggr)\\
	&= \exp\biggl(\frac{\frac{1}{1-q} \cdot \D{1-q}{q} }{\frac{1}{1-q} - 2} \cdot \left( \beta - 2 \right) \biggr)\\
	& = \exp\Biggl(\frac{\frac{1-q}{1-q} \cdot  \ln \left(\frac{1-q}{q}\right)+ \frac{q}{1-q}\cdot \ln\left( \frac{q}{1-q}\right)}{\frac{1}{1-q} - 2} \cdot \left( \beta - 2 \right) \Biggr)\\
	&=\exp\Biggl(\frac{\left(\frac{1}{1-q}-2\right) \ln\left( \frac{q}{1-q}\right) }{\frac{1}{1-q} - 2} \cdot \left( \beta - 2 \right) \Biggr)\\
	&= \left(\frac{1}{1-q}\right)^{\beta-2}. 
	\end{aligned}
	 \end{equation*}

	In case $1 \leq \beta \leq \frac{1}{1 - q}$, again by \eqref{eq:runtime} it follows that
	\begin{align*}
		\runtime[2,\beta,1,q] &=  \exp\biggl(\beta \cdot \D{\frac{1}{\beta}}{q}\biggr).
	\end{align*}
	Therefore,the theorem  holds.
\end{proof}

\section{Sampling Steps}
\label{sec:sampling_step}
In this section we  provide sampling steps used by our applications. \Cref{sec:fvs,sec:povd} give the sampling steps for \fvs and \povd. In \Cref{sec:forbidden} we give the generic sampling step for $\gpivd$ problems in which $\Pi$ is  defined by a finite set of forbidden sub-hypergraphs. 

\subsection{Feedback Vertex Set}
\label{sec:fvs}
In this section we will prove \cref{lemma:fvs_sampling}, that is we provide 
a sampling step for \fvs (\FVS) with success probability $\frac{1}{4}$. 
Let $\mathcal{G}$ denote the set of graphs and
let $\fvsPi$ denote the set of graphs without cycles.
Observe that $\fvsPi$ is a hereditary hypergraph property.
For an input graph $G$, the $\fvs$ problem asks whether there exists a set $S \subseteq V(G)$ of size $k$ such that $G \setminus S \in \fvsPi$, i.e. $G \setminus S$ is acyclic.

The sampling step for \FVS\ is given in \Cref{algorithm:fvs_sampling_step}.  It starts by iteratively removing vertices of degree at most $1$,
as described in \cref{algorithm:core_graph}.
Given an input graph $G$, we refer to the resulting graph from this procedure
as $\core{G}$.
It is evident that the sets of cycles in $G$ and $\core{G}$
are equal because a vertex $v \in V(G)$ with degree at most 1 cannot be
part of a cycle.
Therefore, it holds that
\begin{equation}\label{eq:G_cycle_core}
	\OPT_{\fvsPi}(\core{G}) = \OPT_{\fvsPi}(G).		
\end{equation}

After computing $\core{G})$, the sampling step defines a weight for every vertex.
The weight of a vertex $v$ of degree $2$ (in $\core{G}$) is zero, and the weight of every other vertex is its degree. The algorithm returns a random vertex, so the probability of every vertex to be returned is proportional to is weight.  The algorithm also handles a corner case which occurs if $\core{G}$ contains a cycle of vertices of degree $2$ by sampling a random vertex from this cycle. 
\begin{algorithm}
	\begin{algorithmic}[1]
		\Input Graph $G$
		\While{$G$ has a vertex $u$ of degree at most 1}
			\State $G\leftarrow G- \{u\}$, i.e., remove $u$ from $G$
		\EndWhile
		\State \Return $G$
	\end{algorithmic}
	\caption{Computing the $\core{G}$ for a graph $G$}\label{algorithm:core_graph}
\end{algorithm}

\begin{algorithm}
	\begin{algorithmic}[1]
		\Input Graph $G$
		\State $G \gets \core{G}$
		\If{$G$ has a connected component $C$ with maximum degree 2}
			\State \Return a vertex $v \in C$ uniformly at random \label{line:conn_comp_max_deg_2}
		\EndIf
		\State For each $v \in V(G)$, define $w(v) = \begin{cases}
			0 &\text{if } \deg(v) = 2\\
			\deg(v) &\text{otherwise} 
		\end{cases}$
		\State $W \gets \sum_{v \in V(G)} w(v)$
		\State Sample a vertex $v \in V(G)$ with respect to probabilities $\frac{w(v)}{W}$
		\State \Return $v$
	\end{algorithmic}
	\caption{Sampling Step for $\fvs$}\label{algorithm:fvs_sampling_step}
\end{algorithm}

The proof of the next lemma is an adjustment of the arguments in \cite{beckerRandomizedAlgorithmsLoop2000} (see also \cite{cyganParameterizedAlgorithms2015a}). 
\begin{lemma}\label{lemma:fvs_proc}
	 \cref{algorithm:fvs_sampling_step} is a sampling step for $\FVS$ with success probability~$\frac{1}{4}$.
\end{lemma}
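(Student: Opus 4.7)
The plan is to show that with probability at least $\frac{1}{4}$, the vertex $v$ returned by the algorithm lies in some minimum FVS of $G$, which is equivalent to $\OPT_{\fvsPi}(G \setminus v) \leq \OPT_{\fvsPi}(G) - 1$. Since no vertex of degree $\leq 1$ lies on a cycle, $H := \core{G}$ has the same cycles as $G$; hence $\OPT_{\fvsPi}(G) = \OPT_{\fvsPi}(H)$ and it suffices to argue with respect to minimum FVSes of $H$, which has minimum degree $\geq 2$.

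If a connected component $C$ of $H$ has maximum degree $2$, then $C$ is a cycle. For any minimum FVS $F^*$ of $H$, the set $F^* \cap C$ consists of a single vertex $u$, and swapping $u$ for any $v \in C$ produces another minimum FVS containing $v$; hence every $v \in C$ succeeds and the algorithm returns a good vertex with probability $1$. Otherwise, let $H'$ be the multigraph obtained by suppressing every degree-$2$ vertex of $H$; then $V(H')$ is the set of vertices of $H$ of degree $\geq 3$, $H'$ has minimum degree $\geq 3$, cycles of $H$ correspond bijectively to cycles of $H'$, and degrees agree: $\deg_{H'}(v) = \deg_H(v) = w(v)$ for $v \in V(H')$, so the total weight equals $2|E(H')|$. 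An exchange argument---if $v \in F^*$ has degree $2$ in $H$, optimality forces the two endpoints $u_1, u_2 \in V(H')$ of the maximal degree-$2$ path through $v$ to lie outside $F^*$, and since every cycle through $v$ passes through both endpoints, the set $(F^* \setminus \{v\}) \cup \{u_1\}$ remains a minimum FVS---allows us to replace $F^*$ by a minimum FVS contained in $V(H')$, which is then automatically a minimum FVS of $H'$. The success probability of the sampling step therefore equals $\frac{\sum_{v \in F^*} \deg_{H'}(v)}{2|E(H')|}$, so the claim reduces to
\begin{equation*}
\sum_{v \in F^*} \deg_{H'}(v) \,\geq\, \tfrac{1}{2}\,|E(H')|.
\end{equation*}

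To prove this inequality, let $T = V(H') \setminus F^*$; if $T = \emptyset$ it is trivial, so assume $|T| \geq 1$. Then $H'[T]$ is a forest, giving $|E(H'[T])| \leq |T| - 1$, and the minimum-degree hypothesis on $T$ yields $3|T| \leq \sum_{v \in T} \deg_{H'}(v) = 2|E(H'[T])| + |E(F^*, T)|$, hence $|E(F^*, T)| \geq |T| + 2 \geq |E(H'[T])| + 3$. Combined with $|E(H')| = |E(H'[F^*])| + |E(F^*, T)| + |E(H'[T])|$ and $\sum_{v \in F^*} \deg_{H'}(v) = 2|E(H'[F^*])| + |E(F^*, T)|$, the claim further reduces to $3|E(H'[F^*])| + |E(F^*, T)| \geq |E(H'[T])|$, which is immediate from the previous bound. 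The main obstacle I anticipate is the exchange argument producing a minimum FVS inside $V(H')$: one must handle the case where a maximal degree-$2$ path forms a loop ($u_1 = u_2$), and carefully verify that the degree and edge accounting lifts consistently from $H$ to the multigraph $H'$ in the presence of loops and multi-edges.
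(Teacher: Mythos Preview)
Your proposal is correct and follows essentially the same approach as the paper: reduce to the $\core{G}$, handle pure degree-$2$ components separately, exchange to obtain a minimum FVS consisting of vertices of degree $\geq 3$, and then carry out a degree/edge count to show such an FVS captures at least a quarter of the total weight. The only difference is organizational: you pass to the suppressed multigraph $H'$ and work there, whereas the paper stays in $H$ and tracks the weight function directly (splitting $R$ into its degree-$2$ and degree-$\geq 3$ parts); the two countings are equivalent, and your passage to $H'$ slightly streamlines the bookkeeping.
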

 \Cref{lemma:fvs_sampling} is an immediate consequence of \Cref{lemma:fvs_proc}.
\begin{proof}[Proof of \Cref{lemma:fvs_proc}]
	Initially, let us demonstrate that the procedure operates within
	polynomial time. Computing the $\core{G}$ and determining whether the
	graph has a connected component with maximum degree $2$ takes linear time.
	Similarly, computing the values
	$w(v)$ for each vertex $v \in V(G)$ also takes linear time. Therefore,
	the entire algorithm runs in linear time.

	By \eqref{eq:G_cycle_core}, we can, without loss of generality,
	assume that $G$ has no vertices of degree less than 2.
	Similarly, we can also assume that each component of $G$ has at least one vertex of
	degree at least 3, because otherwise the algorithm returns a vertex $v$ at
	\cref{line:conn_comp_max_deg_2} which satisfies
	\begin{equation*}
		\OPT_{\fvsPi}(G \setminus v) \leq \OPT_{\fvsPi}(G) - 1
	\end{equation*}
	with probability 1. 

	\begin{claim}
		There exists a minimum feedback vertex set of $G$ such that each vertex in it
		has degree at least 3.
	\end{claim}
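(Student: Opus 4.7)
My plan is to take an arbitrary minimum feedback vertex set $S^*$ of $G$ and show that any degree-$2$ vertex in it can be swapped for a degree-$\geq 3$ vertex without increasing the size. I will use the earlier standing assumptions (proved right above the claim): $G$ equals its own core, so every vertex has degree at least $2$, and every connected component of $G$ contains at least one vertex of degree $\geq 3$. An induction on the number of degree-$2$ vertices in $S^*$ then yields the result.

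The key structural observation is the following. If $v \in V(G)$ has degree $2$ and lies in a component $C$ containing some vertex of degree $\geq 3$, then starting from $v$ and walking along each of its two incident edges while all traversed vertices have degree $2$, we must eventually reach a vertex of degree $\geq 3$; call the two (not necessarily distinct) endpoints obtained this way $u_1$ and $u_2$. I will then argue that every cycle through $v$ must also contain $u_1$: since $v$ has only two neighbors, a cycle through $v$ must enter on one side of the degree-$2$ chain and leave on the other, and on the $u_1$ side the only way off the chain is through $u_1$ itself. (A completely analogous statement holds for $u_2$.)

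Given this, suppose $v \in S^*$ is a degree-$2$ vertex. Set $S' := (S^* \setminus \{v\}) \cup \{u_1\}$. Any cycle of $G$ that is hit by $S^*$ via a vertex other than $v$ is still hit by $S'$; and any cycle previously hit only by $v$ passes through $u_1$ by the observation, hence is also hit by $S'$. Thus $S'$ is a feedback vertex set. If $u_1 \in S^*$ already, then in fact $S^* \setminus \{v\}$ would be a smaller feedback vertex set, contradicting the minimality of $S^*$; so $u_1 \notin S^*$ and $|S'| = |S^*|$. The swap strictly increases the number of vertices of degree $\geq 3$ in the chosen minimum FVS and strictly decreases the number of degree-$2$ vertices, so iterating this operation terminates with a minimum FVS whose vertices all have degree at least $3$ in $G$.

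The only subtle point, and the place that needs the most care, is the structural observation: one must justify that walking from $v$ along degree-$2$ vertices really does reach a vertex of degree $\geq 3$ (this uses the component-level assumption, ruling out the possibility of an all-degree-$2$ cycle in $C$), and that every cycle through $v$ passes through $u_1$. I will handle the two sub-cases $u_1 \neq u_2$ and $u_1 = u_2$ (a ``lollipop'' attached at $u_1$) separately, but both reduce to the simple fact that $\deg(v) = 2$ forces any cycle through $v$ to use both edges incident to $v$ and therefore to traverse the entire degree-$2$ chain on at least one side up to a branching vertex.
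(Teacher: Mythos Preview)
Your proposal is correct and follows essentially the same approach as the paper: both argue by a swap that replaces a degree-$2$ vertex $v$ in a minimum FVS with a degree-$\geq 3$ vertex reached by walking along the degree-$2$ chain from $v$, using that every cycle through $v$ must traverse that chain. The paper's version is terser (it walks in only one direction and does not explicitly dispose of the case $u_1\in S^*$), but the underlying idea is identical.
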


	\begin{claimproof}
		Let $B$ be a minimum feedback vertex set. Since $G$ does not contain
		any vertices of degree less than 2, $B$ also does not contain
		any vertices of degree less than 2. Suppose $B$ contains a
		vertex $v$ of degree exactly 2. We claim that there is always a
		vertex $u \in V(G)$ of degree at least 3 such that $\left( B
		\setminus \{v\} \right) \cup \{u\}$ is also a minimum feedback vertex
		set. Observe that this is enough to prove the claim, as we can
		apply this step repeatedly as long as $B$ contains a vertex of
		degree 2.

		Since $v$ is in a connected component of which contains a vertex of degree $3$ or more, there is a path from $v$ to $u$, such that every vertex on the path expect $u$ is of degree $2$. Thus, every cycle $C$ in $G$ that contains $v$ also contains $u$. Therefore, $\left( B \setminus \{v\} \right) \cup \{u\}$ is also a minimum 
		feedback vertex set of $G$. 
	\end{claimproof}

	Let $B$ be a minimum feedback vertex set of $G$ such that each vertex has degree
	at least 3.
	In the following, for a set $X \subseteq V(G)$, we let $w(X) \coloneqq \sum_{x \in X} w(x)$.	
	The following claim argues that
	the weight of $B$ is large.
	\begin{claim}\label{claim:B_weight_large}
		It holds that
		\begin{equation*}
			w(B) \geq \frac{w\left( V(G) \setminus B \right)}{3}.
		\end{equation*}
	\end{claim}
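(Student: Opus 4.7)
The plan is to reduce the weighted inequality to a purely combinatorial one about edges of $G$, and then establish it by a structural analysis of the forest $F := G\setminus B$.

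First, since every vertex of $B$ has degree at least $3$, one has $w(v)=\deg_G(v)$ for all $v \in B$, so $w(B) = \sum_{v\in B}\deg_G(v)$. Writing $F_2 := \{v \in F : \deg_G(v) = 2\}$ and $F_{\geq 3} := F \setminus F_2$, one similarly obtains $w(V(G)\setminus B) = \sum_{v \in F_{\geq 3}} \deg_G(v)$. Let $e_1$, $e_2$, $e_3$ denote the number of edges of $G$ with both endpoints in $B$, exactly one endpoint in $B$, and both endpoints in $F$ respectively. A routine degree-sum gives $w(B) = 2e_1+e_2$ and $w(V(G)\setminus B) = 2e_3 + e_2 - 2|F_2|$, so the desired $3 w(B) \geq w(V(G)\setminus B)$ becomes the combinatorial inequality
$$3e_1 + e_2 + |F_2| \geq e_3.$$

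Next, because $B$ is a feedback vertex set, $F$ is a forest, so $e_3 \leq |F| - c(F)$ where $c(F)$ is the number of connected components of $F$. Since $|F| - |F_2| = |F_{\geq 3}|$, it thus suffices to prove
$$3e_1 + e_2 + c(F) \geq |F_{\geq 3}|.$$

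To establish this, I would classify every vertex of $F$ by its degree in $F$ (isolated, leaf, internal degree-$2$, or branching with $\deg_F\geq 3$), refining further by whether it belongs to $F_2$ or $F_{\geq 3}$. For each class the identity $\deg_G(v) = \deg_F(v) + \deg_B(v)$ forces a lower bound on $\deg_B(v)$ (e.g.\ an $F_{\geq 3}$-leaf must have $\deg_B(v)\geq 2$, an $F_{\geq 3}$-vertex with $\deg_F(v)=2$ must have $\deg_B(v)\geq 1$, etc.), and summing these yields a lower bound on $e_2$. To control branching vertices of $F$, I would invoke the classical tree fact that any tree with at least two vertices has at least $2 + b(T)$ leaves, where $b(T)$ is its number of branching vertices; summed over the nontrivial components of $F$, this bounds the total number of branching vertices of $F$ by (leaves of $F$) $- 2c'$, where $c'$ is the number of nontrivial tree components.

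Combining the two bounds produces the inequality with slack: the $e_2$ bound already absorbs all $F_{\geq 3}$-leaves (with multiplicity two) and all $F_{\geq 3}$-internal-degree-$2$ vertices, the leaf-versus-branching bound absorbs the branching vertices of $F$, and the isolated $F_{\geq 3}$-vertices are absorbed by the $c(F)$ term (each is its own component). The $3e_1$ term turns out not to be needed at all. The main obstacle is keeping the case split clean: one has to simultaneously bound $e_2$ from below and the branching count from above using a single partition of $V(F)$, and verify that each summand on the right is charged to a distinct term on the left without double counting.
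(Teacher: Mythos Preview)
Your approach is correct, but it takes a substantially more laborious route than the paper's. Both arguments begin with the same degree-sum identities (your $w(B)=2e_1+e_2$ and $w(F)=2e_3+e_2-2|F_2|$ are exactly the paper's computation of $w(R)$ in terms of $|E(B,R)|$ and $|E(G[R])|$), but they diverge sharply after that. The paper never reduces to a vertex-counting inequality like your $3e_1+e_2+c(F)\ge |F_{\ge 3}|$; instead it observes directly that $w(R)\le |E(B,R)|+2|R_{\ge 3}|$ (forest bound) and $3|R_{\ge 3}|\le w(R)$ (each such vertex has weight $\ge 3$), then bootstraps these two lines to get $\tfrac{1}{3}w(R)\le |E(B,R)|\le w(B)$. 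No case split, no tree leaf--branching lemma, no partition of $V(F)$.

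Your structural analysis is sound --- the six-way partition into $A_0,A_1,A_2,A_{\ge 3},B_0,B_1,B_2$ together with the leaf--branching inequality $|A_1|+|B_1|\ge 2c'+|A_{\ge 3}|$ does close the books with slack $3|A_0|+3|B_0|+3c'\ge 0$ --- but the bookkeeping obscures the one-line idea that every unit of weight in $R_{\ge 3}$ is either matched by an edge to $B$ or accounted for by the ``$\tfrac{2}{3}$ of itself'' coming from the crude bound $|R_{\ge 3}|\le w(R)/3$. What your approach buys is an explicit witness for where the slack lives (isolated vertices and extra components), which could be useful if one later wanted to characterise tightness; what the paper's approach buys is a five-line proof.
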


	\begin{claimproof}
		Define $R = V(G) \setminus B$. We have
		\begin{align*}
			w(R) &= \sum_{v \in R \colon \deg(v) \geq 3} \deg(v)\\
			     &= \left( \sum_{v \in R} \deg(v) \right)  - \left( \sum_{v \in R \colon \deg(v) = 2} \deg(v) \right) \\
			     &= \left( \sum_{v \in R} \deg(v) \right) - 2 \cdot \abs{\{v \in R \mid \deg(v) = 2\} }\\
			     &= \left( \sum_{v \in R} \deg(v) \right) - 2 \cdot \abs{R_2}
		\end{align*}
		where $R_2 \coloneqq \{v \in R \mid \deg(v) = 2\}$. Observe that $\left( \sum_{v \in R} \deg(v) \right)$
		is equal to the number of edges between the vertices in $B$ and $R$,
		plus twice the number of edges between vertices in $R$.
		Therefore we get
		\begin{align*}
			w(R) &= \abs{E(B,R)} + 2 \cdot \abs{E(G[R])} - 2 \cdot \abs{R_2}\\
			     &\leq \abs{E(B,R)} + 2 \cdot \abs{R} - 2 \cdot \abs{R_2}\\
			     &= \abs{E(B,R)} + 2 \cdot \abs{R_{\geq 3}} 
		\end{align*}
		where $\abs{R_{\geq 3}} \coloneqq \{v \in R \mid \deg(v) \geq 3\}$
		and $\abs{E(B,R)} = \abs{\{(u,v) \mid u \in B, v \in R, (u,v) \in E(G)\}}$.
		Furthermore, the inequality holds because $G[R]$ is a forest by definition.
		Since each $v \in R_{\geq 3}$ contributes at least 3 to
		$w(R)$, we have $3\cdot \abs{R_{\geq 3}} \leq w(R)$ and
		\begin{align*}
			w(R) &\leq \abs{E(B,R)} + 2 \cdot \frac{w(R)}{3}
		\end{align*}
		which implies that $\frac{w(R)}{3} \leq \abs{E(B,R)}$.
		Finally, since $w(B) \geq \abs{E(B,R)}$, it holds that
		\begin{equation*}
			w(B) \geq \abs{E(B,R)} \geq \frac{w(R)}{3}.
		\end{equation*}
	\end{claimproof}

	\cref{claim:B_weight_large} implies that
	\begin{equation*}
		\Pr\left( v \in B \right) = \frac{w(B)}{W} = \frac{w(B)}{w(B) + w\left( V(G) \setminus B \right) } \geq \frac{w(B)}{w(B) + 3\cdot w(B)} = \frac{1}{4}.
	\end{equation*}

	Finally, since $v \in B$ implies $\OPT_{\fvsPi}(G \setminus v) \leq \OPT_{\fvsPi}(G) - 1$,
	we get
	\begin{equation*}
		\Pr\biggl(\OPT_{\fvsPi}(G \setminus v) \leq \OPT_{\fvsPi}(G) - 1\biggr) \geq \Pr\left( v \in B \right) \geq \frac{1}{4}.
	\end{equation*}
\end{proof}

\subsection{Pathwidth One Vertex Deletion}
\label{sec:povd}

Let $\mathcal{G}$ denote the set of graphs,
and let $\povdPi$ denote the set of graphs with pathwidth at most 1. Given a graph
$G \in \mathcal{G}$ as input, the $\spovd$ ($\povd$) problem asks for a set
$S \subseteq V(G)$ such that $G \setminus S$ belongs to $\povdPi$, i.e., $G
\setminus S$ has pathwidth at most 1.

Let $T_2$ be the graph with 7 vertices, where we take three paths with 3 vertices each,
and identify one
of the degree 1 vertices of each path (see \cref{fig:T2}).
Let $\Omega$ denote the set of all cycle graphs together with the graph $T_2$.
An alternative characterization of $\povdPi$ is the following: a graph $G$
belongs to $\povdPi$ if and only if $G$ has no subgraph isomorphic to a graph in
$\Omega$ \cite{philipQuarticKernelPathwidthOne2010,
tsurFasterAlgorithmPathwidth2022}.
Note that the set of forbidden subgraphs in the case of $\fvs$ is $\Omega \setminus \{T_2\}$,
hence, it is natural to adapt the sampling step for $\fvs$ to $\spovd$.

\begin{figure}[htpb]
	\centering
	\includegraphics{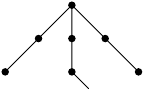}
	\caption{The graph $T_2$}
	\label{fig:T2}
\end{figure}

\begin{algorithm}
	\begin{algorithmic}[1]
		\Input Hypergraph $G \in \left( \mathcal{G} \setminus \povdPi \right) $
		\If{$G$ has a subgraph $Z$ isomorphic to $T_2$}
			\State \Return a vertex $v \in V(Z)$ uniformly at random \label{line:T2}
		\ElsIf
			\State \;Run \cref{algorithm:fvs_sampling_step} (Sampling Step for $\fvs$) on $G$
		\EndIf
	\end{algorithmic}
	\caption{Sampling step for $\povd$}\label{algorithm:povd_sampling_step}
\end{algorithm}

Our sampling step for \spovd, given in 
\cref{algorithm:povd_sampling_step} i,
first checks whether $G$ has a subgraph isomorphic to $T_2$. If this is the case, is samples a random vertex of this subgraph. 
If not, then $G$ should have a subgraph isomorphic to cycle and the sampling step for \FVS\ is used.

The following lemma show that \cref{lemma:povd_sampling} holds.
\begin{lemma}\label{lemma:fvs_sampling_steo}
	\cref{algorithm:povd_sampling_step} is a sampling step for $\spovd$
	with success probability $\frac{1}{7}$.
\end{lemma}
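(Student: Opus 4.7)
The plan is to analyze the two branches of \Cref{algorithm:povd_sampling_step} separately, relying on the characterization of $\povdPi$ as the class of graphs with no subgraph isomorphic to a graph in $\Omega = \{\text{cycles}\} \cup \{T_2\}$ \cite{philipQuarticKernelPathwidthOne2010,tsurFasterAlgorithmPathwidth2022}. For the running time, checking whether $G$ contains a subgraph isomorphic to $T_2$ can be done in polynomial time (e.g., by enumerating $7$-tuples of vertices), and \Cref{algorithm:fvs_sampling_step} is polynomial-time by \Cref{lemma:fvs_proc}; hence \Cref{algorithm:povd_sampling_step} runs in polynomial time.

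For the first branch, suppose $G$ contains a subgraph $Z$ isomorphic to $T_2$, and let $B^*$ be any minimum-size POVD solution. If $V(Z) \cap B^* = \emptyset$, then $Z$ is still present as a subgraph of $G \setminus B^*$, contradicting $G\setminus B^* \in \povdPi$. Thus $V(Z) \cap B^* \neq \emptyset$, and the returned vertex $v$, sampled uniformly from $V(Z)$, lies in $B^*$ with probability at least $\tfrac{1}{|V(Z)|} = \tfrac{1}{7}$. When $v \in B^*$, the set $B^* \setminus \{v\}$ is a POVD solution of $G \setminus v$, giving $\OPT_{\povdPi}(G\setminus v) \leq \OPT_{\povdPi}(G) - 1$.

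For the second branch, $G$ contains no subgraph isomorphic to $T_2$. The key observation is that in this case $\OPT_{\povdPi}(G) = \OPT_{\fvsPi}(G)$. Indeed, any POVD solution is trivially an FVS solution since $\povdPi \subseteq \fvsPi$; conversely, if $B$ is any FVS of $G$, then $G \setminus B$ is a forest, and since $G$ has no $T_2$ subgraph, neither does $G\setminus B$, so $B$ is a POVD solution by the forbidden-subgraph characterization. Furthermore $G \setminus v$ also contains no $T_2$ subgraph (subgraphs of $G\setminus v$ are subgraphs of $G$), so the same equality holds for $G \setminus v$. Since $G \notin \povdPi$ and there is no $T_2$ subgraph, $G$ must contain a cycle, so \Cref{algorithm:fvs_sampling_step} is invoked on a graph in $\mathcal{G} \setminus \fvsPi$ and by \Cref{lemma:fvs_proc} returns a vertex $v$ satisfying $\OPT_{\fvsPi}(G\setminus v) \leq \OPT_{\fvsPi}(G) - 1$ with probability at least $\tfrac{1}{4}$. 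Combined with the equality above, this yields $\OPT_{\povdPi}(G\setminus v) \leq \OPT_{\povdPi}(G) - 1$ with probability at least $\tfrac{1}{4} \geq \tfrac{1}{7}$.

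The main conceptual step is the equivalence $\OPT_{\povdPi}(G) = \OPT_{\fvsPi}(G)$ when $G$ is $T_2$-free, which relies on the forbidden-subgraph characterization of $\povdPi$ and on the fact that $T_2$-freeness is preserved under vertex deletion. No further subtleties should arise; everything else reduces to bookkeeping around the two cases.
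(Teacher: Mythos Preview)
Your proposal is correct and follows essentially the same approach as the paper: split on whether $G$ contains a $T_2$ subgraph, handle the first case by uniform sampling from $V(Z)$, and in the second case use the forbidden-subgraph characterization to identify the POVD and FVS problems (so that \Cref{lemma:fvs_proc} applies). Your phrasing via $\OPT_{\povdPi}(G) = \OPT_{\fvsPi}(G)$ for $T_2$-free $G$, together with the explicit observation that $G\setminus v$ is also $T_2$-free, is in fact slightly cleaner than the paper's version.
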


\begin{proof}
	First, let us show that \cref{algorithm:povd_sampling_step} runs in polynomial time.
	Checking whether $G$ has a subgraph $Z$ isomorphic to $T_2$ can be done
	by going over all subgraphs of $G$ of size 7, which takes polynomial time.
	Moreover, \cref{algorithm:fvs_sampling_step} runs in polynomial time
	by \cref{lemma:fvs_proc}.

	Define the partition $(\mathcal{G}_1,\mathcal{G}_2)$ of $\left( \mathcal{G} \setminus \povdPi \right)$ where
	\begin{equation*}
		\mathcal{G}_1 \coloneqq \{G \in \left( \mathcal{G} \setminus \povdPi \right) \mid G \text{ has a subgraph isomorphic to } T_2 \}. 
	\end{equation*}
	and $\mathcal{G}_2 \coloneqq \biggl(\Bigl( \mathcal{G} \setminus \povdPi \Bigr) \setminus \mathcal{G}_1\biggr)$.

	First assume that $G \in \mathcal{G}_1$ and let $Z$ be the subgraph of $G$ which
	is isomorphic to $T_2$. Then, the algorithm returns a vertex $v \in V(Z)$
	sampled uniformly at random, at \cref{line:T2} of \cref{algorithm:povd_sampling_step}.
	Now let $S \in \sat_{\povdPi}(G)$, and observe that $v \in S$ implies that
	\begin{equation*}
		\OPT_{\povdPi}(G \setminus v) \leq \Bigl(\OPT_{\povdPi}(G) - 1\Bigr).
	\end{equation*}
	Moreover, $(V(Z) \cap S) \neq \emptyset$ because otherwise $S$ is not a solution.
	Therefore
	\begin{equation}\label{eq:povd_eq_1}
		\Pr\Bigl( \OPT_{\povdPi}(G \setminus v) \leq \OPT_{\povdPi}(G) - 1 \Bigr) \geq \Pr\left( v \in S \right) = \frac{\abs{V(Z) \cap S}}{\abs{V(Z)}} \geq \frac{1}{7}.
	\end{equation}

	Now assume that $G \in \mathcal{G}_2$.
	By the alternative
	characterization of $\povdPi$, it follows that $G$ contains
	a subgraph isomorphic to a graph in $\Omega \setminus \{T_2\}$.
	Therefore, for $G \in \mathcal{G}_2$, it holds that
	\begin{equation*}
		G\setminus S \in \povdPi \iff G\setminus S \in \fvsPi.
	\end{equation*}

	Therefore the problems $\gpivd[\mathcal{G}_2, \povdPi]$ and  $\gpivd[\mathcal{G}_2, \fvsPi]$
	are equivalent. Moreover, \cref{algorithm:fvs_sampling_step} is a sampling step for
	$\gpivd[\mathcal{G}_2, \povdPi]$ with success probability $\frac{1}{4}$, and by definition
	it returns a vertex $v$ such that
	\begin{equation}\label{eq:povd_eq_2}
		\Pr\Bigl( \OPT_{\povdPi}(G \setminus v) \leq \OPT_{\povdPi}(G) - 1 \Bigr) \geq \frac{1}{4}.
	\end{equation}

	Therefore, by \eqref{eq:povd_eq_1} and \eqref{eq:povd_eq_2},
	\cref{algorithm:povd_sampling_step} is a sampling step for $\spovd$ with a
	success probability of $\frac{1}{7}$.
\end{proof}

\subsection{\boldmath $(\mathcal{G},\Pi)$-Vertex Deletion for a finite set of forbidden sub-hypergraphs}
\label{sec:forbidden}

We are left to prove \Cref{lemma:finite_forb_sampling}. That is, we 
 describe a sampling step for $\gpivd[\mathcal{G}, \Pi^{\Omega}]$ where
$\Pi^{\Omega}$ is described by a finite set of forbidden subhypergraphs (\Cref{definition:forbidden_hypergraph}).

In the remainder of this section, let $\mathcal{G}$ be a fixed, closed set of hypergraphs, and let $\Omega = \{F_1, \ldots, F_\ell\}$ be a fixed finite set of hypergraph.  
Recall  $\eta(\Omega) \coloneqq \max_{1 \leq i \leq \ell} \abs{V\left( F_i
\right) }$.
The idea  in the sampling step for $\gpivd[\mathcal{G}, \Pi^{\Omega}]$ is very simple, if a hypergraph $G \in \mathcal{G}$ does not belong to $\Pi^{\Omega}$,
then $G$ should have a subhypergraph $Z$ isomorphic to $F_i$ for some $1 \leq i \leq \ell$.
Moreover, any solution $S \in \sat_{\Pi^{\Omega}}(G)$ should contain a vertex from $Z$, otherwise
$(G \setminus S) \not\in \Pi^{\Omega}$. We combine these ideas in \cref{algorithm:sampling_step_ffh}.

\begin{algorithm}
	\begin{algorithmic}[1]
		\Configuration A closed set of hypergraphs $\mathcal{G}$, a set of hypergraphs $\Omega = \{F_1, \ldots, F_\ell\}$, the hypergraph property $\Pi^{\Omega}$
		\Input $G \in \mathcal{G} \setminus \Pi^{\Omega}$
		\For{$1 \leq i \leq \ell$}
			\For{$Z \subseteq V(G)$ such that $\abs{Z} = \abs{F_i}$}
			\If{$G[Z]$ is isomorphic to $F_i$}
					\State Let $v \in Z$ be a vertex sampled uniformly at random 
					\State \Return $v$\label{line:first_ffh}
				\EndIf
			\EndFor
		\EndFor
	\end{algorithmic}
	\caption{Sampling step for $\gpivd[\mathcal{G}, \Pi^{\Omega}]$}\label{algorithm:sampling_step_ffh}
\end{algorithm}

\begin{lemma}\label{lemma:sampling_step_ffh}
	\cref{algorithm:sampling_step_ffh} is a sampling step for $\gpivd[\mathcal{G}, \Pi^{\Omega}]$ with success probability $\frac{1}{\eta(\Omega)}$.
\end{lemma}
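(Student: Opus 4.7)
The plan is to verify the three requirements of \Cref{definition:sampling_step}: polynomial running time, a valid output vertex, and the success-probability bound.

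First I would handle the running time and output validity together. The algorithm iterates over all $F_i \in \Omega$ (a constant number) and, for each, over all subsets $Z \subseteq V(G)$ with $|Z|=|V(F_i)|\leq \eta(\Omega)$; since $\eta(\Omega)$ is a constant, there are at most $n^{\eta(\Omega)}=\mathrm{poly}(n)$ such subsets, and testing whether $G[Z]$ is isomorphic to $F_i$ takes $O(1)$ time per subset. Since $G\notin \Pi^{\Omega}$, by \Cref{definition:forbidden_hypergraph} there exist indices $i$ and a set $Z\subseteq V(G)$ with $G[Z]$ isomorphic to $F_i$, so the loop necessarily reaches \cref{line:first_ffh} and returns a vertex $v\in V(G)$.

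Next I would prove the success-probability bound. Fix any optimal solution $S\in \sat_{\Pi^\Omega}(G)$ with $|S|=\OPT_{\cG,\Pi^{\Omega}}(G)$, and let $Z$ be the subset found by the algorithm when it returns $v$, so $G[Z]$ is isomorphic to some $F_i\in \Omega$. The key observation is that $S\cap Z\neq \emptyset$: otherwise $Z\subseteq V(G)\setminus S$, which would make $(G\setminus S)[Z]=G[Z]$ a vertex-induced subhypergraph of $G\setminus S$ isomorphic to $F_i$, contradicting $G\setminus S\in \Pi^\Omega$. Since $v$ is drawn uniformly from $Z$ and $|Z|\leq \eta(\Omega)$,
\begin{equation*}
\Pr(v\in S)=\frac{|S\cap Z|}{|Z|}\geq \frac{1}{|Z|}\geq \frac{1}{\eta(\Omega)}.
\end{equation*}
Finally, whenever $v\in S$, the set $S\setminus\{v\}$ is a solution of $G\setminus v$ of size $\OPT_{\cG,\Pi^{\Omega}}(G)-1$ (since $(G\setminus v)\setminus(S\setminus\{v\})=G\setminus S\in \Pi^\Omega$), giving
\begin{equation*}
\Pr\bigl(\OPT_{\cG,\Pi^{\Omega}}(G\setminus v)\leq \OPT_{\cG,\Pi^{\Omega}}(G)-1\bigr)\geq \Pr(v\in S)\geq \frac{1}{\eta(\Omega)}.
\end{equation*}

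There is essentially no hard step here; the proof is a direct verification. The only subtlety worth spelling out carefully is the argument that any solution must hit every induced copy of a forbidden hypergraph, which is what drives the $1/\eta(\Omega)$ bound.
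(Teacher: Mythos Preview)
Your proof is correct and follows essentially the same approach as the paper's: verify polynomial running time via the bounded-size enumeration, argue that a returned vertex exists because $G\notin\Pi^{\Omega}$, and derive the success bound by showing that any (optimal) solution must intersect the found forbidden subgraph $Z$. The only cosmetic difference is that you explicitly fix $S$ to be optimal from the start, whereas the paper leaves this implicit.
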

We note that  \cref{lemma:finite_forb_sampling} follows immediately from \Cref{lemma:sampling_step_ffh}.
\begin{proof}[Proof of \Cref{lemma:sampling_step_ffh}]
	For each $1 \leq i \leq \ell$, the algorithm goes over all subsets of
	$V(G)$ of size $\abs{F_i}$, which takes time at most
	$n^{\Oh(\abs{F_i})} = n^{\Oh( \eta(\Omega))} = n^{\Oh(1)}$. Checking
	whether $G[Z]$ is isomorphic to $F_i$ takes constant time since
	$\abs{F_i}$ is constant. Hence all in all, the algorithm runs in
	polynomial time.
	Moreover, the algorithm always returns a vertex because
	$G \in \left( \mathcal{G} \setminus \Pi^{\Omega} \right)$
	and there exists $Z \subseteq V(G)$ such that $Z$ is isomorphic to
	$F_i$ for some $F_i \in \Pi$.

	Now let $v$ be the output of the algorithm and $Z \subseteq V(G)$
	be the set $v$ is sampled from.
	Consider $S \in \sat_{\Pi^{\Omega}}(G)$.
	Observe that if $v \in S$, then $S \setminus \{v\}$ is a solution for $G \setminus v$.
	Therefore,
	\begin{equation}\label{eq:sampling_step_ffh_eq_1}
		\Pr\Bigl(v \in S\Bigr) \leq \Pr\Bigl( \OPT_\Pi(G \setminus v) \leq \OPT_\Pi(G) - 1 \Bigr).
	\end{equation}

	Next, observe that $(Z \cap S) \neq \emptyset$, because otherwise we would
	have $(G\setminus S) \not\in \Pi^{\Omega}$ because $Z$ is isomorphic to $F_i$.
	Since $S \in \sat_{\Pi^{\Omega}}(G)$, this implies that $(Z \cap S) \neq \emptyset$.
	
	Since $v \in Z$ is sampled uniformly, we get
	\begin{equation}\label{eq:sampling_step_ffh_eq_2}
		\Pr\left( v \in S \right) = \frac{\abs{S \cap Z}}{\abs{Z}} \geq \frac{1}{\abs{Z}} \geq \frac{1}{\eta(\Omega)}.
	\end{equation}

	Finally, by \eqref{eq:sampling_step_ffh_eq_1} and \eqref{eq:sampling_step_ffh_eq_2}, it holds that
	\begin{equation*}
		\Pr\Bigl( \OPT_\Pi(G \setminus v) \leq \OPT_\Pi(G) - 1 \Bigr) \geq \frac{1}{\eta(\Omega)}.
	\end{equation*}
\end{proof}

\section{Additional Applications}
\label{sec:addtl_applications}
In this section, following \cref{sec:applications}, we present additional applications of our results.

\subsection{\hs{3}}
Observe that \hs{3} is equivalent to $\gpivd$ where $\mathcal{G}$ is the set of
all hypergraphs with edge cardinality $3$ and $\Pi$ is the set of all edgeless
hypergraphs. Moreover, let $F$ be a hypergraph with a single edge of cardinality
3 and define $\Omega \coloneqq \{F\}$ such that $\eta\left( \Omega \right) = 3$. Furthermore, 
let $\Pi^{\Omega}$ be as in \cref{definition:forbidden_hypergraph}.
Note that $\gpivd$ is also equivalent to $\gpivd[\mathcal{G}, \Pi^{\Omega}]$, because for $G \in \mathcal{G}$,
it holds that $G \in \Pi$ if and only if $G$ doesn't have an edge, i.e. there is no vertex induced
subhypergraph of $G$ isomorphic to $F$.
By \cref{lemma:sampling_step_ffh}, there is
a sampling step for $\hs{3}$ with success probability $\frac{1}{3}$.

We will utilize the FPT algorithm from \cite{wahlstromAlgorithmsMeasuresUpper2007}
which runs in time $2.076^{k} \cdot n^{\Oh(1)}$.
\begin{lemma}
	There is a randomized parameterized $\beta$-approximation algorithm
	for $\hs{3}$ with running time
	\begin{equation*}
		d^{k} \cdot n^{\Oh(1)}
	\end{equation*}
	where we have	
	\begin{equation}\label{eq:alpha_1_eq}
		d = \begin{cases}
			2.076 \cdot \left(0.462\right)^{\beta - 1} &\text{if } 1 < \beta < 1.445\\
			\exp\left( \beta \cdot \D{\frac{1}{\beta}}{\frac{1}{3}} \right) &\text{if } 1.445 \leq \beta < 3.
		\end{cases}
	\end{equation}	
\end{lemma}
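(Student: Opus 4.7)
The plan is to derive this result as a direct instantiation of \Cref{theorem:summary_sampling_step_alpha_1} once the appropriate sampling step and black-box algorithm are identified. First, I would observe, as in the discussion preceding the lemma, that $\hs{3}$ is equivalent to $\gpivd[\mathcal{G},\Pi^{\Omega}]$ where $\mathcal{G}$ consists of all $3$-uniform hypergraphs and $\Omega = \{F\}$ for $F$ the single hyperedge on three vertices, so $\eta(\Omega)=3$. By \Cref{lemma:finite_forb_sampling} (equivalently, \Cref{lemma:sampling_step_ffh}), there is a sampling step for $\hs{3}$ with success probability $q = \frac{1}{\eta(\Omega)} = \frac{1}{3}$.

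Next, I would take the exact parameterized algorithm of \cite{wahlstromAlgorithmsMeasuresUpper2007} as the black box, giving parameters $\alpha = 1$ and $c = 2.076$. Before invoking the theorem, I must verify the hypothesis $c \leq \exp\left(\alpha\cdot\D{\frac{1}{\alpha}}{q}\right)$. Using \eqref{eq:KL_eq_1} with $\alpha=1$, this bound becomes $\exp(-\ln q) = \frac{1}{q} = 3$, so $c = 2.076 \leq 3$ as required, and the restriction $\beta \leq \frac{1}{q} = 3$ matches the stated range of the lemma.

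With these parameters in place, \Cref{theorem:summary_sampling_step_alpha_1} yields a randomized parameterized $\beta$-approximation running in time $\bigl(\runtime[1,\beta,2.076,1/3]\bigr)^{k}\cdot n^{\Oh(1)}$. It remains to substitute the values and simplify. The threshold is
\[
\frac{1}{q\cdot c} \;=\; \frac{3}{2.076} \;\approx\; 1.445 ,
\]
and in the first branch the base simplifies to
\[
c\cdot\left(\frac{1-c\cdot q}{1-q}\right)^{\beta-1} \;=\; 2.076\cdot\left(\frac{1-2.076/3}{2/3}\right)^{\beta-1} \;=\; 2.076\cdot (0.462)^{\beta-1},
\]
while in the second branch the base is simply $\exp\bigl(\beta\cdot\D{1/\beta}{1/3}\bigr)$, which matches \eqref{eq:alpha_1_eq}.

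There is no genuine obstacle here: once the correspondence with $\Pi^{\Omega}$ is made, the proof is purely a matter of checking the precondition and substituting $q=1/3$, $c=2.076$, $\alpha=1$ into the closed-form expression of \Cref{theorem:summary_sampling_step_alpha_1}. The only (mild) care is in confirming the numeric value $\tfrac{3}{2.076}\approx 1.445$ and that $2.076 \leq 3$ so that the theorem applies across the entire displayed range of $\beta$.
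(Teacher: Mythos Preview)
Your proposal is correct and follows exactly the paper's approach: the paper's proof is a single sentence invoking the FPT algorithm of \cite{wahlstromAlgorithmsMeasuresUpper2007} ($\alpha=1$, $c=2.076$) together with \Cref{theorem:summary_sampling_step_alpha_1} at $q=\tfrac{1}{3}$, and you have simply made explicit the verification of the hypothesis $c\le 1/q$, the threshold $1/(qc)\approx 1.445$, and the numerical simplification $(1-cq)/(1-q)\approx 0.462$.
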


\begin{proof}
	The lemma follows from the FPT algorithm of \cite{wahlstromAlgorithmsMeasuresUpper2007} ($\alpha = 1, c = 2.076$) and \cref{theorem:summary_sampling_step_alpha_1} by setting $q = \frac{1}{3}$.
\end{proof}

In \cref{fig:3hsplot}, we compare our results with those from \cite{brankovicParameterizedApproximationAlgorithms2012}, \cite{Fellows2018}, and \cite{KulikS2020}.

\begin{figure}[h!]
	\centering
	\input{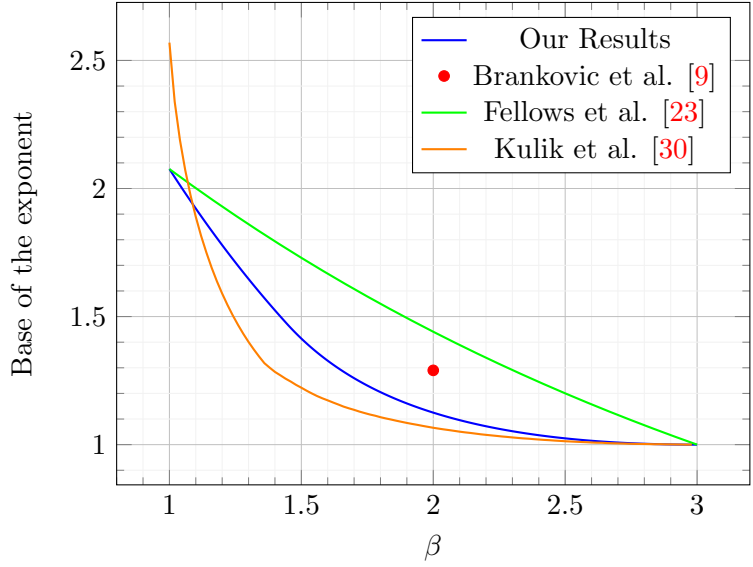}		
	\caption{Comparison of the running times of various algorithms for \hs{3}.
	The $x$-axis corresponds to the approximation ratio, while the $y$-axis corresponds to the base of the exponent in the running time.
	A point $(\beta, d)$ in the plot describes a running time of the form $d^{k} \cdot n^{\Oh(1)}$
	for a $\beta$-approximation. The red point corresponds to the 2-approximation algorithm from \cite{brankovicParameterizedApproximationAlgorithms2012}, with a
	running time of $1.29^{k} \cdot n^{\Oh(1)}$. Even though our result outperforms \cite{brankovicParameterizedApproximationAlgorithms2012} and \cite{Fellows2018}, it only improves upon \cite{KulikS2020} for values of $\beta$ such that $\beta \lessapprox 1.16$.}	
	\label{fig:3hsplot}
\end{figure}

\subsection{$\pathvc{4}$}
Recall the definition of \pathvc{\ell} from \cref{sec:pi_fixed_appl}. 
According to \cref{lemma:finite_forb_sampling}, there is
a sampling step for\\$\pathvc{4}$ with a success probability $\frac{1}{4}$.
Moreover, there exists an FPT algorithm that runs in time
$2.138^{k} \cdot n^{\Oh(1)}$ ($\alpha = 1, c = 2.138$) \cite{cervenyGeneratingFasterAlgorithms2023},
and a 3-approximation algorithm
that runs in polynomial time ($\alpha = 3, c = 1$) \cite{Camby2014}.

\begin{lemma}\label{lemma:4pvc_running_time}
	There is a randomized parameterized $\beta$-approximation algorithm
	for $\pathvc{4}$ with running time
	\begin{equation*}
		d^{k} \cdot n^{\Oh(1)}
	\end{equation*}
	where we have
	\begin{equation*}%
		d = \begin{cases}
			2.138 \cdot \left(0.621\right) ^{\beta - 1} &\text{if } 1 < \beta \leq 1.871\\
			\exp\left( \beta \cdot \D{\frac{1}{\beta}}{\frac{1}{4}} \right) &\text{if } 1.871 < \beta \leq 2.357\\
			\exp\biggl( \frac{ 2.357 \cdot \D{ 0.424 }{ 0.25 }  }{ 0.643 } \cdot (3 - \beta)  \biggr) &\text{if } 2.357 < \beta \leq 3 
		\end{cases}
	\end{equation*}	
\end{lemma}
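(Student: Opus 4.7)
The plan is to apply Sampling with a Black-Box (\cref{theorem:summary_sampling_step}) twice, once with each of the two available base algorithms, and then take the pointwise minimum of the resulting running times. All this will be enabled by the sampling step for $\pathvc{4}$ with success probability $q=\frac{1}{4}$ provided by \cref{lemma:finite_forb_sampling} (applied to $\Omega=\{P_4\}$, so $\eta(\Omega)=4$).

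First I would combine the sampling step with the FPT algorithm of \cite{cervenyGeneratingFasterAlgorithms2023}, setting $\alpha=1$ and $c=2.138$. Since $c\leq \frac{1}{q}=4$, \cref{theorem:summary_sampling_step_alpha_1} applies directly. The threshold is $\frac{1}{qc}=\frac{1}{0.5345}\approx 1.871$, and the formula gives
\[
\runtime[1,\beta,2.138,\tfrac{1}{4}] \;=\; \begin{cases} 2.138 \cdot \bigl(\tfrac{1-qc}{1-q}\bigr)^{\beta-1} = 2.138\cdot(0.621)^{\beta-1} & 1\leq \beta <1.871,\\ \exp\bigl(\beta\cdot \D{\frac{1}{\beta}}{\frac{1}{4}}\bigr) & 1.871\leq \beta\leq 4.\end{cases}
\]
This recovers the first two cases of the claimed running time.

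Next I would combine the same sampling step with the polynomial time $3$-approximation of \cite{Camby2014}, setting $\alpha=3$ and $c=1$, and invoke \cref{theorem:summary_sampling_step} in the regime $\beta<\alpha$. Since $\ln c=0$, the defining equation for $\sdeltal=\sdeltal(3,1,\frac{1}{4})\in(1,3]$ reduces to $\D{\frac{1}{3}}{\frac{1}{\sdeltal}} = \D{\frac{1}{3}}{\frac{1}{4}}$, which a short numerical computation (binary search in $(1,3]$, justified by \cref{lemma:sdelta_well_defined} and the monotonicity noted after it) yields $\sdeltal \approx 2.357$, and thus $\frac{1}{\sdeltal}\approx 0.424$. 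Plugging into \eqref{eq:runtime} gives, for $\sdeltal < \beta \leq 3$,
\[
\runtime[3,\beta,1,\tfrac{1}{4}] \;=\; \exp\!\left(\frac{\sdeltal\cdot \D{\frac{1}{\sdeltal}}{\frac{1}{4}}}{\sdeltal-3}\cdot(\beta-3)\right)\;=\;\exp\!\left(\frac{2.357\cdot \D{0.424}{0.25}}{0.643}\cdot(3-\beta)\right),
\]
while for $1\leq \beta\leq \sdeltal$ the ``otherwise'' branch of \eqref{eq:runtime} gives $\exp(\beta\cdot \D{\frac{1}{\beta}}{\frac{1}{4}})$, coinciding with the FPT-based bound on that range.

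The final step is to select, for each $\beta\in(1,3]$, the faster of the two algorithms obtained above. This selection is forced by three simple comparisons. On $[1,1.871]$ the FPT-based bound $2.138\cdot(0.621)^{\beta-1}$ beats the approximation-based bound $\exp(\beta\cdot \D{\frac{1}{\beta}}{\frac{1}{4}})$ — this is exactly the content of \cref{lemma:minim_h_slope} and \cref{lemma:s_q_unimodal} applied to $\alpha=1$, $c=2.138$, showing that the running-time function $\tilde f(\delta,q)$ is minimized at $\delta=\sdeltar(1,2.138,\frac14)=1.871$. On $[1.871,2.357]$ both bounds equal $\exp(\beta\cdot \D{\frac{1}{\beta}}{\frac{1}{4}})$, so either choice works. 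On $[2.357,3]$ the approximation-based bound is smaller by the symmetric monotonicity argument of \cref{lemma:s_q_unimodal} applied with $\alpha=3$, $c=1$. The main (mild) obstacle is simply verifying the crossover points $\frac{1}{qc}\approx 1.871$ and $\sdeltal\approx 2.357$ numerically and confirming that the two piecewise formulas agree on $[1.871,2.357]$; once this is done the claim follows by combining the two algorithms and returning, for each input, the output of the one with the smaller guaranteed running time for the given $\beta$.
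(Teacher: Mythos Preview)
Your proposal is correct and follows essentially the same approach as the paper: apply \cref{theorem:summary_sampling_step_alpha_1} with the FPT algorithm ($\alpha=1$, $c=2.138$), apply \cref{theorem:summary_sampling_step} with the polynomial-time $3$-approximation ($\alpha=3$, $c=1$) after computing $\sdeltal(3,1,\tfrac14)\approx 2.357$, and take the pointwise minimum. Your added justification (via \cref{lemma:s_q_unimodal}) for why each piece wins on its interval is more explicit than the paper's, which simply states that the lemma follows by selecting the smaller value.
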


\begin{proof}
	Let $\mathcal{A}_1$ denote the FPT algorithm from \cite{cervenyGeneratingFasterAlgorithms2023}, which runs in $2.138^{k} \cdot n^{\Oh(1)}$ time
	($\alpha = 1, c = 2.138$).
	Similarly, let $\mathcal{A}_2$ denote the 3-approximation algorithm from \cite{Camby2014} that runs in polynomial time
	(i.e., $\alpha = 3$, $c = 1$).
	Note that it suffices to consider $\beta \leq 3$ in the following, because for $\beta > 3$, $\mathcal{A}_2$ serves as a polynomial-time $\beta$ approximation algorithm.

	By using $\mathcal{A}_1$ and \cref{theorem:summary_sampling_step_alpha_1}, the first $\beta$-approximation algorithm we obtain
	has the running time $d^{k} \cdot n^{\Oh(1)}$ where
	\begin{equation}\label{eq:4pathvc_eq_1}
		d = \begin{cases}
			2.138 \cdot \left(0.621\right) ^{\beta - 1} &\text{if } 1 < \beta < 1.871\\
			\exp\left( \beta \cdot \D{\frac{1}{\beta}}{\frac{1}{4}} \right) &\text{if } 1.871 \leq \beta \leq 3.
		\end{cases}
	\end{equation}
	On the other hand, we have that $\sdeltal\left( 3,1,\frac{1}{4}
	\right) = 2.357$. Therefore, by using $\mathcal{A}_2$ and
	\cref{theorem:summary_sampling_step}, for every $1 < \beta \leq
	3$ there exists a parameterized $\beta$-approximation algorithm
	which runs in time $\runtime[3, \beta, 1, \frac{1}{4}]^{k}
	\cdot n^{\Oh(1)}$ where
	\begin{equation}\label{eq:4pathvc_eq_2}
		\runtime[3, \beta, 1, \frac{1}{4}] = \begin{cases}
			\exp\biggl( \frac{ 2.357 \cdot \D{ 0.424 }{ 0.25 }  }{ 0.643 } \cdot (3 - \beta)  \biggr) &\text{if } 2.357 < \beta \leq 3 \\
			\exp\left( \beta \cdot \D{\frac{1}{\beta}}{\frac{1}{4}} \right) &\text{if } 1 < \beta \leq 2.357 .
		\end{cases}
	\end{equation}
	The lemma follows by selecting the smaller value between
	\eqref{eq:4pathvc_eq_1} and \eqref{eq:4pathvc_eq_2} for each $1 < \beta
	< 3$.
\end{proof}

See \cref{fig:4pathvcplot} for a plot of $d$ in \cref{lemma:4pvc_running_time}, depending on
the approximation ratio $\beta$.

\begin{figure}[h!]
	\centering
	\input{figures/4pathvc.tex}		
	\caption{A plot of the running times of various algorithms for $\pathvc{4}$.
		The $x$-axis corresponds to the approximation ratio, while the $y$-axis corresponds to the base of the exponent in the running time.
	A point $(\beta, d)$ in the plot describes a running time of the form $d^{k} \cdot n^{\Oh(1)}$
	for a $\beta$-approximation. }	
	\label{fig:4pathvcplot}
\end{figure}

\subsection{\dfvst}
In this section, we demonstrate how our techniques extend to directed graph problems.
Recall that in the \dfvst problem, we are given a tournament graph $G$ and we would like to
find  a set of vertices $S$ such that $G \setminus S$ doesn't have any directed cycles.
Although our technique normally applies to hypergraphs, we adapt it
for directed graphs by defining $\mathcal{G}$ as the set of all tournament
graphs. 
Similarly, we define the graph property $\Pi$ to consist of all tournament graphs that are cycle free.
Note that a tournament is acyclic if and only if it contains no directed triangle.
It is not hard to show that our results for graph properties, with a finite set of forbidden graphs,
also apply in this setting. We omit the technical details.

By \cref{lemma:finite_forb_sampling}, there is a sampling step for \dfvst
with success probability $\frac{1}{3}$. There is also a 2-approximation algorithm
that runs in polynomial time \cite{lokshtanov2ApproximatingFeedbackVertex2021} ($\alpha = 2, c = 1$).
Moreover, there is an FPT algorithm with running time $1.618^{k} \cdot n^{\Oh(1)}$ ($\alpha = 1, c = 1.618$) \cite{kumarFasterExactParameterized2016}.

\begin{lemma}\label{lemma:dfvst_running_time}
	There is a randomized parameterized $\beta$-approximation algorithm
	for \dfvst with running time
	\begin{equation*}
		d^{k} \cdot n^{\Oh(1)}
	\end{equation*}
	where we have
	\begin{equation*}%
		d = \begin{cases}
			1.618 \cdot \left(0.691\right) ^{\beta - 1} &\text{if } 1 < \beta \leq 1.854\\			
			0.5^{\beta - 2} &\text{if } 1.854 < \beta \leq 2 
		\end{cases}
	\end{equation*}	
\end{lemma}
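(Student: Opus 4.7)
The plan is to mirror the proof of the analogous running-time theorem for \fvs. I would combine the two existing \dfvst\ algorithms---the FPT algorithm $\mathcal{A}_1$ of Kumar--Lokshtanov \cite{kumarFasterExactParameterized2016} ($\alpha=1$, $c=1.618$) and the polynomial-time $2$-approximation $\mathcal{A}_2$ of Lokshtanov et al.\ \cite{lokshtanov2ApproximatingFeedbackVertex2021} ($\alpha=2$, $c=1$)---with a sampling step of success probability $q=\tfrac{1}{3}$. The sampling step is the directed analogue of Lemma~\ref{lemma:finite_forb_sampling}: a tournament is acyclic iff it contains no directed triangle, so the minimal forbidden substructure has $3$ vertices and sampling one of them uniformly succeeds with probability $\tfrac{1}{3}$. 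Since $\mathcal{A}_2$ already gives a polynomial-time $2$-approximation, it suffices to consider $1 < \beta \leq 2$.

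Applying Theorem~\ref{theorem:summary_sampling_step_alpha_1} to $\mathcal{A}_1$ with $q=\tfrac{1}{3}$ gives $\sdeltar(1,1.618,\tfrac{1}{3}) = \tfrac{1}{qc} = \tfrac{3}{1.618} \approx 1.854$ and $\tfrac{1-cq}{1-q} = \tfrac{1 - 1.618/3}{2/3} = 0.691$, yielding an algorithm whose running-time base is
\[
d_1(\beta) = \begin{cases} 1.618 \cdot 0.691^{\beta-1} & \text{if } 1 \leq \beta < 1.854, \\ \exp\bigl(\beta \cdot \D{1/\beta}{1/3}\bigr) & \text{if } 1.854 \leq \beta \leq 3. \end{cases}
\]
Similarly, applying Theorem~\ref{theorem:summary_sampling_step_alpha_2} to $\mathcal{A}_2$ gives $\sdeltal(2,1,\tfrac{1}{3}) = \tfrac{1}{1-q} = \tfrac{3}{2}$, yielding a second algorithm with base
\[
d_2(\beta) = \begin{cases} \exp\bigl(\beta \cdot \D{1/\beta}{1/3}\bigr) & \text{if } 1 \leq \beta \leq \tfrac{3}{2}, \\ 0.5^{\beta-2} & \text{if } \tfrac{3}{2} < \beta \leq 2. \end{cases}
\]

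Finally, for each $\beta \in (1,2]$ one may choose whichever of $d_1, d_2$ is smaller. A routine comparison of the three elementary expressions $1.618 \cdot 0.691^{\beta-1}$, $\exp(\beta \cdot \D{1/\beta}{1/3})$, and $0.5^{\beta-2}$ across the subintervals $(1, \tfrac{3}{2}]$, $(\tfrac{3}{2}, 1.854]$, and $(1.854, 2]$ yields the stated piecewise formula: the first-branch bound from $\mathcal{A}_1$ is valid on $(1, 1.854]$, and the second-branch bound from $\mathcal{A}_2$ is valid on $(1.854, 2]$. There is no conceptual obstacle---the entire proof is a direct instance of Sampling-with-a-Black-Box applied to the two known \dfvst\ algorithms, modelled exactly on the analogous computation for \fvs.
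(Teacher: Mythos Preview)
Your proposal is correct and follows essentially the same approach as the paper: apply Theorem~\ref{theorem:summary_sampling_step_alpha_1} to the exact FPT algorithm and Theorem~\ref{theorem:summary_sampling_step_alpha_2} to the polynomial-time $2$-approximation, both with the sampling step of success probability $q=\tfrac{1}{3}$, and then take the pointwise minimum of the two resulting bases. Your explicit computation of the crossover thresholds ($\sdeltar=3/1.618\approx 1.854$, $\sdeltal=3/2$, and $(1-cq)/(1-q)=0.691$) matches the paper's numerics exactly.
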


\begin{proof}
	Let $\mathcal{A}_1$ denote the FPT algorithm from \cite{kumarFasterExactParameterized2016}, which runs in time $1.618^{k} \cdot n^{\Oh(1)}$
	($\alpha = 1, c = 1.618$).
	Similarly, let $\mathcal{A}_2$ denote the $2$-approximation algorithm from \cite{lokshtanov2ApproximatingFeedbackVertex2021} that runs in polynomial time
	(i.e., $\alpha = 2$, $c = 1$).
Note that it suffices to consider $\beta \leq 2$ because of $\mathcal{A}_2$.	
	
	By using $\mathcal{A}_1$ and \cref{theorem:summary_sampling_step_alpha_1}, the first $\beta$-approximation algorithm we obtain
	has the running time $d^{k} \cdot n^{\Oh(1)}$ where
	\begin{equation}\label{eq:dfvst_running_time_1}
		d = \begin{cases}
			1.618 \cdot \left(0.691\right) ^{\beta - 1} &\text{if } 1 < \beta < 1.854\\
			\exp\left( \beta \cdot \D{\frac{1}{\beta}}{\frac{1}{3}} \right) &\text{if } 1.854 \leq \beta < 2.
		\end{cases}
	\end{equation}

	By using $\mathcal{A}_2$ and
	\cref{theorem:summary_sampling_step_alpha_2}, for every $1 < \beta \leq
	2$ there exists a parameterized $\beta$-approximation algorithm
	which runs in time $d^{k}
	\cdot n^{\Oh(1)}$ where
	\begin{equation}\label{eq:dfvst_running_time_2}
		d = \begin{cases}
			\exp\left( \beta \cdot \D{\frac{1}{\beta}}{\frac{1}{3}} \right) &\text{if } 1 < \beta \leq 1.5 \\
			0.5^{\beta - 2} &\text{if } 1.5 < \beta \leq 2 \\
		\end{cases}
	\end{equation}
	The lemma follows by selecting the smaller value between
	\eqref{eq:dfvst_running_time_1} and \eqref{eq:dfvst_running_time_2} for each $1 < \beta
	\leq 2$.
	
\end{proof}

See \cref{fig:dfvstplot} for a plot of $d$ in \cref{lemma:dfvst_running_time}, depending on
the approximation ratio $\beta$.

\begin{figure}[h!]
	\centering
	\input{figures/dfvst.tex}
	\caption{A plot of the running time of our algorithm for \dfvst.
		The $x$-axis corresponds to the approximation ratio, while the $y$-axis corresponds to the base of the exponent in the running time.
	A point $(\beta, d)$ in the plot describes a running time of the form $d^{k} \cdot n^{\Oh(1)}$
	for a $\beta$-approximation. }	
	\label{fig:dfvstplot}
\end{figure}

\subsection{\vc on Graphs with Maximal Degree $3$}
\vcmaxdegthree is the restriction of the \vc problem to graphs with maximum degree 3. 
It can be expressed as a $\gpivd$ problem, where $\mathcal{G}$ corresponds to graphs with maximum degree 3 and the hypergraph property $\Pi$ consists of all edgeless graphs.
Note that $\Pi$ can be described by the forbidden subgraph $K_2$, which is an
edge that consists of two vertices. Therefore, by \cref{lemma:finite_forb_sampling}
there exists a sampling step with success probability $\frac{1}{2}$.

In \cite{bermanApproximationPropertiesIndependent1999}, the authors present a polynomial time approximation algorithm for any approximation ratio arbitrarily close to 7/6.
For simplicity, we will assume that a $\frac{7}{6}$-approximation algorithm exists ($\alpha = \frac{7}{6}, c = 1$).
Note that when we consider $\beta$-approximation algorithms, we can focus on the values of
$\beta$ in the range $1 < \beta \leq \frac{7}{6}$ because of $\mathcal{A}_2$.
Moreover, there exists an FPT algorithm with running time $1.1616^{k} \cdot n^{\Oh(1)}$
\cite{xiaoNoteVertexCover2010} ($\alpha = 1, c = 1.1616$).

\begin{lemma}\label{lemma:vcmaxdegthree_running_time}
	There is a randomized parameterized $\beta$-approximation algorithm
	for\\ $\vcmaxdegthree$ with running time
	\begin{equation*}
		d^{k} \cdot n^{\Oh(1)}
	\end{equation*}
	where we have
	\begin{equation}\label{eq:4pthvc_runtime}
		d = \begin{cases}
				1.1616 \cdot \left(0.8384\right) ^{\beta - 1}  &\text{if } 1 < \beta \leq 1.136 \\ 			
			 \exp\biggl( \frac{ 1.008 \cdot \D{ 0.992 }{ 0.5 }  }{ 0.158 } \cdot (1.166 - \beta)  \biggr) &\text{if } 1.136 < \beta \leq 1.166			
				
		\end{cases}
	\end{equation}	
\end{lemma}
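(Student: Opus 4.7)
The plan is to invoke our two main running-time theorems on \vcmaxdegthree using the two black-box algorithms already identified in the paragraph preceding the lemma. First, since $\Pi$ can be described by the single forbidden subgraph $K_2$ (so $\eta(\Omega)=2$), \cref{lemma:finite_forb_sampling} furnishes a sampling step with success probability $q=\tfrac{1}{2}$. We then have the exact FPT algorithm of \cite{xiaoNoteVertexCover2010} giving $(\alpha,c)=(1,1.1616)$, and the polynomial-time $\tfrac{7}{6}$-approximation of \cite{bermanApproximationPropertiesIndependent1999} giving $(\alpha,c)=(\tfrac{7}{6},1)$. As in the proofs for \fvs and \pathvc{3}, the strategy will be to run both resulting algorithms and return the better solution, so the final running time is the pointwise minimum of the two bases obtained.

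For the first ingredient, I would apply \cref{theorem:summary_sampling_step_alpha_1} with $\alpha=1$, $c=1.1616$ and $q=\tfrac{1}{2}$. Since $\tfrac{1}{qc}=\tfrac{1}{0.5808}\approx 1.722$ exceeds the whole interval $\bigl(1,\tfrac{7}{6}\bigr]$ we care about, the theorem gives the closed-form running time
\[
c\cdot\left(\frac{1-cq}{1-q}\right)^{\beta-1}=1.1616\cdot(0.8384)^{\beta-1}
\]
for every $1<\beta\leq \tfrac{7}{6}$, matching the first branch of \eqref{eq:4pthvc_runtime}.

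For the second ingredient, I would apply \cref{theorem:summary_sampling_step} with $\alpha=\tfrac{7}{6}$, $c=1$ and $q=\tfrac{1}{2}$. Since $\beta<\alpha$, the relevant quantity is $\sdeltal(\tfrac{7}{6},1,\tfrac{1}{2})$, the unique $\delta\in(1,\tfrac{7}{6}]$ satisfying $\D{6/7}{1/\delta}=\D{6/7}{1/2}$ (\Cref{lemma:sdelta_well_defined}; its existence is guaranteed because $c\leq \exp(\alpha\D{1/\alpha}{q})$ is easily verified for these numbers). A direct binary search, exactly as suggested after \Cref{lemma:sdelta_well_defined}, yields $\sdeltal\approx 1.008$ and hence $1/\sdeltal\approx 0.992$, $\sdeltal-\alpha\approx -0.158$. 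Substituting into the middle case of \eqref{eq:runtime} (with $\ln(c)=0$ and $\sdeltal<\beta\leq\alpha$) gives exactly
\[
\exp\!\left(\frac{1.008\cdot \D{0.992}{0.5}}{0.158}\cdot(1.166-\beta)\right),
\]
matching the second branch of \eqref{eq:4pthvc_runtime}.

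The final step is to determine the crossover point: solve $1.1616\cdot(0.8384)^{\beta-1}=\exp\bigl(\tfrac{1.008\,\D{0.992}{0.5}}{0.158}(1.166-\beta)\bigr)$ for $\beta$. Taking logarithms linearizes this to an equation of the form $\ln(1.1616)+(\beta-1)\ln(0.8384)=A(1.166-\beta)$ with an explicit constant $A=\tfrac{1.008\,\D{0.992}{0.5}}{0.158}$, which has the closed-form solution $\beta^*=\tfrac{A\cdot 1.166+\ln(1.1616)-\ln(0.8384)}{A-\ln(0.8384)}\approx 1.136$; for $\beta<\beta^*$ the first (FPT-based) bound is smaller, and for $\beta>\beta^*$ the second (approximation-based) bound dominates. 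I do not anticipate any real obstacle here: both ingredients are direct applications of previously proved theorems, and the only manual work is the numerical evaluation of $\sdeltal$ and of the crossover threshold, neither of which is delicate.
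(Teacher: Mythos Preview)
Your proposal is correct and follows essentially the same approach as the paper: apply \cref{theorem:summary_sampling_step_alpha_1} with the exact FPT algorithm and \cref{theorem:summary_sampling_step} with the polynomial-time $\tfrac{7}{6}$-approximation, then take the pointwise minimum. In fact your write-up is slightly more complete than the paper's, which stops after displaying the two separate running-time formulas without explicitly deriving the crossover at $\beta\approx 1.136$.
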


\begin{proof}
	Let $\mathcal{A}_1$ denote the FPT algorithm from \cite{xiaoNoteVertexCover2010}, which runs in $1.1616^{k} \cdot n^{\Oh(1)}$ time
	($\alpha = 1, c = 1.1616$).
	Similarly, let $\mathcal{A}_2$ denote the $\frac{7}{6}$-approximation algorithm from \cite{bermanApproximationPropertiesIndependent1999} that runs in polynomial time
	(i.e., $\alpha = \frac{7}{6}$, $c = 1$).
	Because of $\mathcal{A}_2$, we only consider $\beta \leq \frac{7}{6} \approx 1.166$.
	
	By using $\mathcal{A}_1$ and \cref{theorem:summary_sampling_step_alpha_1}, the first $\beta$-approximation algorithm we obtain
	has the running time $d^{k} \cdot n^{\Oh(1)}$ where
		\begin{equation}\label{eq:vc_deg_3_running_time_1}
			d = \begin{cases}
				1.1616 \cdot \left(0.8384\right) ^{\beta - 1} &\text{if } 1 < \beta < 1.722\\
				\exp\left( \beta \cdot \D{\frac{1}{\beta}}{\frac{1}{4}} \right) &\text{if } 1.722 \leq \beta < 2.
			\end{cases}
		\end{equation}
	On the other hand, we have that $\sdeltal\left( \frac{7}{6} ,1,\frac{1}{2}
	\right) = 1.008$. Therefore, by using $\mathcal{A}_2$ and
	\cref{theorem:summary_sampling_step}, for every $1 < \beta \leq
	2$ there exists a parameterized $\beta$-approximation algorithm
	which runs in time $\runtime[\frac{7}{6}, \beta, 1, \frac{1}{2}]^{k}
	\cdot n^{\Oh(1)}$ where
	\begin{equation*}
		\runtime[\frac{7}{6}, \beta, 1, \frac{1}{2}] = \begin{cases}
			\exp\left( \beta \cdot \D{\frac{1}{\beta}}{\frac{1}{2}} \right) &\text{if } 1 < \beta \leq 1.008 \\
			 \exp\biggl( \frac{ 1.008 \cdot \D{ 0.992 }{ 0.5 }  }{ 0.158 } \cdot (1.166 - \beta)  \biggr) &\text{if } 1.008 < \beta \leq 1.166			
		\end{cases}
	\end{equation*}
\end{proof}

See \cref{fig:vcmaxdegthreeplot} for a plot of $d$ in \cref{lemma:vcmaxdegthree_running_time}, depending on
the approximation ratio $\beta$.

\begin{figure}[h!]
	\centering
	\input{figures/vcmaxdeg3.tex}		
	\caption{A plot of the running time of our algorithm for $\vcmaxdegthree$.
	The $x$-axis corresponds to the approximation ratio, while the $y$-axis corresponds to the base of the exponent in the running time.
	A point $(\beta, d)$ in the plot describes a running time of the form $d^{k} \cdot n^{\Oh(1)}$
	for a $\beta$-approximation. }	
	\label{fig:vcmaxdegthreeplot}
\end{figure}

\section{Comparison to Fidelity Preserving Transformations}
\label{sec:fidelity_proof}
In this section we will prove \cref{lemma:comparison} which implies that Sampling with a Black Box provides better running time than Fidelity Preserving Transformations. 
Here we state the lemma once again for completeness.

\fidelitycomparison*

\begin{proof}
	The statement in \cref{lemma:comparison} is equivalent to
	\begin{equation}
		\ln\Biggl(\runtime[1,\beta,c,\frac{1}{\eta}]\Biggr) < \frac{\eta-\beta}{\eta - 1} \cdot \ln(c).\label{eq:comparison_equiv}
	\end{equation}
	Observe that for $\alpha = 1$, \eqref{eq:deltast_def} becomes
	\begin{equation*}
		-\ln\left( \frac{1}{\sdeltar\Bigl(1,c,\frac{1}{\eta}\Bigr)} \right) = -\ln\left( \frac{1}{\eta} \right)  - \ln(c),
	\end{equation*}
	which is equivalent to
	\begin{equation}\label{eq:sdeltar_simplified}
		\sdeltar\biggl(1,c,\frac{1}{\eta}\biggr) = \frac{\eta}{c}.
	\end{equation}
	Also recall that we assume
	\begin{equation}\label{eq:c_less_than_eta}
		c\leq  \exp\left(\alpha\cdot \D{\frac{1}{\alpha}}{\frac{1}{\eta}}\right) = -\ln\left( \frac{1}{\eta} \right) = \eta.
	\end{equation}
	In the following we will consider the two cases $\beta < \sdeltar$ and $\beta \geq \sdeltar$.
	We will demonstrate that in both cases \eqref{eq:comparison_equiv} holds.

	\begin{claim}\label{claim:beta_smaller_sdelta}
		Let $\beta < \sdeltar = \frac{\eta}{c}$. Then \eqref{eq:comparison_equiv} holds.
	\end{claim}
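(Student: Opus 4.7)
The plan is to reduce the claim to a one-variable inequality and then verify it by analyzing a single function.

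First, I would invoke \Cref{theorem:summary_sampling_step_alpha_1} (which applies since $\alpha = 1$ and $c \leq \eta = 1/q$ by \eqref{eq:c_less_than_eta}) together with \eqref{eq:sdeltar_simplified}. Under the assumption $\beta < \sdeltar = \eta/c$, i.e., $\beta < 1/(q \cdot c)$, the formula yields
\begin{equation*}
	\runtime[1,\beta,c,\tfrac{1}{\eta}] \;=\; c \cdot \left( \frac{1 - c/\eta}{1 - 1/\eta} \right)^{\beta - 1} \;=\; c \cdot \left( \frac{\eta - c}{\eta - 1} \right)^{\beta - 1}.
\end{equation*}
Taking logarithms and substituting into \eqref{eq:comparison_equiv}, the inequality I need to prove becomes
\begin{equation*}
	\ln(c) + (\beta - 1) \cdot \ln\!\left( \frac{\eta - c}{\eta - 1} \right) \;<\; \frac{\eta - \beta}{\eta - 1} \cdot \ln(c).
\end{equation*}

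Next I would rearrange: since $1 - \frac{\eta - \beta}{\eta - 1} = \frac{\beta - 1}{\eta - 1}$ and $\beta > 1$, dividing both sides by $(\beta - 1) > 0$ reduces the inequality to
\begin{equation*}
	\frac{\ln(c)}{\eta - 1} \;<\; \ln\!\left( \frac{\eta - 1}{\eta - c} \right).
\end{equation*}
Notice that $\beta$ has disappeared, so this one-variable inequality in $c$ is the entire content of the claim.

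To establish it, I would define
\begin{equation*}
	f(c) \;=\; \ln\!\left( \frac{\eta - 1}{\eta - c} \right) - \frac{\ln(c)}{\eta - 1}
\end{equation*}
on $[1, \eta)$, observe $f(1) = 0$, and differentiate:
\begin{equation*}
	f'(c) \;=\; \frac{1}{\eta - c} - \frac{1}{c(\eta - 1)} \;=\; \frac{\eta (c - 1)}{c \cdot (\eta - c) \cdot (\eta - 1)}.
\end{equation*}
For $1 < c < \eta$ every factor in the last expression is positive, so $f'(c) > 0$ throughout $(1,\eta)$. Hence $f$ is strictly increasing on $[1,\eta)$, and together with $f(1) = 0$ this gives $f(c) > 0$ for $1 < c < \eta$, which is exactly the required inequality.

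The main obstacle I anticipate is none of the steps individually; the only mild subtlety is making sure the reduction to the $c$-only inequality is valid, which requires $\beta - 1 > 0$ (given by hypothesis) to flip signs while dividing, and requires $\beta < \eta/c$ to be in the right case of \Cref{theorem:summary_sampling_step_alpha_1}. Both hold by assumption, so the argument proceeds cleanly.
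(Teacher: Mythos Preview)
Your proof is correct and follows essentially the same route as the paper: both reduce to the formula $\ln(\runtime) = \ln(c) + (\beta-1)\ln\!\bigl(\tfrac{\eta-c}{\eta-1}\bigr)$ and then to the one-variable inequality $\tfrac{\ln(c)}{\eta-1} < \ln\!\bigl(\tfrac{\eta-1}{\eta-c}\bigr)$. The only difference is in verifying this last inequality---the paper chains the elementary bounds $\ln(1+x)\ge \tfrac{x}{1+x}$ and $\ln(x)<x-1$, while you use a direct monotonicity argument via $f'(c)>0$; both are equally valid and about the same length.
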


	\begin{claimproof}
		By substituting \eqref{eq:sdeltar_simplified} in the definition of $\runtime[1,\beta,c,\frac{1}{\eta}]$,
		we get
		\begin{equation}\label{eq:subst_sdelta}
			\ln\Biggl(\runtime[1,\beta,c,\frac{1}{\eta}]\Biggr) = \ln(c) + \frac{\beta - 1}{\frac{\eta}{c} - 1} \cdot \Biggl(\frac{\eta}{c} \cdot \D{\frac{c}{\eta}}{\frac{1}{\eta}} - \ln(c)\Biggr).
		\end{equation}
		Furthermore, by  the definition of the Kullback-Leibler divergence, we get that
		\begin{align}
			\frac{\eta}{c} \cdot \D{\frac{c}{\eta}}{\frac{1}{\eta}} &= \frac{\eta}{c} \cdot \frac{c}{\eta} \cdot \ln\left( \frac{c}{\eta} \cdot \eta \right)  + \frac{\eta}{c} \cdot \left( 1 - \frac{c}{\eta} \right) \cdot \ln\left( \frac{1 - \frac{c}{\eta}}{1 - \frac{1}{\eta}} \right) \notag\\
										&= \ln(c) + \frac{\eta - c}{c} \cdot \ln\left( \frac{\eta - c}{\eta - 1} \right)\label{eq:kl_subst}.
		\end{align}
		Recall that we have $\alpha  = 1 < \beta < \frac{\eta}{c}$, therefore it holds that $c < \eta$.
		Furthermore, by \eqref{eq:subst_sdelta} and \eqref{eq:kl_subst},
		\begin{align}
			\ln\Biggl(\runtime[1,\beta,c,\frac{1}{\eta}]\Biggr) &= \ln(c) + \frac{\beta - 1}{\frac{\eta}{c} - 1} \cdot \frac{\eta - c}{c} \cdot \ln\left( \frac{\eta - c}{\eta - 1} \right) \notag\\
										    &= \ln(c) + (\beta - 1) \cdot \ln\left( \frac{\eta - c}{\eta - 1} \right).\label{eq:ineq_part_1}
		\end{align}
		Next, observe that $\frac{c - 1}{\eta -c} > 0$ since $1 < c < \eta$.
		Therefore, by using the fact that $\ln(1 + x) \geq \frac{x}{1 + x}$ for $x > -1$,
		we obtain
		\begin{equation}\label{eq:ineq_part_2}
			\ln\left( \frac{\eta- 1}{\eta - c} \right) = \ln\left( 1 + \frac{c - 1}{\eta - c} \right) \geq \frac{\frac{c -1}{\eta - c} }{1 + \frac{c-1}{\eta -c }}= \frac{c-1}{\eta- 1} > \frac{\ln(c)}{\eta - 1}
		\end{equation}
		where the last inequality holds because $\ln(x) < x - 1$ for $x > 1$.
		Finally, by \eqref{eq:ineq_part_1} and \eqref{eq:ineq_part_2}, we get that
		\begin{align*}
			\ln\Biggl(\runtime[1,\beta,c,\frac{1}{\eta}]\Biggr) &= \ln(c) - (\beta - 1) \cdot \ln\left( \frac{\eta - 1}{\eta - c} \right) \\
										    &< \ln(c) - \frac{\beta - 1}{\eta - 1} \cdot \ln(c)\\
										    &=  \frac{\eta - \beta}{\eta - 1}  \cdot \ln(c)
		\end{align*}
		and \eqref{eq:comparison_equiv} holds.
	\end{claimproof}

	\begin{claim}\label{claim:beta_greater_sdelta}
		Let $\beta \geq \sdeltar = \frac{\eta}{c}$. Then \eqref{eq:comparison_equiv} holds.
	\end{claim}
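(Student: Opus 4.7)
When $\beta \geq \sdeltar = \frac{\eta}{c}$, the third case of \eqref{eq:runtime} applies, so $\ln\bigl(\runtime[1,\beta,c,\frac{1}{\eta}]\bigr) = \beta \cdot \D{\frac{1}{\beta}}{\frac{1}{\eta}}$, which is independent of $c$. The target inequality \eqref{eq:comparison_equiv} becomes
\[
\beta \cdot \D{\tfrac{1}{\beta}}{\tfrac{1}{\eta}} \;<\; \frac{\eta-\beta}{\eta-1}\cdot \ln(c).
\]
The first step is to observe that the assumption $\beta \geq \sdeltar$ is equivalent to $c \geq \frac{\eta}{\beta} > 1$, so $\ln(c) \geq \ln(\eta/\beta) > 0$. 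Since $\frac{\eta-\beta}{\eta-1} > 0$, it suffices to prove the sharper, $c$-free inequality
\[
F(\beta) \;\coloneqq\; \frac{\eta-\beta}{\eta-1}\cdot \ln\!\left(\frac{\eta}{\beta}\right) - \beta \cdot \D{\tfrac{1}{\beta}}{\tfrac{1}{\eta}} \;>\; 0 \qquad \text{for } 1 < \beta < \eta.
\]

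Next, expanding the Kullback--Leibler divergence gives
\[
\beta \cdot \D{\tfrac{1}{\beta}}{\tfrac{1}{\eta}} \;=\; \ln\!\left(\frac{\eta}{\beta}\right) + (\beta-1)\cdot \ln\!\left(\frac{\eta(\beta-1)}{\beta(\eta-1)}\right).
\]
Substituting this into $F(\beta)$ and collecting the coefficient of $\ln(\eta/\beta)$ (which simplifies to $\frac{1-\beta}{\eta-1}$) yields the clean factorization
\[
F(\beta) \;=\; -(\beta-1)\cdot G(\beta), \qquad G(\beta) \;\coloneqq\; \frac{\ln(\eta/\beta)}{\eta-1} + \ln\!\left(\frac{\eta(\beta-1)}{\beta(\eta-1)}\right).
\]
Since $\beta - 1 > 0$, it remains to show $G(\beta) < 0$ for $\beta \in (1,\eta)$.

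The final step is a direct monotonicity argument. A straightforward differentiation gives
\[
G'(\beta) \;=\; -\frac{1}{\beta(\eta-1)} + \frac{1}{\beta-1} - \frac{1}{\beta} \;=\; \frac{\eta-\beta}{\beta\,(\eta-1)(\beta-1)},
\]
which is strictly positive on $(1,\eta)$, so $G$ is strictly increasing there. A direct evaluation shows $G(\eta) = \frac{0}{\eta-1} + \ln\!\bigl(\frac{\eta(\eta-1)}{\eta(\eta-1)}\bigr) = 0$, hence $G(\beta) < 0$ for all $\beta \in (1,\eta)$. This gives $F(\beta) > 0$ and therefore the claim.

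I expect the main obstacle to be identifying the algebraic factorization $F(\beta) = -(\beta-1)G(\beta)$; once this is in hand, the required strict inequality reduces to a one-line monotonicity check anchored at $G(\eta)=0$. Everything else --- rewriting the assumption $\beta \geq \sdeltar$ as a lower bound on $\ln(c)$ and expanding the KL term --- is routine bookkeeping.
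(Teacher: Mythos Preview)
Your proposal is correct and takes essentially the same approach as the paper. Both arguments use $\beta \geq \eta/c$ to replace $\ln(c)$ by the smaller $\ln(\eta/\beta)$, reduce to the identical $c$-free inequality, and then prove it by a monotonicity-and-anchor argument: your function $G$ is exactly the negative of the paper's auxiliary function $\nu(x)=\tfrac{\eta}{\eta-1}\ln(x/\eta)-\ln\bigl(\tfrac{x-1}{\eta-1}\bigr)$, and your observation $G'>0$, $G(\eta)=0$ is the mirror image of their $\nu'<0$, $\nu(\eta)=0$. The only cosmetic difference is that you package the $(\beta-1)$ factor via the factorization $F=-(\beta-1)G$, whereas the paper carries it through the chain of inequalities explicitly.
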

	
	\begin{claimproof}
		By definition of $\runtime[1,\beta,c,\frac{1}{\eta}]$, we have
		\begin{align}
			\ln\biggl(\runtime[1,\beta,c,\frac{1}{\eta}]\biggr) = \beta \cdot \D{\frac{1}{\beta}}{\frac{1}{\eta}} &= \beta \cdot \frac{1}{\beta} \cdot \ln\left( \frac{\eta}{\beta} \right)  + \beta \cdot  \left( 1 - \frac{1}{\beta} \right) \cdot \ln\left(\frac{ 1 - \frac{1}{\beta} }{1 - \frac{1}{\eta}}\right)\notag\\
																      &= \ln\left( \frac{\eta}{\beta} \right) + (\beta - 1) \cdot \ln\left( \frac{\beta - 1}{\eta - 1} \cdot \frac{\eta}{\beta} \right).\label{eq:convert_beta_greater_delta}
		\end{align}

		Next, we will first demonstrate that
		\begin{equation}\label{eq:ln_beta_ineq}
			\ln\left( \frac{\beta-1}{\eta- 1} \right) < \frac{\eta}{\eta - 1}\cdot\ln\left( \frac{\beta}{\eta} \right).
		\end{equation}
		Define the function $\nu(x) \coloneqq \frac{\eta}{\eta - 1} \cdot \ln\left( \frac{x}{\eta} \right) - \ln\left( \frac{x-1}{\eta - 1} \right) $, and observe that $\nu(\eta) = 0$.
		Moreover, by standard calculations, we have that $\nu'(x) = \frac{x - \eta}{x\cdot (x - 1) \cdot (\eta - 1)}$.
		Note that $\nu'(x) < 0$ for $x < \eta$, i.e. the function $\nu(x)$ is decreasing for $x < \eta$.
		All in all, this implies that $\nu(\beta) > 0$ for all $\beta < \eta$, and therefore \eqref{eq:ln_beta_ineq} holds.		

		Moreover, we have
		\begin{align}
			(\beta - 1) \cdot \ln\left( \frac{\beta - 1}{\eta - 1} \cdot \frac{\eta}{\beta} \right) &= (\beta - 1) \cdot \ln\left( \frac{\beta - 1}{\eta - 1} \right) + (\beta - 1) \cdot \ln\left( \frac{\eta}{\beta} \right) \notag\\
														&\stackrel{\eqref{eq:ln_beta_ineq}}{<} (\beta - 1) \cdot \frac{\eta}{\eta - 1} \cdot \ln\left( \frac{\beta}{\eta} \right) + (\beta - 1) \cdot \ln\left( \frac{\eta}{\beta} \right)\notag\\
														&= \biggl( \frac{(1-\beta)\cdot \eta}{\eta - 1} + \beta - 1 \biggr)\cdot \ln\left( \frac{\eta}{\beta} \right)\notag\\
														&= \biggl(\frac{\eta - \beta \cdot \eta + \beta \cdot \eta - \beta - \eta + 1}{\eta - 1}\biggr) \cdot \ln\left( \frac{\eta}{\beta} \right) \notag\\
														&= \left( \frac{1 - \beta}{\eta - 1} \right) \cdot \ln\left( \frac{\eta}{\beta} \right)\label{eq:another_ineq_fidelity} 
		\end{align}

		By \eqref{eq:convert_beta_greater_delta} and \eqref{eq:another_ineq_fidelity}, it holds that
		\begin{align*}
			\ln\biggl(\runtime[1,\beta,c,\frac{1}{\eta}]\biggr) &= \ln\left( \frac{\eta}{\beta} \right) + (\beta - 1) \cdot \ln\left( \frac{\beta - 1}{\eta - 1} \cdot \frac{\eta}{\beta} \right)\\
										    &< \ln\left( \frac{\eta}{\beta} \right) + \frac{1- \beta}{\eta - 1} \cdot \ln\left( \frac{\eta}{\beta} \right) \\
										    &= \left( \frac{\eta - \beta}{\eta - 1} \right) \cdot \ln\left( \frac{\eta}{\beta} \right) \\
										    &\leq \left( \frac{\eta - \beta}{\eta - 1} \right) \cdot \ln(c)
		\end{align*}
		where the last step holds because $\frac{n}{\beta} \leq c$ by assumption.
		Therefore \eqref{eq:comparison_equiv} holds.
	\end{claimproof}

	By \cref{claim:beta_smaller_sdelta,claim:beta_greater_sdelta} we conclude that \eqref{eq:comparison_equiv} holds.
	Therefore \cref{lemma:comparison} holds as well.
\end{proof}

\section{Discussion}
\label{sec:discussion}

In this paper we presented Sampling with a Black Box, a simple and generic technique for the design of parameterized approximation algorithms for vertex deletion problems. The technique relies on sampling steps, polynomial time algorithms which return a random vertex whose removal reduces the optimum by one, with some success probability $q$. The technique combines the sampling step with existing parameterized and approximation algorithms to derive efficient parameterized approximation algorithms. We provide application for various problems, such as \fvs, \pathvc{\ell}, \hs{d} and \dfvst. 

We point out two directions for follow up works:
\begin{itemize}
	\item 
While Sampling with a Black Box provides faster parameterized approximation algorithms for multiple problems, it does  not provide a significant improvement for problems which has been extensively studied from this angle, such as \vc  and \hs{3} \cite{brankovicParameterizedApproximationAlgorithms2012, BF13,Fellows2018,KulikS2020}. This can be potentially improved by replacing sampling steps with a more generic procedure which can apply randomized branching rules \cite{KulikS2020}.  Initial research suggests this could result in better running times for several problems.

.

	\item 
	Our results are focused on {\em unweighted } vertex deletion problems. Vertex deletion problems can be naturally generalized for the weighted setting, in which each vertex $v\in V(G)$ has a weight $w(v)$ and the objective is to find a set $S\subseteq V(G)$ of minimum {\em weight} $\sum_{v\in S} w(v)$ such that $G\setminus S$ satisfies the property $\Pi$. 
	In particular, parameterized approximation algorithms for the special case of \wvc  has been recently considered in \cite{MMRS24}. 
	It would be interesting to adjust Sampling with a Black Box to the weighted setting. For example, such a result may  improve the running times of \cite{MMRS24} for \wvc. 
	
	In \cite{EKMNS23} the authors applied a rounding procedure over weights in order to utilize the (approximate) monotone local search technique of \cite{EKMNS24}  in a weighted setting. Intuitively, a similar approach may also be useful for Sampling with a Black Box.  
\end{itemize}

In this paper we designed {\em exponential time} parameterized approximation algorithms for vertex deletion problems.
For many of the considered problems, such as \vc and \hs{3}, it is known that, assuming the Exponential Time Hypothesis (ETH), there is no sub-exponential time parameterized (exact) algorithms (see, e.g. \cite{cyganParameterizedAlgorithms2015a}). However, it is less clear whether the considered problems admit sub-exponential time  parameterized approximation algorithms for approximation ratios close to $1$.  The existence of strictly sub-exponential parameterized  approximation algorithms for \vc for certain approximation ratio has been rules out, assuming ETH, in \cite{BEKP15}. However, we are not aware of a result which rules out a $c^{o(k)}\cdot n^{\Oh(1)}$ parameterized $(1+\eps)$-approximation for \vc (or other vertex deletion problem) for a some constant $\eps>0$. 
It would be interesting to explore whether recent tools in parameterized inapproximability, possibly together with  the stronger  Gap Exponential Time Hypothesis  (GAP-ETH) \cite{Din16,Manu17}, can lead to such a result.

\bibliographystyle{plain}
\bibliography{main}

\appendix

\section{Problem Definitions}
\label{sec:problem_defs}
In this section we will give formal definitions of the problems mentioned in this paper.
Recall that a feedback vertex set of a graph $G$ is a subset of its vertices $S \subseteq V(G)$
such that $G \setminus S$ is acyclic.
\defproblem{\fvs (\FVS)}{A graph $G$, an integer k.}{Does $G$ have a feedback vertex set of size at most $k$?}{$\mathcal{G}$ is the set of all graphs, $\Pi$ is the set of all graphs that have no cycles.}

\defproblem{\povd (\POVD)}{A graph $G$, an integer $k$.}{Is there a set of vertices $S \subseteq V(G)$ of size at most $k$ such that $G\setminus S$ has pathwidth at most 1?}{$\mathcal{G}$ is the set of all graphs and $\Pi$ is the set of all graphs with pathwidth at most 1 (i.e. the set of caterpillar graphs).}

\defproblem{\pathvc{\ell}}{A graph $G$, an integer $k$.}{Is there a set of vertices $S \subseteq V(G)$ of size at most $k$ such that every path of length $\ell$ contains a vertex from $S$?}{$\mathcal{G}$ is the set of all graph and $\Pi$ is the set of graphs with maximum path length $\ell - 1$.}

\defproblem{\hs{d}}{A universe $U$, a set system $\mathcal{S}$ over $U$ where each set $S \in \mathcal{S}$ has size at most $d$, an integer $k$.}{Is there a set $W \subseteq U$ of size at most $k$ such that $W \cap S \neq \emptyset$ for all $S \in \mathcal{S}$?}{$\mathcal{G}$ is the set of all hypergraphs with edge cardinality 3 and $\Pi$ is the set of all edgeless hypergraphs.}

\defproblem{\dfvst}{A tournament graph $G$, an integer $k$.}{Is there a set of vertices $S \subseteq G$ of size at most $k$ such that $G \setminus S$ has no directed cycles?}{$\mathcal{G}$ is the set of all tournament graphs and $\Pi$ is the set of all tournaments that are cycle free.}

\defproblem{\vcmaxdegthree}{A graph $G$ with maximum degree 3, an integer $k$}{Is there a set of vertices $S \subseteq V(G)$ of size at most $k$ such that such that every edge of $G$ has at least one vertex from $S$?}{$\mathcal{G}$ is the set of all graphs with maximum degree 3 and $\Pi$ is the set of all edgeless graphs.}

\section{Probabilistic Concepts}
\label{sec:prob_results}
Let $\D{x}{y}$ denote the Kullback-Leibler divergence between two Bernoulli distributions with parameters $x$ and $y$, i.e.
\begin{align}
	\D{x}{y} &\coloneqq x\cdot \ln\left( \frac{x}{y} \right)   + (1-x) \cdot \ln\left( \frac{1 - x}{1 - y} \right)\notag\\
		 &= x \cdot \ln\biggl(\frac{x}{1 - x} \cdot \frac{1 - y}{y}\biggr) + \ln\left( \frac{1 - x}{1 - y} \right).\label{eq:KL_div_equiv_form} 
\end{align}
In our analysis of procedures, we need the following technical result which is a special case of Theorem~11.1.4 in
\cite{ElementsInformationTheorya}.
\begin{theorem}\label{theorem:binom_lb}
	For any $0 \leq p \leq 1$ and integer $x \geq 1$, let $\prbinom(x,p)$ denote the binomial random variable with success probability $p$ and number of trials $x$. For any $0 \leq y \leq x$, it holds that
	\begin{align*}
		\Pr\left( \prbinom(x,p) \geq y \right) \geq (x + 1)^{-2} \cdot \exp\left( -x \cdot \D{\frac{y}{x}}{p} \right).
	\end{align*}
\end{theorem}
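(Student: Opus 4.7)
My plan is to prove this via the classical method of types. The core combinatorial fact I would establish first is that for every integer $0 \leq k \leq x$,
$$\binom{x}{k} \;\geq\; \frac{\exp\bigl(x\, H(k/x)\bigr)}{x+1}, \qquad H(q) \;=\; -q \ln q - (1-q)\ln(1-q).$$
The quickest derivation is to recognize that $\binom{x}{k}(k/x)^k(1-k/x)^{x-k}$ equals $\Pr(\prbinom(x, k/x) = k)$, which is the modal probability of a Binomial$(x, k/x)$ distribution and hence at least $1/(x+1)$ since the $x+1$ masses sum to $1$. Multiplying both sides by $p^k(1-p)^{x-k}$ and rewriting the combination $H(k/x) + (k/x)\ln p + (1-k/x)\ln(1-p)$ as $-\D{k/x}{p}$ gives the pointwise pmf bound
$$\Pr(\prbinom(x,p) = k) \;\geq\; \frac{1}{x+1}\exp\bigl(-x\,\D{k/x}{p}\bigr).$$

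From here I would split on whether $y$ is below or above the mean $xp$. If $y \leq xp$, some integer $k \in [\lceil y \rceil,\, x]$ near the mode of $X$ must carry mass at least $1/(x+1)$, so $\Pr(X \geq y) \geq 1/(x+1)$; since $\D{y/x}{p} \geq 0$ we have $(x+1)^{-2}\exp(-x\D{y/x}{p}) \leq (x+1)^{-2} \leq 1/(x+1)$ and the bound holds trivially. If $y > xp$, I would take $k = \lceil y \rceil$ and use $\Pr(X \geq y) \geq \Pr(X = k) \geq (x+1)^{-1} \exp(-x\,\D{k/x}{p})$. When $y$ is an integer this already yields the stronger bound $(x+1)^{-1}\exp(-x\D{y/x}{p})$; the $(x+1)^{-2}$ factor is what absorbs the discretization slack when $y$ is not an integer, by comparing $\D{\lceil y \rceil/x}{p}$ against $\D{y/x}{p}$.

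The main obstacle is precisely this last comparison. The derivative $\partial_t \D{t}{p} = \ln\bigl(\tfrac{t(1-p)}{p(1-t)}\bigr)$ blows up as $t \to 1$, so a naive mean value bound on the interval $[y/x,\lceil y \rceil/x]$ of length $\leq 1/x$ can in principle be arbitrarily large. The resolution is that the additional factor $1/(x+1)$ in $(x+1)^{-2}$ compared to $(x+1)^{-1}$ is exactly enough to cover $x\cdot\bigl[\D{\lceil y \rceil/x}{p} - \D{y/x}{p}\bigr] = O\bigl(\ln((x+1)/p)\bigr)$, once one treats the extremal value $k = x$ separately, where $\Pr(X = x) = p^x = \exp(-x\D{1}{p})$ can be computed directly and matched against $\exp(-x\D{y/x}{p})$ by monotonicity of $\D{\cdot}{p}$ on $[p,1]$. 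Thus the $(x+1)^{-2}$ slack in the statement simultaneously pays for the method-of-types bound and for rounding $y$ up to an integer.
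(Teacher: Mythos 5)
Your core computation---the method-of-types bound $\Pr\bigl(\prbinom(x,p)=k\bigr)\ge\frac{1}{x+1}\exp\bigl(-x\,\D{k/x}{p}\bigr)$ for integer $k$---is correct, and it is essentially the content of what the paper relies on: the paper gives no proof of its own but cites Theorem~11.1.4 of Cover and Thomas (the type-class probability lower bound), whose binary case is exactly your pointwise inequality, with the weaker constant $(x+1)^{-2}$. For \emph{integer} $y$ your case split then finishes the argument: if $y\le xp$, the mode $\lfloor(x+1)p\rfloor\ge y$ carries mass at least $\frac{1}{x+1}$, so $\Pr(\prbinom(x,p)\ge y)\ge\frac{1}{x+1}\ge (x+1)^{-2}\exp\bigl(-x\,\D{y/x}{p}\bigr)$; if $y>xp$, the pointwise bound at $k=y$ suffices. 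In that regime your proof is complete and even sharpens the constant to $(x+1)^{-1}$.

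The genuine gap is the step you yourself flag as the main obstacle: absorbing the rounding error for non-integer $y$ into the spare factor $\frac{1}{x+1}$. That cannot work, because the inequality as stated for arbitrary real $y$ is false. The slack $x\bigl[\D{\lceil y\rceil/x}{p}-\D{y/x}{p}\bigr]$ is not $O(\ln(x+1))$ uniformly in $p$; it contains a term of order $\ln(1/p)$, which no power of $(x+1)^{-1}$ can pay for. Concretely, take $x=1$, $y=\tfrac12$: then $\Pr\bigl(\prbinom(1,p)\ge\tfrac12\bigr)=p$, while $(x+1)^{-2}\exp\bigl(-\D{1/2}{p}\bigr)=\tfrac12\sqrt{p(1-p)}$, which exceeds $p$ whenever $p<\tfrac15$. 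Your sub-mean case has the same defect for non-integer $y$ (e.g.\ $x=1$, $p=0.1$, $y=0.05$: the left side is $0.1$, the right side is about $0.246$), since $\lceil y\rceil$ may exceed the mode and the exponent is evaluated at $y/x$ rather than $\lceil y\rceil/x$. The correct reading of the statement---and the only way it is ever invoked in the paper, namely in the proof of \cref{lemma:prob_main_result} with an integer threshold $t$---is with $y$ an integer (equivalently, with the exponent taken at $\lceil y\rceil/x$); under that restriction your argument is sound, and no further treatment of the extremal value $k=x$ or of discretization is needed.
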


Moreover, we also need the following lemma, which we use to lower bound the tail probability of the sum of certain random variables.

\begin{restatable}{lemma}{problowerbound}
\label{lemma:prob_lower_bound}
	\label{lem:prob_lem}
	Let $X_1,\ldots, X_n\in \{0,1\}$ be random variables, let $p\in [0,1]$
	and assume that 
	\begin{equation*}
		\Pr\left( X_j =1\,\middle|\, X_1 = x_1,\ldots,X_{j-1}
	= x_{j - 1}\right)\geq p
	\end{equation*}
	for all $j\in [n]$ and $\left( x_1, \ldots,
x_{j-1} \right)  \in \{0,1\}^{j-1}$. Then, for every $w\in \mathbb{R}$
it holds that
\begin{equation*}
	\Pr\left( \sum_{j=1}^n X_j \,\geq\, w\right) \geq
\Pr(\prbinom(n,p)\,\geq\, w).
\end{equation*}
\end{restatable}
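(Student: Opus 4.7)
}

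My plan is to establish the result via a standard coupling argument: construct, on a common probability space, a sequence of i.i.d.\ Bernoulli($p$) variables $Y_1,\ldots,Y_n$ together with a realization of $X_1,\ldots,X_n$ having the prescribed joint law, in such a way that $X_j \geq Y_j$ almost surely for every $j$. Once this is done, $\sum_{j=1}^n X_j \geq \sum_{j=1}^n Y_j$ pointwise, and since $\sum Y_j \sim \prbinom(n,p)$, the inequality $\Pr(\sum X_j \geq w) \geq \Pr(\prbinom(n,p)\geq w)$ follows immediately for every $w\in\mathbb{R}$.

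The coupling is built from a single source of randomness. I would take $U_1,\ldots,U_n$ to be i.i.d.\ uniform on $[0,1]$ and set $Y_j \defeq \one{U_j \leq p}$, so that the $Y_j$ are i.i.d.\ Bernoulli($p$). To construct $X_1,\ldots,X_n$, define, for each $j$ and each tuple $(x_1,\ldots,x_{j-1})\in\{0,1\}^{j-1}$,
\begin{equation*}
    q_j(x_1,\ldots,x_{j-1}) \defeq \Pr\bigl( X_j = 1 \,\bigm|\, X_1=x_1,\ldots,X_{j-1}=x_{j-1}\bigr),
\end{equation*}
and then set $X_j \defeq \one{U_j \leq q_j(X_1,\ldots,X_{j-1})}$ recursively. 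An easy induction on $j$, together with the independence of the $U_j$'s, shows that the variables $X_1,\ldots,X_n$ produced by this rule have exactly the joint distribution specified in the hypothesis.

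With this construction in hand, the coupling inequality is immediate: the hypothesis $q_j(x_1,\ldots,x_{j-1}) \geq p$ implies $\{U_j \leq p\} \subseteq \{U_j \leq q_j(X_1,\ldots,X_{j-1})\}$, so $Y_j \leq X_j$ almost surely. Summing over $j$ and taking tail probabilities concludes the proof. The main (and essentially only) point to be careful about is verifying that the recursive construction yields the correct joint distribution for $(X_1,\ldots,X_n)$; this is a routine induction using the independence of the $U_j$'s from the sigma-algebra generated by $U_1,\ldots,U_{j-1}$, so I do not expect any real obstacle.
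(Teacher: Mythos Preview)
Your coupling argument is correct, but it takes a different route from the paper. The paper does not couple; instead it defines independent Bernoulli($p$) variables $Y_1,\ldots,Y_n$, sets $S_\ell=\sum_{j\geq \ell}X_j$ and $Q_\ell=\sum_{j\geq \ell}Y_j$, and proves by reverse induction on $\ell$ that $\Pr(S_\ell\geq w\mid \cF_{\ell-1})\geq \Pr(Q_\ell\geq w)$ for every $w$. The induction step conditions on $X_\ell$, applies the tower property, and then uses $\Pr(X_\ell=1\mid\cF_{\ell-1})\geq p$ together with the monotonicity $\Pr(Q_{\ell+1}\geq w-1)\geq \Pr(Q_{\ell+1}\geq w)$.

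Your approach is arguably cleaner: the common-uniform construction gives the pointwise inequality $X_j\geq Y_j$, which is stronger than the stochastic domination actually needed, and the conclusion is immediate once the coupling is in place. The paper's argument, on the other hand, works entirely at the level of distributions and avoids having to verify that the recursive construction reproduces the original joint law (a point you correctly flag as the only thing requiring care; note in particular that on histories of probability zero the conditional probability $q_j$ is undefined, but you can set it to any value $\geq p$ there without changing the law). Both arguments are standard and of comparable length.
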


\begin{proof}
Let $Y_1,\ldots, Y_n$ be $n$ independent Bernoulli  random variables with $\Pr(Y_j=1) = p$ for every $j\in[n]$. Also, define $Q_{\ell}  = \sum_{j=\ell}^n Y_{\ell}$ and $S_{\ell } = \sum_{j=\ell}^{n} X_{\ell}$ for every $\ell \in [n+1]$. By definition, $Q_{n+1}=S_{n+1}=0$.   Furthermore, the distribution of $Q_{1}$ is $\prbinom(n,p)$. 

\begin{claim}
	\label{claim:dominance}
For every $\ell \in [n+1]$ and $w\in \mathbb{R}$ it holds that $\Pr(S_{\ell}\geq w\,|\, \cF_{\ell-1}) \geq \Pr(Q_{\ell}\geq w)$. 
\end{claim}
\begin{claimproof}
We prove the claim by reverse induction over the value of $\ell$. 

\noindent {\bf Base case:} Let $\ell =n+1$. Then $S_{\ell}=0=Q_{\ell}$. Therefore $\Pr(S_{\ell}\geq w\,|\, \cF_{\ell-1}) = \Pr(Q_{\ell}\geq w)$.

\noindent {\bf Induction step:} assume the induction hypothesis holds for $\ell+1\in [n+1]\setminus \{1\}$. Let $w\in \mathbb{R}$. Then,
\begin{equation}
	\label{eq: dominance_eq1}
\begin{aligned}
	\Pr(S_{\ell}\geq w\,|\,\cF_{\ell-1} ) \,&=\,\E \left[ \one_{X_{\ell}=1}\cdot \one_{S_{\ell+1} \geq w-1}+\one_{X_{\ell}=0}\cdot \one_{S_{\ell+1} \geq w}  \,\middle|\, \cF_{\ell-1} \right]\\
	&=\, \E \left[ \one_{X_{\ell}=1}\cdot \E\left[ \one_{S_{\ell+1} \geq w-1}\, \middle|\,\cF_{\ell} \right]+\one_{X_{\ell}=0}\cdot \left[ \one_{S_{\ell+1} \geq w}\,\middle|\,\cF_{\ell}\right]  \,\middle|\, \cF_{\ell-1} \right]\\
	&\geq \, \E \left[ \one_{X_{\ell}=1}\cdot \Pr(Q_{\ell+1} \geq w-1 )+\one_{X_{\ell}=0}\cdot \Pr(Q_{\ell+1}\geq w) \,\middle|\, \cF_{\ell-1} \right]\\
	&= \, \Pr(X_{\ell}=1\,|\,\cF_{\ell-1}) \cdot \Pr(Q_{\ell+1} \geq w-1 ) + \Pr(X_{\ell}=0\,|\,\cF_{\ell-1})\cdot \Pr(Q_{\ell+1} \geq w),
\end{aligned}
\end{equation}
where the second equality follows from the tower property and the inequality holds by the induction hypothesis.  By \eqref{eq: dominance_eq1} we have,
	\begin{equation*}
		\begin{aligned}
			\Pr(S_{\ell}\geq w\,|\,\cF_{\ell-1} ) \,
			&= \, \Pr(X_{\ell}=1\,|\,\cF_{\ell-1}) \cdot \Pr(Q_{\ell+1} \geq w-1 ) + \Pr(X_{\ell}=0\,|\,\cF_{\ell-1})\cdot \Pr(Q_{\ell+1} \geq w)\\
			&\geq\, p\cdot \Pr(Q_{\ell+1} \geq w-1 ) + (1-p)\cdot \Pr(Q_{\ell+1} \geq w) \\
			&=\, \Pr(Y_{\ell}=1) \cdot \Pr(Q_{\ell+1} \geq w-1 \,|\, Y_{\ell}=1) + \Pr(Y_{\ell}=0) \cdot \Pr(Q_{\ell+1} \geq w \,|\, Y_{\ell}=0)\\
			&=\, \Pr(Q_{\ell} \geq w).  
		\end{aligned}
	\end{equation*}
The first inequality holds as $\Pr(X_{\ell}=1\,|\,\cF_{\ell-1}) \geq p$ and $ \Pr(Q_{\ell+1} \geq w-1 ) \geq \Pr(Q_{\ell+1} \geq w)$. The second equality holds a $\Pr(Y_{\ell}=1)=p$ and since $Y_{\ell}$ and $Q_{\ell+1}$ are independent.  Thus, we proved the induction hypothesis holds for $\ell$ and completed the proof. 
\end{claimproof}
By \Cref{claim:dominance}, for every $w\in \mathbb{R}$ it holds that 
$$\Pr\left(\sum_{j=1}^n X_n \geq w \right)  \, =\, \Pr\left(S_1 \geq w\right)\, \geq \,  \Pr\left(Q_1 \geq w\right) \, =\, \Pr(\prbinom(n,p)\,\geq\, w). $$
 \end{proof}

We can combine \cref{theorem:binom_lb,lemma:prob_lower_bound} to obtain the following result.
\probmainresult*

\begin{proof}
		If $\delta = 1$, observe that
		\begin{align}
			\Pr\Biggl( \sum_{j = 1}^{\floor{\delta \cdot t}} \xi_j \geq t\Biggr) = \Pr\Biggl( \sum_{j = 1}^{t} \xi_j \geq t\Biggr) &\geq \Pr\Bigl( \prbinom\left( t, \nu \right) \geq  t \Bigr)\tag{by \cref{lem:prob_lem}}\\
											     &= \Pr\Bigl( \prbinom\left( t, \nu \right) =  t \Bigr)\notag \\
											     &= \nu^{t}\notag \\
											     &= \exp\Bigl(\ln\left( \nu \right) \Bigr)^{t}\notag\\
											     &= \exp\Biggl(-\delta \cdot \D{\frac{1}{\delta}}{\nu}\Biggr)^{t}\notag
		\end{align}
		where the last step holds because $\D{1}{\nu} = \ln\left( \frac{1}{\nu} \right)$.
		Now suppose that $\delta > 1$ and let $T_\delta= \ceil{\frac{1}{\delta-1}}$. For $t \geq T_\delta$ we have
		\begin{align}
			\Pr\biggl( \sum_{j = 1}^{\floor{\delta \cdot t}} \xi_j \geq t\biggr) &\geq \Pr\Bigl( \prbinom\left( \floor{\delta\cdot t}, \nu \right) \geq  t \Bigr)\tag{by \cref{lem:prob_lem}}\\			
					  &\geq \Pr\Bigl( \prbinom\left( \floor{\delta\cdot t}, \nu \right) =  t \Bigr)\notag\\
					  &\geq \Bigl( \floor{\delta\cdot t} + 1 \Bigr) ^{-2} \cdot \exp\Biggl( - \floor{\delta \cdot t} \cdot \D{\frac{t}{\floor{\delta \cdot t}}}{\nu}\Biggr)\label{eq:X_j_prob_last_step},
	  	\end{align}
		where the last step follows from \cref{theorem:binom_lb}.

		\begin{claim}\label{claim:technical_bound}
			There exists a constant $r > 0$ that depends on $\delta$ such that for each $t \geq T_{\delta}$ it holds that
			\begin{equation*}
				\Bigl( \floor{\delta\cdot t} + 1 \Bigr) ^{-2} \cdot \exp\Biggl( - \floor{\delta \cdot t} \cdot \D{\frac{t}{\floor{\delta \cdot t}}}{\nu}\Biggr) \geq \Bigl( \delta\cdot t + 1 \Bigr) ^{-r} \cdot \exp\Biggl( - \delta \cdot t \cdot \D{\frac{1}{\delta}}{\nu}\Biggr).
			\end{equation*}			
		\end{claim}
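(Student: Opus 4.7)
The plan is to compare the two expressions by viewing $x \cdot \D{t/x}{\nu}$ as a function of $x$ alone (with $t$ fixed) and controlling its variation between $x = m = \floor{\delta t}$ and $x = \delta t$, since these differ by less than 1.

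Concretely, I would define
$$ F(x, t) \,:=\, x \cdot \D{t/x}{\nu} \,=\, t\, \ln\!\Bigl(\tfrac{t}{x\nu}\Bigr) + (x-t)\, \ln\!\Bigl(\tfrac{x-t}{x(1-\nu)}\Bigr) $$
(valid for $t < x$), and differentiate directly. The polynomial prefactors $t/x$ and $(x-t)/x$ produce cancelling contributions, leaving the clean formula
$$ \tfrac{\partial F}{\partial x}(x, t) \,=\, \ln\!\Bigl(\tfrac{x-t}{x(1-\nu)}\Bigr). $$
By the mean value theorem, there is a $\xi \in [m, \delta t]$ with
$$ F(\delta t, t) - F(m, t) \,=\, (\delta t - m)\cdot \ln\!\Bigl(\tfrac{\xi - t}{\xi(1-\nu)}\Bigr), $$
and $\delta t - m \in [0,1)$. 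So it suffices to show the logarithm is bounded in absolute value by a constant $C_\delta$ depending only on $\delta$ and $\nu$.

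To bound the logarithm, I would use that $\xi \in [m, \delta t] \subseteq [\delta t - 1,\, \delta t]$ is within $1$ of $\delta t$. Writing $(\xi - t)/(\xi(1-\nu)) = (1 - t/\xi)/(1 - \nu)$ and using $t/\xi \in [t/(\delta t),\, t/(\delta t -1)]$, one sees that for $t$ sufficiently large (say $t \geq T'_\delta := \lceil 2/(\delta-1)\rceil$) the ratio $(\xi-t)/(\xi(1-\nu))$ lies in a compact subinterval of $(0,\infty)$ independent of $t$, e.g.\ $[(\delta-1)/(2\delta(1-\nu)),\, (\delta-1)/(\delta(1-\nu))]$. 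Hence $|F(\delta t, t) - F(m, t)| \leq C_\delta$ for all $t \geq T'_\delta$, giving $\exp\bigl(F(\delta t,t) - F(m,t)\bigr) \leq e^{C_\delta}$.

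Combining this with the trivial estimate $(\floor{\delta t}+1)^{-2} \geq (\delta t + 1)^{-2}$ reduces the claimed inequality to $(\delta t + 1)^{r-2} \geq e^{C_\delta}$, which holds for all $t \geq T_\delta$ once $r$ is chosen sufficiently large (depending on $\delta$ and $\nu$), since $\delta t + 1 > 1$. For the finitely many integers $t$ with $T_\delta \leq t < T'_\delta$, both sides of the desired inequality are fixed positive numbers, and the RHS has a factor $(\delta t+1)^{-r}$ which we can make arbitrarily small by enlarging $r$; so inflating $r$ handles these exceptional cases too, and we take the maximum.

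The main obstacle I anticipate is the low-$t$ regime: near $t = T_\delta$ it can happen that $m - t$ is as small as $1$, so $(\xi - t)/\xi$ is not yet comparable to its limiting value $(\delta-1)/\delta$, and the derivative $\ln((\xi - t)/(\xi(1-\nu)))$ cannot be bounded by a constant that is uniform in $t$ using the asymptotic argument alone. This is why the split into "large $t$" (handled by the MVT bound) and "finitely many small $t$" (handled by brute polynomial inflation of $r$) is necessary. Bookkeeping around the two regimes, and verifying that a single value of $r$ simultaneously works for both, is the only delicate part; everything else is straightforward calculus.
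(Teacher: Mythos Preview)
Your proposal is correct and takes essentially the same approach as the paper: both bound the difference $\floor{\delta t}\cdot\D{t/\floor{\delta t}}{\nu} - \delta t\cdot\D{1/\delta}{\nu}$ by a constant via the mean value theorem and then absorb that constant into the polynomial prefactor by enlarging $r$. The paper reparametrizes slightly, applying the MVT to $h(u)=u^{-1}\D{u}{\nu}$ between $u=1/\delta$ and $u=t/\floor{\delta t}$ and using the bound $t\cdot\bigl|t/\floor{\delta t}-1/\delta\bigr|\leq 1/\delta$, which yields a uniform estimate for all $t\geq T_\delta$ without your separate small-$t$ case; but this is cosmetic, and your explicit derivative formula $\partial_x\bigl(x\,\D{t/x}{\nu}\bigr)=\ln\bigl((x-t)/(x(1-\nu))\bigr)$ is equally valid.
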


		\begin{claimproof}
			Let us define $h(x) \coloneqq \frac{1}{x} \cdot \D{x}{\nu}$. Then we have
			\begin{equation*}
			\Bigl( \floor{\delta\cdot t} + 1 \Bigr)^{-2} \cdot \exp\Biggl( - \floor{\delta \cdot t} \cdot \D{\frac{t}{\floor{\delta \cdot t}}}{\nu}\Biggr) = \Bigl( \floor{\delta\cdot t} + 1 \Bigr)^{-2} \cdot \exp \Biggl( -t \cdot h\left( \frac{t}{\floor{\delta \cdot t}} \right)  \Biggr).
			\end{equation*}
	
			Since $h$ is a differentiable function on $\left( \frac{1}{\delta}, \frac{t}{\floor{\delta \cdot t}} \right)$, we can approximate the value of $h\left( \frac{t}{\floor{\delta \cdot t}} \right)$ by
			\begin{equation}\label{eq:h_taylor}
				h\left( \frac{t}{\floor{\delta \cdot t}} \right) = h\left( \frac{1}{\delta} \right) + \left( \frac{t}{\floor{\delta \cdot t}} - \frac{1}{\delta} \right) \cdot h'(w)
			\end{equation}
			for some $w \in \left( \frac{1}{\delta}, \frac{t}{\floor{\delta \cdot t}} \right) $, using Mean Value Theorem. 
			Note that
			\begin{align}
				\frac{t}{\floor{\delta \cdot t}} - \frac{1}{\delta} &= \frac{\delta \cdot t - \floor{\delta \cdot t}}{\delta \cdot \floor{\delta \cdot t}} \notag \\
										    &\leq \frac{1}{\delta \cdot \floor{\delta \cdot t}} \notag \\
										    &< \frac{1}{\delta \cdot \left( \delta \cdot t - 1 \right) } \notag \\
										    &< \frac{1}{\delta \cdot t} \label{eq:second_to_last}\\
										    &< \frac{\delta - 1}{\delta} \label{eq:last},
			\end{align}
			where \eqref{eq:second_to_last} and hold because $t \geq T_\delta \geq \frac{1}{\delta - 1}$, which implies that $\delta \cdot t - t \geq 1 \iff \delta \cdot t - 1 \geq t$.
			By \eqref{eq:last}, the value of $h'(w)$ is upper bounded by $C_\delta \coloneqq \max_{w'\in \left(\frac{1}{\delta}, \frac{\delta-1}{\delta} \right)}  h'(w')$. Note that $C_\delta$ only depends on $\delta$ hence it is a constant.

			Finally, by \eqref{eq:h_taylor} and \eqref{eq:second_to_last}, it holds that
			\begin{align}
				t \cdot h\left( \frac{t}{\floor{\delta \cdot t}} \right) &= t \cdot h\left( \frac{1}{\delta} \right) + t \cdot \left( \frac{t}{\floor{\delta \cdot t}} - \frac{1}{\delta} \right) \cdot h'(w)\notag\\
											 &< t \cdot h\left( \frac{1}{\delta} \right) + \frac{1}{\delta} \cdot C_\delta.\label{eq:t_times_h_ub}
			\end{align}
			Therefore, 
			\begin{align*}
				\exp\biggl( - \floor{\delta \cdot t} \cdot \D{\frac{t}{\floor{\delta \cdot t}}}{\nu}\biggr) &= \exp \biggl( -t \cdot h\left( \frac{t}{\floor{\delta \cdot t}} \right)  \biggr)\\			
																							    &\overset{\eqref{eq:t_times_h_ub}}{\geq}  \exp \Biggl(-t \cdot h\left( \frac{1}{\delta} \right) \Biggr) \cdot \exp\left( -\frac{1}{\delta} \cdot C_\delta \right) \\
				&\geq  \exp \Biggl(-\delta \cdot t \cdot \D{\frac{1}{\delta}}{\nu}\Biggr) \cdot \Bigl( \delta \cdot T_\delta + 1 \Bigr)^{-r+2}\\
			\end{align*}
			
			where $r > 2$ is a large enough constant that depends on $\delta$ such that $\exp\Bigl( \frac{1}{\delta} \cdot C_\delta \Bigr) \leq \Bigl( \delta \cdot T_\delta + 1 \Bigr)^{r - 2}$.
		\end{claimproof}

		Finally, by \cref{claim:technical_bound} and \eqref{eq:X_j_prob_last_step}, there exist $r,T_\delta > 0$ that depend on $\delta$ such that for $t \geq T_\delta$ we have
		\begin{equation*}
			\Pr\biggl( \sum_{j = 1}^{\floor{\delta \cdot t}} \xi_j \geq t\biggr) \geq \left( \delta\cdot t + 1 \right) ^{-r} \cdot \exp\biggl( - \delta \cdot t \cdot \D{\frac{1}{\delta}}{\nu}\biggr).
		\end{equation*}
\end{proof}

\section{Technical Claims}
\label{sec:tech_claims}

\begin{lemma}
\label{lemma:z_s_deriv_formula}
	It holds that
	\begin{equation}\label{eq:z_deriv}
		\frac{\partial}{\partial\,\delta}\,\Biggl(\ln\left( \frac{1}{\phi(\delta,q)} \right)\Biggr) = \ln\left( \frac{1 - \frac{1}{\delta}}{1 - q} \right)
	\end{equation}
	and
	\begin{align}
		\frac{\partial}{\partial\,\delta}\,s_{q}(\delta)     &= \frac{(\alpha - 1) \cdot \ln\left( \frac{1 -q}{1 - \frac{1}{\delta}} \right) + \ln\left( \delta \cdot q \right) + \ln(c)}{(\delta - \alpha)^{2}}.\label{eq:s_deriv}	
	\end{align}	
\end{lemma}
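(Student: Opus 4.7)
The proof of \cref{lemma:z_s_deriv_formula} is essentially a direct computation in two stages. The first stage establishes \eqref{eq:z_deriv}, and the second derives \eqref{eq:s_deriv} by applying the quotient rule to $s_q(\delta)$, using \eqref{eq:z_deriv} as an input.

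My plan is to first rewrite $\ln(1/\phi(\delta,q)) = \delta \cdot \D{\frac{1}{\delta}}{q}$ in a form that is more convenient to differentiate. Expanding the definition of the Kullback–Leibler divergence, the factor of $\delta$ cancels nicely and yields
\begin{equation*}
\ln\left(\frac{1}{\phi(\delta,q)}\right) \,=\, -\ln(\delta\, q) \,+\, (\delta-1)\cdot \ln\left(\frac{\delta-1}{\delta\,(1-q)}\right).
\end{equation*}
Differentiating this with respect to $\delta$ using the product and chain rules gives $-\frac{1}{\delta}$ from the first term, $\ln\left(\tfrac{\delta-1}{\delta(1-q)}\right)$ from differentiating the $(\delta-1)$ factor in the second term, and $(\delta-1)\cdot\left(\tfrac{1}{\delta-1}-\tfrac{1}{\delta}\right) = \tfrac{1}{\delta}$ from differentiating the logarithm inside. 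The $\pm \tfrac{1}{\delta}$ terms cancel, leaving $\ln\left(\tfrac{1-1/\delta}{1-q}\right)$, which is precisely \eqref{eq:z_deriv}.

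For the second stage, I would write $s_q(\delta) = g(\delta)/(\delta-\alpha)$ where $g(\delta) = \ln(1/\phi(\delta,q)) - \ln(c)$, and apply the quotient rule to obtain
\begin{equation*}
\frac{\partial}{\partial\,\delta}\,s_q(\delta) \,=\, \frac{(\delta-\alpha)\cdot g'(\delta) - g(\delta)}{(\delta-\alpha)^2}.
\end{equation*}
Using \eqref{eq:z_deriv} for $g'(\delta)$ and the explicit expression above for $g(\delta)$, the numerator becomes
\begin{equation*}
(\delta-\alpha)\cdot \ln\left(\frac{1-1/\delta}{1-q}\right) + \ln(\delta\,q) + \ln(c) - (\delta-1)\cdot \ln\left(\frac{1-1/\delta}{1-q}\right).
\end{equation*}
Collecting the $\ln\left(\tfrac{1-1/\delta}{1-q}\right)$ terms yields a coefficient of $(\delta-\alpha)-(\delta-1) = 1-\alpha$; flipping the sign by inverting the logarithm's argument produces the $(\alpha-1)\ln\left(\tfrac{1-q}{1-1/\delta}\right)$ term of \eqref{eq:s_deriv}. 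The remaining summands $\ln(\delta q)+\ln(c)$ match the formula as stated.

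There is no real obstacle here — the only thing to be careful about is the algebraic cancellation of the $\pm \tfrac{1}{\delta}$ terms in the first stage, and correctly pairing $(\delta-\alpha)g'(\delta)$ against the $(\delta-1)\ln(\cdot)$ contribution from $g(\delta)$ in the second stage so that the surviving logarithm coefficient is $1-\alpha$ rather than, say, $\alpha-1$ with the wrong sign on its argument.
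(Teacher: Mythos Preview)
Your proposal is correct and takes essentially the same approach as the paper: a direct computation using the product rule for \eqref{eq:z_deriv} and then the quotient rule for \eqref{eq:s_deriv}. The only cosmetic difference is that you first expand $\delta\cdot \D{\frac{1}{\delta}}{q}$ into elementary logarithms before differentiating, whereas the paper keeps $\D{\frac{1}{\delta}}{q}$ intact and uses the identity $\frac{\partial}{\partial a}\D{a}{b} = \ln\bigl(\tfrac{a}{1-a}\cdot\tfrac{1-b}{b}\bigr)$ together with the chain rule; the resulting cancellations and the final simplification of the numerator in \eqref{eq:s_deriv} are identical.
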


\begin{proof}
	It holds that

	\begin{align}\label{eq:KL_deriv}
		\frac{\partial}{\partial\,a} \,\D{a}{b}= \ln\! \left(\frac{a}{1 - a} \cdot \frac{1 - b}{b}\right).
	\end{align}

	Therefore, using the product rule for the derivative, we get
	\begin{align*}
		\frac{\partial}{\partial\,\delta}\, \Biggl(\ln\left( \frac{1}{\phi(\delta,q)} \right)\Biggr) &= \frac{\partial}{\partial\,\delta}\, \left( \delta \cdot \D{\frac{1}{\delta}}{q} \right) \\
						&= \D{\frac{1}{\delta}}{q} + \delta \cdot \ln\left( \frac{\frac{1}{\delta}}{1 - \frac{1}{\delta}} \cdot \frac{1-q}{q} \right) \cdot \left( -\frac{1}{\delta^{2}} \right) &\text{by } \eqref{eq:KL_deriv} \\
						&= \frac{1}{\delta} \cdot \ln\left( \frac{\frac{1}{\delta}}{q} \cdot \frac{1 - q}{1 - \frac{1}{\delta}}\right) + \ln\left( \frac{1 - \frac{1}{\delta}}{1 - q} \right) -\frac{1}{\delta}\cdot \ln\left( \frac{\frac{1}{\delta}}{1 - \frac{1}{\delta}} \cdot \frac{1 - q}{q} \right)\\
						&= \ln\left( \frac{1 - \frac{1}{\delta}}{1 - q} \right),
	\end{align*}
	therefore \eqref{eq:z_deriv} holds.
	Similarly, by using the quotient rule for the derivative, we get
	\begin{align*}
		\frac{\partial}{\partial\,\delta}\,s_{q}(\delta) &= \frac{\partial}{\partial\,\delta}\,\Biggl(\frac{\ln\left( \frac{1}{\phi(\delta,q)} \right) - \ln(c)}{\delta - \alpha}\Biggr)\\
		&= \frac{\Bigl( \frac{\partial}{\partial\,\delta} \,\ln\left( \frac{1}{\phi(\delta,q)} \right)\Bigr) \cdot (\delta - \alpha) - \Bigl( \ln\left( \frac{1}{\phi(\delta,q)} \right) - \ln(c) \Bigr)}{(\delta - \alpha)^{2}}\\
					       &= \frac{(\delta - \alpha) \cdot \ln\left( \frac{1 - \frac{1}{\delta}}{1 - q} \right) - \ln\left( \frac{1}{\phi(\delta,q)} \right) + \ln(c)}{(\delta-  \alpha)^{2}}. &\text{by } \eqref{eq:z_deriv}
	\end{align*}

	Using the definition of $\phi(\delta,q)$, we further have
	\begin{align*}
		\frac{\partial}{\partial\,\delta}\,s_{q}(\delta) &= \frac{(\delta - \alpha) \cdot \ln\left( \frac{1 - \frac{1}{\delta}}{1 - q} \right) - \delta \cdot \D{\frac{1}{\delta}}{q} + \ln(c)}{(\delta-  \alpha)^{2}}\\
							       &= \frac{(\delta - \alpha) \cdot \ln\left( \frac{1 - \frac{1}{\delta}}{1 - q} \right) - \delta \cdot \left( \frac{1}{\delta} \cdot \ln\left( \frac{1}{\delta \cdot q} \right) + \left( 1 - \frac{1}{\delta} \right)\cdot \ln\left( \frac{1 - \frac{1}{\delta}}{1 - q} \right)   \right)  + \ln(c)}{(\delta-  \alpha)^{2}}\\
							       &= \frac{(\delta - \alpha - \delta + 1) \cdot \ln\left(\frac{ 1 - \frac{1}{\delta} }{1 -q}\right) - \ln\left( \frac{1}{\delta \cdot q} \right)  + \ln(c)}{(\delta - \alpha)^{2}}\\
							       &= \frac{(\alpha - 1) \cdot \ln\left( \frac{1 -q}{1 - \frac{1}{\delta}} \right) + \ln\left( \delta \cdot q \right) + \ln(c)}{(\delta - \alpha)^{2}}.
	\end{align*}
\end{proof}

Next, we state an equivalence which will be used frequently in the following section.

\begin{lemma}\label{lemma:exp_s_substitute}
	It holds that
	\begin{equation*}
		\exp\Bigl( \ln(c) + (\beta - \alpha) \cdot s_{q}(\beta) \Bigr) = \exp\left( \beta \cdot \D{\frac{1}{\beta}}{q} \right).
	\end{equation*}
\end{lemma}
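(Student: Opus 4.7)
The plan is to prove this identity by unpacking the definitions of $s_q$ and $\phi$ in sequence. Since both sides are $\exp$ of something, it suffices (by monotonicity of $\exp$) to show the exponents are equal, i.e. that $\ln(c) + (\beta - \alpha) \cdot s_q(\beta) = \beta \cdot \D{\frac{1}{\beta}}{q}$.

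First I would recall from equation \eqref{eq:def_p_delta_q} that $\phi(\delta, q) = \exp(-\delta \cdot \D{\frac{1}{\delta}}{q})$, which immediately gives $\ln(1/\phi(\delta, q)) = \delta \cdot \D{\frac{1}{\delta}}{q}$. Then, recalling from the definition of $s_q$ that
\[
s_q(\delta) = \frac{\ln(1/\phi(\delta, q)) - \ln(c)}{\delta - \alpha},
\]
I would substitute $\delta = \beta$ to obtain $s_q(\beta) = \frac{\beta \cdot \D{\frac{1}{\beta}}{q} - \ln(c)}{\beta - \alpha}$. Multiplying through by $(\beta - \alpha)$ and adding $\ln(c)$ yields exactly $\beta \cdot \D{\frac{1}{\beta}}{q}$, completing the proof after applying $\exp$.

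There is essentially no obstacle here; the lemma is just a reorganization of the defining formulas for $s_q$ and $\phi$. The only mild subtlety is that one might worry about the case $\beta = \alpha$, where $s_q(\beta)$ involves division by zero. However, in the applications of this lemma (via \Cref{lemma:min_f_convert_equiv} and the analysis in \Cref{sec:optim_delta}), the identity is only needed at points where $s_q(\beta)$ is well-defined, so this is not a real issue — and in any case the limit $\lim_{\beta \to \alpha} (\beta - \alpha) \cdot s_q(\beta) = 0$ matches the fact that $\beta \cdot \D{\frac{1}{\beta}}{q} - \ln(c)$ vanishes at $\beta = \alpha$ only in degenerate cases, so the statement can be interpreted consistently.
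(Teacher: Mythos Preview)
Your proof is correct and essentially identical to the paper's: both simply substitute the definition of $s_q(\beta)$, cancel the $(\beta-\alpha)$ factor, and use $\ln(1/\phi(\beta,q)) = \beta\cdot\D{\frac{1}{\beta}}{q}$. Your aside about $\beta=\alpha$ is slightly off (the limit of $(\beta-\alpha)s_q(\beta)$ is $\alpha\cdot\D{\frac{1}{\alpha}}{q}-\ln(c)$, not $0$), but this is irrelevant to the lemma since $s_q(\beta)$ is only used where $\beta\neq\alpha$.
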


\begin{proof}
	The proof simply follows by substituting:
	\begin{align*}
		\exp\Bigl( \ln(c) + (\beta - \alpha) \cdot s_{q}(\beta) \Bigr) &= \exp\Biggl( \ln(c) + (\beta - \alpha) \cdot \frac{\ln\left( \frac{1}{\phi(\delta,q)} \right) -\ln(c)}{(\beta - \alpha)} \Biggr)\\
											&= \exp\Biggl(\ln\left( \frac{1}{\phi(\delta,q)} \right)\Biggr)\\
											&= \exp\Biggl( \beta \cdot \D{\frac{1}{\beta}}{q} \Biggr).
	\end{align*}
\end{proof}

\section{Omitted Proofs}
\label{sec:omitted_proofs}

\heredoptdecrease*
\begin{proof}
	Let $A \in \sat_{\Pi}(G)$ such that $\abs{A} = \OPT_{\Pi}(G)$.
	It holds that
	\begin{align*}
		\Bigl(G \setminus v\Bigr) \setminus \Bigl(A \setminus \{v\} \Bigr) &= G \setminus \Bigl(A \cup \{v\} \Bigr) \in \Pi
	\end{align*}
	because $\Pi$ is hereditary and $G \setminus \Bigl(A \cup \{v\} \Bigr)$ is a vertex induced subhypergraph of $G \setminus A$, which belongs to $\Pi$ by the definition of $A$.
	Therefore, $A \setminus \{v\}$ is a solution for $G \setminus v$ and
	\begin{equation}\label{eq:hereditary_opt_decrease_eq_1}
		\OPT_{\Pi}(G \setminus v ) \leq \,\abs{A \setminus \{v\} }\, \leq \OPT_{\Pi}(G).
	\end{equation}

	Similarly, let $X \in \sat_{\Pi}(G \setminus v)$ such that $\abs{X} = \OPT_{\Pi}( G \setminus v )$.
	We have
	\begin{align*}
		G \setminus \left( X \cup \{v\}  \right) = \Bigl(\left( G \setminus v \right) \setminus X \Bigr)\in \Pi 
	\end{align*}
	by definition of $X$. Therefore, $( X \cup \{v\}) \in \sat_{\Pi}\left( G \right) $ and we have
	\begin{equation}\label{eq:hereditary_opt_decrease_eq_2}
		\OPT_{\Pi}\left( G  \right) \leq \abs{X \cup \{v\} } \leq \abs{X} + 1 \leq \OPT_{\Pi}( G \setminus v ) + 1.
	\end{equation}

	Finally, \eqref{eq:hereditary_opt_decrease_eq_1} and \eqref{eq:hereditary_opt_decrease_eq_2} together imply the lemma.
\end{proof}

Now, we provide the previously omitted proof of \cref{lemma:s_q_unimodal}. For the sake of completeness, we restate the lemma below.

\squnimodal*

\begin{proof}
	By \cref{lemma:s_deriv_alternative}, the sign of the derivative of
	$s_q(\delta)$, i.e. $\sign\biggl(\frac{\partial}{\partial\,\delta}\,s_{q}(\delta)\biggr)$,
	agrees with the sign of $\Gamma_q(\delta)$. Observe that the only term
	in $\Gamma_q(\delta)$ that depends on $\delta$ is $\alpha \cdot \D{\frac{1}{\alpha}}{\frac{1}{\delta}}$.
	
	Consider the values of $\delta$ such that $\delta \geq \alpha$, which implies that
	$\frac{1}{\delta} \leq \frac{1}{\alpha}$. Since $\D{\frac{1}{\alpha}}{x}$ is a strictly
	decreasing function for $x \leq \frac{1}{\alpha}$, and $\frac{1}{\delta}$ is a
	strictly decreasing function of $\delta$, it follows that $\D{\frac{1}{\alpha}}{\frac{1}{\delta}}$
	is a strictly increasing function of $\delta$ for $\delta \geq \alpha$.
	Furthermore, observe that $\Gamma_q(\sdeltar) = 0$. Therefore,
	\begin{equation*}
		\sign\biggl(\frac{\partial}{\partial\,\delta}\,s_{q}(\delta)\biggr) = \sign\left( \Gamma_q(\delta) \right) < 0
	\end{equation*}
	for $\alpha \leq \delta < \sdeltar$. Similarly,
	\begin{equation*}
		\sign\biggl(\frac{\partial}{\partial\,\delta}\,s_{q}(\delta)\biggr) = \sign\left( \Gamma_q(\delta) \right) > 0		
	\end{equation*}
	for $\sdeltar < \delta \leq \frac{1}{q}$. Therefore, $s_q(\delta)$ is strictly decreasing for $\alpha \leq \delta \leq \sdeltar$
	and strictly increasing for $\sdeltar \leq \delta \leq \frac{1}{q}$.

	Now, assume that $\alpha > 1$, and consider the values of $\delta$ such that $\delta \leq \alpha$,
	which implies that $\frac{1}{\delta} \geq \frac{1}{\alpha}$. Since $\D{\frac{1}{\alpha}}{x}$
	is a strictly increasing function of $x$ for $x \geq \frac{1}{\alpha}$, it holds that
	$\D{\frac{1}{\alpha}}{\frac{1}{\delta}}$ is a strictly decreasing function
	of $\delta$ for $\delta \geq \alpha$. We also have that $\Gamma_q(\sdeltar) = 0$.
	Therefore, it holds that
	\begin{equation*}
		\sign\biggl(\frac{\partial}{\partial\,\delta}\,s_{q}(\delta)\biggr) = \sign\left( \Gamma_q(\delta) \right) >0
	\end{equation*}
	for $1 \leq \delta < \sdeltal$ and
	\begin{equation*}
		\sign\biggl(\frac{\partial}{\partial\,\delta}\,s_{q}(\delta)\biggr) = \sign\left( \Gamma_q(\delta) \right) < 0		
	\end{equation*}
	for $\sdeltal < \delta \leq \alpha$. Therefore, $s_q(\delta)$ is a strictly increasing
	function for $1 \leq \delta \leq \sdeltal$ and a strictly decreasing function for
	$\sdeltal \leq \delta \leq \alpha$.

	Therefore the lemma holds.
\end{proof}

Next we present the missing proof of \cref{lemma:max_s_beta_<_alpha}. For completeness, we again state the lemma
here.

\maxsbetasmalleralpha*
\begin{proof}
	By the conditions of the lemma it holds that $\sdeltal \in \goodd \cap [1,\frac{1}{q}]$.
	 By  \Cref{lemma:s_q_unimodal} the function $s_q(\delta)$ is increasing in $[1,\sdeltal]$ and decreasing in $[\sdeltal, \alpha]$.  Therefore,
	 $$
	 \max_{\delta \in \goodd \cap [1,\frac{1}{q}]} s_q(\delta) = s_q(\sdeltal).	 $$
\end{proof}

Let us now give the omitted proof of \cref{lemma:s_deriv_alternative}.

\sderivalternative*
\begin{proof}
	Since $(\delta - \alpha)^{2} > 0$, by \eqref{eq:s_deriv} it holds that $\frac{\partial}{\partial\,\delta}\,s_{\alpha,c,q}(\delta) = 0$ if and only if
	\begin{equation}\label{eq:s_deriv_iff}
		(\alpha - 1) \cdot \ln\left( \frac{1 -q}{1 - \frac{1}{\delta}} \right) + \ln\left( \delta \cdot q \right) + \ln(c) = 0.
	\end{equation}
	Moreover,
	\begin{equation}\label{eq:s_deriv_sign_inter}
		\sign\biggl(\frac{\partial}{\partial\,\delta}\,s_{\alpha,c,q}(\delta)\biggr) = \sign\biggl((\alpha - 1) \cdot \ln\left( \frac{1 -q}{1 - \frac{1}{\delta}} \right) + \ln\left( \delta \cdot q \right) + \ln(c)\biggr).
	\end{equation}
	where we let $\Psi$ denote $\Psi \coloneqq (\alpha - 1) \cdot \ln\left( \frac{1 -q}{1 - \frac{1}{\delta}} \right) + \ln\left( \delta \cdot q \right) + \ln(c)$ for
	for the sake of presenting the following material.
	We have
	\begin{align*}
		\Psi &= (\alpha - 1) \cdot \ln\left( \frac{1 -q}{1 - \frac{1}{\delta}} \right) + \ln\left( \delta \cdot q \right) + \ln(c)\\
		     &= -(\alpha - 1) \cdot \ln\left( \frac{1 - \frac{1}{\delta}}{1 - q} \right) -\ln\left(\frac{\frac{1}{\delta}}{q}\right) + \ln(c) \\
														  &= -\alpha \cdot \Biggl(\left( 1 - \frac{1}{\alpha} \right)\cdot \ln\biggl( \frac{ 1 - \frac{1}{\delta} }{1 - q}  \biggr) + \frac{1}{\alpha} \cdot \ln\left(\frac{\frac{1}{\delta}}{q}\right)  \Biggr) + \ln(c)\\
														  &= -\alpha \cdot \Biggl( \frac{1}{\alpha} \cdot \ln\left( \frac{\frac{1}{\delta}}{q} \cdot \frac{1 - q}{1 - \frac{1}{\delta}} \right) + \ln\left( \frac{1 - \frac{1}{\delta}}{1 - q} \right)  \Biggr) + \ln(c).
	\end{align*}
	
	Next, by dividing and multiplying the term inside the logarithm by the same value, we get

	\begin{align}
	       \Psi &= -\alpha \cdot \Biggl[\frac{1}{\alpha} \cdot \ln\! \left(\frac{\frac{1}{\alpha}}{1 - \frac{1}{\alpha}} \cdot \frac{1 - q}{q} \cdot \frac{1 - \frac{1}{\alpha}}{\frac{1}{\alpha}}\cdot \frac{\frac{1}{\delta}}{1 - \frac{1}{\delta}}\right) + \ln\left(\frac{1 - \frac{1}{\alpha}}{1 - q} \cdot \frac{1 - \frac{1}{\delta}}{1 - \frac{1}{\alpha}} \right)\Biggr] + \ln(c)\notag\\
	       &= -\alpha \cdot \Biggl[\frac{1}{\alpha} \cdot \ln\! \left(\frac{\frac{1}{\alpha}}{1 - \frac{1}{\alpha}} \cdot \frac{1 - q}{q} \right) + \ln\left(\frac{1 - \frac{1}{\alpha}}{1 - q} \right)\Biggr] - \alpha \cdot \Biggl[ \frac{1}{\alpha} \cdot \ln\! \left(\frac{1 - \frac{1}{\alpha}}{\frac{1}{\alpha}}\cdot \frac{\frac{1}{\delta}}{1 - \frac{1}{\delta}}\right) + \ln\left( \frac{1 - \frac{1}{\delta}}{1 - \frac{1}{\alpha}} \right)\Biggr] + \ln(c)\notag\\
	       &= -\alpha \cdot \D{\frac{1}{\alpha}}{q} +  \alpha \cdot \D{\frac{1}{\alpha}}{\frac{1}{\delta}} + \ln(c)\label{eq:s_deriv_numer_alter}			
	\end{align}
	where the last step follows from \eqref{eq:KL_div_equiv_form}.
	Finally, the lemma holds by \eqref{eq:s_deriv_iff}, \eqref{eq:s_deriv_sign_inter} and \eqref{eq:s_deriv_numer_alter}.
\end{proof}

\end{document}